\let\so\soSoul
\tikzset{transaction state/.style={draw=black!0}}
\tikzset{
	arrow/.pic={\path[tips,every arrow/.try,->,>=#1] (0,0) -- +(.1pt,0);},
	pics/arrow/.default={triangle 90},
	lab dis/.store in=\LabDis,
	lab dis=0.3,
	->-/.style args={at #1 with label #2}{decoration={
			markings,
			mark=at position #1 with {\arrow{>}; \node at (0,\LabDis) {#2};}},postaction={decorate}},
	-<-/.style args={at #1 with label #2}{decoration={
			markings,
			mark=at position #1 with {\arrow{<}; \node at (0,\LabDis)
				{#2};}},postaction={decorate}},
	-*-/.style={decoration={
			markings,
			mark=at position #1 with {\fill (0,0) circle (1.5pt);}},postaction={decorate}}
}
\definecolor{pinegreen}{rgb}{0.0, 0.47, 0.44}
\definecolor{enriqueColor}{HTML}{b3e6ff}
\newcommand\oldver[1]{}
\newcommand{\key}{x}
\newcommand{\Keys}{\mathsf{Keys}}
\newcommand{\Events}{\mathcal{E}}
\definecolor{coColor}{HTML}{b37200}
\definecolor{soColor}{HTML}{800080}
\definecolor{poColor}{HTML}{C41E3A}
\definecolor{wrColor}{HTML}{008000}
\definecolor{orColor}{HTML}{0026ff}
\definecolor{x1Color}{HTML}{003399}
\definecolor{x2Color}{HTML}{ff8000}
\definecolor{x3Color}{HTML}{008040}
\newcommand{\tup}[1]{{({#1})}}
\newcommand{\ebegin}{\mathtt{begin}}
\newcommand{\eabort}{\mathtt{abort}}
\newcommand{\ecommit}{\mathtt{commit}}
\newcommand{\yvar}{{y}}
\newcommand{\tr}{t}
\newcommand{\hist}{{h}}
\newcommand{\po}{\textcolor{poColor}{\mathsf{po}}}
\newcommand{\so}{\textcolor{soColor}{\mathsf{so}}}
\newcommand{\co}{\textcolor{coColor}{\mathsf{co}}}
\newcommand{\pco}{\textcolor{coColor}{\mathsf{pco}}}
\newcommand{\wro}{\textcolor{wrColor}{\mathsf{wr}}}
\newcommand{\readOp}[1]{\mathsf{reads}({#1})}
\newcommand{\tlogs}[1]{\mathsf{tr}({#1})}
\newcommand{\transC}[1]{\mathsf{cmtt}({#1})}
\newcommand{\transNotPending}[1]{\mathsf{complete}({#1})}
\newcommand{\writeOp}[1]{\mathsf{writes}({#1})}
\newcommand{\writeVar}[2]{{#1}\ \mathsf{writes}\ {#2}}
\newcommand{\deleteVar}[2]{{#1}\ \mathsf{deletes}\ {#2}}
\newcommand{\readVar}[2]{{#1}\ \mathsf{reads}\ {#2}}
\renewcommand{\hist}{{h}}
\newcommand{\btrue}{\mathsf{true}}
\newcommand{\bfalse}{\mathsf{false}}
\newcommand{\init}{\textup{\textbf{\texttt{init}}}}
\newcommand{\result}{\textup{\textbf{\texttt{res}}}}
\newcommand{\localWrite}{\textup{\textbf{\texttt{localWr}}}}
\algrenewcommand\algorithmicindent{1.0em}%
\algnewcommand\algorithmicswitch{\textbf{switch}}
\algnewcommand\algorithmiccase{\textbf{case}}
\algnewcommand\algorithmicassert{\texttt{assert}}
\algnewcommand\Assert[1]{\State \algorithmicassert(#1)}%
\algnewcommand\Let{\State\textbf{let }}
\algnewcommand\ReturnAlgorithmic{\State\ReturnNameAlgorithmic}
\algnewcommand\ReturnNameAlgorithmic{\textbf{return }}
\algnewcommand\Break{\State\textbf{break }}
\algnewcommand\Continue{\State\textbf{continue }}
\algnewcommand\InputAlgorithmic{\Statex\textbf{Procedure }} 
\algnewcommand\OutputAlgorithmic{\Statex\textbf{Output: }}
\lstdefinelanguage{MyLang}{%
	keywords = { delete, do, each, else, export, finally, for, foreach, function,
		if, in, let, of, return, void, while, with, yield, elements, read, write,
		insert, remove, add, AddItem, DeleteItem, Push, Pop, Enroll, Tweet, Timeline,
		NewsFeed, begin, end, break, throw},
	morecomment = [l]{//},
	morecomment = [s]{/*}{*/},
	morestring  = [b]',
	morestring  = [b]",
	sensitive   = true,
}
\lstdefinelanguage{Java10}{
	language      = Java,
	morekeywords  ={ var },
}
\begin{document}

\newcommand{\roundTransaction}[1]{\mathtt{round}(#1)}
\newcommand{\visConsPrefix}[4]{\mathsf{vp}_{#1}^{#2}(#3, #4)}
\newcommand{\isConsistentExtension}[3]{\textup{\textsc{isConsistentExtension}}(#1, #2, #3)}

\newcommand{\dotText}[1]{\textup{\textit{\.{#1}}}}
\newcommand{\pathTransaction}[1]{\mathtt{path}(#1)}
\newcommand{\partition}[1]{\mathtt{partition}(#1)}
\newcommand{\prefixVar}{\mathsf{p}}
\newcommand{\coprefixVar}{\co\mathsf{p}}
\newcommand{\almostPhi}[1]{\mathsf{almost}(#1)}
\newcommand{\sessionHistory}[1]{\mathtt{session}(#1)}

\newcommand{\widthHistory}[1]{\mathtt{width}(#1)}
\newcommand{\conflictsHistory}[1]{\mathtt{\#conf}(#1)}
\newcommand{\csobAlgorithm}{\textup{\textsc{exploreConsistentPrefixes}}}
\newcommand{\checksobound}{\textup{\textsc{checkConsistency}}}
\newcommand{\zeroWhere}[1][\pco]{\mathtt{0}_x^r(#1)}
\newcommand{\oneWhere}[1][\pco]{\mathtt{1}_x^r(#1)}
\newcommand{\computeCO}[2]{\textup{\textsc{saturate}}(#1, #2)}

\newcommand{\minCoprefixVar}[2][i]{\mathsf{min}\coprefixVar^{\sigma\pi}_{#1}(#2)}

\definecolor{coColor}{HTML}{b37200}
\definecolor{soColor}{HTML}{800080}
\definecolor{poColor}{HTML}{C41E3A}
\definecolor{wrColor}{HTML}{008000}
\definecolor{orColor}{HTML}{0026ff}
\definecolor{x1Color}{HTML}{003399}
\definecolor{x2Color}{HTML}{ff8000}
\definecolor{x3Color}{HTML}{008040}

\newcommand{\CC}{\textup{\texttt{TCC}}}
\newcommand{\TCC}{\textup{\texttt{TCC}}}
\newcommand{\xCC}{\textup{\texttt{x-TCC}}}
\newcommand{\SER}{\textup{\texttt{SER}}}
\newcommand{\PRE}{\textup{\texttt{PC}}}
\newcommand{\SI}{\textup{\texttt{SI}}}
\newcommand{\RA}{\textup{\texttt{RA}}}
\newcommand{\RC}{\textup{\texttt{RC}}}
\newcommand{\PSI}{\textup{\texttt{PSI}}}
\newcommand{\SC}{\textup{\texttt{SC}}}

\newcommand{\PO}{\textcolor{poColor}{\textsf{po}}}
\newcommand{\SO}{\textcolor{soColor}{\textsf{so}}}
\newcommand{\WR}{\textcolor{wrColor}{\textsf{WR}}}
\newcommand{\WW}{\textcolor{red}{\textsf{WW}}}
\newcommand{\RW}{\textcolor{red}{\textsf{RW}}}
\newcommand{\VIS}{\textcolor{red}{\textsf{vis}}}
\newcommand{\CO}{\textcolor{coColor}{\textsf{co} }}

\newcommand{\axpre}{\textsf{Prefix}}
\newcommand{\axconf}{\textsf{Conflict}}
\newcommand{\axser}{\textsf{Serializability}}
\newcommand{\axrc}{\textsf{Read Committed}}
\newcommand{\axcc}{\textsf{Causal}}
\newcommand{\axra}{\textsf{Read Atomic}}

\newcommand{\ext}{\textsc{Ext}}
\newcommand{\session}{\textsc{Session}}
\newcommand{\transvis}{\textsc{TransVis}}
\newcommand{\prefix}{\textsc{Prefix}}
\newcommand{\noconflict}{\textsc{NoConflict}}
\newcommand{\totalvis}{\textsc{TotalVis}}

\newcommand{\Obj}{\mathsf{Obj}}
\newcommand{\Var}{\mathsf{Vars}}
\newcommand{\Vars}{\mathsf{Keys}}
\newcommand{\VarsFomula}[1]{\mathsf{Vars}(#1)}
\newcommand{\LVars}{\mathsf{LVars}}
\newcommand{\Val}{\mathsf{Vals}}
\newcommand{\Vals}{\mathsf{Rows}}
\newcommand{\setRows}{\mathsf{R}}
\newcommand{\mapRows}{\mathsf{U}}
\NewDocumentCommand{\keyof}{m+o}{\mathsf{key}\IfValueT{#1}{(#1)}}
\newcommand{\row}{\mathsf{r}}
\newcommand{\Iso}{\mathsf{Iso}}
\newcommand{\Columns}{\mathbb{C}}
\newcommand{\Rows}{\mathsf{Rows}}
\newcommand{\Tables}{\mathbb{T}}
\newcommand{\OId}{\mathsf{OpId}}
\newcommand{\Op}{\mathsf{Op}}
\newcommand{\xvar}{{x}}
\newcommand{\prog}{{\mathsf{P}}}
\newcommand{\histOf}[2][]{{\mathsf{hist}_{#1}({#2})}}
\newcommand{\values}{\mathsf{v}}
\newcommand{\predicate}{\mathsf{p}}
\newcommand{\predicateUpdate}{\mathsf{q}}
\newcommand{\where}[1]{\whereName(#1)}
\newcommand{\whereName}{\mathtt{WHERE}}
\newcommand{\valuewrName}[1][\wro]{\mathtt{value}_{#1}}
\newcommand{\valuewr}[3][\wro]{\valuewrName[#1](#2, #3)}
\newcommand{\setUpdate}[1]{\mathtt{SET}(#1)}
\newcommand{\Predicates}{\mathscr{P}}
\newcommand{\del}{\dagger}
\newcommand{\id}[1]{\mathtt{id}(#1)}
\newcommand{\val}[1]{\mathtt{val}(#1)}
\newcommand{\var}[1]{\mathtt{var}(#1)}
\newcommand{\sesnapshot}[1]{\so \text{-} \mathtt{snapshot}(#1)}
\newcommand{\timeTransaction}[1]{\mathtt{time}(#1)}
\newcommand{\selectOperational}{\mathsf{readFrom}}
\newcommand{\commitOperational}[1][\iota]{\mathsf{validate}_{#1}}
\newcommand{\snapshot}[1][\iota]{\mathsf{snapshot}_{#1}}
\newcommand{\choice}[1]{\mathsf{choice}\left(#1\right)}
\newcommand{\historyExecution}[1]{\mathsf{history}(#1)}
\newcommand{\isolationExecution}[1]{\mathsf{iso}(#1)}
\newcommand{\isolation}[1]{\mathsf{iso}(#1)}
\newcommand{\labelRule}[1]{\mathsf{rule}(#1)}
\newcommand{\signTransaction}[1]{\mathtt{sign}(#1)}
\newcommand{\opSignTransaction}[1]{\mathtt{opsign}(#1)}

\newcommand{\confWr}{\mathsf{WrCons}}
\newcommand{\idRelation}{\mathtt{id}}
\newcommand{\trans}[1]{\mathsf{tr}(#1)}
\newcommand{\length}[1]{\mathsf{len}(#1)}
\newcommand{\last}[1]{\textup{\texttt{last}}(#1)}
\newcommand{\lastTr}[1]{\mathsf{lastTr}({#1})}

\newcommand{\axiomConstraint}{\mathsf{constraint}}
\newcommand{\visibility}{\mathsf{vis}}
\newcommand{\visibilityRelation}[5]{\mathsf{vis}_{#1}^{#2}( #3, #4, #5)}
\newcommand{\visibilityRelationAxiom}[3]{\visibilityRelation{a}{}{#1}{#2}{#3}}
\newcommand{\visibilityRelationAxiomCo}[3]{\visibilityRelation{a}{\co}{#1}{#2}{#3}}
\newcommand{\visibilityRelationInstanceApplied}[4][\co]{\mathsf{v}(#1)(#2, #3, #4)}
\newcommand{\visibilityRelationInstance}{\mathsf{v}}
\newcommand{\visibilitySet}[1]{\mathsf{vis}(#1)}
\newcommand{\constraintAxiom}[3][\iota]{C_{#1}^{#2}(#3)}
\newcommand{\constraintAxiomIota}[2][\co]{\constraintAxiom{#1}{#2}}

\newcommand{\events}[1]{\mathsf{events}({#1})}
\newcommand{\sessionTransaction}[1]{\mathtt{ses}({#1})}
\newcommand{\thread}[1]{\mathtt{th}({#1})}
\newcommand{\variable}[1]{\mathtt{var}({#1})}
\newcommand{\valEvent}[1]{\mathtt{val}({#1})}
\newcommand{\eqdef}{::=}
\newcommand{\tab}{\mathit{tab}}
\newcommand{\pkey}{\mathit{pkey}}
\newcommand{\pkeyVal}{\mathit{pkeyVal}}
\newcommand{\ibegin}[1]{\ibegine(#1)}
\newcommand{\ibegine}{\mathtt{begin}}
\newcommand{\iadd}{\mathtt{add}}
\newcommand{\iremove}{\mathtt{remove}}
\newcommand{\ielements}{\mathtt{elements}}
\newcommand{\icontains}{\mathtt{has}}
\newcommand{\iend}{\mathtt{commit}}
\newcommand{\icommit}{\mathtt{commit}}
\newcommand{\iabort}{\mathtt{abort}}
\newcommand{\iif}[2]{\mathtt{if}({#1}) \{ {#2} \}}
\newcommand{\iread}{\mathtt{read}}
\newcommand{\iwrite}{\mathtt{write}}
\newcommand{\iselect}[1]{\mathtt{SELECT}({#1})}
\newcommand{\iinsert}[1]{\mathtt{INSERT}({#1})}
\newcommand{\idelete}[1]{\mathtt{DELETE}({#1})}
\newcommand{\iupdate}[2]{\mathtt{UPDATE}({#1},\ {#2})}
\newcommand{\eselect}{\mathtt{SELECT}}
\newcommand{\einsert}{\mathtt{INSERT}}
\newcommand{\edelete}{\mathtt{DELETE}}
\newcommand{\eupdate}{\mathtt{UPDATE}}
\newcommand{\eend}{\mathtt{end}}
\newcommand{\KVProgs}{\mathcal{P}_{KV}}
\newcommand{\SQLProgs}{\mathcal{P}_{SQL}}
\newcommand{\DBschema}{\mathcal{S}}
\newcommand{\DBinst}{\mathcal{D}}

\newcommand{\exec}{\xi}

\newbool{testCondition}
\setbool{testCondition}{false}

\ifbool{testCondition}{%
\definecolor{newVersionColor}{HTML}{009999}
\definecolor{appendixVersionColor}{HTML}{e600ac}
\definecolor{noAppendixVersionColor}{HTML}{6600ff}
}{
\colorlet{newVersionColor}{black}
\colorlet{appendixVersionColor}{black}
\colorlet{noAppendixVersionColor}{black}
}%

\newcommand{\nver}[1]{\textcolor{newVersionColor}{#1}}
\newcommand{\red}[1]{\textcolor{red}{#1}}
\definecolor{orcidlogocol}{HTML}{A6CE39}

\newenvironment{sketchproof}{\renewcommand{\proofname}{Sketch of a Proof}\proof}{\qed\endproof}

\newbool{appendixVerCondition}
\setbool{appendixVerCondition}{false}

\NewEnviron{appendixVer}{%
\ifbool{appendixVerCondition}{%
    \ignorespaces%
    \leavevmode%
    \unskip%
    \color{appendixVersionColor}%
    \BODY%
    \color{black}%
    \ignorespacesafterend%
}{}%
}

\NewEnviron{noappendixVer}{%
\ifbool{appendixVerCondition}{}{%
    \ignorespaces%
    \leavevmode%
    \unskip%
    \color{noAppendixVersionColor}%
    \BODY%
    \color{black}%
    \ignorespacesafterend%
}%
}

\title{On the Complexity of Checking Mixed Isolation Levels for SQL Transactions}%

\author{Ahmed Bouajjani\inst{1}\orcidlink{0000-0002-2060-3592} \and
Constantin Enea\inst{2}\orcidlink{0000-0003-2727-8865} \and Enrique Rom\'an-Calvo\inst{1}\orcidlink{0009-0005-7539-2330}}%
\authorrunning{Ahmed Bouajjani \and
Constantin Enea \and Enrique Rom\'an-Calvo}

\institute{Université Paris Cité, CNRS, IRIF \\ \email{\{abou, calvo\}@irif.fr}
\and LIX, École Polytechnique, CNRS and \\ Institut Polytechnique de Paris \\
\email{cenea@lix.polytechnique.fr}}%

\maketitle

\vspace{-7mm}
\begin{abstract}
	Concurrent accesses to databases are typically grouped in transactions which define units of work that should be isolated from other concurrent computations and resilient to failures. Modern databases provide different levels of isolation for transactions that correspond to different trade-offs between consistency and throughput. Quite often, an application can use transactions with different isolation levels at the same time. In this work, we investigate the problem of testing isolation level implementations in databases, i.e., checking whether a given execution composed of multiple transactions adheres to the prescribed isolation level semantics. We particularly focus on transactions formed of SQL queries and the use of multiple isolation levels at the same time. We show that many restrictions of this problem are NP-complete and provide an algorithm which is exponential-time in the worst-case, polynomial-time in relevant cases, and practically efficient. %
\end{abstract}

	\sloppy

	\vspace{-5mm}
\section{Introduction}

Concurrent accesses to databases are typically grouped in transactions which define units of work that should be isolated from other concurrent computations and resilient to failures. Modern databases provide different levels of isolation for transactions with different trade-offs between consistency and throughput. The strongest isolation level, \emph{Serializability}~\cite{DBLP:journals/jacm/Papadimitriou79b}, provides the illusion that transactions are executed atomically one after another in a serial order. Serializability incurs a high cost in throughput. For performance, databases provide weaker isolation levels, e.g., 
{\it Snapshot Isolation}~\cite{DBLP:conf/sigmod/BerensonBGMOO95} or {\it Read
Committed}~\cite{DBLP:conf/sigmod/BerensonBGMOO95}.

The concurrency control protocols used in large-scale databases to implement isolation levels are difficult to build and test. For instance, the black-box testing framework Jepsen~\cite{jepsen} found a remarkably large number of subtle problems in many production databases. %

In this work, we focus on testing the isolation level implementations in databases, and more precisely, on the problem of checking whether a given execution adheres to the prescribed isolation level semantics. 
Inspired by scenarios that arise in commercial software~\cite{DBLP:conf/sigmod/Pavlo17}, %
we consider a quite generic version of the problem where transactions are formed of SQL queries and \emph{multiple} isolation levels are used at the same time, i.e., each transaction is assigned a possibly different isolation level (the survey in~\cite{DBLP:conf/sigmod/Pavlo17} found that 32\% of the respondents use such ``heterogeneous'' configurations). Previous work~\cite{DBLP:journals/jacm/Papadimitriou79b,DBLP:journals/pacmpl/BiswasE19} studied the complexity of the problem when transactions are formed of reads and writes on a \emph{static} set of keys (variables), and all transactions have the \emph{same} isolation level. 

As a first contribution, we introduce a formal semantics for executions with SQL transactions and a range of isolation levels, including serializability, snapshot isolation, prefix consistency, and read committed. Dealing with SQL queries is more challenging than classic reads and writes of a \emph{static} set of keys (as assumed in previous formalizations~\cite{DBLP:conf/concur/Cerone0G15,DBLP:journals/pacmpl/BiswasE19}). SQL insert and delete queries change the set of locations at runtime and the set of locations returned by an SQL query depends on their values (the values are restricted to satisfy $\whereName$ clauses). 

We define an abstract model for executions, called \emph{history}, where every SQL query that inspects the database (has a $\whereName$ clause) is associated with a set of SQL queries that wrote the inspected values. This relation is called a \emph{write-read} relation (also known as read-from). This is similar to associating reads to writes in defining memory models. We consider two classes of histories depending on the ``completeness'' of the write-read relation. To define a formal semantics of isolation levels, we need a complete write-read relation in the sense that for instance, an SQL select is associated with a write \emph{for every} possible row (identified by its primary key) in the database, even if that row is \emph{not} returned by the select because it does not satisfy the $\whereName$ clause. Not returning a row is an observable effect that needs to be justified by the semantics. Such \emph{full} histories can not be constructed by interacting with the database in a black-box manner (a desirable condition in testing) when only the outputs returned by queries can be observed. Therefore, we introduce the class of \emph{client} histories where the write-read concerns only rows that are \emph{returned} by a query. The consistency of a client history is defined as the existence of an extension of the write-read to a full history which satisfies the semantics.
The semantics on full histories combines axioms from previous work~\cite{DBLP:journals/pacmpl/BiswasE19} in a way that is directed by SQL queries that inspect the database and the isolation level of the transaction they belong to. This axiomatic semantics is validated by showing that it is satisfied by a standard operational semantics inspired by real implementations.

We study the complexity of checking if a full or client history is consistent, it satisfies the prescribed isolation levels. This problem is more complex for client histories, which record less dependencies and need to be extended to full ones. 

For full histories, we show that the complexity of consistency checking matches previous results in the reads and writes model when all transactions have the same isolation level~\cite{DBLP:journals/pacmpl/BiswasE19}: polynomial time for the so-called saturable isolation levels,
and NP-complete for stronger levels like Snapshot Isolation or Serializability. The former is a new result 
that generalizes the work of~\cite{DBLP:journals/pacmpl/BiswasE19} and exposes the key ideas for achieving polynomial-time complexity, %
while the latter is a consequence of the previous results.

We show that consistency checking becomes NP-complete for client histories even for saturable isolation levels. It remains NP-complete regardless of the expressiveness of $\whereName$ clauses (for this stronger result we define another class of histories called \emph{partial-observation}). The problem is NP-complete even if we bound the number of sessions. In general, transactions are organized in \emph{sessions}~\cite{DBLP:conf/pdis/TerryDPSTW94}, an abstraction of the sequence of transactions performed during the execution of an application (the counterpart of threads in shared memory). This case is interesting because it is polynomial-time in the read/write model~\cite{DBLP:journals/pacmpl/BiswasE19}.

As a counterpart to these negative results, we introduce an algorithm for checking consistency of client histories which is exponential-time in the worst case, but polynomial time in relevant cases. Given a client history as input, this algorithm combines an enumeration of extensions towards a full history with a search for a total commit order that satisfies the required axioms. The commit order represents the order in which transactions are committed in the database and it is an essential artifact for defining isolation levels. For efficiency, the algorithm uses a non-trivial enumeration of extensions that are \emph{not} necessarily full but contain enough information to validate consistency. The search for a commit order is a non-trivial generalization of an algorithm by Biswas et al.~\cite{DBLP:journals/pacmpl/BiswasE19} which concerned only serializability. This generalization applies to all practical isolation levels and combinations thereof. We evaluate an implementation of this algorithm on histories generated by PostgreSQL with a number of applications from BenchBase~\cite{difallah2013oltp}, e.g., the TPC-C model of a store and a model of Twitter. This evaluation shows that the algorithm is quite efficient in practice and scales well to typical workloads used in testing databases.

To summarize, we provide the first results concerning the complexity of checking the correctness of mixed isolation level implementations for SQL transactions. We introduce a formal specification for such implementations, and a first tool that can be used in testing their correctness.

	\vspace{-1mm}
\section{Histories}
\label{sec:histories}

\vspace{-1mm}
\subsection{Transactions}\label{ssec:syntax}

We model the database as a set of rows from an unbounded domain $\Vals$. Each row is associated to a unique (primary) key from a domain $\Keys$, given by the function $\mathsf{key}:\Vals\to\Keys$. We consider client programs accessing the database from a number of parallel sessions, each session being a sequence of transactions defined by the following grammar:

\begin{footnotesize}
\begin{center}
$\iota \in \Iso\quad a\in \LVars \quad \setRows \in 2^\Rows \quad \predicate \in \Vals \to \{0,1\} \quad \mapRows \in \Keys \to \Rows$ \\[-6mm]
\end{center}
\vspace{-5mm}
\begin{align*}
\mathsf{Transaction} & \eqdef  \ibegin{\iota}; \mathsf{Body}; \icommit\\
\mathsf{Body} & \eqdef  \mathsf{Instr} \ \mid\  \mathsf{Instr}; \mathsf{Body} \\
\mathsf{Instr} & \eqdef  \mathsf{InstrDB} \ \mid\  a := \mathsf{LExpr}  \mid\ \iif{ \mathsf{LCond} }{\mathsf{Instr}} \\
\mathsf{InstrDB} & \eqdef a := \iselect{\predicate}  \ \mid\ \iinsert{\setRows} \ \mid\ \idelete{\predicate} \ \mid \ \iupdate{\predicate}{\mapRows}\ \mid \iabort \\[-5mm]
\end{align*}
\end{footnotesize}

Each transaction is delimited by $\ebegin$ and $\icommit$ instructions. The $\ebegin$ instruction defines an isolation level $\iota$ for the current transaction. The set of isolation levels $\Iso$ we consider in this work will be defined later. 
The body contains standard SQL-like statements for accessing the database and standard assignments and conditionals for local computation. Local computation uses (transaction-)local variables from a set $\LVars$. We use $a$, $b$, $\ldots$ to denote local variables. Expressions and Boolean conditions over local variables are denoted with $\mathsf{LExpr}$ and $\mathsf{LCond}$, respectively. 

Concerning database accesses (sometimes called queries), we consider a simplified but representative subset of SQL: $\iselect{\predicate}$ returns the set of rows satisfying the predicate $\predicate$ and the result is stored in a local variable $a$. $\iinsert{\setRows}$ inserts the set of rows $\setRows$ or updates them in case they already exist
(this corresponds to $\mathtt{INSERT}\ \mathtt{ON}$ $\mathtt{CONFLICT}$ $\mathtt{DO}$ $\mathtt{UPDATE}$ in PostgreSQL)
, and $\idelete{\predicate}$ deletes all the rows that satisfy $\predicate$. Then, $\iupdate{\predicate}{\mapRows}$ updates the rows satisfying $\predicate$ with values given by the map $\mapRows$, i.e., every row $\row$ in the database that satisfies $\predicate$ is replaced with $\mapRows(\keyof{\row})$, and $\iabort$ aborts the current transaction. The predicate $\predicate$ corresponds to a $\whereName$ clause in standard SQL.

\vspace{-1mm}
\subsection{Histories}

\vspace{-1mm}

We define a model of the interaction between a program and a database called \emph{history} which abstracts away the local computation in the program and the internal behavior of the database. A history is a set of \emph{events} representing the database accesses in the execution grouped by transaction, along with some relations between these events which explain the output of $\eselect$ instructions. 

An event is a tuple $\langle e, \mathit{type} \rangle$ where $e$ is an \textit{identifier} and $\mathit{type}$ is one of $\ebegin$, $\ecommit$, $\eabort$, $\eselect$, $\einsert$, $\edelete$ and $\eupdate$. $\Events$ denotes the set of events.
For an event $e$ of type $\eselect$, $\edelete$, or $\eupdate$, we use $\where{e}$ to denote the predicate $\predicate$
and for an $\eupdate$ event $e$, we use $\setUpdate{e}$ to denote the map $\mapRows$. 

We call $\iread$ events the $\eselect$ events that read the database to return a set of rows, and the $\edelete$ and $\eupdate$ events that read the database checking satisfaction of some predicate $\predicate$. Similarly, we call $\iwrite$ events the $\einsert$, $\edelete$ and $\eupdate$ events that modify the database. We also say that an event is of type $\eend$ if it is either a $\ecommit$ or an $\eabort$ event.

A \emph{transaction log} $\tup{t, \iota_t, E, \po_t}$ is an identifier $t$, an \emph{isolation level} identifier $\iota_t$, and a finite set of events $E$ along with a strict total order $\po_t$ on $E$, called \emph{program order} (representing the order between instructions in the body of a transaction). The set $E$ of events in a transaction log $t$ is denoted by $\events{t}$. For simplicity, we may use the term \emph{transaction} instead of transaction log.

Isolation levels differ in the values returned by read events which are not preceded by a write on the same variable in the same transaction. %
We denote by $\readOp{t}$ the set of $\iread$ events contained in $t$. Also, if $t$ does \textit{not} contain an $\eabort$ event, the set of $\iwrite$ events in $t$ is denoted by $\writeOp{t}$. If $t$ contains an $\eabort$ event, then we define $\writeOp{t}$ to be empty. This is because the effect of aborted transactions (its set of writes) should not be visible to other transactions. The extension to sets of transaction logs is defined as usual. 

To simplify the exposition we assume that for any given key $x\in\Keys$, a transaction does not modify (insert/delete/update) a row with key $x$ more than once. Otherwise, under all isolation levels, only the last among multiple updates is observable in other transactions.

As expected, we assume that the minimal element of $\po_t$ is a $\ebegin$ event, if a $\ecommit$ or an $\eabort$ event occurs, then it is maximal in $\po_t$, and a log cannot contain both $\ecommit$ and $\eabort$.
A transaction log without $\ecommit$ or $\eabort$ is called \emph{pending}. Otherwise, it is \emph{complete}. A complete transaction log with a $\ecommit$ is \textit{committed} and \textit{aborted} otherwise.

A \emph{history} contains a set of transaction logs (with distinct identifiers)
ordered by a (partial) \emph{session order} $\so$ that represents the order
between transactions in the same session. %
It also includes a
\emph{write-read} relation $\wro$ which associates $\iwrite$ events with $\iread$ events. The $\iwrite$ events associated to a $\iread$ implicitly define the values observed (returned) by the $\iread$ (read events do \emph{not} include explicit values).
Let $T$ be a set of transaction logs.
For every key $\key \in \Vars$ we consider a write-read relation $\wro_{\key}\subseteq \writeOp{T}\times \readOp{T}$. The union of $\wro_{\key}$ for every $\key \in \Vars$ is denoted by $\wro$. We extend the relations $\wro$ and $\wro_\key$ to pairs of transactions by $\tup{t_1,t_2}\in \wro$, resp., $\tup{t_1,t_2}\in \wro_\key$, iff there exist events $w$ in $t_1$ and $r$ in $t_2, t_2 \neq t_1$ s.t. $\tup{w,r}\in \wro$, resp., $\tup{w, r}\in \wro_\key$. Analogously, we extend $\wro$ and $\wro_\key$ to tuples formed of a transaction (containing a write) and a read event. We say that the transaction $t_1$ is \emph{read} by the transaction $t_2$ when $\tup{t_1,t_2}\in \wro$. The inverse of $\wro_{\key}$ is defined as usual and denoted by $\wro_{\key}^{-1}$. We assume that $\wro_{\key}^{-1}$ is a partial function and thus, use $\wro_{\key}^{-1}(e)$ to denote the $\iwrite$ event $w$ such that $(w,e)\in \wro_{\key}$. We also use $\wro_{\key}^{-1}(e) \downarrow$ and $\wro_{\key}^{-1}(e) \uparrow$ to say that there exists a $\iwrite$ $w$ such that $(w,e)\in \wro_{\key}$ (resp. such $\iwrite$ $w$ does not exist). 

To simplify the exposition, every history includes a distinguished transaction $\init$ preceding all the other transactions in $\so$ and inserting a row for every $\key$. It represents the initial state and it is the only transaction that may insert as value $\del_\key$ (indicating that initially, no row with key $x$ is present). 

\vspace{-1mm}
\begin{definition}
\label{def:history}
A \emph{history} $\tup{T, \so, \wro}$ is a set of transaction logs $T$ along with a strict partial \emph{session order} $\so$, and a 
\emph{write-read} relation $\wro_\key\subseteq \writeOp{T}\times \readOp{T}$ for each $\key \in \Vars$ s.t.
\vspace{-1mm}
\begin{itemize}
	\item the inverse of $\wro_{\key}$ is a partial function, 
	\item $\so\cup\wro$ is acyclic (here we use the extension of $\wro$ to pairs of transactions), 
	\item if $(w, r) \in \wro_\key$, then $\valuewr{w}{\key} \neq \bot$, where

\begin{equation*}
\sloppy
\valuewr{w}{x} = \left\{\begin{array}{ll}
	\row & \text{if } w = \iinsert{\setRows} \land \row \in \setRows \land \keyof{\row}=x  \\
	\del_{\key} & \text{if } w = \idelete{\predicate} \land\ \wro_{\key}^{-1}(w) \downarrow \\ 
	&\hspace{1cm} \land\ \predicate(\valuewr{\wro_{\key}^{-1}(w) }{x}) = 1 \\[1mm]
	\mapRows(\key) & \text{if } w = \iupdate{\predicate}{\mapRows} \land \ \wro_{\key}^{-1}(w) \downarrow \\ 
	&\hspace{1cm} \land\ \predicate(\valuewr{\wro_{\key}^{-1}(w) }{x}) = 1 \\ %
	\bot & \text{otherwise} 
	\end{array}\right.
\end{equation*}
	$ $
\end{itemize}
\vspace{-4mm}
\end{definition}

\begin{figure}[t]
    \centering
            
    \resizebox{0.9\textwidth}{!}{
        \begin{tikzpicture}[>=stealth',shorten >=1pt,auto,node distance=3cm,
            semithick, transform shape,
            B/.style = {%
            decoration={brace, amplitude=1mm,#1},%
            decorate},
            B/.default = ,  %
            ]
            \node[minimum width=8em, draw, rounded corners=2mm,outer sep=0, label={[font=\small]170:$t_1$}] (r11) at (0, -1.25) {
                \begin{tabular}{l}
                   $\iupdate{\lambda r: r \geq 1}{\lambda r: -2}$
                \end{tabular}
            };

            \node[minimum width=8em, draw, rounded corners=2mm,outer sep=0, label={[font=\small]10:$\init$}] (init) at (4, 0) {$\iinsert{\{x_1 : 0, x_2 : 1\}}$};
            \node[minimum width=8em, draw, rounded corners=2mm,outer sep=0, label={[font=\small]10:$t_2$}] (r12) at (8, -1.25) {
                \begin{tabular}{l}
                $\idelete{\lambda r: r \leq 0}$
                \end{tabular}
            };

            \path (init) edge[->, soColor, right, transform canvas={xshift=-6mm}] node [above, left, shift={(0,0.1)}] {$\so$} (r11);
    
            \path (init) edge[->, transform canvas={xshift=0mm}, wrColor] node [black, left] {$\wro_{x_1, x_2}$} (r12);

            \path (init) edge[->, soColor, right, transform canvas={xshift=6mm}] node [above, right, shift={(0,0.1)}] {$\so$} (r12);
    
            \path (init) edge[->,  transform canvas={xshift=0mm}, wrColor] node [black, right, shift={(0.3,0)}] {$\wro_{x_2}$} (r11);

            \path (r12) edge[->, wrColor, transform canvas={yshift=-1.5},] node [black, below] {$\wro_{x_1}$} (r11);

        \end{tikzpicture}  
    }
    
    \vspace{-2mm}
    
    \caption{An example of a history (isolation levels omitted for legibility). Arrows represent $\so$ and $\wro$ relations. Transaction $\init$ defines the initial state: row $0$ with key $x_1$ and row $1$ with key $x_2$. Transaction $t_2$ reads $x_1$ and $x_2$ from $\init$ and deletes row with key $x_1$ (the only row satisfying predicate $\lambda r: r \leq 0$ corresponds to key $x_1$). Transaction $t_1$ reads $x_1$ from $t_2$ and $x_2$ from $\init$, and updates only row with key $x_2$ as this is the only row satisfying predicate $\lambda r: r \geq 1$.}
    \label{fig:example-history}
\vspace{-4mm}
\end{figure}
    

The function $\wro_x^{-1}$ may be partial because some query may not read a key $x$, e.g., if the corresponding row does not satisfy the query predicate.

The function $\valuewr{w}{x}$ returns the row with key $x$ written by the $\iwrite$ event $w$. If $w$ is an $\einsert$, it returns the inserted row with key $x$. If $w$ is an $\iupdate{\predicate}{\mapRows}$ event, it returns the value of $\mapRows$ on key $x$ if $w$ reads a value for key $x$ that satisfies predicate $\predicate$. If $w$ is a $\idelete{p}$, it returns the special value $\del_{\key}$ if $w$ reads a value for key $x$ that satisfies $\predicate$. This special value indicates that the database does \emph{not} contain a row with key $\key$. In case no condition is satisfied, $\valuewr{w}{x}$ returns an undefined value $\bot$. We assume that the special values $\del_x$ or $\bot$ do not satisfy any predicate. 
Note that the recursion in the definition of $\valuewr{w}{\key}$ terminates because $\wro$ is an acyclic relation.

Figure~\ref{fig:example-history} shows an example of a history. For the $\eupdate$ event $w$ in $t_1$, $\valuewr{w}{x_1}=\bot$ because this event reads $x_1$ from the $\edelete$ event in $t_2$; while $\valuewr{w}{x_2}=-2$ as it reads $x_2$ from the $\einsert$ event in $\init$.

The set of transaction logs $T$ in a history $\hist=\tup{T, \so, \wro}$ is denoted by $\tlogs{\hist}$ and $\events{\hist}$ is the union of $\events{t}$ for every $t\in T$. For a history $\hist$ and an event $e$ in $\hist$, $\trans{e}$ is the transaction $t$ in $\hist$ that contains $e$. We assume that each event belongs to only one transaction. Also, $\writeOp{\hist}=\bigcup_{t\in \tlogs{h}}\writeOp{t}$ and $\readOp{\hist}=\bigcup_{t\in \tlogs{h}}\readOp{t}$.
We extend $\so$ to pairs of events by $(e_1,e_2)\in \so$ if $(\trans{e_1},\trans{e_2})\in\so$. Also, $\po=\bigcup_{t\in T} \po_t$.
We use $\hist$, $\hist_1$, $\hist_2$, $\ldots$ to range over histories. 

For a history $h$, we say that an event $r$ \emph{reads} $x$ in $h$
whenever $\wro_x^{-1}(r) \downarrow$. Also, we say that an event $w$ \emph{writes} $x$ in $h$, denoted by $\writeVar{w}{x}$, whenever $\valuewr{w}{x} \neq \bot$ and the transaction of $w$ is \emph{not} aborted. We extend the function $\mathtt{value}$ to transactions: $\valuewr{t}{x}$ equals $\valuewr{w}{x}$, where $w$ is the maximal event in $\po_t$ that writes $x$.

\vspace{-1mm}
\subsection{Classes of histories}

\vspace{-1mm}
We define two classes of histories: (1) \emph{full} histories which are required to define the semantics of isolation levels and (2) \emph{client} histories which model what is observable from interacting with a database as a black-box.

Full histories model the fact that every read query ``inspects'' an entire snapshot of the database in order to for instance, select rows satisfying some predicate. Roughly, full histories contain a write-read dependency for every read and key. There is an exception which concerns ``local'' reads. If a transaction modifies a row with key $x$ and then reads the same row, then it must always return the value written in the transaction. This holds under all isolation levels. In such a case, there would be no write-read dependency because these dependencies model interference across different transactions.
We say that a read $r$ reads a key $x$ \emph{locally} if it is preceded in the same transaction by a write $w$ that writes $x$.

\vspace{-1mm}
\begin{definition}
\label{def:full-history}
A \emph{full history} $\tup{T, \so, \wro}$ is a history where $\wro_\key^{-1}(r)$ is defined for all $\key$ and $r$, unless $r$ reads $x$ locally.
\vspace{-1mm}
\end{definition}

\begin{figure}[t]
\centering
\begin{subfigure}{0.5\textwidth}
    
\resizebox{\textwidth}{!}{
    \begin{tikzpicture}[>=stealth',shorten >=1pt,auto,node distance=3cm,
        semithick, transform shape,
        B/.style = {%
        decoration={brace, amplitude=1mm,#1},%
        decorate},
        B/.default = ,  %
        ]
        \node[minimum width=8em, draw, rounded corners=2mm,outer sep=0, label={[font=\small]170:$t_1$}] (r11) at (0, -2) {
            \begin{tabular}{l}
                $\iupdate{\lambda r: r \geq 1}{\lambda r: -2}$
            \end{tabular}
        };

        \node[minimum width=8em, draw, rounded corners=2mm,outer sep=0, label={[font=\small]10:$\init$}] (init) at (2.2, -0.5) {$\iinsert{\{x_1 : 0, x_2 : 1\}}$};
        \node[minimum width=8em, draw, rounded corners=2mm,outer sep=0, label={[font=\small]10:$t_2$}] (r12) at (4.4, -2) {
            \begin{tabular}{l}
            $\idelete{\lambda r: r \leq 0}$
            \end{tabular}
        };

        \path (init) edge[->, soColor, right, transform canvas={xshift=-6mm}] node [above, right] {$\so$} (r11);

        \path (init) edge[->, transform canvas={xshift=0mm}, wrColor] node [black, left] {$\wro_{x_1}$} (r12);

        \path (init) edge[->, soColor, right, transform canvas={xshift=6mm}] node [above] {$\so$} (r12);

        \path (init) edge[->,  transform canvas={xshift=0mm}, wrColor] node [black, right] {$\wro_{x_2}$} (r11);

    \end{tikzpicture}  
}

\caption{Client history.}
\label{fig:example-full-client:client}
\end{subfigure}
\hfill
\begin{subfigure}{0.49\textwidth}
        
\resizebox{\textwidth}{!}{
    \begin{tikzpicture}[>=stealth',shorten >=1pt,auto,node distance=3cm,
        semithick, transform shape,
        B/.style = {%
        decoration={brace, amplitude=1mm,#1},%
        decorate},
        B/.default = ,  %
        ]
        \node[minimum width=8em, draw, rounded corners=2mm,outer sep=0, label={[font=\small]170:$t_1$}] (r11) at (0, -2) {
            \begin{tabular}{l}
                $\iupdate{\lambda r: r \geq 1}{\lambda r: -2}$
            \end{tabular}
        };

        \node[minimum width=8em, draw, rounded corners=2mm,outer sep=0, label={[font=\small]10:$\init$}] (init) at (2.2, -0.5) {$\iinsert{\{x_1 : 0, x_2 : 1\}}$};
        \node[minimum width=8em, draw, rounded corners=2mm,outer sep=0, label={[font=\small]10:$t_2$}] (r12) at (4.4, -2) {
            \begin{tabular}{l}
            $\idelete{\lambda r: r \leq 0}$
            \end{tabular}
        };

        \path (init) edge[->,  dashed, transform canvas={xshift=-9mm}, wrColor] node [black, left] {$\wro_{x_1}$} (r11);

        \path (init) edge[->, soColor, right, transform canvas={xshift=-6mm}] node [above, right] {$\so$} (r11);

        \path (init) edge[->, transform canvas={xshift=0mm}, wrColor] node [black, left] {$\wro_{x_1}$} (r12);

        \path (init) edge[->, soColor, right, transform canvas={xshift=6mm}] node [above] {$\so$} (r12);

        \path (init) edge[->,  transform canvas={xshift=0mm}, wrColor] node [black, right] {$\wro_{x_2}$} (r11);

        \path (r11) edge[->, wrColor, dashed, transform canvas={yshift=-1.5},] node [black, above] {$\wro_{x_2}$} (r12);

    \end{tikzpicture}  
}

\caption{$t_2$ observes $x_2 = -2$.}
\label{fig:example-full-client:no-extension}

\end{subfigure}
\hfill
\begin{subfigure}{0.49\textwidth}
    
\resizebox{\textwidth}{!}{
    \begin{tikzpicture}[>=stealth',shorten >=1pt,auto,node distance=3cm,
        semithick, transform shape,
        B/.style = {%
        decoration={brace, amplitude=1mm,#1},%
        decorate},
        B/.default = ,  %
        ]
        \node[minimum width=8em, draw, rounded corners=2mm,outer sep=0, label={[font=\small]170:$t_1$}] (r11) at (0, -2) {
            \begin{tabular}{l}
                $\iupdate{\lambda r: r \geq 1}{\lambda r: -2}$
            \end{tabular}
        };

        \node[minimum width=8em, draw, rounded corners=2mm,outer sep=0, label={[font=\small]10:$\init$}] (init) at (2.2, -0.5) {$\iinsert{\{x_1 : 0, x_2 : 1\}}$};
        \node[minimum width=8em, draw, rounded corners=2mm,outer sep=0, label={[font=\small]10:$t_2$}] (r12) at (4.4, -2) {
            \begin{tabular}{l}
                $\idelete{\lambda r: r \leq 0}$
            \end{tabular}
        };
        \path (init) edge[->,  dashed, transform canvas={xshift=-9mm}, wrColor] node [black, left] {$\wro_{x_1}$} (r11);

        \path (init) edge[->, soColor, right, transform canvas={xshift=-6mm}] node [above, right] {$\so$} (r11);

        \path (init) edge[->,  transform canvas={xshift=0mm}, wrColor] node [black, right] {$\wro_{x_2}$} (r11);

        \path (init) edge[->, soColor, right, transform canvas={xshift=6mm}] node [above] {$\so$} (r12);

        \path (init) edge[->, transform canvas={xshift=0mm}, wrColor] node [black, left] {$\wro_{x_1}$} (r12);

        \path (init) edge[->, dashed, transform canvas={xshift=12mm}, wrColor] node [black, right] {$\wro_{x_2}$} (r12);

    \end{tikzpicture}  
}

\caption{$t_2$ observes $x_2 = 1$.}
\label{fig:example-full-client:witness}

\end{subfigure}
\vspace{-2mm}
\caption{Examples of a client history $h$ and two possible extensions. The dashed edge belongs only to the extensions. The first extension is not a witness of $h$ as $t_1$ writes $-2$ on $x_2$ and $\where{t_2}(-2) = 1$.
}
\label{fig:example-full-client}
\vspace{-4mm}
\end{figure}

Client histories record less write-read dependencies compared to full histories, which is formalized by the \emph{extends} relation.

\vspace{-1mm}
\begin{definition}
	A history $\overline{h} =\tup{T, \so, \overline{\wro}}$ \emph{extends} another history $h =\tup{T, \so, \wro}$ if $\wro \subseteq \overline{\wro}$. We denote it by $h \subseteq \overline{h}$.
\vspace{-1mm}
\end{definition}

\vspace{-1mm}
\begin{definition}
\label{def:client-history}
A \emph{client history} $h =\tup{T, \so, \wro}$ is a history s.t. there is a full history $\overline{h} =\tup{T, \so, \overline{\wro}}$ with $h \subseteq \overline{h}$, and s.t for every $\key$, if $(w,r) \in \overline{\wro}_\key \setminus \wro_\key$ then $\where{r}(\valuewr[\overline{\wro}]{w}{\key}) = 0$. The history $h'$ is called a \emph{witness} of $h$.
\vspace{-1mm}
\end{definition}

Compared to a witness full history, a client history may omit write-read dependencies if the written values do \emph{not} satisfy the predicate of the read query. These values would not be observable when interacting with the database as a black-box. This includes the case when the write is a $\edelete$ (recall that the special value $\del_{\key}$ indicating deleted rows falsifies every predicate by convention). \Cref{fig:example-history} shows a full history as every query reads both $x_1$ and $x_2$. \Cref{fig:example-full-client:client} shows a client history: transactions $t_1, t_2$ does not read $x_2$ and $x_1$ resp. \Cref{fig:example-full-client:no-extension} is an extension but not a witness while \Cref{fig:example-full-client:witness} is indeed a witness of it.

	\section{Axiomatic Semantics With Different Isolation Levels}
\label{sec:multi-isolation}

We define an axiomatic semantics on histories where transactions can be assigned different isolation levels, which builds on the work of Biswas et al.~\cite{DBLP:journals/pacmpl/BiswasE19}.

\subsection{Executions}
\label{sec:executions}

An \emph{execution} of a program is represented using a history with a set of transactions $T$ along with a total order $\co \subseteq T \times T$ called \emph{commit order}. Intuitively, the commit order represents the order in which transactions are committed in the database.

\begin{definition}
\label{def:execution}
An \emph{execution} $\exec=\tup{h,\co}$ is a history $h = \tup{T, \so, \wro}$ along with a \emph{commit order} $\co \subseteq T \times T$, such that transactions in the same session or that are read are necessarily committed in the same order: $\so \cup \wro \subseteq \co$. $\exec$ is called an execution of $h$.
\end{definition}
For a transaction $t$, we use $t\in \exec$ to denote the fact that $t\in T$. Analogously, for an event $e$, we use $e \in \exec$ to denote that $e \in t$ and $t \in \exec$. The extension of a commit order to pairs of events or pairs of transactions and events is done in the obvious way.

\subsection{Isolation Levels}

Isolation levels enforce restrictions on the commit order in an execution that depend on the session order $\so$ and the write-read relation $\wro$. An \emph{isolation level} $\iota$ for a transaction $t$
is a set of constraints called \emph{axioms}.  %
Intuitively, an axiom states that a read event $r \in t$ reads key $x$ from transaction $t_1$ if $t_1$ is the latest transaction that writes $x$ which is ``visible'' to $r$ -- latest refers to the commit order $\co$. Formally, an axiom $a$ is a predicate of the following form:

\vspace{-4mm}
\begin{equation}
\begin{array}{lllll}
  a(r) & \coloneqq & \forall \key, \tr_1, \tr_2. \tr_1 \neq \tr_2 \land \tup{\tr_1,r}\in \wro_\key \land \writeVar{\tr_2}{\key} \land \visibilityRelationAxiom{t_2}{r}{x} & \Rightarrow & \tup{\tr_2,\tr_1}\in\co 
\end{array}
\label{eq:axiom}
\end{equation}
\vspace{-4mm}

\noindent
where $r$ is a read event from $t$. 

The visibility relation of $a$ $\mathsf{vis}_a$ is described by a formula of the form:

\vspace{-5mm}
\begin{equation}
  \visibilityRelationAxiom{\tau_0}{\tau_{k+1}}{x} : \exists \tau_1, \ldots, \tau_{k}. \bigwedge_{i=1}^{k+1} (\tau_{i-1}, \tau_{i}) \in \mathsf{Rel}_i \land \confWr_{a}(\tau_0, \ldots, \tau_{k+1}, x) \label{eq:axioms-constraints}
  \end{equation}
with each $\mathsf{Rel}_i$ is defined by the grammar:
\begin{equation}
\mathsf{Rel} \eqdef \ \po\, |\, \so\, |\, \wro\, |\, \co \,|\,\mathsf{Rel} \cup \mathsf{Rel} \,|\,\mathsf{Rel} ; \mathsf{Rel} \,| \,\mathsf{Rel}^+ \,| \, \mathsf{Rel}^*  
\label{eq:axioms-constraints-rel}
\end{equation}

This formula states that $\tau_0$ (which is $\tr_2$ in Eq.\ref{eq:axiom}) is connected to $\tau_{k+1}$ (which is $r$ in Eq.\ref{eq:axiom}) by a path of dependencies that go through some intermediate transactions or events $\tau_1, \ldots, \tau_k$. Every relation used in such a path is described based on $\po,\so, \wro$ and $\co$ using union $\cup$, composition of relations $;$, 
and transitive closure operators. 
Finally, extra requirements on the intermediate transactions s.t. writing a different key $y\neq x$ are encapsulated in the predicate $\confWr_{a}(\tau_0, \ldots, \tau_k, x)$.

Each axiom $a$ uses a specific visibility relation denoted by $\visibility_a$. $\visibilitySet{\iota}$ denotes the set of visibility relations used in axioms defining an isolation level $\iota$.

\begin{figure*}[tp]
	\begin{center}
	\resizebox{\textwidth}{!}{
		\footnotesize
	\begin{subfigure}{\textwidth}
		
		\begin{tabularx}{\textwidth}{|*{3}{X|}}
			\hline & \\ [-3.5mm]
			\begin{subfigure}[b]{.33\textwidth}
				\centering
				\resizebox{.7\textwidth}{!}{
				\begin{tikzpicture}[->,>=stealth,shorten >=1pt,auto,node distance=1cm,
					semithick, transform shape]
					\node[transaction state, text=black] at (0,0)       	(t_1)           {$t_1$};
					\node[transaction state, text=black, label={above:\textcolor{black}{$\writeVar{ }{x}$}}] at (-0.5,1.5) (t_2) {$t_2$};
					\node[transaction state, text=black] at (2,0)       (o_1)           {$r$};
					\node[transaction state] at (1.5,1.5) (o_2) {$r'$};
					\path (t_1) edge[color=wrColor] node [black] {$\wro_x$} (o_1);
					\path (t_2) edge[] node {$\so \cup \wro$} (o_2);
					\path (o_2) edge[color=poColor] node {$\po^*$} (o_1);
					\path (t_2) edge[left,double equal sign distance,color=coColor] node {$\co$} (t_1);
				\end{tikzpicture}
				}
				\parbox{\textwidth}{
					$\forall x \forall t_1 \forall t_2.\ t_1\neq t_2\ \land $
					
					\hspace{8mm}$\tup{t_1,r}\in \wro_x \ \land$ 

					\hspace{8mm}$ \writeVar{t_2}{x} \ \land$ 

					\hspace{8mm}$ \tup{t_2,r}\in (\so \cup \wro);\po^*$ 
										
					\hspace{8mm} \quad $\implies \tup{t_2,t_1}\in\co$
				}
				\vspace{-1.5mm}

				\caption{$\mathsf{Read\ Committed}$}
				\label{fig:rc_def}
			\end{subfigure}   
		
			&     

			\begin{subfigure}[b]{.33\textwidth}
				\centering
				\resizebox{.6\textwidth}{!}{

				\begin{tikzpicture}[->,>=stealth,shorten >=1pt,auto,node distance=1cm,
					semithick, transform shape]
					\node[transaction state, text=black] at (0,0)       (t_1)           {$t_1$};
					\node[transaction state] at (2,0)       (r)           {$r$};
					\node[transaction state, text=black,label={above:\textcolor{black}{$\writeVar{ }{\xvar}$}}] at (-.5,1.5) (t_2) {$t_2$};
					\path (t_1) edge[wrColor] node {$\wro_x$} (r);
					\path (t_2) edge[bend left] node {$\so \cup \wro$} (r);
					\path (t_2) edge[left, double ,coColor] node {$\co$} (t_1);
				\end{tikzpicture}
				}
				\parbox{\textwidth}{
					$\forall x \forall t_1 \forall t_2.\ t_1\neq t_2\ \land$
					
					\hspace{8mm}$\tup{t_1,\trans{r}}\in \wro_x \ \land$ 
	
					\hspace{8mm}$ \writeVar{t_2}{x}\ \land$ 
					
					\hspace{8mm}$\tup{t_2,\trans{r}}\in\so \cup \wro$
					
					\hspace{8mm} \quad $\implies \tup{t_2,t_1}\in\co$
				}
				
				\caption{$\mathsf{Read\ Atomic}$}
				\label{fig:ra_def}
			\end{subfigure}

			&
			
			\begin{subfigure}[b]{.33\textwidth}
				\centering
				\resizebox{.65\textwidth}{!}{
				\begin{tikzpicture}[->,>=stealth,shorten >=1pt,auto,node distance=4cm,
					semithick, transform shape]
					\node[transaction state, text=black] at (0,0)       (t_1)           {$t_1$};
					\node[transaction state] at (2,0)       (t_3)           {$r$};
					\node[transaction state, text=black, label={above:\textcolor{black}{$\writeVar{ }{x}$}}] at (-.5,1.5) (t_2) {$t_2$};
					\path (t_1) edge[wrColor] node [black] {$\wro_x$} (t_3);
					\path (t_2) edge[bend left, double equal sign distance, coColor] node {$\co$} (t_3);
					\path (t_2) edge[left,double equal sign distance, coColor] node {$\co$} (t_1);
				\end{tikzpicture}
				}
				\parbox{\textwidth}{

					$\forall x \forall t_1  \forall t_2.$ $ t_1\neq t_2\ \land$ 

					\hspace{8mm}$\tup{t_1,r}\in \wro_x \ \land$ 

					\hspace{8mm}$ \writeVar{t_2}{x}\ \land$ 

					\hspace{8mm}$\tup{t_2,r}\in\co$ 
										
					\hspace{8mm} \quad$\implies \tup{t_2,t_1}\in\co$
				}
				\vspace{-1.5mm}
				
				\caption{$\mathsf{Serializability}$}
				\label{ser_def}
			\end{subfigure} 
			\\\hline
			\end{tabularx}
	\end{subfigure}	
	}
\end{center}
\vspace{-7.85mm} 
\begin{center}
	\resizebox{\textwidth}{!}{
		\footnotesize
\begin{subfigure}{\textwidth}
\begin{tabularx}{\textwidth}{|*{2}{X|}}
	\hline \begin{subfigure}[b]{.495\textwidth}
		\centering
		\resizebox{.6\textwidth}{!}{

		\begin{tikzpicture}[->,>=stealth,shorten >=1pt,auto,node distance=4cm,
			semithick, transform shape]
		\node[transaction state, text=black] at (0,0)       (t_1)           {$t_1$};
		\node[transaction state] at (2,0)       (r)           {$r$};
		\node[transaction state, text=black,label={above:\textcolor{black}{$\writeVar{ }{x}$}}] at (-0.5,1.5) (t_2) {$t_2$};
		\node[transaction state] at (1.5,1.5) (t_4) {$t_4$};
		\path (t_1) edge[wrColor] node {$\wro_x$} (r);
		\path (t_2) edge[double equal sign distance, coColor] node {$\co^*$} (t_4);
		\path (t_4) edge[right] node {$(\so \cup \wro)$} (r);
		\path (t_2) edge[left,double equal sign distance, coColor] node {$\co$} (t_1);
		\end{tikzpicture}
		}
		\parbox{\textwidth}{

		$\forall x \forall t_1 \forall t_2.\ t_1\neq t_2\ \land \ \tup{t_1,r}\in \wro_x \ \land$
		
		\hspace{8mm}$\writeVar{t_2}{x}\ \land$ 
		
		\hspace{8mm}$\tup{t_2,r}\in\co^* ;\,(\so \cup \wro)$
		
		\hspace{8mm} \quad $\implies \tup{t_2,t_1}\in\co$
		}
		
		\caption{$\mathsf{Prefix}$}
		\label{pre_def}
	\end{subfigure}
	&

	\begin{subfigure}[b]{.495\textwidth}
		\centering
		\resizebox{.6\textwidth}{!}{
			\begin{tikzpicture}[->,>=stealth,shorten >=1pt,auto,node distance=4cm,
				semithick, transform shape]
				\node[transaction state, text=black] at (0,0)       (t_1)           {$t_1$};
				\node[transaction state, label={[xshift=-5]right:$\writeVar{ }{\yvar}$}] at (2,0) (r) {$r$};
				\node[transaction state, text=black,label={above:\textcolor{black}{$\writeVar{ }{x}$}}] at (-0.5,1.5) (t_2) {$t_2$};
				\node[transaction state, label={[yshift=-2]above:{$\writeVar{}{\yvar}$}}] at (1.5,1.5) (t_4) {$t_4$};
				\path (t_1) edge[wrColor] node {$\wro_x$} (r);
				\path (t_2) edge[double equal sign distance, coColor] node {$\co^*$} (t_4);
				\path (t_4) edge[right, double equal sign distance, coColor] node {$\co$} (r);
				\path (t_2) edge[left,double equal sign distance, coColor] node {$\co$} (t_1);
				\end{tikzpicture}
		}
		\parbox{\textwidth}{

		$\forall x \forall t_1 \forall t_2.\ t_1\neq t_2\ \land \ \tup{t_1,r}\in \wro_x \ \land$
			
		\hspace{8mm}$ \writeVar{t_2}{\xvar}\ \land \ t_4, \trans{r} \ \mathsf{write} \ \yvar\ \land$ 
			
		\hspace{8mm}$  \tup{t_2,t_4}\in\co^* \ \land \ \tup{t_4,r}\in\co$
			
		\hspace{8mm} \quad $\implies \tup{t_2,t_1}\in\co$
		}
		
		\caption{$\mathsf{Conflict}$}
		\label{confl_def}
	\end{subfigure} \\ \hline
	\end{tabularx}
\end{subfigure}
	}
\end{center}

	\vspace{-4mm}
	\caption{Axioms defining $\RC$, $\RA$, $\SER$, $\PRE$ and $\SI$ isolations levels respectively. Visibility relations are ``inlined'' to match the definitions in \cite{DBLP:journals/pacmpl/BiswasE19}.
	}
\label{fig:consistency_defs}
\vspace{-6mm}
\end{figure*}

\begin{appendixVer}
  \Cref{fig:consistency_defs} shows two axioms which correspond to their homonymous isolation levels~\cite{DBLP:journals/pacmpl/BiswasE19}: \textit{Read Committed} ($\RC$) and \textit{Serializability} ($\SER$). $\SER$ states that $t_2$ is visible to $r$ if $t_2$ commits before $r$, while $\RC$ states that $t_2$ is visible to $r$ if either $(t_2, r) \in \so$ or if there exists a previous event $r'$ in $\trans{r}$ that reads $x$ from $t_2$. Similarly, \textit{Read Atomic} ($\RA$) and \textit{Prefix Consistency} ($\PRE$) are defined using their homonymous axioms while \textit{Snapshot Isolation} ($\SI$) is defined as a conjunction of both $\axpre$ and $\axconf$.
\end{appendixVer}

\begin{noappendixVer}
  \Cref{fig:consistency_defs} shows two axioms which correspond to their homonymous isolation levels~\cite{DBLP:journals/pacmpl/BiswasE19}: \textit{Read Committed} ($\RC$) and \textit{Serializability} ($\SER$). $\SER$ states that $t_2$ is visible to $r$ if $t_2$ commits before $r$, while $\RC$ states that $t_2$ is visible to $r$ if either $(t_2, r) \in \so$ or if there exists a previous event $r'$ in $\trans{r}$ that reads $x$ from $t_2$. Similarly, \textit{Read Atomic} ($\RA$) and \textit{Prefix Consistency} ($\PRE$) are defined using their homonymous axioms while \textit{Snapshot Isolation} ($\SI$) is defined as a conjunction of both $\axpre$ and $\axconf$.
\end{noappendixVer}

The \emph{isolation configuration} of a history is a mapping $\isolation{\hist}: T \to \Iso$ associating to each transaction an isolation level identifier from a set $\Iso$.

Whenever every transaction in a history has the same isolation level $\iota$, the isolation configuration of that history is denoted simply by $\iota$. 

Note that $\SER$ is stronger than $\RC$: every transaction visible to a read $r$ according to $\RC$ is also visible to $r$ according to $\SER$. This means $\SER$ imposes more constraints for transaction $t_1$ to be read by $r$ than $\RC$.
In general, for two isolation configurations $I_1$ and $I_2$, $I_1$ is \emph{stronger than} $I_2$ when for every transaction $t$, $I_1(t)$ is stronger than $I_2(t)$ 
(i.e., whenever $I_1(t)$ holds in an execution $\exec$, $I_2(t)$ also holds in $\exec$). %
The \emph{weaker than} relationship is defined similarly.

\label{eq:axioms}

Given a history $h$ with isolation configuration $\isolation{\hist}$, $h$ is called \emph{consistent} when there exists an execution $\exec$ of $h$ such that for all transactions $t$ in $\exec$, the axioms in $\isolation{\hist}(t)$ are satisfied in $\exec$ (the interpretation of an axiom over an execution is defined as expected).
For example, let $h$ be the full history in Figure~\ref{fig:example-full-client:witness}. If both $t_1, t_2$'s isolation are $\SER$, then $h$ is \emph{not} consistent, i.e., every execution $\exec = \tup{h, \co}$ violates the corresponding axioms. Assume for instance, that $(t_1, t_2) \in \co$. Then, by axiom $\SER$, as $(\init, t_2) \in \wro_{x_1}$ and $\writeVar{t_1}{x_1}$, we get that $(t_1, \init) \in \co$, which is impossible as $(\init,t_1)\in\so\subseteq \co$. However, if the isolation configuration is weaker (for example $\isolation{h}(t_2) = \RC$), then the history is consistent using $\init <_{\co} t_1 <_{\co} t_2$ as commit order.

\begin{definition}
\label{def:consistency-full}
A full history $\hist=\tup{T, \so, \wro}$ with isolation configuration $\isolation{\hist}$ is \emph{consistent} iff there is an execution $\exec$  of $h$ s.t. $\bigwedge_{t \in T, r \in \readOp{t}, a \in \isolation{h}(t)} a(r)$ holds in $\exec$; $\exec$ is called a \emph{consistent} execution of $h$.
\label{axiom-criterion}
\end{definition}

The notion of consistency on full histories is extended to client histories.
\begin{definition}
\label{def:consistency-non-full}
A client history $h = \tup{T, \so, \wro}$ with isolation configuration $\isolation{\hist}$ is \emph{consistent} iff there is a full history $\overline{h}$ with the same isolation configuration which is a witness of $h$ and consistent; $\overline{h}$ is called a \emph{consistent} witness of $h$.
\end{definition}

In general, the witness of a client history may not be consistent. In particular, there may exist several witnesses but no consistent witness.

\subsection{Validation of the semantics}
\label{ssec:operational-semantics}

To justify the axiomatic semantics defined above, we define an operational semantics inspired by real implementations and prove that every run of a program can be translated into a consistent history. 
Every instruction is associated with an increasing timestamp and it reads from a snapshot of the database defined according to the isolation level of the enclosing transaction. At the end of the transaction we evaluate if the transaction can be committed or not. We assume that a transaction can abort only if explicitly stated in the program. 
We model an optimistic approach where %
if a transaction cannot commit, the run blocks
(modelling unexpected aborts). %
We focus on three of the most used isolation levels: $\SER, \SI, \RC$. Other isolation levels can be handled in a similar manner. $ $%
\begin{appendixVer}
For each run $\rho$ we extract a full history $\historyExecution{\rho}$. We show by induction that $\historyExecution{\rho}$ is consistent at every step. The formal description of the semantics and its correctness can be found in \Cref{app:operational-semantics}.%
\end{appendixVer}
\begin{noappendixVer}
For each run $\rho$ we extract a full history $\historyExecution{\rho}$. We show by induction that $\historyExecution{\rho}$ is consistent at every step. %
\end{noappendixVer}

\begin{restatable}{theorem}{semantics}
\label{th:operational-semantics-full-and-consistent}
For every run $\rho$, $\historyExecution{\rho}$ is consistent.
\end{restatable}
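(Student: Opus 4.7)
The plan is to proceed by induction on the number of steps in $\rho$. The commit order $\co$ on $\historyExecution{\rho}$ is defined by the order in which transactions receive their commit timestamps in the operational semantics; since timestamps are strictly increasing and only committed transactions contribute to $\writeOp{\cdot}$, $\co$ is a total order on committed transactions and it contains $\so \cup \wro$ (sessions run sequentially, and a read may only fetch from an already-committed writer). For the base case, the initial configuration yields a history containing only $\init$, for which every axiom holds vacuously.

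For the inductive step, the steps that change the axiomatic obligations are successful commits (which append a transaction to $\co$ and fix $\wro$-edges for reads in that transaction). Steps that buffer writes, begin transactions, or abort add no obligations (recall $\writeOp{t} = \emptyset$ for aborted $t$). When a transaction $t$ with $\isolation{\historyExecution{\rho}}(t) = \iota \in \{\SER, \SI, \RC\}$ successfully commits, I must show that every axiom in $\iota$ holds for every read $r$ in $t$ and, symmetrically, that appending $t$ to $\co$ does not violate any previously-verified axiom. The latter is immediate: new $\wro$-edges originating at $t$ can only target reads in transactions that begin after $t$ has committed, which do not yet exist in $\historyExecution{\rho}$; axioms at reads $r'$ in prior transactions $t'$ already hold by the induction hypothesis, and placing $t$ at the $\co$-maximum cannot introduce a new $\co$-predecessor of any visible writer for $r'$.

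The principal obligation is therefore to verify the axioms for each read $r \in t$. For a read $r$ reading key $x$ from writer $t_1$, I need to show that every other $x$-writer $t_2$ with $\visibilityRelationAxiom{t_2}{r}{x}$ satisfies $(t_2, t_1) \in \co$. This follows from the snapshot discipline of the operational semantics: for $\RC$, $r$ reads the latest $\co$-committed $x$-writer at the time of the read, so any $x$-writer reachable through $(\so \cup \wro);\po^*$ is necessarily $\co$-before $t_1$; for $\SI$, the snapshot is a logical timestamp fixed at $t$'s begin and the commit-time write-write conflict check discharge $\axpre$ and $\axconf$ respectively; for $\SER$, the snapshot together with the conflict check ensure that every transaction committed before $t$ (hence every $t_2$ with $(t_2, r) \in \co$) is either $t_1$ itself or $\co$-before $t_1$. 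In each case the argument is an algebraic unfolding of the visibility pattern using the shape of the snapshot and the conflict check.

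The main obstacle is the uniform treatment of these three isolation levels, since the snapshot, the commit-time validation, and the way $\wro$ is extracted from $\rho$ each differ. The cleanest strategy is to prove, one axiom at a time, a ``snapshot coincides with visibility'' lemma: $t_2$ lies in the snapshot used by $r$ in $t$ iff $(t_2, r)$ is related by the particular visibility relation of that axiom. Once these lemmas are in place, the inductive step reduces to a case split on $\iota$. I expect $\SI$ to be the most delicate, as it mixes $\axpre$ (with $\co^*;(\so \cup \wro)$ visibility) and $\axconf$ (relying on the write-write conflict check), whereas $\RC$ and $\SER$ are essentially single-axiom cases.
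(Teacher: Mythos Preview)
Your overall architecture (induction on run length, commit order from timestamps, case split on $\iota$) matches the paper's, but there are two concrete gaps in how you decompose the inductive step.

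\textbf{Where $\wro$-edges are created.} You write that ``the steps that change the axiomatic obligations are successful commits (which \ldots\ fix $\wro$-edges for reads in that transaction).'' In the paper's operational semantics the $\wro$-edges are added by the \textsc{select}/\textsc{update}/\textsc{delete} rules, not by \textsc{commit}. Each such step introduces a new read $r$ with $(t_1,r)\in\wro_x$, and the axiom for that $r$ must already hold in the resulting history. The paper's proof therefore does a rule-by-rule case analysis, and the read rules are where the real work is (the \textsc{commit}/\textsc{abort} rules reshuffle $\co$ but add no new $\wro$). Your plan to treat only commit steps would leave the read steps unverified.

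\textbf{The forward-looking argument for $\SER$ (and $\axconf$).} Your ``snapshot coincides with visibility'' lemma is not true at read time for $\SER$. Suppose $t$ has isolation $\SER$, takes its snapshot at $\ebegin$, and later performs a read $r$. In the paper's $\co_\rho$, pending transactions sit after all completed ones, so any $t_2$ that committed after $t$'s $\ebegin$ but before $r$ satisfies $(t_2,\trans{r})\in\co$ and is therefore $\SER$-visible to $r$, yet $t_2$ is \emph{not} in $t$'s snapshot. The only reason the axiom still holds is that $\commitOperational[\SER]$ will later reject $t$ if such a $t_2$ wrote a key that $t$ read; the paper's proof makes this precise by assuming $\rho$ is a prefix of a \emph{total} run and looking ahead to the (necessarily successful) \textsc{commit}/\textsc{abort} step of $t$ to derive the contradiction. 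Without this forward reference your argument at the read step has no way to exclude such $t_2$. The same pattern appears for $\axconf$ under $\SI$. So the decomposition you actually need is the paper's: per rule, per axiom, showing that $\last{h,j}$ cannot play the role of $t_1$, $t_2$, $\trans{r}$, or an intermediate witness, invoking totality of the run where the validation predicate is required.
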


	\section{Complexity of Checking Consistency}
\label{sec:complexity}

\subsection{Saturation and Boundedness}
\label{ssec:assumptions-isolations}

We investigate the complexity of checking if a history is consistent.
Our axiomatic framework characterize isolation levels as a conjunction of axioms as in \Cref{eq:axiom}. However, some isolation levels impose stronger constraints than others. For studying the complexity of checking consistency, we classify them in two categories, saturable or not. %
An isolation level is \emph{saturable} if its visibility relations are defined without using the $\co$ relation (i.e. the grammar in \Cref{eq:axioms-constraints-rel} omits the $\co$ relation). Otherwise, we say that the isolation level is \emph{non-saturable}. For example, $\RC$ and $\RA$ are saturable while $\PRE, \SI$ and $\SER$ are not. 

\begin{definition}
An isolation configuration $\isolation{\hist}$ is \emph{saturable} if for every transaction $t$, $\isolation{\hist}(t)$ is a saturable isolation level. Otherwise, $\isolation{h}$ is \emph{non-saturable}.
\end{definition}

\begin{algorithm}[t]
\caption{Extending an initial $\pco$ relation with necessary ordering constraints}
\begin{algorithmic}[1]
\small
\Function {\textsc{saturate}}{$\hist = \tup{T, \so, \wro}$, $\pco$}
\Comment{$\pco$ must be transitive.}

\State $\pco_\result \gets \pco$

\ForAll{$\key \in \Vars$}
\ForAll{$r\in \readOp{h}, t_2 \neq \trans{r} \in T$ s.t. $\writeVar{t_2}{\key}$ and $t_2 \neq \trans{\wro_x^{-1}(r)}$}
\label{algorithm:necessary-co:init-loop}

\State $t_1 \gets \trans{\wro_\key^{-1}(r)}$ \Comment $t_1$ is well defined as $h$ is a full history.

\ForAll{$\mathsf{v} \in \visibilitySet{\isolation{h}(\trans{r})}$}
\label{algorithm:necessary-co:all-constraints}

\If{$\mathsf{v}(t_2, r, x)$}
\label{algorithm:necessary-co:constraint-evaluation}
\State $\pco_\result \gets \pco_\result\cup \{(t_2, t_1)\}$
\label{algorithm:necessary-co:add-tuple-co}

\EndIf
\EndFor
\EndFor
\EndFor

\State \Return $\pco_\result$

\EndFunction
\label{algorithm:necessary-co:end}

\end{algorithmic}
\label{algorithm:necessary-co}
\end{algorithm}

\begin{algorithm}[t]
\caption{Checking saturable consistency}
\begin{algorithmic}[1]
\small
\Function{\textsc{checkSaturable}}{$\hist = \tup{T, \so, \wro}$}
\If{$\so \cup \wro$ is cyclic} \Return $\bfalse$
\label{algorithm:checking-saturable:so-wr-acyclic}
    
\EndIf

\State $\pco \gets \textsc{saturate}(h, (\so \cup \wro)^+)$
\label{algorithm:checking-saturable:compute-co}

\State \Return $\btrue$ if $\pco$ is acyclic, and  $\bfalse$, otherwise
\label{algorithm:checking-saturable:co-acyclic}

\EndFunction
\end{algorithmic}
\label{algorithm:checking-saturable}
\end{algorithm}

We say an isolation configuration $\isolation{\hist}$ is \emph{bounded} if there exists a fixed $k \in \mathbb{N}$ s.t. for every transaction $t$, $\isolation{\hist}(t)$ is defined as a conjunction of at most $k$ axioms that contain at most $k$ quantifiers. For example, $\SER$ employs one axiom and four quantifiers while $\SI$ employs two axioms, $\axpre$ and $\axconf$, with four and five quantifiers respectively. Any isolation configuration composed with $\SER, \SI, \PRE, \RA$ and $\RC$ isolation levels is bounded. We assume in the following that isolation configurations are bounded.

Checking consistency requires computing the $\valuewrName$ function and thus, evaluating $\whereName$ predicates. In the following, we assume that evaluating $\whereName$ predicates on a single row requires constant time.

\subsection{Checking Consistency of Full Histories}
\label{ssec:consistency-full-histories}

\Cref{algorithm:checking-saturable} computes necessary and sufficient conditions for the existence of a consistent execution $\exec = \tup{h, \co}$ for a history $h$ with a saturable isolation configuration.
It calls \textsc{saturate}, defined in Algorithm~\ref{algorithm:necessary-co}, to compute a ``\emph{partial}'' commit order relation $\pco$ that includes $(\so \cup \wro)^+$ and any other dependency between transactions that can be deduced from the isolation configuration. A consistent execution exists iff this partial commit order is acyclic. \Cref{algorithm:checking-saturable} generalizes the results in~\cite{DBLP:journals/pacmpl/BiswasE19} %
for full histories with heterogeneous saturable isolation configurations. $ $%
\begin{noappendixVer}
\end{noappendixVer}
\begin{appendixVer}
The correctness and complexity analysis of \Cref{algorithm:necessary-co,algorithm:checking-saturable} can be found in \Cref{app:proofs-poly-consistency-saturable}.
\end{appendixVer}

\begin{restatable}{theorem}{polyConsistencySaturable}
\label{th:poly-consistency-saturable}
Checking consistency of full histories with bounded saturable isolation configurations can be done in polynomial time.
\end{restatable}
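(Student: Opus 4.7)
The plan is to prove both directions of correctness of \Cref{algorithm:checking-saturable} and then bound its running time.

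\textbf{Soundness.} Suppose \textsc{saturate} returns an acyclic $\pco$; I would take any linear extension $\co$ of $\pco$ and show that $\exec = \tup{h, \co}$ is a consistent execution of $h$. Fix a read $r$ in a transaction $t$ and an axiom $a \in \isolation{h}(t)$. For any $(x, t_1, t_2)$ making the premise of \Cref{eq:axiom} true, the algorithm visits the corresponding tuple at \Cref{algorithm:necessary-co:init-loop} with $t_1 = \trans{\wro_x^{-1}(r)}$ (well defined since $h$ is full). Crucially, because $\isolation{h}(t)$ is saturable, the formula $\visibilityRelationAxiom{t_2}{r}{x}$ does not mention $\co$ and depends only on $\po, \so, \wro$, so evaluating it at \Cref{algorithm:necessary-co:constraint-evaluation} without knowing $\co$ already produces the right verdict. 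Hence $(t_2, t_1)$ is inserted in $\pco$ at \Cref{algorithm:necessary-co:add-tuple-co}, and therefore $(t_2, t_1) \in \co$, which is exactly the conclusion of the axiom.

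\textbf{Completeness.} Conversely, if $h$ is consistent via some $\exec^* = \tup{h, \co^*}$, then by \Cref{def:execution} we have $\so \cup \wro \subseteq \co^*$, hence $(\so \cup \wro)^+ \subseteq \co^*$. For every tuple $(t_2, t_1)$ added by \textsc{saturate}, the premise of the corresponding axiom of $\isolation{h}(\trans{r})$ holds in $\exec^*$, so the axiom (satisfied in $\exec^*$) forces $(t_2, t_1) \in \co^*$. Thus $\pco \subseteq \co^*$, and acyclicity of the total order $\co^*$ transfers to $\pco$.

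\textbf{Complexity and main obstacle.} The outer loops of \textsc{saturate} enumerate $O(|\Vars| \cdot |\readOp{h}| \cdot |\tlogs{h}|)$ tuples, and by boundedness each axiom involves at most $k$ quantifiers (for a fixed constant $k$) ranging over events or transactions, so each visibility check takes $O((|\tlogs{h}| + |\events{h}|)^k)$ time. The relational expressions produced by the grammar of \Cref{eq:axioms-constraints-rel} over $\po, \so, \wro$ can be precomputed via transitive closure in polynomial time, and $\whereName$ predicates are evaluated in constant time by assumption; the final acyclicity check on $\pco$ is also polynomial. The delicate point is justifying that a \emph{single} pass of \textsc{saturate} is enough: this rests entirely on saturability, since a newly added $\pco$-edge cannot enable fresh constraints when visibility does not mention $\co$. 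Without this property (as for $\SI$, $\PRE$, or $\SER$), a fixpoint iteration would be required and polynomial time would no longer be guaranteed, consistent with the NP-hardness known in those cases.
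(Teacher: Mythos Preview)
Your proposal is correct and follows essentially the same approach as the paper: the paper's proof also splits into a correctness lemma (showing $\pco \subseteq \co$ for any consistent $\co$, and that any linear extension of an acyclic $\pco$ yields a consistent execution, both by exploiting that saturable visibility relations do not mention $\co$) and a polynomial-time lemma based on the bounded number of axioms and quantifiers. Your explicit remark that a single pass of \textsc{saturate} suffices because newly added $\pco$-edges cannot enable fresh constraints is a point the paper leaves implicit (it simply calls \textsc{saturate} once without iterating), so making it explicit is a small but welcome clarification.
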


For bounded non-saturable isolation configurations, checking if a history is consistent is NP-complete as an immediate consequence of the results in~\cite{DBLP:journals/pacmpl/BiswasE19}. These previous results apply to the particular case of transactions having the same isolation level and being formed of classic read and write instructions on a fixed set of variables. The latter can be simulated by SQL queries using $\whereName$ predicates for selecting rows based on their key being equal to some particular value. For instance, $\iselect{\lambda r: \keyof{r} = x}$ simulates a read of a ``variable'' $x$. 

	\subsection{Checking Consistency of Client Histories}
\label{ssec:consistency-client-histories}

We show that going from full histories to client histories, the consistency checking problem becomes NP-complete, independently of the isolation configurations. Intuitively, NP-hardness comes from keys that are not included in outputs of SQL queries. The justification for the consistency of omitting such rows can be ambiguous, e.g., multiple values written to a row may not satisfy the predicate of the $\whereName$ clause, or multiple deletes can justify the absence of a row.

The \emph{width} of a history $\widthHistory{h}$ is the maximum number of transactions which are pairwise incomparable w.r.t. $\so$. 
In a different context, previous work~\cite{DBLP:journals/pacmpl/BiswasE19} showed that bounding the width of a history (consider it to be a constant) is a sufficient condition for obtaining polynomial-time consistency checking algorithms. This is not true for client histories.

\vspace{-1mm}
\begin{restatable}{theorem}{kExpressiveNP}
\label{th:k-expressive-np}
Checking consistency of bounded-width client histories with bounded isolation configuration stronger than $\RC$ and $\widthHistory{h} \geq 3$ is NP-complete.
\end{restatable}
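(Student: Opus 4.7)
The proof has two parts: membership in NP, and NP-hardness restricted to width~$3$.

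\textbf{Membership.} The plan is to nondeterministically guess a polynomially sized witness $\overline{h}$ of the input client history $h$ together with a commit order $\co$, and verify in polynomial time that $\tup{\overline{h},\co}$ is a consistent execution. A witness is obtained by adding, for each $\iread$ event $r$ of $\overline{h}$ and each key $\key$ not already resolved by $\wro$ or by a preceding local write, a single $\iwrite$ event whose row satisfies $\where{r}(\valuewr[\overline{\wro}]{w}{\key}) = 0$ (either because $w$ is a delete on $\key$, or because the value written falsifies $\where{r}$). This check is polynomial because each $\whereName$ predicate is evaluated on one row at a time. Verifying $\so \cup \wro \subseteq \co$ is trivial, and verifying the axioms in \Cref{eq:axiom,eq:axioms-constraints,eq:axioms-constraints-rel} requires only iterating over the bounded number of quantifier instantiations per transaction, which is polynomial by the boundedness of the configuration.

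\textbf{Hardness.} The plan is to reduce from 3-SAT, exploiting the ambiguity in choosing the witness extension described just before the theorem statement. Given a formula $\phi = C_1 \wedge \cdots \wedge C_m$ over variables $v_1, \ldots, v_n$, I would construct a client history $h_\phi$ using exactly three sessions $S_0, S_1, S_2$ such that $h_\phi$ is consistent iff $\phi$ is satisfiable. For each variable $v_i$, introduce a dedicated key $\key_i$ and two transactions $t_i^{0}, t_i^{1}$ (placed in $S_0$ and $S_1$, respectively) that write to $\key_i$ two distinct values, both falsifying a specific predicate $p_i$. A reading transaction $r_i$ in $S_2$ with $\whereName$ clause $p_i$ then observes no row on $\key_i$ in $h_\phi$; in any witness of $h_\phi$, the extension must pick either $t_i^{0}$ or $t_i^{1}$ as the $\wro_{\key_i}$-source of $r_i$, encoding the truth value of $v_i$.

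\textbf{Clause gadget.} For each clause $C_j$, I plan to interleave three small transactions across $S_0, S_1, S_2$ such that, via the $\RC$ axiom in \Cref{fig:consistency_defs}, the ``wrong'' $\wro$-choice of a literal induces a commit-order edge that closes a cycle with $\so$ edges whenever all three literals of $C_j$ are assigned \emph{false}. Since the $\RC$ axiom relies only on $\wro$ and $\so \cup \wro$ (not $\co$), these induced edges arise under any isolation configuration at least as strong as $\RC$, while the gadget can be designed to introduce no spurious constraints from stronger axioms (by avoiding cross-key write events that would trigger $\axconf$ or $\axpre$ between the gadget transactions). A satisfying assignment for $\phi$ then corresponds exactly to a choice of witness extension for which the partial commit order $\pco$ computed by \textsc{saturate} is acyclic, hence extendable to a consistent $\co$.

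\textbf{Main obstacle.} The principal difficulty is fitting both variable and clause gadgets into only three sessions while keeping two properties simultaneously: (i) each variable $v_i$ must still admit two independent witness choices (so clause gadgets cannot pre-commit the $\wro_{\key_i}$-source of $r_i$ through stray $\so$ chains), and (ii) the cycle-closing mechanism of the clause gadget must use \emph{only} the axioms already enforced by $\RC$, so that strengthening the isolation of any transaction to $\RA$, $\PRE$, $\SI$, or $\SER$ neither creates additional spurious cycles (which would make satisfiable $\phi$ map to inconsistent $h_\phi$) nor breaks the intended cycles. I expect this to require a careful layering of the gadget transactions within each session, combined with separating the keys used to encode truth values from those used to close clause cycles so that no unintended $\wro$ edges appear.
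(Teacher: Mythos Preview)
Your NP membership sketch is essentially the paper's argument (Lemma~\ref{lemma:k-expressive:problem-np}).

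For hardness, the paper takes a different route: it reduces from \emph{1-in-3 SAT}, not 3-SAT, and does \emph{not} separate variable gadgets from clause gadgets in the way you describe. Instead, the history is layered into rounds: round~$0$ contains, for each variable $x_i$, two write-only transactions $1_i,0_i$ (in two of the sessions); round~$k\geq 1$ contains three transactions $t_k^0,t_k^1,t_k^2$ (one per session), each of which \emph{itself} reads the three clause variables through a carefully crafted $\whereName$ predicate $\mathtt{eq}$ and then overwrites two of them. There is no dedicated ``reader $r_i$ in $S_2$'' per variable. The 1-in-3 structure is what makes this clean: within each round the three transactions can be serialised so that the first one sees the pre-round values, the second sees one update, and the third sees two, and the $\mathtt{eq}$ predicate forces the first transaction to correspond to the unique true literal. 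A further trick is that every transaction also writes a fresh ``I have committed'' key read by all later-round transactions via $\wro$; this pins down the round order under \emph{any} commit order, which is what makes the construction simultaneously $\SER$-consistent when $\varphi$ is satisfiable and $\RC$-inconsistent when it is not.

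The genuine gap in your plan is the link between the variable choice and the clause gadget. You put the choice of $v_i$'s truth value in the $\wro_{\key_i}$-extension of a reader $r_i$, but you never say how a separate clause transaction observes that choice: the $\RC$ axiom derives a $\co$-edge from a $\wro$-edge \emph{into} the reader, so the induced edge lands between $t_i^0$ (or $t_i^1$) and whatever else $r_i$ reads, not between transactions of the clause gadget. Getting a cycle that closes exactly when all three literals of $C_j$ are false, and doing so using only $\RC$, typically requires the clause transactions themselves to be the readers of the variable keys---which is precisely what the paper does. Your separation also leaves open how the same variable choice is enforced across several clauses without extra sessions; the paper handles this automatically because each clause round re-reads and re-writes the variable keys, and (1) in round~$i$ is shown to preserve $\alpha$-values by induction. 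Finally, your claim that stronger axioms add no spurious constraints is where most of the technical work lies; the paper discharges it by proving consistency of the satisfiable case directly under $\SER$ (Lemma~\ref{lemma:k-expressive:sat-consistent} via Corollary~\ref{corollary:ser-strongest}), which automatically covers every weaker configuration.
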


The proof of NP-hardness uses a reduction from 1-in-3 SAT which is inspired by the work of Gibbons and Korach~\cite{DBLP:journals/siamcomp/GibbonsK97} (Theorem 2.7) concerning sequential consistency for shared memory implementations. Our reduction is a non-trivial extension because it has to deal with any weak isolation configuration stronger than $\RC$. $ $%
\begin{noappendixVer}
\end{noappendixVer}%
\begin{appendixVer}
A detailed proof of the result can be found in \Cref{app:proof-bounded-complexity}.%
\end{appendixVer}

The proof of \Cref{th:k-expressive-np} relies on using non-trivial predicates in $\whereName$ clauses. We also prove that checking consistency of client histories is NP-complete irrespectively of the complexity of these predicates. This result uses another class of histories, called \emph{partial-observation} histories. These histories are a particular class of client histories where events read all inserted keys, irrespectively of their $\whereName$ clauses (as if these clauses where $\mathit{true}$).

\begin{definition}
\label{def:partial-observation-history}
A \emph{partial observation history} $h = \tup{T, \so, \wro}$ is a client history for which there is a witness $\overline{h} = \tup{T, \so, \overline{\wro}}$ of $h$, s.t. for every $\key$, if $(w,r) \in \overline{\wro}_\key \setminus \wro_\key$, then $w$ deletes $\key$.
\end{definition}

\begin{restatable}{theorem}{extendNpComplete}
\label{th:extend-np-complete}
Checking consistency of partial observation histories with bounded isolation configurations stronger than $\RC$ is NP-complete.
\end{restatable}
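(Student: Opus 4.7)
The plan is to establish NP membership and NP-hardness separately.

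For NP membership, given a partial observation history $h = \tup{T, \so, \wro}$, the non-deterministic algorithm guesses an extension $\overline{h} = \tup{T, \so, \overline{\wro}}$ together with a commit order $\co$ on $T$. By \Cref{def:partial-observation-history}, every additional edge in $\overline{\wro}_x \setminus \wro_x$ goes from a delete event to a read event, so there are only polynomially many candidate triples and the guess has polynomial size. Verification consists of checking that $\overline{h}$ is a full witness of $h$ (each read is matched by some write per key, and each added edge originates from a delete whose written value $\del_{\key}$ does not satisfy the reader's predicate), that $\so \cup \overline{\wro} \subseteq \co$, and that each axiom in $\isolation{h}(t)$ holds for every $t \in T$ and every $r \in \readOp{t}$. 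Since the isolation configuration is bounded (a constant number of axioms, each with a constant number of quantifiers) and $\valuewrName$ is computable in polynomial time along the acyclic $\overline{\wro}$, the total cost is polynomial.

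For NP-hardness, I would reduce from $1$-in-$3$ SAT, adapting the strategy of \Cref{th:k-expressive-np} so as to avoid any use of nontrivial $\whereName$ predicates. In a partial observation history the only source of nondeterminism is the choice, for each unobserved (read, key) pair, of which delete event explains the absence of that key; distinct choices impose distinct ordering constraints through the axioms. For each propositional variable $x_i$ I would introduce a key $k_i$, a unique initializing insert in a session-preceding transaction, and two delete transactions $d_i^{\top}, d_i^{\bot}$; a distinguished read associated with $x_i$ has to have its non-observation of $k_i$ explained by exactly one of the two, and this choice encodes the truth value assigned to $x_i$. Each clause $(\ell_1 \vee \ell_2 \vee \ell_3)$ is then encoded by an additional gadget whose combined constraints force the history to admit a commit order iff \emph{exactly one} of its three literal choices selects the ``positive'' delete. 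Chaining these gadgets via $\so$ edges yields a polynomial-time reduction whose output is consistent iff the input formula is $1$-in-$3$ satisfiable.

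The main obstacle is that the reduction must be uniform over every bounded isolation configuration at least as strong as $\RC$. Since $\RC$ propagates ordering constraints only along $(\so \cup \wro);\po^*$, which is much weaker than the $\co^*$ chains available for $\PRE$, $\SI$, or $\SER$, I cannot rely on axioms that are specific to the stronger levels. To handle this, I would engineer the clause gadgets so that every forbidden cycle is made of $\so$ and $\wro$ edges only; such a cycle is inconsistent under any execution regardless of the isolation level, since $\so \cup \wro \subseteq \co$ by \Cref{def:execution}. A secondary technical concern is ensuring that each delete actually removes an existing row at the moment its witnessing edge is installed, so that $\valuewr{w}{k_i} = \del_{k_i}$ as required by \Cref{def:history}; this is arranged via a unique initializing insert that session-precedes both $d_i^{\top}$ and $d_i^{\bot}$ in every gadget.
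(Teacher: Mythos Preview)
Your NP-membership argument matches the paper's and is fine.

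For NP-hardness, however, your plan diverges from the paper in a way that leaves a real gap. The paper reduces from ordinary 3-SAT (not 1-in-3 SAT) and does \emph{not} try to avoid axiom-derived commit-order constraints. Its encoding $h_\varphi$ introduces, for each literal $l_i^j$, two transactions $t_i^j,\lnot t_i^j$ that both \emph{delete} the key $\var{l_i^j}_i$, plus a selector $S_i^j$ that fails to observe that key; every witness must attribute the non-observation to one of the two. The ``at least one literal true'' constraint is enforced by auxiliary keys $c_i^j$ so that choosing all three $\lnot t_i^j$'s forces a 3-cycle in $\pco=\computeCO{\overline{h}}{(\so\cup\overline{\wro})^+}$ via the $\RC$ axiom---not a cycle in $\so\cup\overline{\wro}$ itself. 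Consistency of a variable's value across clauses is handled by a second family of auxiliary keys that again produce $\RC$-derived $\pco$ edges. Uniformity over isolation configurations is then obtained by (i) observing that non-saturable configurations are already NP-hard for full histories, so only the saturable case needs a fresh reduction, and (ii) proving a structural lemma (\Cref{lemma:np-complete:co-cc-co-rc}) that in $h_\varphi$ every saturable level stronger than $\RC$ yields exactly the same $\pco$ as $\RC$ does.

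The step where you ``engineer the clause gadgets so that every forbidden cycle is made of $\so$ and $\wro$ edges only'' is precisely the one that lacks justification. If that were achieved, consistency would become independent of the axioms: you would be deciding whether some choice of delete-to-read edges keeps $\so\cup\overline{\wro}$ acyclic. You give no construction showing that this edge-selection problem can encode an exactly-one-of-three constraint, and ``adapting \Cref{th:k-expressive-np}'' does not help here, since that reduction relies essentially on the $\mathtt{eq}$ predicate in $\whereName$ clauses to filter values---exactly the mechanism partial observation histories remove. You also leave the forward direction open: when the formula is satisfiable you must exhibit a witness together with a commit order satisfying the (arbitrary, possibly non-saturable) configuration, and ``$\so\cup\overline{\wro}$ acyclic'' by itself does not give you one. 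The paper handles both issues by letting the $\RC$ axiom carry the encoding and then collapsing all saturable levels to $\RC$ via the lemma above.
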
 

\begin{appendixVer}
The proof of NP-hardness (see \Cref{app:proofs-observable-np-complete}) uses a novel reduction from $3$ SAT. $ $%
\end{appendixVer}
\begin{noappendixVer}
The proof of NP-hardness%
uses a novel reduction from $3$ SAT. $ $%
\end{noappendixVer}
The main difficulty for obtaining consistent witnesses of partial observation histories is the ambiguity of which delete event is responsible for each absent row. %

	\section{Effectively Checking Consistency of Client Histories}
\label{sec:csob-algorithm}

The result of \Cref{th:k-expressive-np} implicitly asks whether there exist conditions on the histories for which checking consistency remains polynomial as in \cite{DBLP:journals/pacmpl/BiswasE19}. We describe an algorithm for checking consistency of client histories and identify cases in which it runs in polynomial time.

Consider a client history $h  = \tup{T, \so, \wro}$ which is consistent. For every consistent witness $\overline{h} = \tup{T, \so, \overline{\wro}}$ of $h$ there exists a consistent execution of $\overline{h}$, $\exec = \tup{\overline{h}, \co}$. The commit order $\co$ contains $(\so \cup \wro)^+$ and any other ordering constraint derived from axioms by observing that $(\so \cup \wro)^+ \subseteq \co$. More generally, $\co$ includes all constraints generated by the least fixpoint of the function $\textsc{saturate}$ defined in Algorithm~\ref{algorithm:necessary-co} when starting from $(\so \cup \wro)^+$ as partial commit order.
This least fixpoint exists because $\textsc{saturate}$ is monotonic. It is computed as usual by iterating $\textsc{saturate}$ until the output does not change.  We use 
$\mathsf{FIX}(\lambda \mathit{R}: \textsc{saturate}(h, \mathit{R}))(\so \cup \wro)^+$ to denote this least fixpoint.
In general, such a fixpoint computation is just an under-approximation of $\co$, and it is not enough for determining $h$'s consistency. 

\begin{figure}[t]
\vspace{-6mm}
\begin{algorithm}[H]
\caption{Checking consistency of client histories}
\begin{algorithmic}[1]
\small
\Function{\checksobound}{$\hist = \tup{T, \so, \wro}$}

\Let $\pco = \mathsf{FIX}(\lambda \mathit{R}: \textsc{saturate}(h, \mathit{R}))(\so \cup \wro)^+$
\label{algorithm:checksobound:co}

\Let $E_h = \{(r,x) \ | \ r \in \readOp{h}, x \in \Vars. \wro_x^{-1}(r) \uparrow $ and $\mathtt{1}_r^x(\pco) \neq \emptyset\}$
\label{algorithm:checksobound:eh}

\Let $X_h$ = the set of mappings that map each $
{\arraycolsep=0.3pt
\begin{array}{lll}
    (r,x)&\in& E_h
\end{array}}$ to a member of $\mathtt{0}_x^r(\pco)$
\label{algorithm:checksobound:xh}

\If{$\pco$ is cyclic} \ReturnNameAlgorithmic $\bfalse$
\label{algorithm:checksobound:co-cyclic}

\ElsIf{there exists $(r,x)\in E_h$ such that $\mathtt{0}_x^r(\pco) = \emptyset$} \ReturnNameAlgorithmic $\bfalse$
\label{algorithm:checksobound:inconsistent-case}

\ElsIf{$E_h = \emptyset$} \Return \csobAlgorithm$(h, \emptyset)$

\label{algorithm:checksobound:call-csob-h}

\Else
\label{algorithm:checksobound:else-case}

\ForAll{$f \in X_h$}

\State $\mathtt{seen} \gets \emptyset; \; h' \gets h \bigoplus_{(r,x) \in E_h} \wro_x(f(r,x), r)$
\label{algorithm:checksobound:h-prime-definition}

\If{\csobAlgorithm$(h',\emptyset)$} \Return $\btrue$
\label{algorithm:checksobound:call-csob-h-prime}

\EndIf

\EndFor

\ReturnAlgorithmic $\bfalse$

\EndIf

\EndFunction
\label{algorithm:checksobound:end}

\end{algorithmic}
\label{algorithm:checksobound}
\end{algorithm}
\vspace{-10mm}
\end{figure}

\color{newVersionColor}
The algorithm we propose, described in Algorithm~\ref{algorithm:checksobound}, exploits the partial commit order $\pco$ obtained by such a fixpoint computation (line~\ref{algorithm:checksobound:co}) for determining $h$'s consistency.
\color{black}
For a read $r$, key $x$, we define $\mathtt{1}_x^r(\pco)$, resp., $\mathtt{0}_x^r(\pco)$, to be the set of transactions that are \emph{not} committed after $\trans{r}$ and which write a value that satisfies, resp., does not satisfy, the predicate $\where{r}$. The formal description of both sets can be seen in Equation~\ref{eq:one-zero-sets-def}.%

\vspace{-6mm}
\begin{align}
\label{eq:one-zero-sets-def}
\mathtt{1}_x^r(\pco) = \{ t \in T \ | \ (\trans{r}, t)\not\in \pco \ \land \ \where{r}(\valuewr{t}{x}) = 1\} \nonumber\\
\mathtt{0}_x^r(\pco) = \{ t \in T \ | \ (\trans{r}, t)\not\in \pco \ \land \ \where{r}(\valuewr{t}{x}) = 0\} \\[-7mm] \nonumber
\end{align}

\begin{figure}[t]
\centering
\begin{subfigure}{0.56\textwidth}
\centering
\resizebox{.945\textwidth}{!}{
    \begin{tikzpicture}[>=stealth',shorten >=1pt,auto,node distance=3cm,
        semithick, transform shape,
        B/.style = {%
        decoration={brace, amplitude=1mm,#1},%
        decorate},
        B/.default = ,  %
        ]

        \node[minimum width=12em, draw, rounded corners=2mm,outer sep=0, label={[font=\small]10:$\init$}] (init) at (3, 0) {$\iinsert{\{x_1 : 0, x_2: 0, x_3 : 0, x_4: 0\}}$};

        \node[minimum width=12em, draw, rounded corners=2mm,outer sep=0, label={[font=\small]167:$t_1$}] (t1) at (0, -1.75) {
            \begin{tabular}{l}
                $\iinsert{\{x_2: -1, x_3 : 1 \}}$
            \end{tabular}
        };

        \node[minimum width=12em, draw, rounded corners=2mm,outer sep=0, label={[font=\small]167:$t_2$}] (t2) at (0, -3.5) {
            \begin{tabular}{l}
                $\iinsert{\{x_1 : 2, x_4: -2\}}$
            \end{tabular}
        };
        \node[minimum width=12em, draw, rounded corners=2mm,outer sep=0, label={[font=\small]160:$t_3$}] (t3) at (0, -5.25) {
            \begin{tabular}{l}
                $\iselect{\lambda r: r < 0}$ \\
                $\iinsert{\{x_2 : -3\}}$
            \end{tabular}
        };

        \node[minimum width=12em, draw, rounded corners=2mm,outer sep=0, label={[font=\small]13:$t_4$}] (t4) at (6, -1.75) {
            \begin{tabular}{l}
                $\iinsert{\{x_4: 4\}}$
            \end{tabular}
        };

        \node[minimum width=12em, draw, rounded corners=2mm,outer sep=0, label={[font=\small]20:$t_5$}] (t5) at (6, -3.5) {
            \begin{tabular}{l}
                $\iselect{\lambda r: r \geq 0}$ \\
                $\iinsert{\{x_1: 5, x_3 : -5\}}$
            \end{tabular}
        };

        \path (init) edge[->, soColor, right] node [above] {$\so$} (t4);
        \path (init) edge[->, soColor, right] node [above] {$\so$} (t1);
        \path (t4) edge[->, soColor, right, transform canvas={xshift=1mm}] node [right] {$\so$} (t5);
        \path (t1) edge[->, soColor, right, transform canvas={xshift=0mm}] node [right] {$\so$} (t2);
        \path (t2) edge[->, soColor, right, transform canvas={xshift=1mm}] node [right] {$\so$} (t3);

        \path (t1.south east) edge[->, wrColor] node [black, above] {$\wro_{x_3}$} (t5.north west);

        \path (t1.south east) edge[->, wrColor, bend left] node [black, right] {$\wro_{x_2}$} (t3.north east);
        \path (init.south) edge[->, wrColor] node [black, left, shift={(0,0.2)}] {$\wro_{x_1}$} (t5.north west);
        \path (t5.south west) edge[->, wrColor] node [black, below] {$\wro_{x_3}$} (t3.north east);
        \path (t4) edge[->, wrColor, transform canvas={xshift=-1mm}] node [black, left] {$\wro_{x_4}$} (t5);
        \path (t2) edge[->, wrColor, transform canvas={xshift=-1mm}] node [black, left] {$\wro_{x_4}$} (t3);

    \end{tikzpicture}  
}

\caption{A history where $t_3, t_5$ have $\PRE$ and $\SER$ as isolation levels respectively. The isolation levels of the other transactions are unspecified.
}
\label{fig:example-conflict-free:history}
\end{subfigure}
\hfill
\begin{subfigure}{0.43\textwidth}
\centering
\begin{minipage}{\textwidth}
\centering
\begin{tabular}{|l|l|l|}
    \hline
    \textbf{History} & \textbf{$\wro_{x_1}^{-1}(t_3)$} & \textbf{$\wro_{x_2}^{-1}(t_5)$}\\ \hline
    $h_1$ & $\init$ & $\init$\\
    $h_2$ & $\init$ & $t_1$\\
    $h_3$ & $\init$ & $t_3$\\
    $h_4$ & $t_2$ & $\init$\\
    $h_5$ & $t_2$ & $t_1$\\
    $h_6$ & $t_2$ & $t_3$\\
    $h_7$ & $t_5$ & $\init$\\
    $h_8$ & $t_5$ & $t_1$\\
    $h_9$ & $t_5$ & $t_3$\\\hline
    \end{tabular}
\caption{Table describing all possible full extensions of the history in \Cref{fig:example-conflict-free:history}.}
\label{fig:example-conflict-free:table-extensions}
\end{minipage}
\begin{minipage}{\textwidth}
    \centering
\begin{tabular}{|l|l|l|}
    \hline
    \textbf{History} & \textbf{$\wro_{x_1}^{-1}(t_3)$} & \textbf{$\wro_{x_2}^{-1}(t_5)$}\\ \hline
    $h_{258}$ & \textbf{undef} & $t_1$\\\hline
\end{tabular}
\caption{Table describing the only conflict-free extension of \Cref{fig:example-conflict-free:history}.} 
\label{fig:example-conflict-free:table-conflict-free}
\end{minipage}

\end{subfigure}
\vspace{-1mm}
\caption{
\color{newVersionColor}
Comparison between conflict-free extensions and full extensions of the history $h$ in \Cref{fig:example-conflict-free:history}. In $h$, $\wro^{-1}$ is not defined for two pairs: $(t_3, x_1)$ and $(t_5, x_2)$; where we identify the single $\eselect$ event in a transaction with its transaction. Table~\ref{fig:example-conflict-free:table-extensions} describes all possible full extensions of $h$. For example, the first extension, $h_1$, states that $(\init, t_3) \in \wro_{x_1}$ and $(\init, t_5) \in \wro_{x_2}$. \Cref{algorithm:checksobound} only explore the only extension $h_{258}$ described in Table~\ref{fig:example-conflict-free:table-conflict-free}; where $\wro_{x_1}^{-1}(t_3) \uparrow$ and $(t_1, t_5) \in \wro_{x_2}$. The history $h_{258}$ can be extended to histories $h_2$, $h_5$ and $h_8$.
\color{black}
}
\label{fig:example-conflict-free}
\vspace{-6mm}

\end{figure}

\color{newVersionColor}
The set $\mathtt{0}_x^r(\pco)$ can be used to identify extensions that are not witness of a history. Let us consider the client history $h$ depicted in \Cref{fig:example-conflict-free:history}. Observe that $t_3$ is not reading $x_1$ and $t_5$ is not reading $x_2$. Table~\ref{fig:example-conflict-free:table-extensions} describes all possible full extensions $\overline{h}$ of $h$. 
An execution $\exec = \tup{\overline{h}, \co}$ is consistent if $(t, r) \in \overline{\wro}_x \setminus \wro_x$ implies $\where{r}(\valuewr{t}{x}) = 0$. This implies that extensions $h_1$, $h_4$, and $h_7$, where $(\init, t_5) \in \overline{\wro}_{x_2}$, are not witnesses of $h$ as $\where{t_5}(\valuewr{\init}{x_2}) = 1$. We note that $\init \not\in \mathtt{0}_{x_2}^{t_5}(\pco) = \{t_1\}$. Also, observe that $(t_5, t_3) \in \wro$; so extensions $h_3, h_6$ and $h_9$, where $(t_3, t_5) \in \overline{\wro}_{x_2}$, are not a witness of $h$. Once again, $t_3 \not\in \mathtt{0}_{x_2}^{t_5}(\pco)$. In general, for every read event $r$ and key $x$ s.t. $\wro_x^{-1}(r) \uparrow$, the extension of $h$ where $(t,r) \in \overline{\wro}_x$, $t \not\in \mathtt{0}_{x}^{r}(\pco)$, is not a witness of $h$. In particular, if $\wro_x^{-1}(r) \uparrow$ but $\mathtt{0}_x^r(\pco)=\emptyset$, then no witness of $h$ can exist.

The sets $\mathtt{0}_{x}^{r}(\pco)$ are not sufficient to determine if a witness is a consistent witness as our previous example shows: $ \mathtt{0}_{x_1}^{t_3}(\pco) = \{\init,t_2, t_5\}$, but $h_2$ is not consistent.
\color{black}
\Cref{algorithm:checksobound}, combines an enumeration of history extensions with a search for a consistent execution of each extension. The extensions are \emph{not} necessarily full.
\color{newVersionColor}
In case $\wro_x^{-1}(r)$ is undefined, we use sets $\mathtt{1}_x^r(\pco)$ to decide whether the extension of $h$ requires specifying $\wro_x^{-1}(r)$ for determining $h$'s consistency. \Cref{algorithm:checksobound} specifies  $\wro_x^{-1}(r)$ only if $(r,x)$ is a so-called \emph{conflict}, i.e., $\wro_x^{-1}(r)$ is undefined and $\mathtt{1}_x^r(\mathtt{\pco}) \neq \emptyset$.

\color{black}

\color{newVersionColor}
Following the example of \Cref{fig:example-conflict-free}, we observe that $\mathtt{1}_{x_1}^{t_3}(\pco) = \emptyset$, all transactions that write on $x_1$ write non-negative values; but instead $\mathtt{1}_{x_2}^{t_5}(\pco) = \{\init\}$. Intuitively, this means that if some extension $h'$ that does not specify $\wro_{x_1}^{-1}(t_3)$ does not violate any axiom when using some commit order $\co$, then we can extend $h'$, defining $\wro_{x_1}^{-1}(t_3)$ as some adequate transaction, and obtain a full history $\overline{h}$ s.t. the execution $\exec = \tup{\overline{h}, \co}$ is consistent. On the other hand, specifying the write-read dependency of $t_5$ on $x_2$ matters. For not contradicting any axiom using $\co$, we may require $(\init, t_5) \in \overline{\wro}_{x_2}$. However, such extension is not even a witness of $h$ as $\where{\init}(\valuewr{\init}{x_2}) = 1$.
This intuition holds for the particular definitions of the isolation levels that \Cref{algorithm:checksobound} considers.

A history is \emph{conflict-free} if it does not have conflicts. Our previous discussion reduces the problem of checking consistency of a history to checking consistency of its conflict-free extensions. For example, the history $h$ in \Cref{fig:example-conflict-free:history} is not conflict-free but the extension $h_{258}$ defined in Table~\ref{fig:example-conflict-free:table-conflict-free} is. Instead of checking consistency of the nine possible extensions, we only check consistency of $h_{258}$.

\Cref{algorithm:checksobound} starts by checking if there is at least a conflict-free extension of $h$ (line~\ref{algorithm:checksobound:inconsistent-case}). If $h$ is conflict-free, it directly calls \Cref{algorithm:csob} (line~\ref{algorithm:checksobound:call-csob-h}); while otherwise, it iterates over conflict-free extensions of $h$, calling \Cref{algorithm:csob} on each of them (line~\ref{algorithm:checksobound:call-csob-h-prime}). %
\color{black}

\Cref{algorithm:csob} describes the search for the commit order of a conflict-free history $h$. This is a recursive enumeration of consistent prefixes of histories that backtracks when detecting inconsistency (it generalizes Algorithm $2$ in \cite{DBLP:journals/pacmpl/BiswasE19}). A \emph{prefix} of a history $\hist = \tup{T, \so, \wro}$ is a tuple $P = \tup{T_P, M_P}$ where $T_P \subseteq T$ is a set of transactions and $M_P : \Vars \to T_P$ is a mapping s.t. (1) $\so$ predecessors of transactions in $T_P$ are also in $T_P$, i.e., $\forall \tr\in T_P.\ \so^{-1}(\tr)\in T_P$ and (2) for every $x$, $M_P(x)$ is a $\so$-maximal transaction in $T_P$ that writes $x$ ($M_P$ records a last write for every key).

For every prefix $P = \tup{T_P, M_P}$ of a history $h$ and a transaction $t \in T \setminus T_P$, we say a prefix $P' = \tup{T_{P'}, M_{P'}}$ of $h$ is an \emph{extension} of $P$ \emph{using} $t$ if $T_{P'} = T_P \cup \{t\}$ and for every key $x$, $M_{P'}(x)$ is $t$ or $M_P(x)$. \Cref{algorithm:csob} extensions, denoted as $P \cup \{t\}$, guarantee that for every key $x$, if $\writeVar{t}{x}$, then $M_{P'}(x) = t$.

\color{newVersionColor}
Extending the prefix $P$ using $t$ means that any transaction $t'\in T_P$ is committed before $t$. \Cref{algorithm:csob} focuses on special extensions that lead to commit orders of consistent executions. 

\begin{table}[!ht]
\vspace{-2mm}
\centering
\begin{adjustbox}{width=\textwidth}
    \begin{tabular}{|l|l|}
    \hline
        \textbf{Axiom} & \textbf{Predicate} \\ \hline
        \axser, \axpre, & $\nexists x \in \Vars$ s.t. $ \writeVar{t}{x}$, $\wro_x^{-1}(r) \downarrow$ \\
        \axra, \axrc &  \; $v(\pco^P_t)(t, r, x)$ holds in $h$ and $\wro_x^{-1}(r) \in T_P$\\\hline
        \axconf &  $\nexists x \in \Vars, t' \in T_P \cup \{t\}$ s.t. $ \writeVar{t'}{x}$, $\wro_x^{-1}(r) \downarrow$\\
        & \;  $v(\pco^P_t)(t', r, x)$ holds in $h$ and $\wro_x^{-1}(r) \neq M_P(x)$ \\\hline
    \end{tabular}
\end{adjustbox}
\caption{Predicates relating prefixes and visibility relations where $\pco_t^P$ is defined as $ \pco \cup  \{(t', t) \ | \ t' \in T_P\} \cup \{(t, t'') \ | \  t'' \in T \setminus (T_P \cup \{t\})\}$.}
\label{table:visibility-prefix}
\vspace{-10mm}
\end{table}

\vspace{-1mm}
\begin{definition}
\label{def:consistent-extension}
Let $h$ be a history, $P = \tup{T_P, M_P}$ be a prefix of $h$, $t$ a transaction that is not in $T_P$ and $P' = \tup{T_{P'}, M_{P'}}$ be an exetension of $P$ using $t$. The prefix $P'$ is a \emph{consistent extension} of $P$ with $t$, denoted by $P \triangleright_t P'$, if
\vspace{-2mm}
\begin{enumerate}

    \item $P$ is $\pco$-closed: for every transaction $t' \in T$ s.t. $(t', t) \in \pco$ then $t' \in T_P$, \label{def:consistent-extension:2}

    \item $t$ does not overwrite other transactions in $P$: for every $\iread$ event $r$ outside of the prefix, i.e., $\trans{r} \in T \setminus T_{P'}$ and every visibility relation $v \in \visibilitySet{\isolation{h}}(\trans{r})$, the predicate $\visConsPrefix{v}{P}{t}{r}$ defined in \Cref{table:visibility-prefix} holds in $h$.%
     \label{def:consistent-extension:3}
\end{enumerate}
\end{definition}
\color{black}
We say that a prefix is consistent if it is either the empty prefix or it is a consistent extension of a consistent prefix.

\begin{figure*}[t]
\centering
\begin{subfigure}{0.545\textwidth}
\centering
\resizebox{.945\textwidth}{!}{
    \begin{tikzpicture}[>=stealth',shorten >=1pt,auto,node distance=3cm,
        semithick, transform shape,
        B/.style = {%
        decoration={brace, amplitude=1mm,#1},%
        decorate},
        B/.default = ,  %
        ]

        \node[minimum width=12em, draw, rounded corners=2mm,outer sep=0, label={[font=\small]10:$\init$}] (init) at (3, 0) {$\iinsert{\{x_1 : 0, x_2: 0, x_3 : 0, x_4: 0\}}$};

        \node[minimum width=12em, draw, rounded corners=2mm,outer sep=0, label={[font=\small]167:$t_1$}] (t1) at (0, -1.75) {
            \begin{tabular}{l}
                $\iinsert{\{x_2: -1, x_3 : 1 \}}$
            \end{tabular}
        };

        \node[minimum width=12em, draw, rounded corners=2mm,outer sep=0, label={[font=\small]167:$t_2$}] (t2) at (0, -3.5) {
            \begin{tabular}{l}
                $\iinsert{\{x_1 : 2, x_4: 2\}}$
            \end{tabular}
        };
        \node[minimum width=12em, draw, rounded corners=2mm,outer sep=0, label={[font=\small]160:$t_3$}] (t3) at (0, -5.25) {
            \begin{tabular}{l}
                $\iselect{\lambda r: r < 0}$ \\
                $\iinsert{\{x_2 : -3\}}$
            \end{tabular}
        };

        \node[minimum width=12em, draw, rounded corners=2mm,outer sep=0, label={[font=\small]13:$t_4$}] (t4) at (6, -1.75) {
            \begin{tabular}{l}
                $\iinsert{\{x_4: 4\}}$
            \end{tabular}
        };

        \node[minimum width=12em, draw, rounded corners=2mm,outer sep=0, label={[font=\small]20:$t_5$}] (t5) at (6, -3.5) {
            \begin{tabular}{l}
                $\iselect{\lambda r: r \geq 0}$ \\
                $\iinsert{\{x_1: 5, x_3 : -5\}}$
            \end{tabular}
        };

        \path (init) edge[->, soColor, right] node [above] {$\so$} (t4);
        \path (init) edge[->, soColor, right] node [above] {$\so$} (t1);
        \path (t4) edge[->, soColor, right, transform canvas={xshift=1mm}] node [right] {$\so$} (t5);
        \path (t1) edge[->, soColor, right, transform canvas={xshift=0mm}] node [right] {$\so$} (t2);
        \path (t2) edge[->, soColor, right, transform canvas={xshift=1mm}] node [right] {$\so$} (t3);

        \path (t1.south east) edge[->, wrColor] node [black, above, shift={(0.1,0)}] {$\wro_{x_2, x_3}$} (t5.north west);

        \path (init.south) edge[->, wrColor] node [black, left, shift={(0,0.2)}] {$\wro_{x_1}$} (t5.north west);
        \path (t1.south east) edge[->, wrColor, bend left] node [black, right] {$\wro_{x_2}$} (t3.north east);
        \path (t5.south west) edge[->, wrColor] node [black, below] {$\wro_{x_3}$} (t3.north east);
        \path (t4) edge[->, wrColor, transform canvas={xshift=-1mm}] node [black, left] {$\wro_{x_4}$} (t5);
        \path (t2) edge[->, wrColor, transform canvas={xshift=-1mm}] node [black, left] {$\wro_{x_4}$} (t3);

    \end{tikzpicture}     
}

\caption{Conflict-free history corresponding to the extension $h_{258}$ (Table~\ref{fig:example-conflict-free:table-conflict-free}) of the history in \Cref{fig:example-conflict-free:history}
}
\label{fig:example-csob:history}
\end{subfigure}
\hfill
\begin{subfigure}{0.445\linewidth}
\centering
\resizebox{.65\textwidth}{!}{
    \begin{tikzpicture}[>=stealth',shorten >=1pt,auto,node distance=3cm,
        semithick, transform shape,
        B/.style = {%
        decoration={brace, amplitude=1mm,#1},%
        decorate},
        B/.default = ,  %
        ]
        \node[minimum width=2em,  minimum height=1.5em, draw, rounded corners=2mm,outer sep=0] (start) at (4, 0) {$\emptyset$};

        \node[minimum width=2em, draw, rounded corners=2mm,outer sep=0] (p1) at (3, -0.8) {
            $\langle t_1\rangle$
        };
        \node[minimum width=2em, draw, rounded corners=2mm,outer sep=0] (p2) at (2, -1.6) {
            $\langle t_2\rangle$
        };
    
        \node[minimum width=2em, draw, rounded corners=2mm,outer sep=0] (p14) at (4, -1.6) {
            $\langle t_1, t_4\rangle$
        };
       
        \node[minimum width=2em, draw, rounded corners=2mm,outer sep=0] (p24) at (3, -2.4) {
            $\langle t_2, t_4\rangle$
        };
        \node[minimum width=2em, draw, rounded corners=2mm,outer sep=0] (p15) at (5, -2.4) {
            $\langle t_1, t_5\rangle$
        };
        \node[minimum width=2em, draw, rounded corners=2mm,outer sep=0] (p25) at (4, -3.2) {
            $\langle t_2, t_5\rangle$
        };
        \node[minimum width=2em, draw, rounded corners=2mm,outer sep=0] (p35) at (3, -4) {
            $\langle t_3, t_5\rangle$
        };

        \node[minimum width=4.5em, minimum height=2.5em, draw, wrColor, rounded corners=2mm,outer sep=0] (22circ) at (3, -4) {
            
        };

        \path (start) edge[->, blue, right] node [above] {} (p1);
        \path (p1) edge[->, red, right] node [above] {a} (p2);
        \path (p1) edge[->, blue, right] node [above] {} (p14);
        \path (p14) edge[->, red, right] node [above, shift={(-0.2,-0.05)}] {b} (p24);
        \path (p14) edge[->, blue, right] node [above] {} (p15);
        \path (p15) edge[->, blue, right] node [above] {} (p25);
        \path (p25) edge[->, blue, right] node [above] {} (p35);

    \end{tikzpicture}  
}

\caption{Execution of \Cref{algorithm:checksobound} on the history in \Cref{fig:example-csob:history}. }
\label{fig:example-csob:executions}

\end{subfigure}

\vspace{-1mm}
\caption{
\color{newVersionColor}
Applying \Cref{algorithm:csob} on the conflict-free consistent history $h_{258}$ on the left.
The right part pictures a search for valid extensions of consistent prefixes on $h_{258}$. Prefixes are represented by their
$\so$-maximal transactions, e.g., $\langle t_2 \rangle$ contains all transactions which are before $t_2$ in $\so$, i.e., $\{\init, t_1,t_2\}$. A red arrow means that the search is blocked (the prefix at the target is not a consistent extension), while a blue arrow mean that the search continues.
\color{black}
}
\label{fig:example-csob}
\vspace{-6mm}

\end{figure*}

\color{newVersionColor}

\Cref{fig:example-csob:executions} depicts the execution of \Cref{algorithm:csob} on the conflict-free history \Cref{fig:example-csob:history} (history $h_{258}$ from Table~\ref{fig:example-conflict-free:table-conflict-free}). Blocked and effectuated calls are represented by read and blue arrows respectively. The read arrow $a$ is due to condition~\ref{def:consistent-extension:2} in \Cref{def:consistent-extension}: as $t_3$ enforces $\PRE$, reads $x_4$ from $t_2$, and $t_4$ is visible to it ($\mathsf{vis}_{\axpre}(t_4, t_3, x_4)$), $(t_4, t_2) \in \pco$; so consistent prefixes can not contain $t_2$ if they do not contain $t_4$. The read arrow $b$ is due to condition~\ref{def:consistent-extension:3}: as $t_5$ enforces $\SER$ and it reads $x_4$ from $t_4$, consistent prefixes can not contain $t_2$ unless $t_5$ is included. When reaching prefix $\langle t_3, t_5 \rangle$, the search terminates and deduces that $h$ is consistent. From the commit order induced by the search tree we can construct the extension of $h$ where missing write-read dependencies are obtained by applying the axioms on such a commit order. In our case, from $\init <_{\co} t_1 <_{\co} t_4 <_{\co} t_5 <_{\co} t_2 <_{\co} t_3$, we deduce that the execution $\exec = \tup{h_5, \co}$ is a consistent execution of $h_{258}$, and hence of $h$; where $h_5$ is the history described in Table~\ref{fig:example-conflict-free:table-extensions}.
\color{black}

\begin{appendixVer}
For complexity optimizations (see \Cref{ssec:algorithm-complexity}), \Cref{algorithm:csob} requires an isolation level-dependent equivalence relation between consistent prefixes. If there is transaction $t \in T$ s.t. $\isolation{h}(t) = \SI$, prefixes $P = \tup{T_P, M_P}$ and $P' = \tup{T_{P'}, M_{P'}}$ are \emph{equivalent} iff they are equal (i.e. $T_P = T_{P'}, M_P = M_{P'}$). Otherwise, they are \emph{equivalent} iff $T_P = T_{P'}$.

The proof of \Cref{algorithm:checksobound}'s correctness can be found in \Cref{ssec:proof-algorithm}.
\end{appendixVer}

\begin{noappendixVer}
For complexity optimizations, \Cref{algorithm:csob} requires an isolation level-dependent equivalence relation between consistent prefixes. If there is transaction $t \in T$ s.t. $\isolation{h}(t) = \SI$, prefixes $P = \tup{T_P, M_P}$ and $P' = \tup{T_{P'}, M_{P'}}$ are \emph{equivalent} iff they are equal (i.e. $T_P = T_{P'}, M_P = M_{P'}$). Otherwise, they are \emph{equivalent} iff $T_P = T_{P'}$.

\end{noappendixVer}

\begin{figure}[t]
\vspace{-6mm}
\begin{algorithm}[H]
\caption{check consistency of conflict-free histories}
\begin{algorithmic}[1]
\small
\Function{\csobAlgorithm}{$\hist = \tup{T, \so, \wro}, P$}
\label{algorithm:csob:call-csob}

\If{$|P| = |T|$} \Return $\btrue$
\label{algorithm:csob:base-case}

\EndIf

\ForAll{$t \in T \setminus P$ s.t. $P \triangleright_t (P \cup \{t\})$}
\label{algorithm:csob:for-condition}

\If{$\exists P' \in \mathtt{seen}$ s.t. $P' \equiv_{\isolation{h}} (P \cup \{t\})$} \textbf{continue}
\label{algorithm:csob:no-conflict-invalid}

\ElsIf{\csobAlgorithm$(h, P \cup \{t\})$} \Return $\btrue$
\label{algorithm:csob:no-conflict-valid}

\Else $\ \mathtt{seen} \gets \mathtt{seen} \cup (P \cup \{t\})$
\label{algorithm:csob:add-prefix-seen}

\EndIf

\EndFor

\ReturnAlgorithmic $\bfalse$
\label{algorithm:csob:return-false}

\EndFunction

\end{algorithmic}
\label{algorithm:csob}
\end{algorithm}
\vspace{-7mm}
\end{figure}

\vspace{-2mm}
\begin{restatable}{theorem}{csobTheorem}
\label{th:csob}
Let $h$ be a client history whose isolation configuration is defined using $\{\SER, \SI, \PRE, \RA, \RC\}$. Algorithm~\ref{algorithm:checksobound} returns $\btrue$ if and only if $h$ is consistent.
\vspace{-2mm}
\end{restatable}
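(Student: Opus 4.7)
The plan is to prove the biconditional by arguing soundness and completeness separately, both exploiting the correspondence between successful runs of \csobAlgorithm and total commit orders. A successful recursion on input $h'$ produces a chain of consistent prefixes $\emptyset = P_0 \triangleright_{t_1} P_1 \triangleright_{t_2} \cdots \triangleright_{t_n} P_n$ with $T_{P_n} = T$, so the sequence $t_1, \ldots, t_n$ defines a total order $\co$ on $T$; conversely, from any consistent execution $\tup{\overline h, \co}$ the initial segments of $\co$ yield such a chain of prefixes. Both directions reduce to showing that the prefix-extension condition of Definition~\ref{def:consistent-extension} is equivalent to $\co$ satisfying the axioms when quantified over reads outside the current prefix.

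For the ``if'' direction (soundness), when the outer procedure reaches \csobAlgorithm$(h', \emptyset)$ with $h' = h$ or $h'$ obtained at line~\ref{algorithm:checksobound:h-prime-definition}, and that call returns $\btrue$, I would extract $\co$ from the successful branch and verify that $\pco \subseteq \co$ (by condition~\ref{def:consistent-extension:2} of Definition~\ref{def:consistent-extension}) and that every axiom instance for a read $r$ holds in $\tup{h', \co}$ (by condition~\ref{def:consistent-extension:3} applied at the step that inserts $\trans{r}$). I then complete $h'$ into a full history $\overline h$ by adding, for each still-missing $\wro_x^{-1}(r)$, the $\co$-latest write of $x$ that would be forced by the axioms; the conflict-free property of $h'$ (i.e. $\oneWhere[\pco] = \emptyset$ for the remaining reads) ensures those writes carry values falsifying $\where{r}$, so $\overline h$ is still a witness of $h$ and $\tup{\overline h, \co}$ is consistent.

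For the ``only if'' direction (completeness), start from a consistent witness $\overline h$ of $h$ with commit order $\co$. First show $\pco \subseteq \co$ by induction on the fixpoint iteration of \textsc{saturate}, using that every tuple it inserts is forced by a visibility axiom evaluated on $\co$. This passes the acyclicity check at line~\ref{algorithm:checksobound:co-cyclic}. For each conflict $(r,x) \in E_h$, the transaction $t = \trans{\overline\wro_x^{-1}(r)}$ satisfies $\where{r}(\valuewr{t}{x}) = 0$ (witness property) and $(\trans{r},t) \notin \co \supseteq \pco$, so $t \in \zeroWhere[\pco]$; this passes line~\ref{algorithm:checksobound:inconsistent-case} and provides a mapping $f \in X_h$ for which the extension $h'$ at line~\ref{algorithm:checksobound:h-prime-definition} is still witnessed consistently by $\overline h$. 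It then suffices to show \csobAlgorithm succeeds on $h'$: taking prefixes along $\co$ yields the required chain of consistent extensions, and the equivalence-based pruning at line~\ref{algorithm:csob:no-conflict-invalid} is harmless because for the considered isolation levels two equivalent prefixes have identical sets of future valid extensions (the $\SI$ case is the reason $M_P$ is included in the equivalence, as $\axconf$ inspects last writers of keys).

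The principal obstacle is justifying the restriction to conflict-free extensions, i.e. leaving $\wro_x^{-1}(r)$ unspecified whenever $\oneWhere[\pco] = \emptyset$. I must argue that for the specific visibility relations of $\{\SER, \SI, \PRE, \RA, \RC\}$, any commit order that saturates $h$ can be completed into a full $\overline\wro$ whose new edges all land in $\zeroWhere[\pco]$, so the enumeration in line~\ref{algorithm:checksobound:else-case} never misses a consistent witness. A secondary technical step is verifying that the predicates of Table~\ref{table:visibility-prefix} faithfully characterize, at the prefix level, the axiom conditions that will be imposed once $\trans{r}$ is scheduled; this is where the bounded, saturable, and non-saturable axioms must be handled uniformly via $\pco_t^P$, and it underpins both the soundness extraction of $\overline h$ and the completeness recovery of the successful branch from $\co$.
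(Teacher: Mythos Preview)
Your proposal is correct and follows essentially the same approach as the paper's proof, which is split into Lemma~\ref{lemma:csob:consistent-true} (completeness) and Lemma~\ref{lemma:csob:true-consistent} (soundness), with the same use of the chain of consistent prefixes to induce or recover the commit order, the same completion of $h'$ to a full witness via the $\co$-maximal writer compatible with the visibility relations (the paper's Equation~\eqref{eq:transactions-extension-algorithm}), and the same justification that conflict-free extensions suffice because $\oneWhere[\pco]=\emptyset$ forces the added writers into $\zeroWhere[\pco]$. Your remark on why the equivalence-based pruning is harmless (equivalent prefixes share future valid extensions, with $M_P$ needed only for $\axconf$) is in fact more explicit than the paper's treatment, which relies on Lemma~\ref{lemma:csob:seen-false} and leaves this invariance implicit.
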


In general, Algorithm~\ref{algorithm:checksobound} is exponential the number of conflicts in $h$. The number of \emph{conflicts} is denoted by $\conflictsHistory{h}$. The number of conflicts exponent is implied by the number of mappings in $X_h$ explored by \Cref{algorithm:checksobound} ($E_h$ is the set of conflicts in $h$). 
The history width and size exponents comes from the number of prefixes explored by Algorithm~\ref{algorithm:csob} which is $|h|^{\widthHistory{h}} \cdot \widthHistory{h}^{|\Keys|}$ in the worst case (prefixes can be equivalently described by a set of $\so$-maximal transactions and a mapping associating keys to sessions). $ $%
\begin{noappendixVer}
\end{noappendixVer}
\begin{appendixVer}
The full detail of \Cref{algorithm:checksobound} complexity's proof can be found in \Cref{ssec:algorithm-complexity}.%
\end{appendixVer}%

\vspace{-1mm}
\begin{restatable}{theorem}{csobPolyTheorem}
\label{th:csob-poly}
For every client history $h$ whose isolation configuration is composed of $\{\SER, \SI, \PRE, \RA, \RC\}$ isolation levels, Algorithm~\ref{algorithm:checksobound} runs in $\mathcal{O}(|h|^{\conflictsHistory{h} + \widthHistory{h} + 9} \cdot \widthHistory{h}^{|\Vars|})$. Moreover, if no transaction employs $\SI$ isolation level, Algorithm~\ref{algorithm:checksobound} runs in $\mathcal{O}(|h|^{\conflictsHistory{h} + \widthHistory{h} + 8})$.
\vspace{-1mm}
\end{restatable}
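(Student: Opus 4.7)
The plan is to bound the three sources of blow-up in Algorithm~\ref{algorithm:checksobound} separately: (i) the computation of $\pco$ and the sets $\mathtt{0}_x^r(\pco), \mathtt{1}_x^r(\pco)$; (ii) the outer enumeration over mappings in $X_h$; and (iii) the recursive exploration of consistent prefixes in \csobAlgorithm. Multiplying these yields the stated bound.

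First I would argue that the initial fixpoint $\pco = \mathsf{FIX}(\lambda R.\ \textsc{saturate}(h, R))(\so\cup\wro)^+$ on line~\ref{algorithm:checksobound:co} is computable in polynomial time: the relation lives in a lattice of size at most $|h|^2$, each \textsc{saturate} call iterates over a bounded number of transaction/key/visibility-relation triples (here we use that the isolation configuration is bounded and that every visibility relation in $\visibilitySet{\iota}$ for $\iota\in\{\SER,\SI,\PRE,\RA,\RC\}$ is evaluable in polynomial time, since it is a fixed composition of $\po,\so,\wro,\co$ with no nesting quantifier blow-up), and the fixpoint halts after at most $|h|^2$ iterations. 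The sets $E_h$, $\mathtt{0}_x^r(\pco)$, $\mathtt{1}_x^r(\pco)$ and the acyclicity test are then all polynomial.

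Next I would bound the outer loop. The set of conflicts $E_h$ has cardinality exactly $\conflictsHistory{h}$, and for each conflict $(r,x)\in E_h$ the mapping $f$ picks an element of $\mathtt{0}_x^r(\pco)\subseteq T$, so $|X_h|\le |h|^{\conflictsHistory{h}}$. For each $f$, constructing the extended history $h'$ on line~\ref{algorithm:checksobound:h-prime-definition} is polynomial.

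The core of the argument is bounding the number of prefixes enumerated by \csobAlgorithm. A consistent prefix $P=\tup{T_P, M_P}$ is $\so$-downward-closed, so $T_P$ is uniquely determined by its set of $\so$-maximal transactions; since $h$ has width $\widthHistory{h}$, this set contains at most $\widthHistory{h}$ elements, giving at most $|h|^{\widthHistory{h}}$ choices for $T_P$. The map $M_P:\Vars\to T_P$ is a last-writer map; however, since $T_P$ is downward-closed and each $M_P(x)$ must be a $\so$-maximal writer of $x$ inside $T_P$, $M_P(x)$ is determined by the session containing it, leaving at most $\widthHistory{h}^{|\Vars|}$ choices for $M_P$. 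When no transaction uses $\SI$, the equivalence $\equiv_{\isolation{h}}$ on line~\ref{algorithm:csob:no-conflict-invalid} identifies all prefixes sharing $T_P$ (the $\axconf$ axiom, which is the only one that actually inspects $M_P$ via its check against ``last write per key'', is absent), so the $\mathtt{seen}$ set collapses the search to at most $|h|^{\widthHistory{h}}$ distinct prefixes; otherwise one must keep the $\widthHistory{h}^{|\Vars|}$ factor. At each node, verifying $P\triangleright_t(P\cup\{t\})$ per \Cref{def:consistent-extension} requires checking $\pco$-closedness and evaluating the predicates of \Cref{table:visibility-prefix} for all outside reads; both are polynomial in $|h|$.

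Combining, \csobAlgorithm\ visits $\mathcal{O}(|h|^{\widthHistory{h}}\cdot\widthHistory{h}^{|\Vars|})$ prefixes (or $\mathcal{O}(|h|^{\widthHistory{h}})$ without $\SI$) with polynomial work per node, and is called $\mathcal{O}(|h|^{\conflictsHistory{h}})$ times; gathering the polynomial factors into the constant $8$ (resp.\ $9$ in the $\SI$ case) exponent yields the stated bounds. The main obstacle is the careful accounting for the equivalence $\equiv_{\isolation{h}}$: one must verify that collapsing by $T_P$ alone is sound whenever $\SI$ is absent, i.e., that no axiom among $\axser,\axpre,\axra,\axrc$ distinguishes two prefixes with equal transaction sets but different $M_P$, which follows because the predicates of \Cref{table:visibility-prefix} for these axioms only reference $T_P$ and not $M_P$.
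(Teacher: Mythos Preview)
Your plan is correct and follows essentially the same decomposition as the paper: bound the cost of the fixpoint and auxiliary sets, bound $|X_h|$ by $|h|^{\conflictsHistory{h}}$, bound the number of (equivalence classes of) consistent prefixes by $|h|^{\widthHistory{h}}\cdot\widthHistory{h}^{|\Vars|}$ (or $|h|^{\widthHistory{h}}$ without $\SI$) via the $\so$-maximal characterisation, and multiply by polynomial per-node work. The only point the paper makes more explicit is the origin of the constants $8$ and $9$: they come from the bound $k$ on the isolation configuration ($k=5$ with $\SI$, $k=4$ without, by counting quantifiers in the axioms of \Cref{fig:consistency_defs}), which makes the consistent-extension check cost $\mathcal{O}(|h|^{k+3})$ and hence the per-prefix work $\mathcal{O}(|h|^{k+4})$.
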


On bounded, conflict-free histories only using $\SER, \PRE, \RA, \RC$ as isolation levels, \Cref{algorithm:checksobound} runs in polynomial time.
For instance, standard reads and writes can be simulated using $\einsert$ and $\eselect$ with $\whereName$ clauses that select rows based on their key being equal to some particular value. In this case, histories are conflict-less ($\wro$ would be defined for the particular key asked by the clause, and writes on other keys would not satisfy the clause). A more general setting where $\whereName$ clauses restrict only values that are immutable during the execution (e.g., primary keys) and deletes only affect non-read rows also falls in this category.

	\vspace{-2mm}
\section{Experimental evaluation}
\label{sec:exp}

\vspace{-2mm}
We evaluate an implementation of $\checksobound$ in the context of the Benchbase~\cite{difallah2013oltp} database benchmarking framework. We apply this algorithm on histories extracted from randomly generated client programs of a number of database-backed applications. We use PostgreSQL 14.10 as a database. The experiments were performed on an Apple M1 with $8$ cores and $16$ GB of RAM.

\begin{figure*}[t]
	\centering
	\begin{subfigure}[t]{0.32\linewidth}
		\centering
		\includegraphics[width=\linewidth]{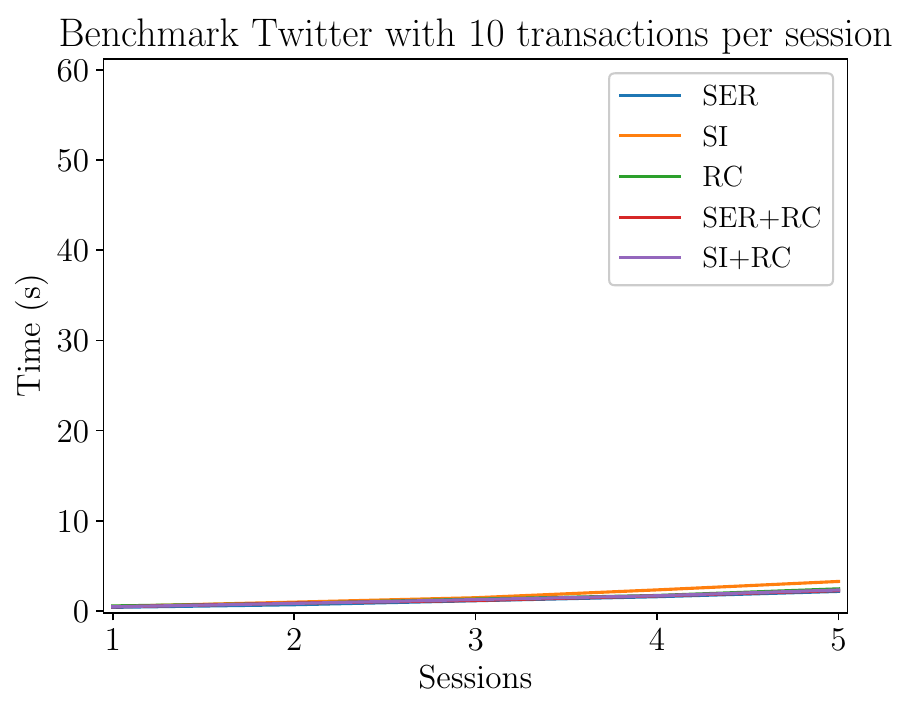}
		\vspace{-5mm}
		\label{fig:sessions-twitter}
	\end{subfigure}
	\hfill
	\begin{subfigure}[t]{0.32\linewidth}
		\centering
		\includegraphics[width=\linewidth]{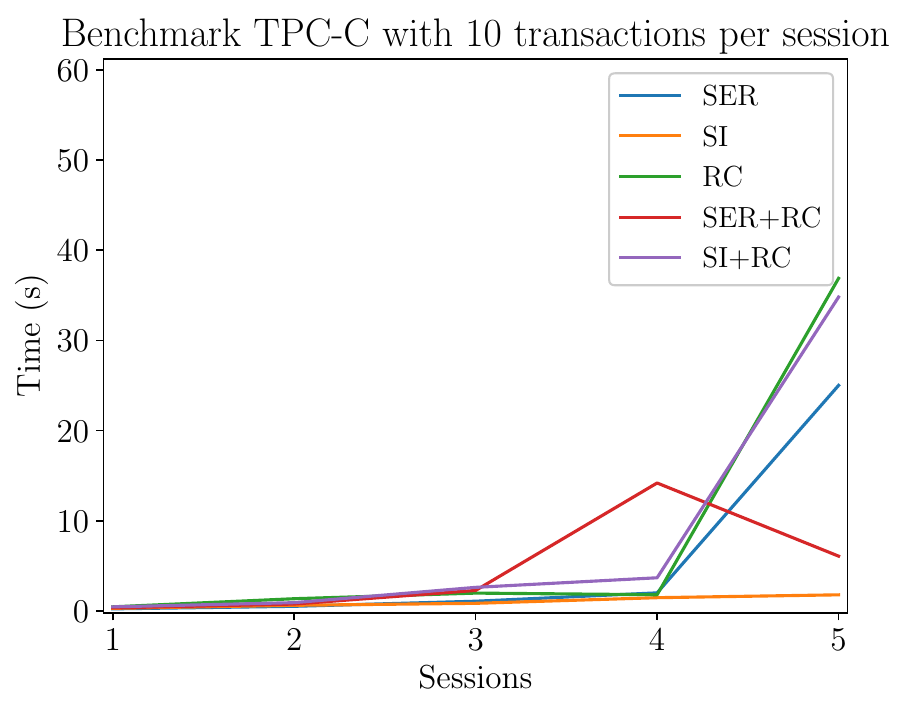}
		\vspace{-5mm}
		\label{fig:sessions-tpcc}
	\end{subfigure}
	\hfill
	\begin{subfigure}[t]{0.32\linewidth}
		\centering
		\includegraphics[width=\linewidth]{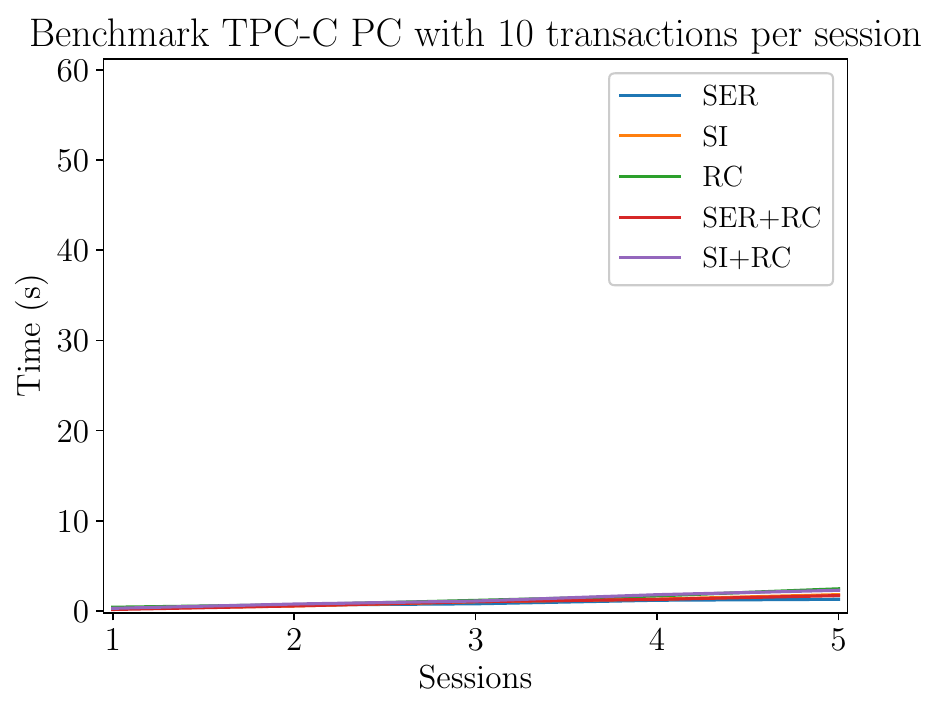}
		\vspace{-5mm}
		\label{fig:sessions-tpccpc}
	\end{subfigure}
	
	\vspace{-3mm}
	\caption[]{ 
	Running time of \Cref{algorithm:checksobound} while increasing the number of sessions. Each point represents the average running time of $5$ random clients of such size.} %
	\label{fig:sessions}
\vspace{-3mm}
\end{figure*}

\noindent
\textbf{Implementation.}
We extend the Benchbase framework with an additional package for generating histories and checking consistency.
Applications from Benchbase are instrumented in order to be able to extract histories, the $\wro$ relation in particular. %
Our implementation is publicly available~\cite{csob-implementation}.

Our tool takes as input a configuration file specifying the name of the application and the isolation level of each transaction in that application.
For computing the $\wro$ relation and generating client histories, we extend the database tables with an extra column \texttt{WRITEID} which is updated by every $\iwrite$ instruction with a unique value. %
SQL queries are also modified to return whole rows instead of selected columns. To extract the $\wro$ relation for $\eupdate$ and $\edelete$ we add $\mathtt{RETURNING}$ clauses. Complex operators such as \texttt{INNER JOIN} are substituted by simple juxtaposed SQL queries (similarly to~\cite{DBLP:journals/pacmpl/BiswasKVEL21}). We map the result of each query to local structures for generating the corresponding history. Transactions aborted by the database (and not explicitly by the application) are discarded.

\noindent
\textbf{Benchmark.}
We analyze a set of benchmarks inspired by real-world applications and evaluate them under different types of clients and isolation configurations. \nver{We focus on isolation configurations implemented in PostgreSQL, i.e. compositions of $\SER, \SI$ and $\RC$ isolation levels.}

In average, the ratio of SER/SI transactions is $11\%$ for Twitter and $88\%$ for TPC-C and TPC-C PC. These distributions are obtained via the random generation of client programs implemented in BenchBase. In general, we observe that the bottleneck is the number of possible history extensions enumerated at line 9 in Alg. 3 and not the isolation configuration. This number is influenced by the distribution of types of transactions, e.g., for TPC-C, a bigger number of transactions creating new orders increases the number of possible full history extensions. We will clarify.

\textit{Twitter~\cite{difallah2013oltp}} models a social network that allows users to publish tweets and get their followers, tweets and tweets published by other followers. 
We consider five isolation configurations: $\SER$, $\SI$ and $\RC$ and the heterogeneous $\SER+\RC$ and $\SI+\RC$, where publishing a tweet is $\SER$ (resp., $\SI$) and the rest are $\RC$. \nver{The ratio of $\SER$ (resp. $\SI$) transactions w.r.t. $\RC$ is $11\%$ on average.}

\textit{TPC-C~\cite{TPCC}} models an online shopping application with five types of transactions: reading the stock, creating a new order, getting its status, paying it and delivering it. We consider five isolation configurations: the homogeneous $\SER$, $\SI$ and $\RC$ and the combinations $\SER+\RC$ and $\SI+\RC$, where creating a new order and paying it have $\SER$ (respectively $\SI$) as isolation level while the rest have $\RC$. \nver{The ratio of $\SER$ (resp. $\SI$) transactions w.r.t. $\RC$ is $88\%$ on average.}

\textit{TPC-C PC} is a variant of the TPC-C benchmark whose histories are always conflict-free.
\texttt{DELETE} queries are replaced by \texttt{UPDATE} with the aid of extra columns simulating the absence of a row. Queries whose $\whereName$ clauses query mutable values are replaced by multiple simple instructions querying only immutable values such as unique ids and primary keys. %

\noindent
\textbf{Experimental Results.}
We designed two experiments to evaluate
$\checksobound$'s performance 
for different isolation configurations increasing the number of transactions per session (the number of sessions is fixed), the number of sessions (the number of transactions per session is fixed), resp. We use a timeout of $60$ seconds per history.

The first experiment investigates the scalability of \Cref{algorithm:checksobound} when increasing the number of sessions. For each benchmark and isolation configuration, we consider $5$ histories of random clients (each history is for a different client) with an increasing number of sessions and $10$ transactions per session (around 400 histories across all benchmarks). No timeouts appear with less than $4$ sessions. \Cref{fig:sessions} shows the running time of the experiment. %

\begin{figure*}[t]
	\centering
	\begin{subfigure}[t]{0.32\linewidth}
		\centering
		\includegraphics[width=\linewidth]{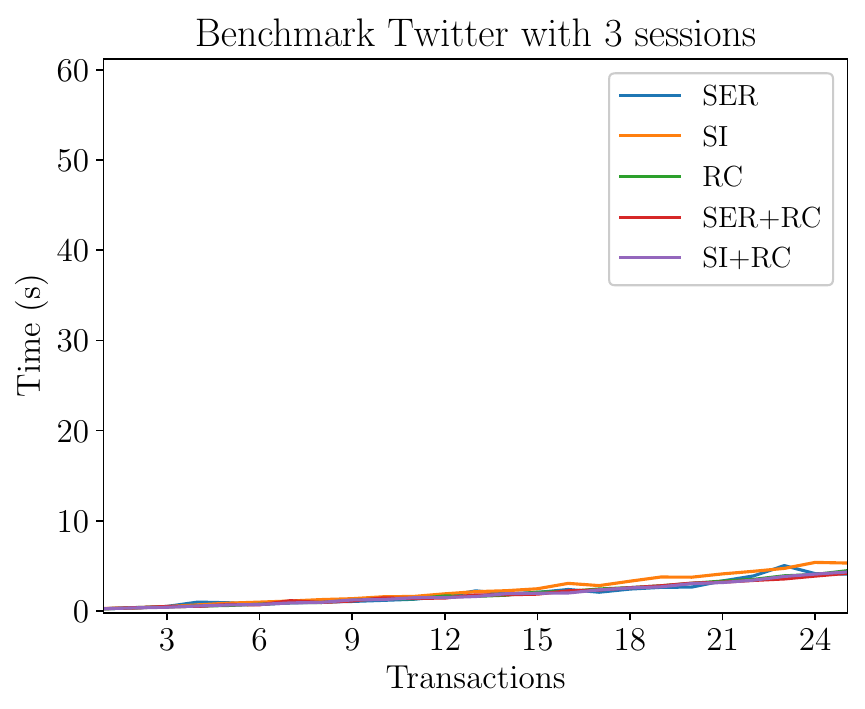}
		\vspace{-5mm}
		\label{fig:running-time-twitter}
	\end{subfigure}
	\hfill
	\begin{subfigure}[t]{0.32\linewidth}
		\centering
		\includegraphics[width=\linewidth]{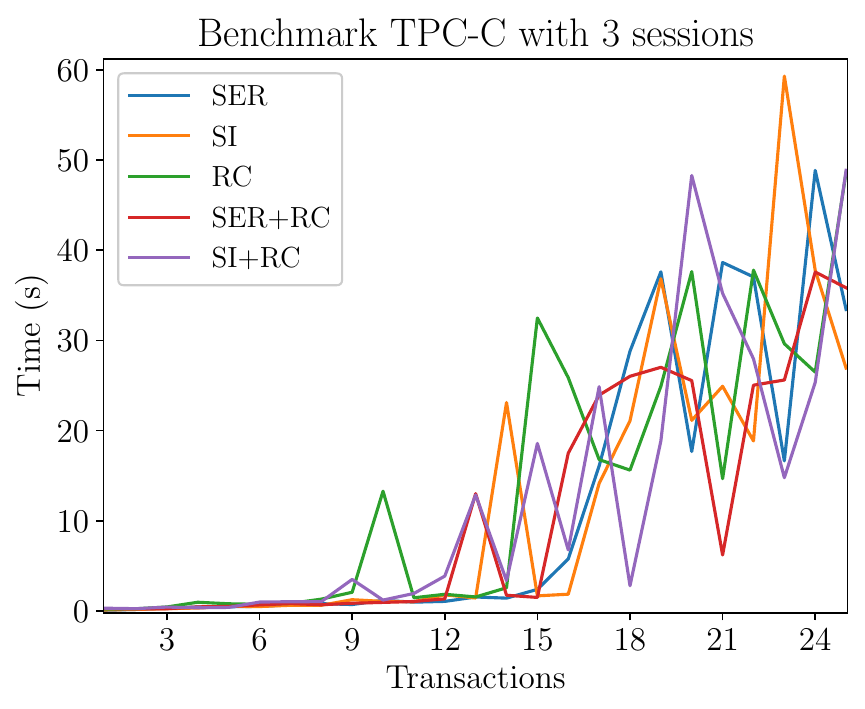}
		\vspace{-5mm}
		\label{fig:running-time-tpcc}
	\end{subfigure}
	\hfill
	\begin{subfigure}[t]{0.32\linewidth}
		\centering
		\includegraphics[width=\linewidth]{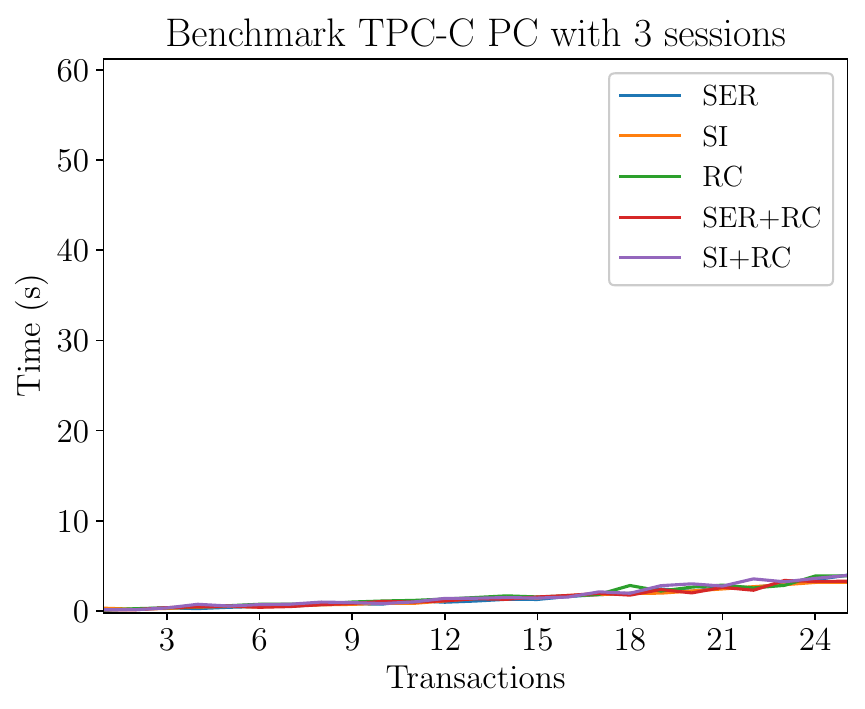}
		\vspace{-5mm}
		\label{fig:running-time-tpccpc}
	\end{subfigure}
	\vspace{-3mm}

	\caption{%
	Running time of \Cref{algorithm:checksobound} increasing the number of transactions per session. 
	We plot the average running time of $5$ random clients of such size.
	}
	\label{fig:running-time}
\vspace{-6mm}
\end{figure*}

The second experiment investigates the scalability of \Cref{algorithm:checksobound} when increasing the number of transactions. For each benchmark and isolation configuration, we consider $5$ histories of random clients, each having $3$ sessions and an increasing number of transactions per session (around 1900 histories across all benchmarks). \Cref{fig:running-time} shows its running time. 

The runtime similarities between isolation configurations containing $\SI$ versus those without it show that in practice, the bottleneck of \Cref{algorithm:checksobound} is \nver{the number of possible history extensions enumerated at line~\ref{algorithm:checksobound:call-csob-h-prime} in \Cref{algorithm:checksobound}; i.e. the number of conflicts in a history. This number is influenced by the distribution of types of transactions, e.g., for TPC-C, a bigger number of transactions creating new orders increases the number of possible full history extensions.} Other isolation levels not implemented by PostgreSQL, e.g., prefix consistency $\PRE$, are expected to produce similar results.

Both experiments show that \Cref{algorithm:checksobound} scales well for histories with a small number of writes (like Twitter) or conflicts (like TPC-C PC). In particular, \Cref{algorithm:checksobound} is quite efficient for typical workloads needed to expose bugs in 
production databases which contain less than 10 transactions~\cite{DBLP:journals/pacmpl/BiswasE19,DBLP:journals/pacmpl/0003GWB24,DBLP:journals/pvldb/HuangLCWBLP23}.

A third experiment compares \Cref{algorithm:checksobound} with a baseline consisting in a naive approach where we enumerate witnesses and executions of such witnesses until consistency is determined.
We consider Twitter and TPC-C as benchmarks and execute 5 histories of random clients, each having 3 sessions and an increasing number of transactions per session (around 100 histories across all benchmarks). We execute each client under $\RC$ and check the obtained histories for consistency with respect to $\SER$. %

The naive approach either times out for $35.5\%$, resp., $95.5\%$ of the histories of Twitter, resp., TPC-C, or finishes in $5$s on average (max $25$s). In comparison, \Cref{algorithm:checksobound} has no timeouts for Twitter and times out for $5.5\%$ of the TPC-C histories; finishing in $1.5$s on average (max $12$s). Averages are computed w.r.t. non-timeout instances. The total number of executed clients is around $100$. Only one TPC-C history was detected as inconsistent, which shows that the naive approach does not  timeout only in the worst-case (inconsistency is a worst-case because all extensions and commit orders must be proved to be invalid).

A similar analysis on the TPC-C PC benchmark is omitted: TPC-C PC is a conflict-free variation of TPC-C with more operations per transaction. Thus, the rate of timeouts in the naive approach increases w.r.t. TPC-C, while the rate of timeouts using \Cref{algorithm:checksobound} decreases.

Comparisons with prior work \cite{DBLP:journals/pacmpl/BiswasE19,DBLP:journals/pvldb/AlvaroK20,DBLP:journals/pvldb/HuangLCWBLP23,DBLP:journals/pacmpl/0003GWB24} are not possible as they do not apply to SQL (see \Cref{sec:related-work} for more details).

This evaluation demonstrates that our algorithm scales well to practical testing workloads and that it outperforms brute-force search.

	\vspace{-2mm}
\section{Related work}
\label{sec:related-work}

\vspace{-1mm}
The formalization of database isolation levels has been considered in previous work. Adya~\cite{adya-thesis} has proposed axiomatic specifications for isolation levels, which however do not concern more modern isolation levels like $\PRE$ or $\SI$ and which are based on low-level modeling of database snapshots. We follow the more modern approach in \cite{DBLP:conf/concur/Cerone0G15,DBLP:journals/pacmpl/BiswasE19} which however addresses the restricted case when transactions are formed of reads and writes on a \emph{static} set of keys (variables) and not generic SQL queries, and all the transactions in a given execution have the same isolation level. Our axiomatic model builds on axioms defined by Biswas et al.~\cite{DBLP:journals/pacmpl/BiswasE19} which are however applied on a new model of executions that is specific to SQL queries. 

The complexity of checking consistency w.r.t isolation levels has been studied in~\cite{DBLP:journals/jacm/Papadimitriou79b,DBLP:journals/pacmpl/BiswasE19}. The work of Papadimitriou~\cite{DBLP:journals/jacm/Papadimitriou79b} shows that checking serializability is NP-complete while the work of Biswas et al.~\cite{DBLP:journals/pacmpl/BiswasE19} provides results for the same isolation levels as in our work, but in the restricted case mentioned above. 

Checking consistency in a non-transactional case, shared-memory or distributed systems, has been investigated in a number of works, e.g.,~\cite{DBLP:conf/popl/BouajjaniEGH17,DBLP:journals/siamcomp/GibbonsK97,DBLP:journals/pacmpl/EmmiE18,DBLP:journals/tpds/CantinLS05,DBLP:journals/ppl/GontmakherPS03,DBLP:journals/tecs/FurbachM0S15,DBLP:journals/pacmpl/AbdullaAJN18,DBLP:conf/spaa/GibbonsK94,DBLP:conf/cav/AgarwalCPPT21}. Transactions introduce additional challenges that make these results not applicable. 

Existing tools for checking consistency in the transactional case of distributed databases, e.g., \cite{DBLP:journals/pacmpl/BiswasE19,DBLP:journals/pvldb/AlvaroK20,DBLP:journals/pvldb/HuangLCWBLP23,DBLP:journals/pacmpl/0003GWB24} cannot handle SQL-like semantics, offering guarantees modulo their transformations to reads and writes on static sets of keys. Our results show that handling the SQL-like semantics is strictly more complex (NP-hard in most cases).

	\section*{Acknowledgements}

We thank the anonymous reviewers for their feedback. This work was partially supported by the Agence National de Recherche (ANR) grants ``AdeCoDS'' and ``CENTEANES''. %

\newpage

	\bibliography{bibliography/main,bibliography/dblp,bibliography/acmart,bibliography/misc}

	\appendix
	\counterwithin{figure}{section}
	\counterwithin{table}{section}

	\newpage
	\section{An operational semantics for SQL-like distributed databases (\Cref{ssec:operational-semantics}).}
\label{app:operational-semantics}

\begin{figure}[!ht]
\footnotesize
\centering
    \begin{mathpar}
    \inferrule[begin]{\tr \mbox{ fresh}\quad e \mbox{ fresh}\quad 
    \mathsf{P}(j) = \ibegin{\iota}; \mathsf{Body}; \icommit; \mathsf{P} \quad \vec{\mathsf{B}}(j) = \epsilon  \\ %
    \tau = 1 + \max\{\vec{\mathsf{T}}(e') \ | \ e' \in \events{\hist}\} \quad  \vec{\mathsf{T}}' = \vec{\mathsf{T}}[e \to \tau] \quad \delta = \snapshot(\hist, \vec{\mathsf{S}}, \vec{\mathsf{T}}', e, \ebegin) \\
    \hist' = \hist \oplus_j \tup{\tr,\iota,\{\tup{e,\ebegin}\},\emptyset}}{
        \hist,\vec{\gamma},\vec{\mathsf{B}},\vec{\mathsf{I}},\vec{\mathsf{T}}, \vec{\mathsf{S}}, \mathsf{P}
        \Rightarrow
        \hist',\vec{\gamma}[j\mapsto \emptyset],\vec{\mathsf{B}}[j\mapsto \mathsf{Body}; \icommit],\vec{\mathsf{I}}[t\mapsto \iota],\vec{\mathsf{T}}',\vec{\mathsf{S}}[e \mapsto \delta],\mathsf{P}[j\mapsto \mathsf{S}]
    } 

    \inferrule[if-true]{\psi(\vec{a})[\vec{\gamma}(j)(a) / a: a\in\vec{a}] \quad
    \vec{\mathsf{B}}(j) = \iif{\psi(\vec{a})}{\mathsf{Instr}};\mathsf{B}
    }{
        \hist,\vec{\gamma},\vec{\mathsf{B}}, \vec{\mathsf{I}},\vec{\mathsf{T}},\vec{\mathsf{S}},\mathsf{P}
        \Rightarrow
        \hist,\vec{\gamma},\vec{\mathsf{B}}[j\mapsto \mathsf{Instr};\mathsf{B}],\vec{\mathsf{I}},\vec{\mathsf{T}},\vec{\mathsf{S}},\mathsf{P}
    } 

    \inferrule[if-false]{\lnot \psi(\vec{a})[\vec{\gamma}(j)(a) / a: a\in\vec{a}]\quad
    \vec{\mathsf{B}}(j) = \iif{\psi(\vec{x})}{\mathsf{Instr}};\mathsf{B}
    }{
        \hist,\vec{\gamma},\vec{\mathsf{B}}, \vec{\mathsf{I}},\vec{\mathsf{T}},\vec{\mathsf{S}},\mathsf{P}
        \Rightarrow
        \hist,\vec{\gamma},\vec{\mathsf{B}}[j\mapsto \mathsf{B}],\vec{\mathsf{I}},\vec{\mathsf{T}},\vec{\mathsf{S}},\mathsf{P}
    } 

    \inferrule[local]{v = e[\vec{\gamma}(j)(a') / a': a' \in \vec{a'}] \quad \vec{\mathsf{B}}(j) = a := e(\vec{a'});\mathsf{B}
    }{
        \hist,\vec{\gamma},\vec{\mathsf{B}}, \vec{\mathsf{I}},\vec{\mathsf{T}},\vec{\mathsf{S}},\mathsf{P}
        \Rightarrow
        \hist,\vec{\gamma}[j,a\mapsto v],\vec{\mathsf{B}}[j\mapsto \mathsf{B}],\vec{\mathsf{I}},\vec{\mathsf{T}},\vec{\mathsf{S}},\mathsf{P}
    }

    \inferrule[commit]{e\mbox{ fresh} \quad t = \last{h, j} \quad \iota = \isolation{h}(t) \quad
    \vec{\mathsf{B}}(j) = \icommit \\
        \tau = 1 + \max\{\vec{\mathsf{T}}(e') \ | \ e' \in \events{\hist}\} \quad \vec{\mathsf{T}}' = \vec{\mathsf{T}}[e \to \tau] 
    \\
    \delta = \snapshot(\hist, \vec{\mathsf{S}}, \vec{\mathsf{T}}', e, \ecommit)  \quad \commitOperational(h,\vec{\mathsf{T}}', t)
    }{
        \hist,\vec{\gamma},\vec{\mathsf{B}}, \vec{\mathsf{I}},\vec{\mathsf{T}},\mathsf{P}
        \Rightarrow
        \hist \oplus_j \tup{e,\icommit},\vec{\gamma},\vec{\mathsf{B}}[j\mapsto \epsilon], \vec{\mathsf{I}},\vec{\mathsf{T}}',\vec{\mathsf{S}}[e \mapsto \delta],\mathsf{P}
    } 

    \inferrule[abort]{e\mbox{ fresh} \quad t = \last{h, j} \quad \iota = \isolation{h}(t) \quad \vec{\mathsf{B}}(j) = \iabort; B \\
        \tau = 1 + \max\{\vec{\mathsf{T}}(e') \ | \ e' \in \events{\hist}\} \quad \vec{\mathsf{T}}' = \vec{\mathsf{T}}[e \to \tau]   \\
    \delta = \snapshot(\hist, \vec{\mathsf{S}}, \vec{\mathsf{T}}', e, \eabort) \quad \commitOperational(h,\vec{\mathsf{T}}', t)%
    }{
        \hist,\vec{\gamma},\vec{\mathsf{B}}, \vec{\mathsf{I}},\vec{\mathsf{T}},\vec{\mathsf{S}},\mathsf{P} 
        \Rightarrow
        \hist \oplus_j \tup{e,\iabort},\vec{\gamma},\vec{\mathsf{B}}[j\mapsto \epsilon], \vec{\mathsf{I}},\vec{\mathsf{T}}',\vec{\mathsf{S}}[e \mapsto \delta],\mathsf{P}
    }
    \end{mathpar}
\caption{
    An operational semantics for transactional programs. Above, $\last{h,j}$ denotes the last transaction log in the session order $\so(j)$ of $h$ while $\snapshot$ and $\selectOperational$ denote the snapshot visible to an instruction and the writes it reads from, respectively. The $\commitOperational{}$ checks if a transaction can be committed. They are defined in Figure~\ref{fig:operational_aux}.
    }.
    \label{fig:operational_semantics1}
\end{figure}

\begin{figure}[!ht]
\footnotesize
\centering
    \begin{mathpar}
    \inferrule[insert]{
        e\mbox{ fresh} \quad t = \last{h, j}\quad \iota = \isolation{h}(t) \quad \vec{\mathsf{B}}(j) = \iinsert{\setRows};\mathsf{B} \\
            \tau = 1 + \max\{\vec{\mathsf{T}}(e') \ | \ e' \in \events{\hist}\}\quad \vec{\mathsf{T}}' = \vec{\mathsf{T}}[e \to \tau] \\
            \delta = \snapshot(\hist, \vec{\mathsf{S}}, \vec{\mathsf{T}}', e, \einsert) \quad \hist' = \hist \oplus_j \tup{e,\iinsert{\setRows}}
    }{
        \hist,\vec{\gamma},\vec{\mathsf{B}}, \vec{\mathsf{I}},\vec{\mathsf{T}},\vec{\mathsf{S}},\mathsf{P}
        \Rightarrow
        \hist',\vec{\gamma},\vec{\mathsf{B}}[j\mapsto \mathsf{B}], \vec{\mathsf{I}},\vec{\mathsf{T}}',\vec{\mathsf{S}}[e \mapsto \delta],\mathsf{P}
    } 

    \inferrule[select]{
        e\mbox{ fresh } \quad t = \last{h, j}\quad \iota = \isolation{h}(t) \quad \vec{\mathsf{B}}(j) = a := \iselect{\predicate};\mathsf{B} \\ 
        \tau = 1 + \max\{\vec{\mathsf{T}}(e') \ | \ e' \in \events{\hist}\} \quad \vec{\mathsf{T}}' = \vec{\mathsf{T}}[e \to \tau] \\
        \delta = \snapshot(\hist, \vec{\mathsf{S}}, \vec{\mathsf{T}}', e, \eselect)\quad \vec{\mathsf{w}} =\selectOperational(h, \vec{\mathsf{T}}, t, \delta) \\ \hist' = (\hist \oplus_j \tup{e, \iselect{\predicate}}) \bigoplus_{\key \in \Vars, \vec{\mathsf{w}}[x] \neq \bot} \wro(\vec{\mathsf{w}}[\key], e)
    }{
        \hist,\vec{\gamma},\vec{\mathsf{B}}, \vec{\mathsf{I}},\vec{\mathsf{T}},\vec{\mathsf{S}},\mathsf{P}
        \Rightarrow
        \hist',\vec{\gamma}[(j,a)\mapsto \{\row\in\delta:\predicate(\row)\}],\vec{\mathsf{B}}[j\mapsto \mathsf{B}],\vec{\mathsf{I}},\vec{\mathsf{T}}',\vec{\mathsf{S}}[e \mapsto \delta],\mathsf{P}
    } 

    \inferrule[update]{
        e\mbox{ fresh } \quad t = \last{h, j}\quad \iota = \isolation{h}(t)  \quad \vec{\mathsf{B}}(j) = \iupdate{\predicate}{\mapRows};\mathsf{B} \\
            \tau = 1 + \max\{\vec{\mathsf{T}}(e') \ | \ e' \in \events{\hist}\} \quad
        \vec{\mathsf{T}}' = \vec{\mathsf{T}}[e \to \tau] \\
        \delta = \snapshot(\hist, \vec{\mathsf{S}}, \vec{\mathsf{T}}', e, \eupdate) \quad
        \vec{\mathsf{w}} =\selectOperational(h, \vec{\mathsf{T}}, t, \delta)  \\ 
        \hist' = (\hist \oplus_j \tup{e, \iupdate{\predicate}{\mapRows}}) \bigoplus_{\key \in \Vars, \vec{\mathsf{w}}[x] \neq \bot} \wro(\vec{\mathsf{w}}[\key], e)
    }{
        \hist,\vec{\gamma},\vec{\mathsf{B}}, \vec{\mathsf{I}},\vec{\mathsf{T}},\vec{\mathsf{S}},\mathsf{P}
        \Rightarrow
        \hist', \vec{\gamma},\vec{\mathsf{B}}[j\mapsto \mathsf{B}],\vec{\mathsf{I}},\vec{\mathsf{T}}',\vec{\mathsf{S}}[e \mapsto \delta],\mathsf{P} 
    } 

    \inferrule[delete]{
        e\mbox{ fresh } \quad t = \last{h, j}\quad \iota = \isolation{h}(t) \quad
        \vec{\mathsf{B}}(j) = \idelete{\predicate};\mathsf{B} \\
            \tau = 1 + \max\{\vec{\mathsf{T}}(e') \ | \ e' \in \events{\hist}\} \quad
        \vec{\mathsf{T}}' = \vec{\mathsf{T}}[e \to \tau] \\
        \delta = \snapshot(\hist, \vec{\mathsf{S}}, \vec{\mathsf{T}}', e, \edelete) \quad
        \vec{\mathsf{w}} =\selectOperational(h, \vec{\mathsf{T}}, t, \delta)  \\
            \hist' = (\hist \oplus_j \tup{e, \idelete{\predicate}}) \bigoplus_{\key \in \Vars, \vec{\mathsf{w}}[x] \neq \bot} \wro(\vec{\mathsf{w}}[\key], e)
    }{
        \hist,\vec{\gamma},\vec{\mathsf{B}}, \vec{\mathsf{I}},\vec{\mathsf{T}},\vec{\mathsf{S}},\mathsf{P}
        \Rightarrow
        \hist',\vec{\gamma},\vec{\mathsf{B}}[j\mapsto \mathsf{B}],\vec{\mathsf{I}},\vec{\mathsf{T}}',\vec{\mathsf{S}}[e \mapsto \delta],\mathsf{P}
    }
    \end{mathpar}
        \caption{
    An operational semantics for transactional programs. Above, $\last{h,j}$ denotes the last transaction log in the session order $\so(j)$ of $h$ while $\snapshot$ and $\selectOperational$ denote the snapshot visible to an instruction and the writes it reads from, respectively. The $\commitOperational{}$ checks if a transaction can be committed. They are defined in Figure~\ref{fig:operational_aux}.
    }.
    \label{fig:operational_semantics2}
\end{figure}

Formally, the operational semantics is defined as a transition relation $\Rightarrow$ between \emph{configurations}. A configuration is a tuple containing the following:
\begin{itemize}
	\item history $\hist$ recording the instructions executed in the past, 
	\item a valuation map $\vec{\gamma}$ that records local variable values in the current transaction of each session ($\vec{\gamma}$ associates identifiers of sessions that have live transactions with valuations of local variables),
	\item a map $\vec{\mathsf{B}}$ that stores the code of each live transaction (associating session identifiers with code),
	\item a map $\vec{\mathsf{I}}$ that tracks the isolation level of each executed transaction,
	\item a map $\vec{\mathsf{T}}$ that associates events in the history with unique timestamps,
	\item a map $\vec{\mathsf{S}}$ that associates events in the history with snapshots of the database,
	\item sessions/transactions $\mathsf{P}$ that remain to be executed from the original program.
\end{itemize}

For readability, we define a program as a partial function $\mathsf{P}:\mathsf{SessId}\rightharpoonup \mathsf{Sess}$ that associates session identifiers in $\mathsf{SessId}$ with sequences of transactions as defined in Section~\ref{ssec:syntax}. Similarly, the session order $\so$ in a history is defined as a partial function $\so:\mathsf{SessId}\rightharpoonup \mathsf{Tlogs}^*$ that associates session identifiers with sequences of transaction logs. Two transaction logs are ordered by $\so$ if one occurs before the other in some sequence $\so(j)$ with 
$j\in \mathsf{SessId}$.

Before presenting the definition of $\Rightarrow_I$, we introduce some notation. Let $\hist$ be a history that contains a representation of $\so$ as above. We use $\hist\oplus_j \tup{\tr,\iota_t,E,\po_t}$ to denote a history where $\tup{\tr,\iota_t,E,\po_t}$ is appended to $\so(j)$. 
Also, for an event $e$, $\hist\oplus_j e$ is the history obtained from $\hist$ by adding $e$ to the last transaction log in $\so(j)$ and as a last event in the program order of this log (i.e., if $\so(j)=\sigma; \tup{t,\iota_t,E,\po_t}$, then the session order $\so'$ of $\hist\oplus_j e$ is defined by $\so'(k)=\so(k)$ for all $k\neq j$ and $\so(j) =\sigma; \tup{t,\iota_t,E\cup{e},\po\cup \{(e',e): e'\in E\}}$). Finally, for a history $\hist = \tup{T, \so, \wro}$, $\hist\oplus\wro(\tr,e)$ is the history obtained from $\hist$ by adding $(\tr,e)$ to the write-read relation.

Figures \ref{fig:operational_semantics1}%
and \ref{fig:operational_semantics2} list the rules defining $\Rightarrow$. We distinguish between local computation rules (\textsc{if-true}, \textsc{if-false} and \textsc{local}) and database-accesses rules (\textsc{begin}, \textsc{insert}, \textsc{select}, \textsc{update}, \textsc{delete}, \textsc{commit} and \textsc{abort}); each associated to its homonymous instruction. Database-accesses get an increasing timestamp $\tau$ as well as an isolation-depending snapshot of the database using predicate $\snapshot$; updating adequately the timestamp and snapshot maps ($\vec{\mathsf{T}}$ and $\vec{\mathsf{S}}$ respectively).
Timestamps are used for validating the writes of a transaction and blocking inconsistent runs as well as for defining the set of possible snapshots any event can get. We use predicate $\selectOperational$ for determining the values read by an event. Those reads depend on both the event's snapshot as well as the timestamp of every previously executed event. Their formal definitions are described in Figure~\ref{fig:operational_aux}.

The \textsc{begin} rule starts a new transaction, provided that there is no other live transaction ($\mathsf{B}=\epsilon$) in the same session. It adds an empty transaction log to the history and schedules the body of the transaction. 
\textsc{if-true} and \textsc{if-false} check the truth value of a Boolean condition of an $\mathtt{if}$ conditional. \textsc{local} handles the case where some local computation is required. \textsc{insert}, \textsc{select}, \textsc{update} and \textsc{delete} handle the database accesses. \textsc{insert} add some rows $\setRows$ in the history. \textsc{select}, \textsc{update} and \textsc{delete} read every key from a combination of its snapshot and the local writes defined by $\selectOperational$ function. The predicate $\writeVar{\_}{\_}$ implicitly uses the previous information stored in the history via the function $\mathsf{value}_{\wro}$. Finally \textsc{commit} and \textsc{abort} validate that the run of the transaction correspond to the isolation level specification. These rules may block in case the validation is not satisfied as the predicate valuation does not change with the application of posterior rules.

An \emph{initial} configuration for program $\prog$ contains the program $\prog$ along with a history 
$\hist=\tup{\{\tr_0\},\emptyset,\emptyset}$, where $\tr_0$ is a transaction log containing only writes that write the initial values of all keys and whose timestamp and snapshot is $0$ ($\vec{\mathsf{S}}, \vec{\mathsf{T}} = [t_0 \mapsto 0])$, and it does not contain transaction code nor local keys ($\vec{\gamma}, \vec{\mathsf{B}} = \emptyset$).
A \emph{run} $\rho$ of a program $\prog$
is a sequence of configurations $c_0 c_1\ldots c_n$ where $c_0$ is an initial configuration for $\prog$, and $c_m\Rightarrow c_{m+1}$, for every $0\leq m < n$. We say that $c_n$ is \emph{reachable} from $c_0$.
The history of such a run, $\historyExecution{\rho}$, is the history $\hist_n$ in the last configuration $c_n$. 
A configuration is called \emph{final} if it contains the empty program ($\prog=\emptyset$).
Let $\histOf{\prog}$ denote the set of all histories of a run of $\prog$ that ends in a final configuration.

\begin{figure} [ht]
    \small
        \centering
        \begin{equation*}
        \begin{array}{ll}
            \snapshot[\SER](\hist, \vec{\mathsf{S}}, \vec{\mathsf{T}}', e, \xi) = & \left\{ 
                \begin{array}{ll}
                    \max\left\{\vec{\mathsf{T}}'(c_{t'})\ \left| \
                        \begin{array}{l}
                            t' \in h \ \land \\
                            c_{t'} = \ecommit(t') \ \land\\[1mm]
                            \vec{\mathsf{T}}'(c_{t'}) < \vec{\mathsf{T}}'(e)
                        \end{array}
                        \right.\right\} & \text{if } \xi = \ebegin \\[5mm]
                    \vec{\mathsf{S}}(\ebegin(\trans{e})) & \text{otherwise}
                \end{array} \right. \\[8mm]
                
            \snapshot[\SI](\hist, \vec{\mathsf{S}}, \vec{\mathsf{T}}', e, \xi) = & \left\{ 
                \begin{array}{ll}
                    \choice{ \left\{\vec{\mathsf{T}}'(c_{t'})\ \left| \
                        \begin{array}{l}
                            t' \in h \ \land \\[1mm]
                            c_{t'} = \ecommit(t') \ \land \\[1mm]
                            vec{\mathsf{T}}'(c_{t'}) < \vec{\mathsf{T}}'(e)
                        \end{array}
                        \right.\right\}} & \text{if } \xi = \ebegin \\[5mm]
                        \vec{\mathsf{S}}(\ebegin(\trans{e})) & \text{otherwise}
                \end{array} \right. \\[8mm]
    
            \snapshot[\RC](\hist, \vec{\mathsf{S}}, \vec{\mathsf{T}}', e, \xi) = & \choice{\left\{\vec{\mathsf{T}}'(c_{t'}) \left| 
                \begin{array}{l}
                    t' \in h \land \\[1mm]
                    c_{t'} = \ecommit(t') 
                    \land \vec{\mathsf{T}}'(c_{t'}) < \vec{\mathsf{T}}'(e) \land \\[1mm]
                    \forall e'. \left(\begin{array}{l}
                        (e', e) \in \po \ \lor \\
                        (\trans{e'}, \trans{e}) \in \so 
                    \end{array} \right) \\[1mm] 
                    \hspace{1.5cm}\implies \vec{\mathsf{S}}(e') \leq \vec{\mathsf{T}}(c_{t'})
                \end{array}
                \right.\right\}} 
        \end{array}
        \end{equation*}

        \begin{equation*}
        \begin{array}{rl}
            \selectOperational(\hist, \vec{\mathsf{T}}, t, \delta) = & [x \mapsto (\localWrite[x] \neq \bot)\,?\, \bot\ :\ w_x \text{ for each } x \in \Vars] \\[2mm]
            \text{ where } \localWrite[x] = & \max_{\po}\{e \ | \ \trans{e} = t \ \land \ \writeVar{e}{x}\} \cup \{\bot\} \\[2mm]
            \text{ and }  \writeVar{w_x}{x} \ \land &
                \vec{\mathsf{T}}(w_x) = \max\left\{\vec{\mathsf{T}}(w') \ \left|  \ {\begin{array}{c} 
                w' \in \events{h} \ \land \ \writeVar{w'}{x} \ \land \\
                    \vec{\mathsf{T}}(\ecommit(\trans{w'})) \leq \delta
            \end{array}} \right. \right\}\\
        \end{array}
        \end{equation*}
        
        \begin{equation*}
        \begin{array}{ll}
            \commitOperational[\SER](\hist, \vec{\mathsf{T}}', t) = & \left(
                \begin{array}{ll}
                    \not\exists t' \in h, \key \in \Vars \text{ s.t. } (\readVar{t}{\key} \lor \writeVar{t}{\key}) \ \land \ \writeVar{t'}{\key}  \\[1mm]
                    \qquad \land\ \vec{\mathsf{T}}'(\ebegin(t)) < \vec{\mathsf{T}}'(\ecommit(t')) <  \vec{\mathsf{T}}'(\eend(t))
                \end{array} \right) \\[6mm]
                
            \commitOperational[\SI](\hist, \vec{\mathsf{T}}', t) = & \left(
                \begin{array}{ll}
                    \not\exists t' \in h, \key \in \Vars \text{ s.t. } \writeVar{t}{\key} \ \land \ \writeVar{t'}{\key} \ \land \\[1mm]
                    \qquad \land\ \vec{\mathsf{T}}'(\ebegin(t)) < \vec{\mathsf{T}}'(\ecommit(t'))  <  \vec{\mathsf{T}}'(\eend(t))
                \end{array} \right) \\[6mm]
    
            \commitOperational[\RC](\hist, \vec{\mathsf{T}}', t) = & \btrue \\
        \end{array}
        \end{equation*}
        
        \caption{Definition of auxiliary functions for the operational semantics. The function $\mathsf{choice}$ receives a set as input and returns one of its elements.
        }.
        \label{fig:operational_aux}
    \end{figure}

The proof of Theorem~\ref{th:operational-semantics-full-and-consistent} is split in two parts: Lemma~\ref{lemma:operational-semantics-full} and Lemma~\ref{lemma:operational-semantics-consistent}. In Lemma~\ref{lemma:operational-semantics-full}, we prove by induction that for any run $\rho$, $\historyExecution{\rho}$ is a full history; using the auxiliary Lemma~\ref{lemma:operational-semantics-pending} about pending transactions. We then define in Equation~\ref{eq:operational-semantics-commit-order} a relation on transactions that plays the role of consistency witness for $\historyExecution{\rho}$. Then, we prove in Lemma~\ref{lemma:operational-semantics-total-order} that such relation is a commit order for $\historyExecution{\rho}$ to conclude in Lemma~\ref{lemma:operational-semantics-consistent} that $\historyExecution{\rho}$ is indeed consistent. In all cases, we do a case-by-case analysis depending on which rule is employed during the inductive step.

For the sake of simplifying our notation, we denote by $\labelRule{\rho, j, \rho'}$ to the rule s.t. applied to run $\rho$ on session $j$ leads to configuration $\rho'$. %

\begin{lemma}
\label{lemma:operational-semantics-pending}
Let $\rho$ be a run and $\historyExecution{\rho} = \tup{T, \so, \wro}$ be its history. Any pending transaction in $T$ is $(\so \cup \wro)$-maximal.
\end{lemma}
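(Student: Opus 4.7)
The plan is to proceed by induction on the length of the run $\rho = c_0 c_1 \ldots c_n$, maintaining as invariant that in the history $\historyExecution{c_k}$ of every intermediate configuration, every pending transaction $t$ is maximal with respect to $\so$ and has no outgoing $\wro$ edge. The base case is immediate: the initial history contains only the distinguished transaction $t_0$, which is not pending (by convention it is treated as committed, representing the initial database state), so there is nothing to check.

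For the inductive step I would do a case analysis on the rule $\labelRule{c_k, j, c_{k+1}}$ applied at step $k$. The rules \textsc{if-true}, \textsc{if-false}, and \textsc{local} affect neither $T$, $\so$, nor $\wro$, so the invariant trivially survives. The \textsc{commit} and \textsc{abort} rules only change a pending transaction into a complete one, which weakens the hypothesis. The \textsc{insert} rule adds an event to the last transaction log of session $j$ but modifies neither $\so$ nor $\wro$. The only rules that require real work are \textsc{begin} (which extends $\so$) and \textsc{select}, \textsc{update}, \textsc{delete} (which extend $\wro$).

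For \textsc{begin}: the rule's premise $\vec{\mathsf{B}}(j) = \epsilon$ guarantees that session $j$ currently has no live transaction, so by the inductive hypothesis any previously pending transaction in some other session $j' \neq j$ remains $\so$-maximal (since $\so$ only orders transactions within a session) and is not read from by the fresh $\ebegin$ event. The newly created transaction is appended last in $\so(j)$ and has no $\wro$ predecessors pointing \emph{out} of it, so it is $(\so \cup \wro)$-maximal. For \textsc{select}/\textsc{update}/\textsc{delete}: each new $\wro$ edge is of the form $(\vec{\mathsf{w}}[x], e)$ with $\vec{\mathsf{w}}[x]$ determined by $\selectOperational$. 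By the definition of $\selectOperational$ in Figure~\ref{fig:operational_aux}, the chosen write $w_x$ must satisfy $\vec{\mathsf{T}}(\ecommit(\trans{w_x})) \leq \delta$, which in particular requires $\ecommit(\trans{w_x})$ to exist. Hence $\trans{w_x}$ is already committed, so the new edge does not issue from any pending transaction and the invariant is preserved.

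The main obstacle, and the step that does all the real work, is verifying $\wro$-maximality of pending transactions, which reduces entirely to the observation that $\selectOperational$ is defined in terms of \emph{committed} writes only. Once this is pinned down, the rest is routine bookkeeping on which rules touch which components of the history. A minor additional point worth double-checking is that between a \textsc{begin} in session $j$ and the matching \textsc{commit}/\textsc{abort}, no other transaction in session $j$ can be started (again enforced by the premise $\vec{\mathsf{B}}(j) = \epsilon$ of \textsc{begin}), which is what guarantees that the unique possibly-pending transaction of session $j$ is its $\so$-last one.
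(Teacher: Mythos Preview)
Your proposal is correct and follows essentially the same approach as the paper: induction on the length of the run with a case analysis on the applied rule, where the only substantive case is \textsc{select}/\textsc{update}/\textsc{delete} and the key observation there is exactly the one you identify, namely that $\selectOperational$ only returns writes whose enclosing transaction is already committed. The paper groups the trivial cases slightly differently (it bundles \textsc{insert}, \textsc{commit}, \textsc{abort} together with the local-computation rules), but the structure and the crucial step are the same.
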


\begin{proof}
We prove by induction on the length of a run $\rho$ that any pending transaction is $(\so \cup \wro)$-maximal; where $\historyExecution{\rho} = \tup{T, \so, \wro}$. The base case, where $\rho = \{c_0\}$ and $c_0$ is an initial configuration, is immediate by definition. Let us suppose that for every run of length at most $n$ the property holds and let $\rho'$ a run of length $n+1$. As $\rho'$ is a sequence of configurations, there exist a reachable run $\rho$ of length $n$, a session $j$ and a rule $r$ s.t. $r = \labelRule{\rho, j, \rho'}$. Let us call $h = \tup{T, \so, \wro}$, $h' = \tup{T', \so', \wro'}$ and $e$ to $\historyExecution{\rho}$, $\historyExecution{\rho'}$ and the last event in $\po$-order belonging to $\last{h,j}$ respectively. By induction hypothesis, any pending transaction in $h$ is $(\so \cup \wro)$-maximal. To conclude the inductive step, we show that for every possible rule $r$ s.t. $r = \labelRule{\rho,j, \rho'}$, the property also holds in $h'$.

\begin{itemize}
    \item \textsc{local}, \textsc{if-false}, \textsc{if-true}, \textsc{insert}, \textsc{commit}, \textsc{abort}: The result trivially holds as $\wro' = \wro$, $\so' = \so$ and $\transNotPending{T'} \subseteq \transNotPending{T}$.

    \item \textsc{begin}: We observe that in this case, $T = T \cup \{\last{h, j}\}$, $\readOp{T'} = \readOp{T}$, $\wro' = \wro$ and $\so' = \so \cup \{\tup{t', \last{h, j}} \ | \ \sessionTransaction{t'} = j \}$. Thus, $\last{h, j}$ is pending and $\so' \cup \wro'$-maximal. Moreover, as described in Figure~\ref{fig:operational_semantics1}, $\vec{\mathsf{B}}(j) = \epsilon$; so there is no other transaction in session $j$ that is pending. Hence, as $T' \setminus \transNotPending{T'} = T \transNotPending{T} \cup \{\last{h, j}\}$, by induction hypothesis, every pending transaction is $\so' \cup \wro'$-maximal.
    
    \sloppy \item \textsc{select}, \textsc{update}, \textsc{delete}: \Cref{fig:operational_semantics2} describes $h'$ by the equation $h' = (h \oplus_j \tup{e, \labelRule{\rho, j, s}}) \bigoplus_{\key \in \Vars, \vec{\mathsf{w}}[x] \neq \bot} \wro(\vec{\mathsf{w}}[\key], e)$; where $e$ is the new event executed and $\vec{\mathsf{w}}$ is defined following the descriptions in Figures~\ref{fig:operational_semantics2} and \ref{fig:operational_aux}. 
	In this case, $T' = T, \readOp{T'} = \readOp{T} \cup \{e\}$, $\so' = \so$, $\forall \key \in \Vars$ s.t. $\vec{\mathsf{w}}[\key] = \bot$, $\wro'_\key = \wro_\key$ and $\forall \key \in \Vars$ s.t. $\vec{\mathsf{w}}[\key] \neq \bot$, $\wro'_\key = \wro_\key \cup  \{(\vec{\mathsf{w}}[\key], e)\}$. Note that as described by Figure~\ref{fig:operational_aux}, in the latter case, when $\vec{\mathsf{w}}[\key] \neq \bot$, $\trans{\vec{\mathsf{w}}[\key]} \in \transNotPending{T} = \transNotPending{T'}$. In conclusion, using the induction hypothesis, we also conclude that every pending transaction is $\so' \cup \wro'$-maximal.

\end{itemize}
\end{proof}
 
\begin{lemma}
\label{lemma:operational-semantics-full}
For every run $\rho$, $\historyExecution{\rho}$ is a full history.
\end{lemma}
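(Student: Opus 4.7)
The plan is to prove the lemma by induction on the length of the run $\rho$, mirroring the structure of \Cref{lemma:operational-semantics-pending}. The base case, where $\rho$ consists only of the initial configuration $c_0$, is immediate: $\historyExecution{\rho}$ contains only the transaction $t_0$, which has no $\iread$ events, so the full-history condition of \Cref{def:full-history} holds vacuously.

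For the inductive step I would assume $\rho'$ is obtained from a run $\rho$ of length $n$ by some rule $r = \labelRule{\rho, j, \rho'}$ and that $\historyExecution{\rho} = \tup{T, \so, \wro}$ is already a full history. Writing $\historyExecution{\rho'} = \tup{T', \so', \wro'}$, I would then perform a case analysis on $r$. The rules \textsc{begin}, \textsc{commit}, \textsc{abort}, \textsc{local}, \textsc{if-true}, \textsc{if-false}, and \textsc{insert} leave $\readOp{T'} = \readOp{T}$ and do not alter $\wro$ on existing reads, so fullness is inherited. The non-trivial rules are \textsc{select}, \textsc{update}, and \textsc{delete}, each of which appends a fresh $\iread$ event $e$ and extends $\wro$ with exactly the pairs $\{(\vec{\mathsf{w}}[\key], e) : \key \in \Vars,\ \vec{\mathsf{w}}[\key] \neq \bot\}$, where $\vec{\mathsf{w}} = \selectOperational(h, \vec{\mathsf{T}}, t, \delta)$.

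The decisive step is inspecting $\selectOperational$ in \Cref{fig:operational_aux}: $\vec{\mathsf{w}}[\key] = \bot$ holds precisely when $\localWrite[\key] \neq \bot$, i.e., when some event preceding $e$ in the same transaction writes $\key$, which is exactly the situation in which \Cref{def:full-history} allows $(\wro'_\key)^{-1}(e)$ to be undefined (namely, when $e$ reads $\key$ locally). Hence the only thing left to verify is that whenever $\localWrite[\key] = \bot$ the witness write $w_\key$ used in $\selectOperational$ actually exists, so that $(\wro'_\key)^{-1}(e) = \vec{\mathsf{w}}[\key]$ is well defined.

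The main obstacle, and the step I would spend most care on, is this existence argument for $w_\key$: we need the set $\{w' \in \events{h} : \writeVar{w'}{\key} \wedge \vec{\mathsf{T}}(\ecommit(\trans{w'})) \leq \delta\}$ to be nonempty. I would dispatch this by appealing to the initial configuration: the transaction $t_0$ writes every $\key \in \Vars$ and is implicitly committed at timestamp $0$, so it belongs to the above set as soon as $\delta \geq 0$. A quick inspection of the three cases defining $\snapshot$ in \Cref{fig:operational_aux} confirms that $\delta \geq 0$ always holds under $\SER$, $\SI$, and $\RC$: the $\SER$ case takes a max over a set containing $\vec{\mathsf{T}}(\ecommit(t_0)) = 0$, while the $\SI$ and $\RC$ cases apply $\mathsf{choice}$ to a set that likewise contains $t_0$'s commit timestamp (and for non-$\ebegin$ events the snapshot is inherited from the enclosing $\ebegin$). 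This closes the case analysis and completes the induction.
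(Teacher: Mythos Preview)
Your approach matches the paper's---induction on the run length with a case split on the rule---and your treatment of the crucial \textsc{select}/\textsc{update}/\textsc{delete} case is essentially what the paper does: the existence of $w_\key$ follows because $t_0$ writes every key and has commit timestamp $0 \leq \delta$.

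There is, however, a genuine gap. A \emph{full history} is in particular a \emph{history} (\Cref{def:history}), so you must verify that $\so' \cup \wro'$ remains acyclic after each step. You never address this. It is not automatic: \textsc{begin} enlarges $\so$ (so your claim that it ``leave[s] $\readOp{T'} = \readOp{T}$ \ldots\ so fullness is inherited'' skips the issue), and \textsc{select}/\textsc{update}/\textsc{delete} enlarge $\wro$ with edges targeting the fresh event $e$ in the pending transaction $\last{h,j}$. The paper closes both cases by invoking \Cref{lemma:operational-semantics-pending}: pending transactions are $(\so \cup \wro)$-maximal, and the new edges only point \emph{into} such a transaction (for \textsc{begin}) or originate from already-committed transactions targeting it (for the read rules, since $\selectOperational$ returns writes whose transactions are committed). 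Hence no cycle can be created. You cite that lemma only for the inductive template, but you need it for the content as well.
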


\begin{proof}
We prove by induction on the length of a run $\rho$ that $\historyExecution{\rho}$ is a full history; where the base case, $\rho = \{c_0\}$ and $c_0$ is an initial configuration, is trivial by definition. Let us suppose that for every run of length at most $n$ the property holds and let $\rho'$ a run of length $n+1$. As $\rho'$ is a sequence of configurations, there exist a reachable run $\rho$ of length $n$, a session $j$ and a rule $r$ s.t. $r = \labelRule{\rho, j, \rho'}$. Let us call $h = \tup{T, \so, \wro}$, $h' = \tup{T', \so', \wro'}$ and $e$ to $\historyExecution{\rho}$, $\historyExecution{\rho'}$ and the last event in $\po$-order belonging to $\last{h,j}$ respectively. By induction hypothesis, $h$ is a full history. To conclude the inductive step, we show that for every possible rule $r$ s.t. $r = \labelRule{\rho,j, \rho'}$, the history $h'$ is also a full history. In particular, by definitions \ref{def:history} and \ref{def:full-history}, it suffices to prove that $\so' \cup \wro'$ is an acyclic relation and that for every variable $\key$ and read event $r$, ${\wro'_\key}^{-1}(r) \downarrow $ if and only if $r$ does not read $x$ from a local write and in such case, $\valuewr[\wro']{{\wro'_\key}^{-1}(r)}{\key} \neq \bot$. %

\begin{itemize}
	\item \textsc{local}, \textsc{if-false}, \textsc{if-true}, \textsc{insert}, \textsc{commit}, \textsc{abort}: The result trivially holds as $\readOp{T'} = \readOp{T}, \wro' = \wro$ and $\so' = \so$; using that $h$ is consistent.
	
	\item \textsc{begin}: We observe that $h' = h \oplus_j \tup{e,\ebegin}$, so $T = T \cup \{\last{h, j}\}$, $\readOp{T'} = \readOp{T}$, $\wro' = \wro$ and $\so' = \so \cup \{\tup{t', \last{h, j}} \ | \ \sessionTransaction{t'} = j \}$. In such case, by Lemma~\ref{lemma:operational-semantics-pending}, $t$ is $\so' \cup \wro'$-maximal. Thus, $\so' \cup \wro'$ is acyclic as $\so \cup \wro$ is also acyclic. Finally, as $\wro' = \wro$, we conclude 
	that $h'$ is a full history.
	
	\item \textsc{select}, \textsc{update}, \textsc{delete}: Here $h' = (h \oplus_j \tup{e, \labelRule{\rho, j, s}}) \bigoplus_{\key \in \Vars, \vec{\mathsf{w}}[x] \neq \bot} \wro(\vec{\mathsf{w}}[\key], e)$ where $e$ is the new event executed and $\vec{\mathsf{w}}$ is defined following the descriptions in Figures~\ref{fig:operational_semantics2} and \ref{fig:operational_aux}. In this case, $T' = T, \readOp{T'} = \readOp{T} \cup \{e\}$, $\so' = \so$, $\forall \key \in \Vars$ s.t. $\vec{\mathsf{w}}[\key] = \bot$, $\wro'_\key = \wro_\key$ and $\forall \key \in \Vars$ s.t. $\vec{\mathsf{w}}[\key] \neq \bot$, $\wro'_\key = \wro_\key \cup  \{(\vec{\mathsf{w}}[\key], e)\}$. Note that as the timestamp of any event is always positive and $\vec{\mathsf{T}}(\init) = 0$; for any key $\key$, $\mathsf{w}[x] \neq \bot$ if and only if $\localWrite[\key] = \bot$. Thus, $\vec{\mathsf{w}}$ is well defined, and ${\wro'_\key}^{-1}(r) \downarrow $ if and only $\localWrite[x] = \bot$. In such case, as any event $w$ writes on a key $x$ if and only if $\valuewr[\wro]{w}{\key} \neq \bot$, we conclude that $\valuewr[\wro']{{\wro'_\key}^{-1}(r)}{\key} \neq \bot$. To conclude the result, we need to show that $\so' \cup \wro'$ is acyclic.
	As $\rho$ is reachable, by Figure~\ref{fig:operational_aux}'s definition we know that for any event $r$ and key $\key$, if $\wro_\key^{-1}(r) \downarrow$, $\trans{\wro_\key^{-1}(r)} \in \transC{h}$. Thus, by Lemma~\ref{lemma:operational-semantics-pending}, $\last{h, j}$ is $\so' \cup \wro'$-maximal as it is not committed. Therefore, by the definition of $\so'$ and $\wro'$, as $\so \cup \wro$ is acyclic and $\last{h, j}$ is $\so' \cup \wro'$-maximal, $\so' \cup \wro'$ is also acyclic. In conclusion, $h'$ is a full history.

\end{itemize}
\end{proof}

Once proven that for any run $\rho$, $\historyExecution{\rho}$ is a full history, we need to prove that there exists a commit order $\co_\rho$ that witnesses $\historyExecution{\rho}$ consistency. Equation~\ref{eq:operational-semantics-commit-order} defines a relation that we prove in Lemma~\ref{lemma:operational-semantics-total-order} that it is a total order for $\historyExecution{\rho}$.

\begin{equation}
(t, t') \in \co_{\rho} \iff \left\{\begin{array}{l}
    t \in \transNotPending{T} \land t' \in \transNotPending{T} \ \land \\
	\qquad  \vec{\mathsf{T}}(\eend(t)) < \vec{\mathsf{T}}(\eend(t'))\\
    t \in \transNotPending{T} \land t' \not\in \transNotPending{T} \\
    t \not\in \transNotPending{T} \land t' \not\in \transNotPending{T} \ \land \\
	\qquad \vec{\mathsf{T}}(\ebegin(t)) < \vec{\mathsf{T}}(\ebegin(t'))\\
\end{array} \right.
\label{eq:operational-semantics-commit-order}
\end{equation}

\begin{lemma}
\label{lemma:operational-semantics-total-order}
For every run $\rho$, the relation $\co_\rho$ defined above is a commit order for $\historyExecution{\rho}$.
\end{lemma}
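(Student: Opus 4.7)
The plan is to verify the two conditions of \Cref{def:execution}: that $\co_\rho$ is a strict total order on the transactions of $\historyExecution{\rho}$, and that $\so \cup \wro \subseteq \co_\rho$. Two invariants are used throughout. First, timestamps assigned by $\vec{\mathsf{T}}$ are pairwise distinct and strictly increasing along the run, because every rule in \Cref{fig:operational_semantics1,fig:operational_semantics2} sets $\tau = 1 + \max\{\vec{\mathsf{T}}(e') \ | \ e' \in \events{\hist}\}$ on a fresh event. Second, by \Cref{lemma:operational-semantics-pending}, every pending transaction is $(\so \cup \wro)$-maximal, so every $\so$- or $\wro$-predecessor is complete.

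For the strict-total-order property, distinct timestamps render each of the three clauses of \Cref{eq:operational-semantics-commit-order} irreflexive, and a case split on the status (complete or pending) of each of $t \neq t'$ places the pair in exactly one clause, where the comparison reduces to $<$ on a pair of timestamps; hence $\co_\rho$ is antisymmetric and total. Transitivity is checked by enumerating the eight status combinations for $t, t', t''$: same-status chains inherit transitivity of $<$ on naturals, while mixed chains reduce to the fact that every complete transaction precedes every pending one.

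To show $\so \subseteq \co_\rho$, fix $(t,t') \in \so$ in some session $j$ with $t$ preceding $t'$. By \Cref{lemma:operational-semantics-pending}, $t$ is complete. The \textsc{begin} rule applied to start $t'$ requires $\vec{\mathsf{B}}(j) = \epsilon$, which holds only after the \textsc{commit} or \textsc{abort} step ending $t$ has fired, so $\vec{\mathsf{T}}(\eend(t)) < \vec{\mathsf{T}}(\ebegin(t'))$. If $t'$ is complete, strict monotonicity of timestamps within a transaction gives $\vec{\mathsf{T}}(\ebegin(t')) < \vec{\mathsf{T}}(\eend(t'))$, matching the first clause; if $t'$ is pending, the second clause applies directly.

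To show $\wro \subseteq \co_\rho$, suppose $(t,t') \in \wro$ witnessed by an event $e \in t'$ and a key $x$ with $\wro_x^{-1}(e) \in t$. Inspecting the \textsc{select}, \textsc{update} and \textsc{delete} rules together with $\selectOperational$ in \Cref{fig:operational_aux}, the write selected for $e$ on $x$ satisfies $\vec{\mathsf{T}}(\ecommit(t)) \leq \delta$, where $\delta$ is the snapshot of $e$; in particular $t$ is committed, so $\eend(t) = \ecommit(t)$. The expected main obstacle is showing $\delta < \vec{\mathsf{T}}(e)$ uniformly in the three cases of $\snapshot$, since for $\SER$ and $\SI$ transactions $\delta$ is inherited from the $\ebegin$ event of $\trans{e}$ rather than from $e$ itself; the argument traces the $\max$ and $\choice$ constructions to confirm $\delta$ is selected among commit timestamps strictly less than $\vec{\mathsf{T}}(\ebegin(\trans{e})) \leq \vec{\mathsf{T}}(e)$. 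With $\delta < \vec{\mathsf{T}}(e)$ in hand, if $t'$ is complete then $\vec{\mathsf{T}}(\eend(t)) \leq \delta < \vec{\mathsf{T}}(e) \leq \vec{\mathsf{T}}(\eend(t'))$ yields the first clause of $\co_\rho$, and otherwise $t'$ is pending and the second clause applies.
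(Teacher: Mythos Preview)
Your proof is correct and takes a genuinely different route from the paper's. The paper proceeds by induction on the length of the run, performing a case analysis on the transition rule applied at each step (\textsc{begin}, \textsc{insert}, \textsc{select}/\textsc{update}/\textsc{delete}, \textsc{commit}/\textsc{abort}, local rules) and verifying in each case that $\so' \cup \wro' \subseteq \co_{\rho'}$ is preserved; in particular it tracks how $\co_\rho$ changes at \textsc{commit}/\textsc{abort} steps when a transaction moves from pending to complete. You instead argue directly about the final configuration, relying only on the global invariants that timestamps are distinct and increasing and that $\selectOperational$/$\snapshot$ always pick writers whose commit timestamp is below the current event's timestamp. Your approach is more elementary and avoids the per-rule bookkeeping; it also makes explicit the verification that $\co_\rho$ is a strict total order, which the paper's proof treats as evident from the timestamp-based definition. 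The paper's inductive structure, on the other hand, dovetails with the subsequent \Cref{lemma:operational-semantics-consistent}, where the same rule-by-rule scaffolding is reused, so its organization pays off downstream.
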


\begin{proof}
We prove by induction on the length of a run $\rho$ that the relation $\co_\rho$ defined by the equation below is a commit order for $\historyExecution{\rho}$, i.e., if $\historyExecution{\rho} = \tup{T, \so, \wro}$, then $\so \cup \wro \subseteq \co_\rho$.

The base case, where $\rho$ is composed only by an initial configuration is immediate as in such case $\wro = \emptyset$. Let us suppose that for every run of length at most $n$ the property holds and let $\rho'$ a run of length $n+1$. As $\rho'$ is a sequence of configurations, there exist a reachable run $\rho$ of length $n$, a session $j$ and a rule $\mathsf{r}$ s.t. $\mathsf{r} = \labelRule{\rho, j, \rho'}$. Let us call $h = \tup{T, \so, \wro}$, $h' = \tup{T', \so', \wro'}$ and $e$ to $\historyExecution{\rho}$, $\historyExecution{\rho'}$ and the last event in $\po$-order belonging to $\last{h,j}$ respectively. By induction hypothesis, $\co_\rho$ is a commit order for $h$. To conclude the inductive step, we show that $\co_{\rho'}$ is also a commit order for $h'$.

\begin{itemize}
    \item \textsc{local}, \textsc{if-false}, \textsc{if-true}: As $h = h'$ and $\vec{\mathsf{T}}_{\rho'} =\vec{\mathsf{T}}_\rho$, $\co_{\rho'} = \co_{\rho}$. Thus, the result trivially holds.
    
    \item \textsc{begin}: In this case, $e = \ebegin(\last{h, j})$ and $\last{h,j} \not\in \transNotPending{T_{\rho'}}$. Note that for any event $e' \neq e$, $\vec{\mathsf{T}}(e) > \vec{\mathsf{T}}(e')$ and $\transNotPending{T_\rho'} = \transNotPending{\rho}$. Thus, $\co_{\rho'} = \co_{\rho} \cup \{(t', \last{h', j}) \ | \ t' \in T\}$. As $\so \cup \wro \subseteq \co_\rho$, $\wro' = \wro$ and $\so' = \so \cup \{(t', \last{h, j}) \ | \ \sessionTransaction{t'} = j \}$, $\so' \cup \wro' \subseteq \co_{\rho'}$; so $\co_{\rho'}$ is a commit order for $h'$. 
    
    \item \textsc{insert}: In this case, as $\transNotPending{T_\rho'} = \transNotPending{T_\rho}$, $\co_{\rho'} = \co_{\rho}$. Hence, as $\so' = \so$ and $\wro' = \wro$, $\so' \cup \wro' \subseteq \co_{\rho'}$.
    
    \item \textsc{select}, \textsc{update}, \textsc{delete}: Once again, as $\transNotPending{T_\rho'} = \transNotPending{T_\rho}$, $\co_{\rho'} = \co_{\rho}$. Note that $\so' = \so$, $\forall \key \in \Vars$ s.t. $\vec{\mathsf{w}}[\key] = \bot$, $\wro'_\key = \wro_\key$ and $\forall \key \in \Vars$ s.t. $\vec{\mathsf{w}}[\key] \neq \bot$, $\wro'_\key = \wro_\key \cup \{(\vec{\mathsf{w}}[\key], e)\}$. In the latter case, where $\vec{\mathsf{w}}[\key] \neq \bot$, we know that $\trans{\vec{\mathsf{w}}[\key]} \in \transNotPending{T}$ thanks to the definitions on Figure~\ref{fig:operational_aux}. By Equation~\ref{eq:operational-semantics-commit-order}, as $\last{h, j}$ is pending, we deduce that $(\trans{\vec{\mathsf{w}}[\key]}, \trans{e}) \in \co_{\rho'}$. Therefore, as $\so \cup \wro \subseteq \co_{\rho} = \co_{\rho'}$, we conclude that $\so' \cup \wro' \subseteq \co_{\rho'}$.
     
    \item \textsc{commit}, \textsc{abort}: In this case, $e = \eend{\last{h, j}}$, $\co_{\rho'}\restriction_{T \setminus \{\last{h, j}\} \times T \setminus \{\last{h, j}\}} = \co_{\rho}\restriction_{T \setminus \{\last{h, j}\} \times T \setminus \{\last{h, j}\}}$, $\so' = \so$ and $\wro' = \wro$. Thus, to prove that $\so' \cup \wro' \subseteq \co_{\rho'}$ we only need to discuss about $\last{h,j}$. By Lemma~\ref{lemma:operational-semantics-pending}, $\last{h,j}$ is $\so' \cup \wro'$-maximal. Hence, we focus on proving that for any transaction $t'$ s.t. $(t', \last{h, j}) \in \so' \cup \wro'$, $(t', \last{h, j}) \in \co_{\rho'}$. Any such transaction $t'$ must be completed by Lemma~\ref{lemma:operational-semantics-pending}. However, by the definition on Figure~\ref{fig:operational_semantics1}, we know that $\vec{\mathsf{T}}(e) > \vec{\mathsf{T}}(\eend(t'))$, so $(t', \last{h, j}) \in \co_{\rho'}$ by Equation~\ref{eq:operational-semantics-commit-order}. Thus, $\so' \cup \wro' \subseteq \co_{\rho'}$.
\end{itemize}
\end{proof}

\begin{lemma}
\label{lemma:operational-semantics-consistent}
For every total run $\rho$, the $\historyExecution{\rho}$ is consistent.
\end{lemma}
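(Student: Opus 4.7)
The plan is to proceed by induction on the length of the run $\rho$, using the previously established facts that $\historyExecution{\rho}$ is a full history (\Cref{lemma:operational-semantics-full}) and that $\co_\rho$ defined in \Cref{eq:operational-semantics-commit-order} is a commit order for it (\Cref{lemma:operational-semantics-total-order}). The induction hypothesis is that the execution $\exec_\rho = \tup{\historyExecution{\rho}, \co_\rho}$ satisfies, for every transaction $t$ and every read event $r \in \readOp{t}$, the axioms of $\isolation{\historyExecution{\rho}}(t)$. The base case is trivial since the initial configuration has no read events. For the inductive step, let $\mathsf{r} = \labelRule{\rho, j, \rho'}$ and reason by case analysis on $\mathsf{r}$.

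The cases \textsc{local}, \textsc{if-true}, \textsc{if-false} and \textsc{insert} leave the read events and $\co_\rho$ unchanged on existing transactions, so consistency is preserved by the induction hypothesis. For \textsc{begin}, only $\so$ and $\co_\rho$ are extended with edges into a freshly pending transaction that contains no reads yet, so no axiom can be triggered. For \textsc{commit} and \textsc{abort}, existing reads are unchanged and the only modification to $\co_\rho$ is that the closing transaction $t = \last{\hist,j}$ becomes maximal among the committed ones; using \Cref{lemma:operational-semantics-pending} and the fact that $t$ was maximal in $\so \cup \wro$ while pending, no previously satisfied axiom becomes violated.

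The interesting cases are \textsc{select}, \textsc{update} and \textsc{delete}, which introduce a new read event $e$ with write-read edges $(\vec{\mathsf{w}}[\key], e)$ for each $\key$ such that $\vec{\mathsf{w}}[\key] \neq \bot$, where, by the definition of $\selectOperational$ in \Cref{fig:operational_aux}, $\vec{\mathsf{w}}[\key]$ is a write on $\key$ whose commit timestamp is maximal among those $\leq \delta = \vec{\mathsf{S}}(e)$. Fix such a $\key$ and suppose some other transaction $t_2$ writes $\key$ and satisfies $\visibilityRelationAxiom{t_2}{e}{\key}$ under the isolation level $\iota = \isolation{\historyExecution{\rho'}}(\trans{e})$. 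The goal is to show $(t_2, \trans{\vec{\mathsf{w}}[\key]}) \in \co_{\rho'}$, which by Equation~\ref{eq:operational-semantics-commit-order} reduces to the timestamp inequality $\vec{\mathsf{T}}(\ecommit(t_2)) < \vec{\mathsf{T}}(\ecommit(\trans{\vec{\mathsf{w}}[\key]}))$. For $\iota = \SER$, the visibility $\wro$ or $\so$ from $t_2$ to $\trans{e}$ and the $\snapshot[\SER]$ definition imply that $t_2$ committed before $\vec{\mathsf{T}}(\ebegin(\trans{e})) \leq \delta$, and the $\commitOperational[\SER]$ guard forbids any commit of a key touched by $\trans{e}$ during its lifetime; combined with the maximality of $\vec{\mathsf{w}}[\key]$, this forces the required inequality. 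For $\iota = \SI$, the symmetric argument uses the weaker $\commitOperational[\SI]$ guard together with the $\axpre$ and $\axconf$ structure of visibility: the $\axpre$ path goes through $(\so \cup \wro)$-visible transactions that must lie in the $\SI$ snapshot, and $\axconf$ is handled by the write-write exclusion guaranteed by $\commitOperational[\SI]$. For $\iota = \RC$, the axiom restricts visibility to $(\so \cup \wro);\po^*$, and the per-event $\snapshot[\RC]$ ensures that any such $t_2$ satisfies $\vec{\mathsf{T}}(\ecommit(t_2)) \leq \delta$, which again yields the inequality by maximality of $\vec{\mathsf{w}}[\key]$.

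The main obstacle I expect is the \textsc{select}/\textsc{update}/\textsc{delete} case for $\SI$, because the $\axpre$ axiom traverses $\co^*$ and therefore the argument cannot rely purely on the local snapshot of $e$: one must chain the induction hypothesis along the $\co^*$-prefix ending at a $\so \cup \wro$-predecessor of $\trans{e}$, and then use transitivity of $\co_\rho$ together with the fact that this predecessor's commit timestamp is $\leq \delta$ by the choice of $\snapshot[\SI]$. The remaining bookkeeping (reads that resolve locally, i.e., with $\vec{\mathsf{w}}[\key] = \bot$, and the treatment of the initial transaction $t_0$) is routine from the definitions.
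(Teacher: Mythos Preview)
Your induction scheme and choice of $\co_\rho$ as witness match the paper, but two of your case analyses have real gaps that the paper closes by exploiting the \emph{totality} of the run---an assumption you state in the lemma but never actually invoke in the argument.

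First, your treatment of \textsc{commit}/\textsc{abort} is too dismissive. You argue that since the closing transaction $t=\last{\hist,j}$ was $(\so\cup\wro)$-maximal while pending, ``no previously satisfied axiom becomes violated.'' This fails for $\axser$ and $\axconf$, whose visibility goes through $\co$ rather than $\so\cup\wro$: once $t$ commits it is $\co$-before every still-pending $\trans{r}$, so if $t$ writes some $x$ that such an $r$ already read from $t_1$, the premise of $\axser$ now holds with $t$ playing the role of $t_2$, yet $(t_1,t)\in\co$ since $t_1$ committed strictly earlier. The paper handles this by looking \emph{forward} in the total run to the step where $\trans{r}$ itself ends, and showing that $\commitOperational[\SER]$ (resp.\ $\commitOperational[\SI]$, for the $\axconf$ sub-case) would then be violated because $t$ committed strictly between $\ebegin(\trans{r})$ and $\eend(\trans{r})$---contradicting totality. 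Your plan contains no such forward argument.

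Second, in the read-introducing rules for $\iota=\SER$ you write ``the visibility $\wro$ or $\so$ from $t_2$ to $\trans{e}$,'' but $\axser$ visibility is simply $(t_2,\trans{e})\in\co$, which holds for \emph{every} already-committed $t_2$ since $\trans{e}$ is pending. Such a $t_2$ may well have committed after $\ebegin(\trans{e})$, hence after the snapshot $\delta$, and your appeal to the $\commitOperational[\SER]$ guard is premature because that guard is evaluated only when $\trans{e}$ later ends. Again the paper uses totality to reach that future step and derive the contradiction there. Incidentally, the case you flag as the main obstacle---$\axpre$ under $\SI$ in the read rules---is comparatively easy in the paper's proof: the last edge of a $\axpre$ path lies in $\so\cup\wro$, so the pending $\last{\hist,j}$ cannot sit on the $\co^*$ segment, and timestamp monotonicity along that segment finishes the argument without any look-ahead.
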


\begin{proof}
Let $\rho^T$ be a total run. By Lemma~\ref{lemma:operational-semantics-full}, $\historyExecution{\rho^T}$ is a full history. Thus, to prove that $\historyExecution{\rho}$ is consistent, by Definition~\ref{def:consistency-full}, we need to show that there exists a commit order $\co$ that witnesses its consistency. We prove by induction on the length of a prefix $\rho$ of a total run $\rho^T$ that the relation $\co_\rho$ defined in Equation~\ref{eq:operational-semantics-commit-order} is a commit order that witnesses $\historyExecution{\rho}$'s consistency. Note that by Lemma~\ref{lemma:operational-semantics-total-order}, the relation $\co_\rho$ is indeed a commit order.

The base case, where $\rho$ is composed only by an initial configuration is immediate as in such case $\wro = \emptyset$. Let us suppose that for every run of length at most $n$ the property holds and let $\rho'$ a run of length $n+1$. As $\rho'$ is a sequence of configurations, there exist a reachable run $\rho$ of length $n$, a session $j$ and a rule $\mathsf{r}$ s.t. $\mathsf{r} = \labelRule{\rho, j, \rho'}$. Let us call $h = \tup{T, \so, \wro}$, $h' = \tup{T', \so', \wro'}$ and $e$ to $\historyExecution{\rho}$, $\historyExecution{\rho'}$ and the last event in $\po$-order belonging to $\last{h,j}$ respectively. By induction hypothesis, $\co_\rho$ is a commit order that witnesses $h$'s consistency. To conclude the inductive step, we show that for every possible rule $\mathsf{r}$ s.t. $\mathsf{r} = \labelRule{\rho,j, \rho'}$, $\co_{\rho'}$ is a commit order witnessing $h'$'s consistency.

By contradiction, let suppose that $\co_{\rho'}$ does not witness $h'$'s consistency. Then, there exists a variable $\key$, a read event $r$, an axiom $a \in \iota$ and two committed transactions $t_1, t_2$ s.t. $(t_1, e) \in \wro_x$, $\writeVar{t_2}{\key}$, $\visibilityRelation{a}{\co_{\rho'}}{t_2}{r}{x}$ holds in $h'$ but $(t_1, t_2) \in \co_{\rho'}$; where $\iota = \vec{\mathsf{I}}(\ebegin(e))$. Thus, if we prove that such dependencies can be seen in $h$ using $\co_\rho$, we obtain a contradiction as $\co_\rho$ witnesses $h$'s consistency. Note that as shown during the proof of Lemma~\ref{lemma:operational-semantics-total-order}, $\co_{\rho'}\restriction_{T \setminus \{\last{h, j}\} \times T \setminus \{\last{h, j}\}} = \co_{\rho}\restriction_{T \setminus \{\last{h, j}\} \times T \setminus \{\last{h, j}\}}$; so we simply prove that $\last{h, j}$ cannot be $t_1$, $t_2$, $\trans{r}$ or any intermediate transaction causing $\visibilityRelation{a}{\co_{\rho'}}{t_2}{r}{x}$ to hold in $h'$.

\begin{itemize}
    \item \textsc{local}, \textsc{if-false}, \textsc{if-true}: As $h = h'$ and $\co_{\rho'} = \co_{\rho}$, this case is impossible.

    \item \textsc{begin}: In this case, $\co_{\rho'} = \co_{\rho} \cup \{(t', \last{h', j}) \ | \ t' \in T\}$. By Lemma~\ref{lemma:operational-semantics-total-order}, $\last{h', j}$ is $(\so' \cup \wro')$-maximal, so $\last{h', j} \neq t_1$. Moreover, $\readOp{\last{h', j}} = \emptyset$, so $r \neq \readOp{\last{h', j}} $. In addition, $\last{h, j} \neq t_2$ as $\writeOp{\last{h, j}} = \emptyset$. 
    \begin{itemize}
        \item \underline{$a = \axser, \axpre$ or $\axrc$}: In all cases, the axioms do not relate any other transactions besides $t_1, t_2$ and $\trans{r}$, so this case is impossible.
        \item \underline{$a = \axconf$}: In this case, $\last{h, j} \neq t_4$ as it is $\co_{\rho'}$-maximal; so this case is also impossible.
    \end{itemize}
    
    \item \textsc{insert}: In this case, $\co_{\rho'} = \co_{\rho}$. Moreover, $\readOp{T'} = \readOp{T}$, $\writeOp{T'} = \writeOp{T}$, $\so' = \so$ and $\wro' = \wro$. Thus, this case is also impossible.
    
    \item \textsc{select}, \textsc{update}, \textsc{delete}: In this case, $\co_{\rho'} = \co_{\rho}$, $\so' = \so$, $\forall \key \in \Vars$ s.t. $\vec{\mathsf{w}}[\key] = \bot$, $\wro'_\key = \wro_\key$ and $\forall \key \in \Vars$ s.t. $\vec{\mathsf{w}}[\key] \neq \bot$, $\wro'_\key = \wro_\key \cup \{(\vec{\mathsf{w}}[\key], e)\}$. 
    As $\last{h, j}$ is pending, by Lemma~\ref{lemma:operational-semantics-total-order}, $\last{h, j} \neq t_1$ as it is $(\so' \cup \wro')$-maximal. Moreover, as $\writeOp{\last{h, j}} = \emptyset$, $\last{h, j} \neq t_2$. Then, we analyze if $\last{h, j}$ can be $\trans{r}$ (and thus, $r = e$) or any intermediate transaction. Note that for all three isolation levels we study, $\selectOperational$ returns the value written by the transaction with the last commit timestamp for a given snapshot time. Hence, as $(t_1, r) \in \wro_x$ and $(t_2, \trans{r}) \in \co_{\rho}$, we deduce that $\vec{\mathsf{T}}_\rho(\ecommit(t_2)) > \vec{\mathsf{T}}_\rho(\ebegin(\last{h, j}))$. We continue the analysis distinguishing between one case per axiom:

    \begin{itemize}
        
        \item \sloppy \underline{$a = \axser$}: As $\rho'$ is a prefix of a total run $\rho^T$, there exists runs $\hat{\rho}, \hat{\rho}'$ s.t. $\labelRule{\hat{\rho}, j', \hat{\rho}'}$ is either \textsc{commit} or \textsc{abort} and both a prefix of $\rho^T$; where $j'$ is the session of $\trans{r}$. Without loss of generality, we can assume that $\hat{\rho}$ and $\hat{\rho}'$ have minimal size; so $\last{\historyExecution{\hat{\rho}}, j'} = \trans{r}$. As $\rho^T$ is total and $\hat{\rho}'$ is a prefix of $\rho^T$, $\commitOperational[\iota](\historyExecution{\hat{\rho}, \vec{\mathsf{T}}_{\hat{\rho}'}, \trans{r}})$ holds.
        
        By the monotonicity of $\vec{\mathsf{T}}$, $\vec{\mathsf{T}}_{\rho'} \subseteq \vec{\mathsf{T}}_{\hat{\rho}'}$. Hence, as $(t_1, r) \in \wro_x$ and 
        $\vec{\mathsf{T}}_{\hat{\rho}'}(\ecommit(t_1)) < \vec{\mathsf{T}}_{\hat{\rho}'}(\ecommit(t_2))$,
        by the definitions of Figure~\ref{fig:operational_semantics2} and Figure~\ref{fig:operational_aux} we deduce that $\vec{\mathsf{T}}_{\hat{\rho}'}(\ebegin(\trans{r})) < \vec{\mathsf{T}}_{\hat{\rho}'}(\ecommit(t_2))$.
        However, as $\vec{\mathsf{T}}_{\hat{\rho}'}(\ebegin(\trans{r})) < \vec{\mathsf{T}}_{\hat{\rho}'}(\ecommit(t_2)) < \vec{\mathsf{T}}_{\hat{\rho}'}(\eend(\trans{r}))$, $\readVar{\trans{r}}{\key}$, $\writeVar{t_2}{\key}$; we conclude that $\commitOperational[\SER](\historyExecution{\hat{\rho}'}, \vec{\mathsf{T}}_{\hat{\rho}'}, \trans{r})$ does not hold; so this case is impossible.
        
        \item \sloppy \underline{$a = \axconf$}: In this case, $\last{h, j}$ cannot be an intermediate transaction nor $\trans{r}$ as $\writeOp{\last{h, j}} = \emptyset$; so this case is also impossible.
        
        \item \sloppy \underline{$a = \axpre$}: In this case, $\last{h, j}$ cannot be an intermediate transaction as by Lemma~\ref{lemma:operational-semantics-pending}, $\last{h, j}$ is $\so' \cup \wro'$-maximal. Thus, $\last{h, j}$ must be $\trans{r}$ and $e = r$. Therefore, there exists a transaction $t_4$ s.t. $(t_2, t_4) \in {\co_{\rho'}}^*$ and $(t_4, \last{h, j}) \in (\so' \cup \wro')$. Note that $t_4$ must be committed and that $\vec{\mathsf{T}}_{\rho'}(\ecommit(t_4)) < \vec{\mathsf{T}}_{\rho'}(\ebegin(\last{h, j}))$. Hence, as $(t_2, t_4) \in  {\co_{\rho'}}^*$ and $(t_1, t_2) \in \co_{\rho'}$ and they are both committed, we deduce that $\vec{\mathsf{T}}_{\rho'}(\ecommit(t_2)) < \vec{\mathsf{T}}_{\rho'}(\ecommit(t_4)) < \vec{\mathsf{T}}_{\rho'}(\ebegin(\last{h, j}))$. However, this contradicts that $\vec{\mathsf{T}}_{\rho'}(\ecommit(t_2)) > \vec{\mathsf{T}}_{\rho'}(\ebegin(\last{h, j}))$ Thus, this case is impossible.

        \item \sloppy \underline{$a = \axrc$}: In this case, $\last{h, j}$ must be $\trans{r}$ and in particular, $e = r$. As depicted on Figure~\ref{fig:operational_semantics2} and Figure~\ref{fig:operational_aux}, as $(t_1, r) \in \wro_x$, $\vec{\mathsf{S}}_{\rho'}(e) \leq \vec{\mathsf{T}}_{\rho'}(t_1)$. However, as $(t_1, t_2) \in \co_{\rho'}$, $\vec{\mathsf{T}}_{\rho'}(\ecommit(t_1)) < \vec{\mathsf{T}}_{\rho'}(\ecommit(t_2))$. Hence, as $(t_2, e) \in (\so \cup \wro);\po^*$, there exists an event $e' \in \last{h,j}$ s.t. $(e, e') \in \po^*$ and $\vec{\mathsf{T}}_{\rho'}(\ecommit(t_2)) < \vec{\mathsf{T}}_{\rho'}(e')$. However, by $\snapshot[\RC]$'s definition, $\vec{\mathsf{S}}(e') \leq \vec{\mathsf{S}}_{\rho'}(e)$; so we deduce that $\vec{\mathsf{T}}_{\rho'}(\ecommit(t_1)) <\vec{\mathsf{T}}_{\rho'}(\ecommit(t_2)) < \vec{\mathsf{S}}_{\rho'}(e)$. This contradicts the definition of $\selectOperational$; so this case is impossible.

    \end{itemize}

    \item \textsc{commit}, \textsc{abort}: In this case, $\co_{\rho'} \restriction (T \setminus \{\last{h, j}\} \times T \setminus \{\last{h, j}\}) = \co_{\rho} \restriction (T \setminus \{\last{h, j}\} \times T \setminus \{\last{h, j}\})$, $\so' = \so$, $\wro' = \wro$. First, using that by induction hypothesis any prefix $\tilde{\rho}$ of $\rho$ is consistent using $\co_{\tilde{\rho}}$; we define $\tilde{\rho}$ the prefix of $\rho$ that introduces the read event $r$. As $\historyExecution{\tilde{\rho}} = \tup{\tilde{T}, \tilde{\so}, \tilde{\wro}}$ is consistent and $(t_1, r) \in \tilde{\wro}_\key$; $t_1$ is committed. Hence, by the definitions of $\selectOperational$ and $\snapshot[\iota]$ on Figure~\ref{fig:operational_aux} and the rules semantics on Figure~\ref{fig:operational_semantics2}, we deduce that $\vec{\mathsf{T}}_\rho(\ecommit(t_1)) > \vec{\mathsf{T}}_\rho(\ebegin(t_2))$. Next, as $\last{h, j}$ is pending in $h$, it is $\so \cup \wro$-maximal. Therefore, it is also $\so' \cup \wro'$-maximal; so it cannot play the role of $t_1$. However, it can play the role of $t_2$, $\last{h, j}$ or the role of an intermediate transaction. Let us analyze case by case depending on the axiom:
    
    \begin{itemize}
        \item \underline{$a = \axser$}: Two sub-cases arise:
        \begin{itemize}
            \item \sloppy\underline{$\last{h, j} = t_2$}: I this case, $\writeVar{t_2}{\key}$ must hold. As $\rho'$ is a prefix of a total run $\rho^T$, there exists runs $\hat{\rho}, \hat{\rho}'$ s.t. $\labelRule{\hat{\rho}, j', \hat{\rho}'}$ is either \textsc{commit} or \textsc{abort} and both a prefix of $\rho^T$; where $j'$ is the session of $\trans{r}$. Without loss of generality, we can assume that $\hat{\rho}$ and $\hat{\rho}'$ have minimal size; so $\last{\historyExecution{\hat{\rho}}, j'} = \trans{r}$. As $\rho^T$ is total and $\hat{\rho}'$ is a prefix of $\rho^T$, $\commitOperational[\iota](\historyExecution{\hat{\rho}, \vec{\mathsf{T}}_{\hat{\rho}'}, \trans{r}})$ holds. Note that as $(t_1, t_2) \in \co_{\rho'}$ and they are both committed, $\vec{\mathsf{T}}_{\hat{\rho}'}(\ecommit(t_1)) < \vec{\mathsf{T}}_{\hat{\rho}'}(\ecommit(t_2))$. However, $\readVar{\trans{r}}{\key}$, $\writeVar{t_2}{\key}$ and $\vec{\mathsf{T}}_{\hat{\rho}'}(\ebegin(\trans{r})) < \vec{\mathsf{T}}_{\hat{\rho}'}(\ecommit(\last{h, j})) < \vec{\mathsf{T}}_{\hat{\rho}'}(\ecommit(\trans{r}))$; which contradicts that $\commitOperational[\iota](\historyExecution{\hat{\rho}, \vec{\mathsf{T}}_{\hat{\rho}'}, \trans{r}})$ holds. In conclusion, this case is impossible.

            \item \underline{$\last{h, j} =  \trans{r}$}: In such case, as $t_1$ and $t_2$ are committed, $(t_2, \last{h, j}) \in \co_{\rho}$ and $(t_1, t_2) \in \co_{\rho}$. Hence, this case is also impossible as $\co_{\rho}$ witnesses that $h$ is consistent.
            
        \end{itemize}

        \item \underline{$a = \axpre$}: In this case, there exists a transaction $t_4$ s.t. $(t_2, t_4) \in \co_{\rho'}^*$ and $(t_4, \trans{r}) \in \so' \cup \wro'$. As $\last{h, j}$ is pending in $h$, by Lemma~\ref{lemma:operational-semantics-pending}, $(\so \cup \wro)$-maximal. Thus, as $\so' = \so$ and $\wro' = \wro$, $t_4 \neq \last{h, j}$. Moreover, as $(t_2, t_4) \in {\co_{\rho'}}^*$, $t_4$ is committed and $\last{h, j} \neq t_4$ is the $\co_{\rho'}$-maximal transaction that is committed, $t_2 \neq \last{h, j}$. Hence, $\last{h, j} = \trans{r}$. However, as $\so' = \so$, $\wro' = \wro'$ and $\co_{\rho'}\restriction_{T \setminus \{\last{h, j}\} \times T \setminus \{\last{h, j}\}} = \co_{\rho}\restriction_{T \setminus \{\last{h, j}\} \times T \setminus \{\last{h, j}\}}$; we conclude that $(t_1, t_2) \in \co_{\rho}$, $(t_2, t_4) \in {\co_{\rho}}^*$ and $(t_4, \last{h,j}) \in \so \cup \wro$; which contradicts that $\co_{\rho}$ witnesses $h$'s consistency, so this case is impossible.
        
        \item \underline{$a = \axconf$}: In this case, there exists a variable $y$ and a transaction $t_4$ s.t. $\writeVar{t_4}{y}$, $\writeVar{\trans{r}}{y}$ $(t_2, t_4) \in \co_{\rho'}^*$, $(t_4, \trans{r}) \in \co_{\rho'}$. As $\last{h, j}$ is the $\co_{\rho'}$-maximal transaction that is committed,  $(t_2, \trans{r}), (t_4, \trans{r}) \in \co_{\rho'}$ and $\writeOp{\trans{r}} \neq \emptyset$, we deduce that $\last{h, j} \neq t_2, t_4$. Hence, $\last{h, j}$ must be $\trans{r}$ and $e = \ecommit(\last{h, j})$. On one hand, we observe that as $(t_4, \last{h, j}) \in \co_{\rho'}$ and they are both committed, $\vec{\mathsf{T}}_{\rho'}(\ecommit(t_4)) < \vec{\mathsf{T}}_{\rho'}(e)$. On the other hand, as $(t_2, t_4) \in \co_{\rho'}^*$ and $\vec{\mathsf{T}}_{\rho'}(\ebegin(\trans{r})) < \vec{\mathsf{T}}_{\rho'}(\ecommit(t_2))$; we conclude that $\vec{\mathsf{T}}_{\rho'}(\ebegin(\trans{r})) < \vec{\mathsf{T}}_{\rho'}(\ecommit(t_4))$. In conclusion, we obtain that $\commitOperational[\SI](h',\vec{\mathsf{T}}_{\rho'}, \last{h, j})$ does not hold due to the existence of $t_4$; which contradict the hypothesis, so this case is impossible.        
        
        \item \underline{$a = \axrc$}: In this case, $r \neq e$ as $r$ is a read event and $e$ is not, and $(t_2, r) \in (\so' \cup \wro'); {\po'}^*$. Hence, as $\so' = \so, \wro' = \wro$ and $\po' = \po \cup \{(e', e) \ | \ e' \in \last{h, j}\}$; $(t_2, r) \in (\so \cup \wro); \po^*$. Finally, as $\last{h, j}$ is pending in $h$, $\last{h, j} \neq t_2$. Thus, as $\co_{\rho'}\restriction_{T \setminus \{\last{h, j}\} \times T \setminus \{\last{h, j}\}} = \co_{\rho}\restriction_{T \setminus \{\last{h, j}\} \times T \setminus \{\last{h, j}\}}$; we deduce that $(t_1, t_2) \in \co_\rho$. However, this contradicts that $\co_\rho$ witnesses $h$'s consistency; so this case is also impossible.

    \end{itemize}

    As every possible case is impossible, we deduce that the hypothesis, $\co_{\rho'}$ does not witnesses $h'$'s consistency is false; so we conclude the proof of the inductive step.
\end{itemize}
\end{proof}

	\newpage
	\section{Proofs of \Cref{th:poly-consistency-saturable,th:extend-np-complete,th:k-expressive-np,th:csob}}

\subsection{Complexity analysis of \Cref{algorithm:necessary-co,algorithm:checking-saturable} (Proof of~\Cref{th:poly-consistency-saturable})}
\label{app:proofs-poly-consistency-saturable}

For a given history $h$, \Cref{algorithm:necessary-co} computes necessary and sufficient conditions for an execution of $h$ $\exec = \tup{h, \co}$ to be consistent. It computes a bigger relation $\pco_\result$ that includes $\co$ and any other dependency between transactions that can be deduced from the isolation configuration. \Cref{algorithm:necessary-co} decides if $\co$ is a commit order witnessing consistency of the history (\Cref{lemma:necessary-co:correctness}) and it runs in polynomial time (\Cref{lemma:necessary-co:polynomial-time}).

\begin{lemma}
\label{lemma:necessary-co:correctness}
For any full history $h = \tup{T, \so, \wro}$, the execution $\exec = \tup{h, \co}$ is consistent if and only if $\pco_\result = \computeCO{h}{\co}$ is acyclic.
\end{lemma}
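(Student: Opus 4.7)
The plan is to unfold both sides of the equivalence into statements about individual instances of axiom premises and then show that they coincide. I would first observe that \textsc{saturate}, started from the given $\co$, enumerates in its nested loop precisely the tuples $(\key, r, t_1, t_2)$ that witness the premise of some axiom in $\isolation{h}(\trans{r})$: $t_1 = \trans{\wro_\key^{-1}(r)}$ (well-defined because $h$ is full), $t_2 \neq t_1$ writes $\key$, and some visibility relation $\mathsf{v} \in \visibilitySet{\isolation{h}(\trans{r})}$ satisfies $\mathsf{v}(t_2, r, \key)$. For each such tuple the algorithm inserts $(t_2, t_1)$ into $\pco_\result$, so $\pco_\result$ is exactly $\co$ together with every edge demanded by the axioms when interpreted over the commit order $\co$.

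For the ``only if'' direction I would assume that $\exec = \tup{h, \co}$ is consistent and apply \Cref{def:consistency-full}: each tuple witnessing the premise of an axiom already has its conclusion $(t_2, t_1) \in \co$. Hence every edge \textsc{saturate} writes into $\pco_\result$ is already contained in $\co$, giving $\pco_\result = \co$, which is acyclic because $\co$ is a strict total order.

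For the converse direction I would argue contrapositively. Suppose some axiom $a \in \isolation{h}(\trans{r})$ is violated in $\exec$, witnessed by a tuple $(\key, t_1, t_2)$. Then by the first observation \textsc{saturate} inserts $(t_2, t_1)$ into $\pco_\result$ even though $(t_2, t_1) \notin \co$; totality of $\co$ then forces $(t_1, t_2) \in \co \subseteq \pco_\result$, producing a $2$-cycle in $\pco_\result$ and contradicting its acyclicity.

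The one subtle point I would emphasize, and the main potential obstacle, is the interpretation of the visibility predicates $\mathsf{v}$, which for non-saturable isolation levels can themselves refer to $\co$. The equivalence works precisely because \textsc{saturate} interprets $\mathsf{v}$ using the input commit order $\co$, matching exactly the interpretation used when the axiom is checked on $\exec = \tup{h, \co}$ in \Cref{def:consistency-full}. Making this identification explicit is what turns the informal statement ``\textsc{saturate} produces all the edges forced by the axioms'' into a genuine equality between the set of edges added by the algorithm and the set of edges required by \Cref{def:consistency-full}, after which both directions reduce to a two-line argument about cycles in a total order.
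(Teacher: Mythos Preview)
Your proof is correct and follows essentially the same approach as the paper's: both directions hinge on the observation that the edges added by \textsc{saturate} are exactly the conclusions of the axiom premises instantiated over $\co$, so that consistency gives $\pco_\result = \co$ (acyclic), while a violated axiom produces an edge $(t_2,t_1) \in \pco_\result \setminus \co$ which, together with $(t_1,t_2)\in\co\subseteq\pco_\result$ from totality, yields a cycle. Your explicit $2$-cycle argument in the backward direction is in fact slightly more direct than the paper's phrasing (which instead argues that an acyclic superset of a total order must equal it), and your remark that the visibility predicate is interpreted over the same $\co$ on both sides is exactly the point that makes the identification go through.
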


\begin{proof}
Let $h = \tup{T, \so, \wro}$ be a history, $\exec = \tup{h, \co}$ be an execution of $h$ and $\pco_\result = \computeCO{h}{\co}$ be the relation obtained thanks to \Cref{algorithm:necessary-co}.

\underline{$\implies$} Let us suppose that $\exec$ is consistent. As $\co$ is acyclic, it suffice to prove that $\pco_\result = \co$. By contradiction, let us suppose that $\pco_\result \neq \co$. As $\co \subseteq \pco_\result$ (line~\ref{algorithm:consistent-extension:co}), there exists $t_1, t_2$ s.t. $(t_2, t_1) \in \pco_\result \setminus \co$. In such case, such tuple must be added in line~\ref{algorithm:necessary-co:add-tuple-co}. Hence, there exists $x \in \Vars, e \in \readOp{h}$ and $v \in \visibilitySet{\isolation{h}(\trans{r})}$ s.t. $t_1 = \wro_x^{-1}(r)$ and $\visibilityRelationAxiomCo{t_2}{r}{x}$ holds in $h$. As $\exec$ is consistent, $(t_2, t_1) \in \co$; which is impossible. Hence, $\pco_\result = \co$.

\underline{$\impliedby$}
Let us suppose that $\pco_\result$ is acyclic. By contradiction, let us suppose that $\exec$ is not consistent. Then, there exists an read event $r$ s.t. $\constraintAxiom[\isolation{h}(\trans{r})]{\co}{r}$ does not hold. Hence, by \Cref{eq:axiom}, there exists $v \in \visibilitySet{\isolation{h}(\trans{r})}$, $x \in \Vars$, $t_2 \in T$ s.t. $\visibilityRelationInstanceApplied{t_2}{r}{x}$ hold in $h$ but $(t_2, t_1) \not\in \co$; where $t_1 = \wro_x^{-1}(r)$. In such case, \Cref{algorithm:necessary-co} ensures in line~\ref{algorithm:necessary-co:add-tuple-co} that $(t_2, t_1) \in \pco_\result$. However, as $\co \subseteq \pco_\result$ (line~\ref{algorithm:consistent-extension:co}), $\co$ is a total order and $\pco_\result$ is acyclic, $\co = \pco_\result$. Thus, $(t_2, t_1) \in \co$; which is impossible. Thus, $\exec$ is consistent.
\end{proof}

\begin{lemma}
\label{lemma:evaluate-v:polynomial-time}
Let $h = \tup{T, \so, \wro}$ be a history s.t. $\isolation{h}$ is bounded by $k \in \mathbb{N}$, $x \in \Vars$ be a key, $t \in T$ be a transaction, $r$ be a read event, $\pco \subseteq T \times T$ be a partial order and $v$ be a visibility relation in $\visibilitySet{\isolation{h}(\trans{r})}$. Evaluating $\visibilityRelationInstanceApplied[\pco]{t}{r}{x}$ is in $\mathcal{O}(|h|^{k-2})$.
\end{lemma}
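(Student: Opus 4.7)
The plan is to bound the evaluation cost by first reducing it to a constant number of ``table lookups'' and then enumerating a polynomial number of witness tuples. Since $\isolation{h}$ is bounded by $k$, the axiom containing $v$ uses at most $k$ quantifiers in total. By Eq.~\ref{eq:axioms-constraints}, the visibility formula $\visibilityRelationAxiom{\tau_0}{\tau_{m+1}}{x}$ existentially quantifies over intermediate transactions $\tau_1,\ldots,\tau_m$; when we call it as $\visibilityRelationInstanceApplied[\pco]{t}{r}{x}$, the endpoints $\tau_0 = t$ and $\tau_{m+1} = r$ are fixed, and the key $x$ is fixed as well, so at most $k-2$ variables remain to be instantiated.

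Next, I would precompute the composite relations $\mathsf{Rel}_i$ occurring along the visibility path. Each $\mathsf{Rel}_i$ is built from $\po$, $\so$, $\wro$ and $\pco$ using union, composition, and (reflexive) transitive closure as per Eq.~\ref{eq:axioms-constraints-rel}. Since the axiom is of fixed syntactic shape and the base relations have size $O(|h|^2)$, each $\mathsf{Rel}_i$ can be materialized once in polynomial time (standard transitive-closure runs in $O(|h|^3)$), after which membership testing $(\tau_{i-1},\tau_i)\in \mathsf{Rel}_i$ becomes an $O(1)$ lookup. The side-predicate $\confWr_a$ further constrains intermediate transactions via writes of some auxiliary key $\yvar$; on a fully instantiated tuple it reduces to a constant number of $\valuewrName$ lookups, so its cost is subsumed by the overall budget.

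The core of the procedure then iterates over all candidate tuples $(\tau_1,\ldots,\tau_m)$ with $m \le k-2$, ranging each $\tau_i$ over the $O(|h|)$ transactions/events of $h$. There are at most $|h|^{k-2}$ such tuples, and for each the $O(k)$ conjuncts together with $\confWr_a$ are checked in $O(1)$ time thanks to preprocessing. This delivers the claimed $O(|h|^{k-2})$ bound.

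The main obstacle I foresee is the quantifier bookkeeping: one must verify that in every axiom permitted by Fig.~\ref{fig:consistency_defs} (and more generally by the grammar in Eq.~\ref{eq:axioms-constraints}), fixing $t$, $r$, and $x$ truly leaves at most $k-2$ genuinely free intermediates. This is immediate for $\axser$, $\axpre$, $\axra$ and $\axrc$ (no free intermediates) and requires a short case analysis for $\axconf$, where the additional witness $t_4$ and the conflicting key $\yvar$ are exactly the slack accounted for by the $k-2$ bound; once this is settled the remaining steps are routine preprocessing and counting.
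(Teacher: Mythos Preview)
Your overall strategy—enumerate the existentially quantified witnesses and check the quantifier-free body—is exactly the paper's. Two points deserve correction, one cosmetic and one substantive.

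\textbf{Quantifier accounting.} The read event $r$ is \emph{not} among the $k$ quantifiers of the axiom: it is a free parameter of $a(r)$ (see Eq.~\ref{eq:axiom}). The $k$ quantifiers are the three universals $x,t_1,t_2$ together with the existentials of the visibility formula. Since $t_1$ does not occur in $\visibilityRelationAxiom{\cdot}{\cdot}{\cdot}$ at all, fixing $t$ ($=t_2$), $r$, and $x$ leaves at most $k-3$ existentials to instantiate, not $k-2$. Your ``at most $k-2$'' is still an upper bound, so this alone does not break the argument, but it obscures where the final exponent comes from.

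\textbf{Preprocessing cost.} This is a genuine gap. You materialize each composite $\mathsf{Rel}_i$ in $O(|h|^3)$ and then claim $O(1)$ per tuple, concluding $O(|h|^{k-2})$ overall. But the preprocessing term is additive, so your argument actually yields $O(|h|^{3}+|h|^{k-2})=O(|h|^{\max(3,k-2)})$. For $k\le 4$—which covers $\axser$, $\axra$, $\axrc$, and $\axpre$—this is $O(|h|^3)$, strictly worse than the claimed $O(|h|^{k-2})$. The same issue hits your treatment of $\confWr_a$: the $\writeVar{\cdot}{\cdot}$ checks involve $\valuewrName$, whose recursive definition along $\wro$ costs $O(|h|)$ and cannot be made $O(1)$ without a preprocessing step you have not budgeted.

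The paper avoids both problems by \emph{not} precomputing: it runs $k-3$ nested loops (one per existential) and, for each fully instantiated tuple, evaluates the quantifier-free body directly in $O(|h|)$, the dominant cost being the $\valuewrName$ recursion. This yields $O(|h|^{k-3}\cdot|h|)=O(|h|^{k-2})$ with no hidden additive term. If you drop the preprocessing and charge $O(|h|)$ per tuple with the tighter loop count $k-3$, your argument becomes the paper's.
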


\begin{proof}
As $\isolation{h}$ is bounded, there exists $k \in \mathbb{N}$ s.t. $|\visibilitySet{\isolation{h}(t)}| \leq k$. Hence, the number of quantifiers employed by a visibility relation is at most $k$ (and at least $3$ according to Equation~\ref{eq:axiom}). In addition, for each $\mathsf{v} \in \visibilitySet{\isolation{h}(t)}$ evaluating each condition $\visibilityRelationInstanceApplied[\pco]{t}{r}{x}$ can be modelled with an algorithm that employ $k-3$ nested loops, one per existential quantifier employed by $\mathsf{v}$, and that for each quantifier assignment evaluates the quantifier-free part of the formula. 

First, we observe that as $\confWr$ predicate only query information about the $k-1$ quantified events, the size of such sub-formula is in $\mathcal{O}(k)$. Next, we notice that as $\whereName$ predicate can be evaluated in constant time, for every key $x$ and event $w$, computing $\valuewr{x}{w}$ is in $\mathcal{O}(k \cdot T)$. Hence, as $k$ is constant, evaluating the quantifier-free formula of $v$ is in $\mathcal{O}(|h|)$ and thus, evaluating $\visibilityRelationInstanceApplied[\pco]{t}{r}{x}$ is in $\mathcal{O}(|h|^{k-3} \cdot |h|) = \mathcal{O}(|h|^{k-2})$.
\end{proof}

\begin{lemma}
\label{lemma:necessary-co:polynomial-time}
Let $h = \tup{T, \so, \wro}$ be a full history, $k$ be a bound on $\isolation{h}$ and $\pco \subseteq T \times T$ be a partial order. Algorithm~\ref{algorithm:necessary-co} runs in $\mathcal{O}(|h|^{k+1})$.
\end{lemma}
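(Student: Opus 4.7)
The plan is to bound the total work of \Cref{algorithm:necessary-co} by multiplying the number of loop iterations by the per-iteration cost, which is dominated by evaluating the visibility formula. I would first observe that the three nested loops in lines~\ref{algorithm:necessary-co:init-loop}--\ref{algorithm:necessary-co:all-constraints} range over the set of keys $\Vars$, the set of read events $\readOp{h}$ paired with a transaction $t_2\in T$, and the set of visibility relations $\visibilitySet{\isolation{h}(\trans{r})}$ respectively. Since $|\Vars|, |\readOp{h}|, |T|$ are all at most $|h|$, and since $\isolation{h}$ is bounded by $k$ (so $|\visibilitySet{\isolation{h}(\trans{r})}| \leq k$), the total number of iterations of the body is $\mathcal{O}(k\cdot |h|^3)$.

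The body itself performs two operations: retrieving $t_1 = \trans{\wro_\key^{-1}(r)}$ (constant time, using a precomputed index of $\wro$), and evaluating the visibility predicate $\mathsf{v}(t_2, r, x)$ at line~\ref{algorithm:necessary-co:constraint-evaluation}. By \Cref{lemma:evaluate-v:polynomial-time}, this evaluation costs $\mathcal{O}(|h|^{k-2})$. Adding a tuple to $\pco_\result$ at line~\ref{algorithm:necessary-co:add-tuple-co} can be done in constant time if $\pco_\result$ is represented as an adjacency matrix indexed by transactions.

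Combining these, the total running time is $\mathcal{O}(k \cdot |h|^3 \cdot |h|^{k-2}) = \mathcal{O}(k \cdot |h|^{k+1})$. Since $k$ is treated as a fixed constant (it is the bound on the isolation configuration, assumed fixed in \Cref{ssec:assumptions-isolations}), this simplifies to $\mathcal{O}(|h|^{k+1})$, as claimed.

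The only subtlety I anticipate is making sure that the $\mathcal{O}(|h|^{k-2})$ bound from \Cref{lemma:evaluate-v:polynomial-time} remains tight when folded into the outer loops, i.e., that we do not double-count the three outermost quantifiers of \Cref{eq:axioms-constraints}. This is fine because \Cref{lemma:evaluate-v:polynomial-time} already fixes $t_2$, $r$, and $x$ (these correspond to $\tau_0$, $\tau_{k+1}$, and $x$ in \Cref{eq:axioms-constraints}) and only enumerates the $k-2$ intermediate existentially quantified transactions; the outer enumeration of $(x,r,t_2)$ by \Cref{algorithm:necessary-co} supplies exactly those three fixed parameters, yielding the product bound above without overlap.
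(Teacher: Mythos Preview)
Your proof is correct and follows essentially the same approach as the paper's: bound the three outer loops over $(x, r, t_2)$ by $\mathcal{O}(|h|^3)$, invoke \Cref{lemma:evaluate-v:polynomial-time} to bound each visibility evaluation by $\mathcal{O}(|h|^{k-2})$, and multiply. Your added remarks on absorbing the factor $k$ and on avoiding double-counting of the outermost quantifiers are sound and slightly more explicit than the paper's version.
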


\begin{proof}
Let $h = \tup{T, \so, \wro}$ be a full history. \Cref{algorithm:necessary-co} can be decomposed in two blocks: lines \ref{algorithm:necessary-co:init-loop}-\ref{algorithm:necessary-co:add-tuple-co} and lines \ref{algorithm:necessary-co:all-constraints}-\ref{algorithm:necessary-co:add-tuple-co}. Hence, the cost of Algorithm~\ref{algorithm:necessary-co} is in $\mathcal{O}(|\Vars| \cdot |\events{h}| \cdot |T| \cdot U)$; where $U$ is an upper-bound of the cost of evaluating lines \ref{algorithm:necessary-co:all-constraints}-\ref{algorithm:necessary-co:add-tuple-co}. On one hand, both $|\Vars|$, $|\events{h}|$ and $|T|$ are in $\mathcal{O}(|h|)$. On the other hand, as $\isolation{h}$ is bounded by $k$, by \Cref{lemma:evaluate-v:polynomial-time}, $U \in \mathcal{O}(|h|^{k-2})$. Altogether, we deduce that \Cref{algorithm:necessary-co} runs in $\mathcal{O}(|h|^{k+1})$.
\end{proof}

Algorithm~\ref{algorithm:checking-saturable} generalizes the results for $\RA$ and $\RC$ in~\cite{DBLP:journals/pacmpl/BiswasE19} %
for full histories with heterogeneous saturable isolation configurations; proving that such histories can be checked in polynomial time.
\polyConsistencySaturable*

We split the proof of Theorem~\ref{th:poly-consistency-saturable} in two Lemmas: Lemma~\ref{lemma:poly-consistency-saturable:correctness} that proves the correctness of Algorithm~\ref{algorithm:checking-saturable} and Lemma~\ref{lemma:poly-consistency-saturable:polynomial-time} that ensures its polynomial-time behavior.

\begin{lemma}
\label{lemma:poly-consistency-saturable:correctness}
For every full history $h = \tup{T, \so, \wro}$ whose isolation configuration is saturable, Algorithm~\ref{algorithm:checking-saturable} returns $\btrue$ if and only if $h$ is consistent. 
\end{lemma}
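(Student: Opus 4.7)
The plan is to reduce the statement to Lemma~\ref{lemma:necessary-co:correctness}, which characterizes consistency of an execution $\tup{h, \co}$ for a given total commit order $\co$ as the acyclicity of $\computeCO{h}{\co}$. Since Algorithm~\ref{algorithm:checking-saturable} never explicitly constructs a total order, I would leverage the saturability hypothesis to show that a single application of $\textsc{saturate}$ on the partial relation $(\so \cup \wro)^+$ yields enough information to decide consistency.

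The key observation is that when $\isolation{h}$ is saturable, every visibility relation used at line~\ref{algorithm:necessary-co:constraint-evaluation} of \Cref{algorithm:necessary-co} is defined over $\po, \so, \wro$ only, i.e., without $\co$ (the grammar in \Cref{eq:axioms-constraints-rel} restricted to the saturable case). Consequently the truth value of $\mathsf{v}(t_2, r, x)$ in line~\ref{algorithm:necessary-co:constraint-evaluation} depends only on $h$, not on the second argument passed to $\textsc{saturate}$. Therefore the set of pairs added by a call to $\textsc{saturate}(h, R)$ is a fixed set $\Delta(h)$ independent of $R$, and we obtain $\computeCO{h}{R} = R \cup \Delta(h)$ for every $R$; in particular $\pco = (\so \cup \wro)^+ \cup \Delta(h)$ in Algorithm~\ref{algorithm:checking-saturable}.

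For the forward direction, suppose Algorithm~\ref{algorithm:checking-saturable} returns $\btrue$. Then $\so \cup \wro$ is acyclic (line~\ref{algorithm:checking-saturable:so-wr-acyclic}) and $\pco$ is acyclic (line~\ref{algorithm:checking-saturable:co-acyclic}), so $\pco$ admits a topological linearization $\co$. This $\co$ is a valid commit order because it contains $\so \cup \wro \subseteq \pco$, and $\computeCO{h}{\co} = \co \cup \Delta(h) = \co$ since $\Delta(h) \subseteq \pco \subseteq \co$. By Lemma~\ref{lemma:necessary-co:correctness} applied to $\tup{h,\co}$, the execution is consistent, hence $h$ is consistent. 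Conversely, if $h$ is consistent, witnessed by some commit order $\co$, then $\so \cup \wro \subseteq \co$ (so $\so \cup \wro$ is acyclic) and $\computeCO{h}{\co} = \co$ is acyclic, which forces $\Delta(h) \subseteq \co$. Hence $\pco = (\so \cup \wro)^+ \cup \Delta(h) \subseteq \co$, so $\pco$ is acyclic and the algorithm returns $\btrue$.

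The main obstacle is precisely the syntactic argument about saturability that underlies the identity $\computeCO{h}{R} = R \cup \Delta(h)$; once this independence of $\Delta(h)$ from the second argument of $\textsc{saturate}$ is carefully established by inspecting the grammar in \Cref{eq:axioms-constraints-rel}, the two implications follow mechanically from Lemma~\ref{lemma:necessary-co:correctness} and the standard fact that an acyclic relation extending $\so \cup \wro$ admits a linearization into a total commit order.
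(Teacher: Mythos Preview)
Your proof is correct and follows essentially the same approach as the paper. Both arguments rest on the same key observation—that saturable visibility relations are evaluated solely over $\po,\so,\wro$ and hence do not depend on the second argument to $\textsc{saturate}$—and both conclude by linearizing the acyclic $\pco$ in one direction and showing $\pco\subseteq\co$ for a witnessing commit order in the other. Your packaging via the explicit identity $\computeCO{h}{R}=R\cup\Delta(h)$ and the appeal to Lemma~\ref{lemma:necessary-co:correctness} is a slightly more modular presentation of the same idea; the paper instead unfolds the axiom check directly in each direction, but the content is identical.
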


\begin{proof}
Let $h = \tup{T, \so, \wro}$ a full history whose isolation configuration is saturable and let $\pco$ be the visibility relation defined in line~\ref{algorithm:checking-saturable:compute-co} in \Cref{algorithm:checking-saturable}.

On one hand, let suppose that $h$ is consistent and let $\exec = \tup{h, \co}$ be a consistent execution of $h$. If we show that $\pco \subseteq \co$, we can conclude that Algorithm~\ref{algorithm:checking-saturable} returns $\btrue$ as $\co$ is acyclic.
Let $(t_2, t_1) \in \pco$ and let us prove that $(t_2, t_1) \in \co$. As $\so \cup \wro \subseteq \co$, by the definition of commit order, we can assume that $(t_2, t_1) \in \pco \setminus (\so \cup \wro)$. In such case, there must exists $\key \in \Vars, e \in \readOp{h}$ and $\mathsf{v} \in \visibilitySet{\isolation{h}(\trans{e})}$ s.t. $\writeVar{t_2}{\key}$ and $\visibilityRelationInstanceApplied[(\so \cup \wro)^+]{t_2}{e}{x}$ holds.
As $\isolation{h}(\trans{e})$ is saturable, $\visibilityRelationInstanceApplied[(\so \cup \wro)^+]{t_2}{e}{x}$ holds. Hence, as $\co$ is a commit order and $(\so \cup \wro)^+ \subseteq \co$; $\visibilityRelationInstanceApplied[\co]{t_2}{e}{x}$ also holds. Therefore, as $\co$ witnesses $h$'s consistency, we deduce that $(t_2, t_1) \in \co$.

On the other hand, let us suppose that Algorithm~\ref{algorithm:checking-saturable} returns $\btrue$. Then, $\pco$ must be acyclic by the condition in line \ref{algorithm:checking-saturable:co-acyclic}. Therefore, as $\pco$ is acyclic it can be extended to a total order $\co$. Let us prove that the execution $\exec = \tup{h, \co}$ is consistent. Let $\key \in \Vars, t_2 \in T, e \in \readOp{h}$ and $\mathsf{v} \in \visibilitySet{\isolation{h}(\trans{e})}$ s.t. $\writeVar{t_2}{\key}$ and $\visibilityRelationInstanceApplied[\co]{t_2}{e}{x}$ holds. As Algorithm~\ref{algorithm:checking-saturable} returns $\btrue$, we deduce that Algorithm~\ref{algorithm:necessary-co} checks the condition at line \ref{algorithm:necessary-co:constraint-evaluation}. As $\isolation{h}(\trans{e})$ is saturable, $\visibilityRelationInstanceApplied[(\so \cup \wro)^+]{t_2}{e}{x}$ also holds. Thus, $(t_2, t_1) \in \pco$. As $\pco \subseteq \co$, $(t_2, t_1) \in \co$; so $\co$ witnesses $h$'s consistency.

\end{proof}

\begin{lemma}
\label{lemma:poly-consistency-saturable:polynomial-time}
For every full history $h$ whose isolation configuration is bounded, Algorithm~\ref{algorithm:checking-saturable} runs in polynomial time with respect $\mathcal{O}(|h|)$.
\end{lemma}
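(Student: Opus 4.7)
The plan is to observe that Algorithm~\ref{algorithm:checking-saturable} performs only a constant number of operations, each of which is polynomial in $|h|$ provided the isolation configuration is bounded. I would proceed by analyzing each line separately and invoking \Cref{lemma:necessary-co:polynomial-time} for the call to \textsc{saturate}.

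First I would dispatch line~\ref{algorithm:checking-saturable:so-wr-acyclic}: constructing the graph of $\so\cup\wro$ requires inspecting every pair in $\so$ and every tuple in $\wro$, which is $\mathcal{O}(|h|^2)$, and acyclicity can be decided by a standard DFS in time linear in the size of this graph, i.e.\ $\mathcal{O}(|h|^2)$. Next, for line~\ref{algorithm:checking-saturable:compute-co}, computing the transitive closure $(\so\cup\wro)^+$ is polynomial (for instance $\mathcal{O}(|h|^3)$ via iterated matrix reachability), and the subsequent call to $\textsc{saturate}(h, (\so\cup\wro)^+)$ runs in $\mathcal{O}(|h|^{k+1})$ by \Cref{lemma:necessary-co:polynomial-time}, where $k$ is the fixed bound on $\isolation{h}$. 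Finally, the acyclicity test of $\pco$ at line~\ref{algorithm:checking-saturable:co-acyclic} is again a DFS over a graph of size at most $|h|^2$, hence $\mathcal{O}(|h|^2)$.

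Summing these contributions and using that $k$ is a constant independent of $h$, the overall running time is dominated by the call to \textsc{saturate} and is in $\mathcal{O}(|h|^{k+1})$, which is polynomial in $|h|$. The only subtle point, which is not really an obstacle, is that the bound $k$ must be fixed independently of $h$, so that the exponent does not grow with the input; this is exactly what boundedness of the isolation configuration guarantees, and this is also the hypothesis under which \Cref{lemma:evaluate-v:polynomial-time} provides a polynomial bound for evaluating each visibility formula.
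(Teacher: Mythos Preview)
Your proposal is correct and follows essentially the same approach as the paper: analyze each line of Algorithm~\ref{algorithm:checking-saturable} separately, bound the acyclicity checks by a DFS in $\mathcal{O}(|h|^2)$, and invoke \Cref{lemma:necessary-co:polynomial-time} for the call to \textsc{saturate}. You add the explicit cost of computing the transitive closure $(\so\cup\wro)^+$, which the paper's proof leaves implicit, but otherwise the arguments coincide.
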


\begin{proof}
Let $h = \tup{T, \so, \wro}$ be a full history whose isolation configuration is saturable.
First, we observe that checking if a graph $G = (V, E)$ is acyclic can be easily done with a DFS in $\mathcal{O}(|V| + |E|)$. Thus, the cost of checking acyclicity of both $\so \cup \wro$ (line~\ref{algorithm:checking-saturable:so-wr-acyclic}) and $\pco$ (line~\ref{algorithm:checking-saturable:co-acyclic}) is in $\mathcal{O}(|T| + |T|^2) = \mathcal{O}(|T|^2) \subseteq \mathcal{O}(|h|^2)$. Furthermore, by Lemma~\ref{lemma:necessary-co:polynomial-time}, the cost of executing Algorithm~\ref{algorithm:necessary-co} is in $\mathcal{O}(|h|^{k+1})$; where $k$ is a bound in $\isolation{h}$. Thus, checking $h$'s consistency with Algorithm~\ref{algorithm:checking-saturable} can be done in polynomial time.
\end{proof}

\newpage

\subsection{Proof of \Cref{th:k-expressive-np}}
\label{app:proof-bounded-complexity}

\kExpressiveNP*

The proof of \Cref{th:k-expressive-np} is structured in two parts: proving that the problem is in NP and proving that is NP-hard. The first part corresponds to \Cref{lemma:k-expressive:problem-np}; which is analogous as the proof of Lemma~\ref{lemma:np-complete:problem-in-np}. The second part, based on a reduction to $1$-in-$3$ SAT problem, corresponds to \Cref{lemma:k-expressive:poly-size,lemma:k-expressive:sat-consistent,lemma:k-expressive:consistent-sat}.

\begin{lemma}
\label{lemma:k-expressive:problem-np}
The problem of checking consistency for a bounded width client history $h$ with an isolation configuration stronger than $\RC$ and $\widthHistory{h} \geq 3$ is in NP.   
\end{lemma}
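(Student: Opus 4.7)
The plan is to exhibit a polynomial-size certificate together with a polynomial-time verification procedure, so that the problem lies in NP (the bounded-width assumption is not needed for membership, only for the hardness direction in \Cref{th:k-expressive-np}). By \Cref{def:consistency-non-full}, a client history $h = \tup{T, \so, \wro}$ is consistent iff there exist (i) a full-history witness $\overline{h} = \tup{T, \so, \overline{\wro}}$ of $h$, and (ii) a commit order $\co$ such that $\exec = \tup{\overline{h}, \co}$ satisfies all axioms prescribed by the isolation configuration. I would take the pair $\tup{\overline{\wro}\setminus\wro,\; \co}$ as the NP certificate.

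First I would argue the certificate has polynomial size: $\overline{\wro}\setminus\wro$ contains at most one new write-read edge per pair $(r,\key)$ with $r \in \readOp{h}$ and $\key\in\Vars$, so it is $\mathcal{O}(|\readOp{h}|\cdot|\Vars|)$; the commit order $\co$ is a total order on $T$, hence of size $\mathcal{O}(|T|^2)$. Both are polynomial in $|h|$.

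Next I would describe the verification procedure. Given the certificate, one (a)~checks that $\overline{h}$ is a full history extending $h$ by verifying, for every read $r$ and key $\key$ not read locally, that $\overline{\wro}_{\key}^{-1}(r)\!\downarrow$ and that the associated $\valuewr[\overline{\wro}]{\cdot}{\key}$ is defined (this is a syntactic check that, using the acyclicity of $\overline{\wro}$ guaranteed by \Cref{def:history}, can be performed bottom-up in polynomial time); (b)~checks the \emph{witness} condition of \Cref{def:client-history}, namely that for every $(w,r)\in\overline{\wro}_\key\setminus\wro_\key$ one has $\where{r}(\valuewr[\overline{\wro}]{w}{\key})=0$ (by assumption, a single $\whereName$ evaluation costs $\mathcal{O}(1)$); (c)~checks that $\co$ is acyclic, total on $T$, and contains $\so\cup\overline{\wro}$, as required by \Cref{def:execution}; and finally (d)~verifies, for every transaction $t\in T$, every read $r\in\readOp{t}$, and every axiom $a\in\isolation{h}(t)$, that $a(r)$ holds in $\exec$.

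The only delicate step is (d). Since the isolation configuration is bounded, each axiom is of the form in \Cref{eq:axiom}, \Cref{eq:axioms-constraints} with a fixed bounded number $k$ of quantified transactions/events, and each atomic relation in the visibility formula (built from $\po,\so,\wro,\co$ by union, composition and transitive closure, cf.\ \Cref{eq:axioms-constraints-rel}) can be precomputed in polynomial time (transitive closures of binary relations over $T\cup\events{h}$ take $\mathcal{O}(|h|^3)$ time). Evaluating the bounded-quantifier formula then reduces to iterating over $\mathcal{O}(|h|^{k})$ tuples, with $\mathcal{O}(1)$ work per tuple once the relations and $\valuewrName$ values are cached; this is analogous to \Cref{lemma:evaluate-v:polynomial-time}. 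Summed over all reads and axioms, verification runs in polynomial time, which concludes membership in NP.
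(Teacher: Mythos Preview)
Your proposal is correct and follows essentially the same approach as the paper: guess a full-history witness $\overline{h}$ together with a commit order $\co$ as a polynomial-size certificate, then verify in polynomial time. The only cosmetic difference is in step~(d): the paper packages verification by invoking \Cref{lemma:necessary-co:correctness} (consistency of $\exec$ is equivalent to acyclicity of $\computeCO{\overline{h}}{\co}$) together with \Cref{lemma:necessary-co:polynomial-time}, whereas you evaluate each axiom directly on $\exec$; both are polynomial-time and amount to the same computation.
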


\begin{proof}
Let $h = \tup{T, \so, \wro}$ a client history whose isolation configuration is stronger than $\RC$. Guessing a witness of $h$, $\overline{h}$, and an execution of $\overline{h}$, $\exec = \tup{\overline{h}, \co}$, can be done in $\mathcal{O}(|\Vars| \cdot |\events{h}|^2 + |T|^2) \subseteq \mathcal{O}(|h|^3)$. By \Cref{lemma:necessary-co:correctness}, checking if $\exec$ is consistent is equivalent as checking if $\computeCO{h'}{\co}$ is an acyclic relation. As by \Cref{lemma:necessary-co:polynomial-time}, \Cref{algorithm:necessary-co} requires polynomial time, we conclude the result.
\end{proof}

For showing NP-hardness, we will reduce $1$-in-$3$ SAT to checking consistency. Let $\varphi$ be a boolean formula with $n$ clauses and $m$ variables of the form $\varphi = \bigwedge\limits_{i = 1}^n (v_i^0 \lor v_i^1 \lor v_i^2)$; we construct a history $h_\varphi$ s.t. $h_\varphi$ is consistent if and only if $\varphi$ is satisfiable with exactly only one variable assigned the value $\btrue$. The key idea is designing a history with width $3$ that is stratified in \emph{rounds}, one per clause. In each round, three transactions, one per variable in the clause, ``compete'' to be first in the commit order. The one that precedes the other two correspond to the variable in $\varphi$ that is satisfied. %

First, we define the round $0$ corresponding to the variables of $\varphi$. For every variable $x_i \in \var{\varphi}, 1 \leq i \leq m$ we define an homonymous key $x_i$ that represents such variable. Doing an abuse of notation, we say that $x_i \in \var{\varphi}$. Then, we create two transactions $1_i$ and $0_i$ associated to the two states of $x_i$, $1$ and $0$. The former contains the event $\iinsert{\{x_i : 1, 1_i : 1\}}$ while the latter $\iinsert{\{x_i : 0, 0_i : 1\}}$. Both $1_i$ and $0_i$ write also on a special key named $1_i$ and $0_i$ respectively to indicate on the database that they have committed. 

Next, we define rounds $1-n$ representing each clause in $\varphi$. For each clause $C_i \coloneqq (v_i^0 \lor v_i^1 \lor v_i^2), 1 \leq i \leq n$, we define the round $i$. Round $i$ is composed by three transactions: $t_i^0$, $t_i^1$ and $t_i^2$, representing the choice of the variable among $v_i^0, v_i^1$ and $v_i^2$ that is selected in the clause $C_i$. Transactions $t_i^j$ write on keys $v_i^j$ and $v_i^{j+1\bmod 3}$ to preserve the structure of the clause $C_i$, as well on the special homonymous key $t_i^j$ to indicate that such transaction has been executed; in a similar way as we did in the round $0$. For that, we impose that transactions $t_i^j$ are composed of an event $\iselect{\lambda x: \mathtt{eq}(x, v_i^j, v_i^{j+1 \bmod 3}, v_i^{j+2 \bmod 3})}$ followed by an event $\iinsert{\{v_i^j : 0, v_i^{j+1 \bmod 3} : 1, t_i^j : -1\}}$.

The function $\mathtt{eq} : \Vals \times \Vars^3 \to \{\btrue, \bfalse\}$ is described in Equation~\ref{eq:k-expressive:eq-def} and assumes that $\Vals$ contains two distinct values $0$ and $1$ and that there is a predicate $\mathtt{val}: \Vals \to \{0,1\}$ that returns the value of a variable in the database. Intuitively, for any key $r$, if $a, b, c$ correspond to the three variables in a clause $C_i$ (possibly permuted), whenever $\lnot \mathtt{eq}(r, a, b, c)$ holds, we deduce that the value assigned at key $a$ is $1$ while on the other two keys the assigned value is $0$. Moreover, whenever $r$ refers to any of the special keys such as $0_i, 1_i$ or $t_i^j$, the predicate $\mathtt{eq}(r, a, b, c)$ always holds.

\begin{equation}
\label{eq:k-expressive:eq-def}
\mathtt{eq}(r, a, b, c) = \left\{
    \begin{array}{ll}
    \val{r} \neq 1 & \text{if } \keyof{r} = a \\
    \val{r} \neq 0 & \text{if } \keyof{r} = b \lor \keyof{r} = c \\
    \btrue & \text{if } \keyof{r} \in \{t_i^j \ | \ 1 \leq i \leq n, 0 \leq j \leq 2\} \\
    \btrue & \text{if } \keyof{r} \in \{1_i, 0_i \ | \ 1 \leq i \leq m\} \\
    \bfalse & \text{otherwise}
    \end{array}\right. 
\end{equation}

Finally, we add an initial transaction that writes on every key the value $1$. For that, we assume that $\Vars$ contains only one key per variable used in $\varphi$ as well as one key per aforementioned transaction. %
We denote by $T$ the set of all described transactions as well as by $\roundTransaction{t}$ to the round a transaction $t \in T$ belongs to.

We describe the session order in the history $h_\varphi$ using an auxiliary relation $\overline{\so}$. We establish that $(1_i, 1_j), (0_i, 0_j) \in \overline{\so}$ for any pair of indices $i, j, 1 \leq i < j \leq m$. We also enforce that $(t_i^j, t_{i+1}^j) \in \overline{\so}$, for every $1 \leq i \leq n, 0 \leq j, j' \leq 2$. Finally, we connect round $0$ with round $1$ by enforcing that $(1_m, t_1^0) \in \overline{\so}$ and $(0_m, t_1^1) \in \overline{\so}$. Then, we denote by $\so$ to the transitive closure of $\overline{\so}$. Note that $\so$ is a union of disjoint total orders, so it is acyclic. 

For describing the write-read relation, we distinguish between two cases: keys associated to variables in $\varphi$ or to a transaction in $T$. On one hand, for every key $x_i, 1 \leq i \leq m$, we define $\wro_{x_i} = \emptyset$. On the other hand, for every key $x$ associated to a transaction $t_x$ and every read event $r$ in a transaction $t$, we impose that $(t_x, r) \in \wro_x$ if $\roundTransaction{t_x} < \roundTransaction{t}$ while otherwise we declare that $(\init, r) \in \wro_x$. Then, we denote by $\wro = \bigcup_{x \in \Vars} \wro_x$ as well as by $h_\varphi$ to the tuple $h_\varphi = \tup{T, \so, \wro}$. A full depiction of $h_\varphi$ can be found in Figure~\ref{fig:k-expressive-np-code}.

We observe that imposing $\wro_x = \emptyset $ on every key $x \in \var{\varphi}$ ensures that, for any witness of $h_\varphi$, $\overline{h} = \tup{T, \so, \overline{\wro}}$, if $(w, r) \in \overline{\wro}$, then $\where{r}(\valuewr[\overline{\wro}]{w}{x}) = 0$. In particular, this implies that each transaction $t_i^j$ must read key $v_i^j$ from a transaction that writes $1$ as value while it also must read keys $v_i^{j+1 \bmod 3}$ and $v_i^{j+2\bmod 3}$ from a transaction that writes $0$ as value. Intuitively, this property shows that $\varphi$ is well-encoded in $h_\varphi$.

\begin{figure}[t]
\centering
\resizebox{\textwidth}{!}{
	\begin{tikzpicture}[>=stealth,shorten >=1pt,auto,node distance=3cm,
		semithick, transform shape,
		B/.style = {%
		decoration={brace, amplitude=1mm,#1},%
		decorate},
		B/.default = ,  %
		]
		\node[minimum width=14.5em, minimum height=1.5em, draw, rounded corners=2mm,outer sep=0, label={[font=\normalsize]3:$\mathbf{1_1}$}] (x1) at (0, .5) {$\iinsert{\{x_1 : 1, 1_1 : -1\}}$};	
		\node[minimum width=11em] (dots1) at (0, -.25) {\ldots};
		\node[minimum width=14.5em, minimum height=1.5em, draw, rounded corners=2mm,outer sep=0, label={[font=\normalsize]3:$\mathbf{1_k}$}] (xk1) at (0, -1) {$\iinsert{\{x_k : 1, 1_k : -1\}}$};
		\node[minimum width=11em] (dots1bis) at (0, -1.75) {\ldots};
		\node[minimum width=14.5em, minimum height=1.5em, draw, rounded corners=2mm,outer sep=0, label={[font=\normalsize]3:$\mathbf{1_m}$}] (xm1) at (0, -2.5) {$\iinsert{\{x_m : 1, 1_m : -1\}}$};
		\node[minimum width=14.5em, minimum height=1.5em, draw, rounded corners=2mm,outer sep=0, label={[font=\normalsize]2:$\mathbf{t_1^0}$}] (r11) at (0, -4.5) {
			\begin{tabular}{l}
			$\eselect(\lambda r: \mathtt{eq}(r, v_1^0, v_1^1, v_1^2))$ \\
			$\iinsert{\{v_1^0 :0, v_1^1 : 1, t_1^0: -1\}}$
			\end{tabular}
		};
		\node[minimum width=14.5em, minimum height=1.5em, ] (dots1b) at (0, -6) {\ldots};

		\node[minimum width=14.5em, minimum height=1.5em, draw, rounded corners=2mm,outer sep=0, label={[font=\normalsize]2:$\mathbf{t_n^0}$}] (rm1) at (0, -7.5) {
			\begin{tabular}{l}
			$\eselect(\lambda r: \mathtt{eq}(r, v_n^0, v_n^1, v_n^2))$ \\
			$\iinsert{\{v_n^0 :0, v_n^1 : 1, t_n^0 : 1\}}$

			\end{tabular}
		};

		\node[minimum width=14.5em, minimum height=1.5em, draw, rounded corners=2mm,outer sep=0, label={[font=\normalsize]3:$\mathbf{0_1}$}] (x2) at (6, 0.5) {$\iinsert{\{x_1 : 0, 0_1 : -1\}}$};	
		\node[minimum width=6em ] (dots2) at (6, -.25) {\ldots};	
		\node[minimum width=14.5em, minimum height=1.5em, draw, rounded corners=2mm,outer sep=0, label={[font=\normalsize]3:$\mathbf{0_k}$}] (xk2) at (6, -1) {$\iinsert{\{x_k : 0, 0_k : -1\}}$};
		\node[minimum width=6em ] (dots2bis) at (6, -1.75) {\ldots};	
		\node[minimum width=14.5em, minimum height=1.5em, draw, rounded corners=2mm,outer sep=0, label={[font=\normalsize]3:$\mathbf{0_m}$}] (xm2) at (6, -2.5) {$\iinsert{\{x_m : 0, 0_m: -1\}}$};
		\node[minimum width=14.5em, minimum height=1.5em, draw, rounded corners=2mm,outer sep=0, label={[font=\normalsize]2:$\mathbf{t_1^1}$}] (r12) at (6, -4.5) {
			\begin{tabular}{l}
			$\eselect(\lambda r: \mathtt{eq}(r, v_1^1, v_1^2, v_1^0))$
			\\
			$\iinsert{\{v_1^1 :0, v_1^2 : 1, t_1^1: -1\}}$
			\end{tabular}
		};
		\node[minimum width=14.5em] (dots2b) at (6, -6) {\ldots};
		
		\node[minimum width=14.5em, minimum height=1.5em, draw, rounded corners=2mm,outer sep=0, label={[font=\normalsize]2:$\mathbf{t_n^1}$}] (rm2) at (6, -7.5) {
			\begin{tabular}{l}
			$\eselect(\lambda r: \mathtt{eq}(r, v_n^1, v_n^2, v_n^0))$ \\
			$\iinsert{\{v_n^1 :0, v_n^2 : 1, t_n^1 : -1\}}$
			\end{tabular}
		};	

		\node[minimum width=14.5em, minimum height=1.5em, draw, rounded corners=2mm,outer sep=0, label={[font=\normalsize]2:$\mathbf{t_1^2}$}] (r13) at (12, -4.5) {
			\begin{tabular}{l}
			$\eselect(\lambda r: \mathtt{eq}(r, v_1^2, v_1^0, v_1^1))$ \\
			$\iinsert{\{v_1^2 :0, v_1^0 : 1, t_1^2: -1\}}$
			\end{tabular}
		};
		\node[minimum width=14.5em] (dots3b) at (12, -6) {\ldots};
		\node[minimum width=14.5em, minimum height=1.5em, draw, rounded corners=2mm,outer sep=0, label={[font=\normalsize]2:$\mathbf{t_n^2}$}] (rm3) at (12, -7.5) {
			\begin{tabular}{l}
			$\eselect(\lambda r: \mathtt{eq}(r, v_n^2, v_n^0, v_n^1))$ \\
			$\iinsert{\{v_n^2 :0, v_n^0 : 1, t_n^2\}}$

			\end{tabular}
		};
		\path (x1) edge[->, soColor, right] node [] {$\so$} (dots1);
		\path (dots1) edge[->, soColor, right] node [] {$\so$} (xk1);
		\path (xk1) edge[->, soColor, right] node [] {$\so$} (dots1bis);
		\path (dots1bis) edge[->, soColor, right] node [] {$\so$} (xm1);
		\path (xm1) edge[->, soColor, right] node [] {$\so$} (r11);
		\path (r11) edge[->, soColor, right] node [] {$\so$} (dots1b);
		\path (dots1b) edge[->, soColor, right] node [] {$\so$} (rm1);

		\path (x2) edge[->, soColor, right] node [] {$\so$} (dots2);
		\path (dots2) edge[->, soColor, right] node [] {$\so$} (xk2);
		\path (xk2) edge[->, soColor, right] node [] {$\so$} (dots2bis);
		\path (dots2bis) edge[->, soColor, right] node [] {$\so$} (xm2);
		\path (xm2) edge[->, soColor, right] node [] {$\so$} (r12);
		\path (r12) edge[->, soColor, right] node [] {$\so$} (dots2b);
		\path (dots2b) edge[->, soColor, right] node [] {$\so$} (rm2);

		\path (r13) edge[->, soColor, right] node [] {$\so$} (dots3b);
		\path (dots3b) edge[->, soColor, right] node [] {$\so$} (rm3);

		\path (xk2.south west) edge[->, wrColor] node [below , transform canvas={xshift=4, yshift=-8}] {$\wro_{0_k}$} (r11.north east);

		\path (xk1.south west) edge[->, wrColor, dashed, bend right=30] node [shift={(0,-0.7)}] {$\wro_{v_1^0}$} (r11.north west);
		\path (r11.east) edge[->, wrColor, dashed, transform canvas={yshift=-8}] node [below] {$\wro_{v_1^1}$} (r12.west);
		\path (r12.east) edge[->, wrColor, dashed, transform canvas={yshift=-8}] node [below] {$\wro_{v_1^2}$} (r13.west);
		\path ($(xk2.west) + (0, 0)$) edge[->, coColor, dashed, double equal sign distance] node [] {$\co$} ($(xk1.east) + (0, 0)$);

		\draw[B=mirror]
    	($(x1.north west)+(-0.75, 0)$) -- node[rotate=90, xshift=-2em, yshift=1em] {Round $0$}  ($(xm1.south west)+(-0.75, 0)$);

		\draw[B=mirror]
    	($(r11.north west)+(-0.75, 0)$) -- node[rotate=90, xshift=-2em, yshift=1em] {Round $1$}  ($(r11.south west)+(-0.75, 0)$);

		\draw[B=mirror]
		($(rm1.north west)+(-0.75, 0)$) 
		-- node[rotate=90, xshift=-2em, yshift=1em] {Round $n$}   ($(rm1.south west)+(-0.75, 0)$);

	\end{tikzpicture}  
}

\vspace{-2mm}
\caption{Description of the history $h_\varphi$ from Theorem~\ref{th:k-expressive-np}. Dashed edges only belong to a possible consistent witness of $h_\varphi$, where we assume $v_1^0 = x_k$. Transaction $t_1^0$ reads $v_1^0, v_1^1$ and $v_1^2$ from round $0$; imposing some constraints on the transactions that write them. Due to axiom $\RC$'s definition, transaction $t_1^1$ must read $v_1^1$ from $t_1^0$ while transaction $t_1^2$ must read $v_1^1$ from $t_1^1$.
}
\label{fig:k-expressive-np-code}
\end{figure}

The proof is divided in four steps: Lemma~\ref{lemma:k-expressive:poly-size} proves that the $h_\varphi$ is a polynomial-size transformation of $\varphi$, Lemma~\ref{lemma:k-expressive:history} proves that the $h_\varphi$ is indeed a history and Lemmas~\ref{lemma:k-expressive:sat-consistent} and~\ref{lemma:k-expressive:consistent-sat} prove that $h_\varphi$ is consistent if and only if $\varphi$ is $1$-in-$3$ satisfiable.

\begin{lemma}
\label{lemma:k-expressive:poly-size}
$h_\varphi$ is a polynomial size transformation on the length of $\varphi$.
\end{lemma}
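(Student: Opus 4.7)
The plan is to bound each structural component of $h_\varphi$ — transactions, events, keys, session-order tuples, and write-read tuples — by a polynomial in $|\varphi|$. Since $\varphi$ has $n$ clauses of three literals and $m \leq 3n$ variables, $|\varphi| = \Theta(n)$, so any polynomial bound in $n$ and $m$ is automatically polynomial in $|\varphi|$.

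First I would enumerate the transactions: $T$ consists of the single transaction $\init$, the $2m$ round-$0$ transactions $\{1_i, 0_i : 1 \leq i \leq m\}$, and the $3n$ round-$i$ transactions $\{t_i^j : 1 \leq i \leq n,\ 0 \leq j \leq 2\}$, giving $|T| = 1 + 2m + 3n = O(n+m)$. Each transaction contains a constant number of events (a begin, at most one SELECT, one INSERT, and a commit), so $|\events{h_\varphi}| = O(n+m)$. The key set $\Vars$ consists of $m$ variable keys $x_i$, $2m$ marker keys $\{1_i, 0_i\}$, and $3n$ marker keys $\{t_i^j\}$, totalling $O(n+m)$.

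Next I would bound the session order. The underlying $\overline{\so}$ is a union of three chains, each of length at most $m+n$ (the left session runs through $1_1, \ldots, 1_m, t_1^0, \ldots, t_n^0$; the middle session through $0_1, \ldots, 0_m, t_1^1, \ldots, t_n^1$; and the right session through $t_1^2, \ldots, t_n^2$), so $|\overline{\so}| = O(m+n)$ and $|\so| = O((m+n)^2)$. For the write-read relation, $\wro_{x_i} = \emptyset$ on every variable key, while for each of the $O(n+m)$ marker keys and each of the $O(n)$ read events in the transactions $t_i^j$ there is exactly one write-read tuple, yielding $|\wro| = O(n(n+m))$.

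Combining these bounds gives $|h_\varphi| = O((n+m)^2) = O(|\varphi|^2)$, and the construction itself is clearly executable in polynomial time: each component is produced by an explicit iteration over variables, clauses, and transactions. There is no real obstacle; the only mild subtlety is that $\wro$ must be counted by pairing read events with keys rather than with arbitrary write events, which is immediate since $\wro_x^{-1}$ is a partial function.
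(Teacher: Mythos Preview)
Your proposal is correct and follows the same approach as the paper: bound each component of $h_\varphi$ (transactions, events, keys, $\so$, $\wro$) by a polynomial in $n$ and $m$, using $m \leq 3n$. Your counts are in fact slightly tighter than the paper's (you get $|\wro| = O(n(n+m))$ while the paper simply notes $\wro \subseteq \Vars \times T \times T$ to conclude $|\wro| \in O(n^3)$, and you count $|T| = 1 + 2m + 3n$ where the paper writes $6n + 2m + 1$), but the argument structure is identical.
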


\begin{proof}
If $\varphi$ has $n$ clauses and $m$ variables, $h_\varphi$ employs $6n+2m+1$ transactions. As $m \leq 3n$, $|T| \in \mathcal{O}(n)$. The number of variables, $|\Vars| = m + |T|$, so $|\Vars| \in \mathcal{O}(n)$. As every transaction has at most two events, $|\events{h_\varphi}| \in \mathcal{O}(n)$. Moreover, $\wro \subseteq \Vars \times T \times T$ and $\so \subseteq T \times T$, so $|\wro| \in \mathcal{O}(n^3)$ and $|\so| \in \mathcal{O}(n^2)$. Thus, $h_\varphi$ is a polynomial transformation of $\varphi$.
\end{proof}

For proving that $h_\varphi$ is a history, by Definition~\ref{def:history} it suffices to prove that $\so \cup \wro$ is an acyclic relation. Indeed, by our choice of $\wro$, for every key $x$, $\wro_x^{-1}$ is a partial function that, whenever it is defined, associates reads to writes on $x$. Hence, from Lemma~\ref{lemma:k-expressive:history} we conclude that $h_\varphi$ is a history.

\begin{lemma}
\label{lemma:k-expressive:history}
The relation $\so \cup \wro$ is acyclic.
\end{lemma}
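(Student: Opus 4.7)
The plan is to exhibit a ranking on transactions that is compatible with every edge of $\so\cup\wro$, with $\wro$ strictly increasing it, so that any hypothetical cycle must consist purely of $\so$ edges sitting at a single rank, and then argue that the intra-rank $\so$ is trivially acyclic.

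Concretely, I would use $\rho(t)=\roundTransaction{t}$ as the rank, with $\rho(\init)=-1$, $\rho(1_i)=\rho(0_i)=0$, and $\rho(t_i^j)=i$. The first step is to verify, directly from the definition of $\wro$ in $h_\varphi$, that whenever $(t,t')\in\wro$ then $\rho(t)<\rho(t')$: for keys $x$ associated to a transaction $t_x$, the write-read is either from $\init$ (rank $-1$) or from $t_x$ to some $t$ with $\roundTransaction{t_x}<\roundTransaction{t}$ by construction, and keys $x_i$ associated to variables of $\varphi$ carry no $\wro$ edges at all.

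The second step is to check that every $\so$ edge satisfies $\rho(t)\le\rho(t')$: the two chains $1_1\!\to\!\cdots\!\to\!1_m$ and $0_1\!\to\!\cdots\!\to\!0_m$ stay at rank $0$; the bridges $(1_m,t_1^0)$ and $(0_m,t_1^1)$ go from rank $0$ to rank $1$; and the chains $t_i^j\!\to\!t_{i+1}^j$ for $j\in\{0,1,2\}$ strictly increase rank. The third step is to observe that the $\so$-edges that are rank-preserving all lie inside round $0$, and there they form two disjoint linear orders on $\{1_1,\ldots,1_m\}$ and $\{0_1,\ldots,0_m\}$, hence acyclic.

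The final step combines these observations: summing $\rho(t')-\rho(t)$ around any cycle of $\so\cup\wro$ yields $0$, but every summand is non-negative and every $\wro$-summand is strictly positive, so no $\wro$ edge can appear in a cycle; moreover every $\so$ edge in the cycle must be rank-preserving, which forces the whole cycle to live either in round $0$'s two disjoint chains or in the (edge-free) restriction to a single round $\ge 1$. Neither supports a cycle, so $\so\cup\wro$ is acyclic. I do not anticipate any real obstacle here — the only mild subtlety is to remember that round $0$ contains two independent $\so$-chains, so the argument is ``acyclic within each rank'' rather than ``at most one vertex per rank''.
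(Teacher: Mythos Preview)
Your argument is correct and uses the same core idea as the paper's proof—the round function as a monotone rank with respect to $\so\cup\wro$. The paper phrases this as an induction on rounds rather than a potential-sum argument, but the content is identical: edges never decrease the round, $\wro$ edges strictly increase it, and the only rank-preserving edges live in round~$0$, where $\so$ consists of two disjoint chains.
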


\begin{proof}
For proving that $\so \cup \wro$ is acyclic, we reason by induction on the number of clauses. In particular, we show that for every pair of transactions $t, t'$ if $\roundTransaction{t'} \leq i$ and $(t, t') \in \so \cup \wro$, then $\roundTransaction{t} \leq i$ and $(t', t) \not\in \so \cup \wro$. 

\begin{itemize}

    \item \underline{Base case:} The base case refers to round $0$; which contains $\init$ and transactions $0_j, 1_j, 1 \leq j \leq m$. We observe that transactions in round $0$ do not contain any read event. Hence, $(t, t') \in \so$. In such case, the result immediately holds by construction of $\so$.

    \item \underline{Inductive case:} Let us suppose that the induction hypothesis holds for every $1 \leq i \leq k \leq n$ and let us prove it also for $k+1 \leq n$. If $\roundTransaction{t'} < k +1$, $\roundTransaction{t'} \leq k$ and the result holds by induction hypothesis; so we can assume without loss of generality that $\roundTransaction{t'} = k+1$. By construction of both $\so$ and $\wro$, if $(t, t') \in \so \cup \wro$, $\roundTransaction{t} < \roundTransaction{t'}$. Hence, $\roundTransaction{t} \leq k$. By induction hypothesis on $t$, if $(t', t) \in \so \cup \wro$, $\roundTransaction{t'} \leq k  < k + 1 = \roundTransaction{t'}$; which is impossible. Thus, we conclude that $(t', t) \not\in \so \cup \wro$.

\end{itemize}
\end{proof}

\begin{lemma}
\label{lemma:k-expressive:sat-consistent}
If $\varphi$ is $1$-in-$3$ satisfiable then $h_\varphi$ is consistent.
\end{lemma}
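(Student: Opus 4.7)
The plan is to construct, from a $1$-in-$3$ satisfying assignment $\sigma$ of $\varphi$, a commit order $\co$ on the transactions of $h_\varphi$ together with a witness $\overline{h}_\varphi=\tup{T,\so,\overline{\wro}}$ such that the execution $\tup{\overline{h}_\varphi,\co}$ is consistent. For the commit order I will place $\init$ first, then schedule all round-$0$ transactions $\{1_j,0_j\}_{j=1}^m$ in an interleaving that respects $\so$ and, for each $j$, places the transaction matching $\sigma(x_j)$ last (so $0_j<_{\co}1_j$ iff $\sigma(x_j)=1$). This ensures that the database state after round $0$ equals $\sigma$. Inside each round $i\geq 1$, letting $j_i$ be the unique index with $\sigma(v_i^{j_i})=1$, I commit $t_i^{j_i}, t_i^{(j_i+1)\bmod 3}, t_i^{(j_i+2)\bmod 3}$ in this cyclic order; a direct computation on the $\einsert$ clauses of the three transactions shows that the state of $v_i^0,v_i^1,v_i^2$ is cyclically rotated and, after the three commits, returns to the $\sigma$-state. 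Hence the invariant ``state at the start of round $i$ equals $\sigma$'' holds throughout.

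Next I would build the witness $\overline{\wro}$. For every read event $r\in t_i^j$ and every variable key $v$ with $\wro_v^{-1}(r)$ undefined, I set $\overline{\wro}_v^{-1}(r)$ to the most recent preceding write on $v$ in $\co$ whose value falsifies the predicate $\where{r}$. The state analysis above guarantees that such a writer always exists: for $t_i^{j_i}$ the three variables $v_i^{j_i},v_i^{j_i+1},v_i^{j_i+2}$ are read, respectively, from the round-$0$ transactions $1_k, 0_{k'}, 0_{k''}$ (which have the required values $1,0,0$); for $t_i^{j_i+1}$ the variables $v_i^{j_i+1}$ and $v_i^{j_i}$ (whose required values $1$ and $0$ have just been installed by $t_i^{j_i}$) are read from $t_i^{j_i}$, while $v_i^{j_i+2}$ is still read from $0_{k''}$; and symmetrically for $t_i^{j_i+2}$. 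For any key $v\notin\{v_i^0,v_i^1,v_i^2\}$ the predicate $\mathtt{eq}$ is identically false on that key, so any preceding writer is a valid witness. By definition $h_\varphi\subseteq \overline{h}_\varphi$ and, by choice of values, $\overline{h}_\varphi$ is a witness of $h_\varphi$ in the sense of Definition~\ref{def:client-history}.

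It remains to check that $\exec=\tup{\overline{h}_\varphi,\co}$ is consistent. First, $\co$ is a total order extending $\so$ (by construction) and every witness writer is committed before its reader, so $\so\cup\overline{\wro}\subseteq\co$; hence $\exec$ is an execution of $\overline{h}_\varphi$. Second, each axiom of the isolation configuration must be verified. Since every read in the witness is taken from the latest $\co$-preceding writer whose value agrees with the read value, any visible writer $t_2$ of the same key that is not the chosen source $t_1$ is necessarily $\co$-earlier than $t_1$, which is exactly what the axioms of $\SER,\SI,\PRE,\RA,\RC$ require (the visibility relations are all sub-relations of $\co$ composed with $\so,\wro,\po$). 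The configuration being stronger than $\RC$ means every transaction uses one of these axioms, so consistency of $\exec$ follows.

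The main obstacle I anticipate is the bookkeeping required to verify axioms on the visibility paths $(\so\cup\wro);\po^*$ and $\co^*;(\so\cup\wro)$ for reads of the special keys $1_j, 0_j, t_{i'}^{j'}$ that occur within the same round as the corresponding writer: one must rule out visibility chains that would force $\co$-edges backwards through $\init$. This is handled by the fact that every $t_i^j$ contains a single read event (the $\eselect$, which is $\po$-minimal) so the $\po^*$-extension contributes nothing new within these transactions, and by the observation that the $\so$-edges between distinct sessions go only forward in our $\co$. With these two facts the required $\co$-inequalities coming from the axioms reduce to the ordering constraints we already imposed when defining $\co$ (in particular, for every variable $x_k$, whichever of $0_k,1_k$ is $\co$-last agrees with $\sigma(x_k)$, matching exactly the $\RC$-forced ordering of round-$0$ writers).
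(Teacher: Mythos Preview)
Your overall strategy coincides with the paper's: construct a commit order $\co$ layer by layer from the $1$-in-$3$ assignment (round~$0$ first with the transaction matching $\sigma(x_j)$ last for each $j$, then each round $i$ ordered cyclically starting from the satisfied index $j_i$), extend $h_\varphi$ to a full witness by choosing sources among $\co$-preceding writers, and conclude via $\SER$ (Corollary~\ref{corollary:ser-strongest}) that every weaker configuration is also satisfied. Two details in your sketch need repair.

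First, your claim that $t_i^{j_i}$ reads the three variable keys ``from the round-$0$ transactions $1_k,0_{k'},0_{k''}$'' is only correct when none of $v_i^{j_i},v_i^{j_i+1},v_i^{j_i+2}$ occurs in an earlier clause. In general the $\co$-latest writer of such a key before round~$i$ lies in some earlier round $i'<i$, not in round~$0$; picking a round-$0$ source in that case would immediately violate $\SER$. Your state invariant (values equal $\sigma$ at the start of each round) is exactly what guarantees that this latest writer still carries the required value; the paper makes this explicit via the transactions $t_x^i=\max_{\co}\{t:\writeVar{t}{x}\land\roundTransaction{t}\leq i\}$ and an induction showing $\valuewr[\overline{\wro}]{t_x^i}{x}=(x,\sigma(x))$ for every round~$i$ (Lemma~\ref{lemma:k-expressive:sat-cons:witness}).

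Second, your $\SER$ argument is circular as written. You define the source $t_1$ as the latest \emph{falsifying} writer and then assert that any visible writer $t_2$ is $\co$-earlier than $t_1$; but a $\co$-later non-falsifying writer would be visible under $\SER$ and would not satisfy $(t_2,t_1)\in\co$. The correct order of argument---which the paper follows---is the reverse: take the absolutely $\co$-latest preceding writer $w_x^r=\max_{\co}\{t:\writeVar{t}{x}\land(t,\trans{r})\in\co\}$ as the source, so that $\SER$ holds by construction, and then use the state invariant to show that this latest writer's value always falsifies $\where{r}$, so that the resulting full history is indeed a witness of $h_\varphi$. All the ingredients are present in your sketch; they just need to be reassembled in this order.
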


\begin{proof}
Let $\alpha: \Vars \to \{0,1\}$ an assignment that makes $1$-in-$3$ satisfiable. To construct a witness of $h_\varphi$ we define a write-read relation $\overline{\wro}$ that extends $\wro$ and a total order on its transactions. For that, we first define a total order $\co$ between the transactions in $T$. In Equation~\ref{eq:k-expressive:eq-def} we define two auxiliary relations $\hat{\mathtt{r}}$ and $\hat{\mathtt{b}}$ based on $\alpha$ that totally orders the transactions that belongs to the same round. 

For every clause $C_i, 1 \leq i \leq n$ let $j_i$ be the unique index s.t. $\alpha(v_i^{j_i}) = 1$. Such index allow us to order the transactions in the round $i$: $t_i^{j_i}$ preceding $t_i^{j_i +1 \bmod 3}$ while $t_i^{j_i +1 \bmod 3}$ preceding $t_i^{j_i +2 \bmod 3}$. Intuitively, $t_i^{j_i}$ must precede the other two transactions in the total order as $v_i^j$ is the variable that is satisfied. Then, we connect every pair of consecutive rounds thanks to relation $\hat{\mathtt{c}}_1$.

For transactions in round $0$, we enforce that transactions associated to the same variable are totally ordered using $\alpha$. In particular, for every $i, 1 \leq i \leq m$, $0_i$ precedes $1_i$ in $\hat{\mathtt{b}}$ if and only if $\alpha(v_i) = 1$. Then, we connect every pair tuple in $\hat{\mathtt{b}}$ with relation $\hat{\mathtt{c}}_2$. Finally, we connect $\init$ with transactions in round $0$ as well as round $0$ with round $1$ thanks to relation $\hat{\mathtt{c}}_3$.

\begin{align}
\label{eq:k-expressive:auxiliary-relations}
\hat{\mathtt{r}} = \left\{\left.\begin{array}{l}
    (t_i^{j_i}, t_i^{j_i+1 \bmod 3})  \\
    (t_i^{j_i+1 \bmod 3},t_i^{j_i+2 \bmod 3})
\end{array}\right| 
\begin{array}{l}
    1 \leq i \leq n, 0 \leq j_i \leq 2  \\
    \alpha(v_i^{j_i}) = 1
\end{array}
\right\} \nonumber\\
\hat{\mathtt{b}} = \{(0_i, 1_i) \ | \ x_i \in \Vars \land \alpha(x_i) = 1\} \cup \{(1_i, 0_i) \ | \ x_i \in \Vars \land \alpha(x_i) = 0\} \nonumber\\
\hat{\mathtt{c}}_1 =  \{(t_i^{j_i + 2 \bmod 3}, t_{i+1}^{j_{i+1}}) \ | \ 1 \leq i < n, 0 \leq j_i, j_{i+1} \leq 2, \alpha(v_i^{j_i}) = 1 = \alpha(v_i^{j_{i+1}})\} \nonumber \\
\hat{\mathtt{c}}_2 =
\{(1_i, 0_j), (1_i, 1_j), (0_i, 0_j), (0_i, 1_j) \ | \ 1 \leq i < j \leq m\} \nonumber\\
\hat{\mathtt{c}}_3 = \{(\init, 0_1), (\init, 1_1)\} \cup \{(1_m, t_1^{j_1}), (0_m, t_1^{j_1}) \ | \ 0 \leq j_1 \leq 2, \alpha(v_i^{j_1}) = 1\}
\end{align}

Let $\co = (\hat{\mathtt{r}} \cup\hat{\mathtt{c}}_1 \cup \hat{\mathtt{b}} \cup \hat{\mathtt{c}}_2 \cup \hat{\mathtt{c}}_3)^+$. The proof of \Cref{lemma:k-expressive:sat-consistent} concludes thanks to \Cref{lemma:k-expressive:sat-cons:co-total-order,lemma:k-expressive:sat-cons:full-history}, \Cref{prop:constraint-co} and \Cref{corollary:ser-strongest}. First, \Cref{lemma:k-expressive:sat-cons:co-total-order} proves that the relation $\co$ is a total order between transactions. Then, \Cref{lemma:k-expressive:sat-cons:full-history} shows that $\co$ allow us define $\overline{h}$, a witness of $h_\varphi$. And finally, with the aid of \Cref{prop:constraint-co} and \Cref{corollary:ser-strongest} we conclude that $\overline{h}$ is consistent; so it is a consistent witness of $h_\varphi$.

\end{proof}

\begin{lemma}
\label{lemma:k-expressive:sat-cons:co-total-order}
The relation $\co$ is a total order.
\end{lemma}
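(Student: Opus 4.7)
The plan is to exhibit an explicit linear enumeration of all transactions and show that $\co$ coincides with the strict total order it induces. Define a rank function $\mathsf{rk}:T \to \mathbb{N}$ as follows: set $\mathsf{rk}(\init) = 0$; for round $0$, enumerate the pair $\{0_i,1_i\}$ at positions $2i-1, 2i$ (so all of round~$0$ occupies ranks $1,\ldots,2m$), ordering the two elements of each pair by $\hat{\mathtt{b}}$ (i.e., $0_i$ before $1_i$ iff $\alpha(x_i)=1$); and for each round $i \in \{1,\ldots,n\}$, place $t_i^{j_i}, t_i^{j_i+1 \bmod 3}, t_i^{j_i+2 \bmod 3}$ consecutively, where $j_i$ is the unique index with $\alpha(v_i^{j_i})=1$ (such a $j_i$ exists since $\alpha$ is a $1$-in-$3$ witness for $\varphi$).

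First I would verify, by direct inspection of \Cref{eq:k-expressive:auxiliary-relations}, that every pair of transactions $(t,t')$ with $\mathsf{rk}(t')=\mathsf{rk}(t)+1$ lies in $\hat{\mathtt{r}}\cup\hat{\mathtt{c}}_1\cup\hat{\mathtt{b}}\cup\hat{\mathtt{c}}_2\cup\hat{\mathtt{c}}_3$. Specifically: the step $\init \to $ first round-$0$ element is in $\hat{\mathtt{c}}_3$; intra-pair steps in round $0$ are in $\hat{\mathtt{b}}$; the step from one variable pair to the next is in $\hat{\mathtt{c}}_2$; the step from $\{0_m,1_m\}$ into round $1$ is in $\hat{\mathtt{c}}_3$; intra-round steps for $i\geq 1$ are in $\hat{\mathtt{r}}$; and the step from the last element of round $i$ to the first of round $i+1$ is in $\hat{\mathtt{c}}_1$ (the endpoint of this step is precisely $t_i^{j_i+2\bmod 3}$, which matches the source shape in $\hat{\mathtt{c}}_1$). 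Consequently, the strict total order $<$ induced by $\mathsf{rk}$ is contained in $\co$.

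Conversely, I would argue that every edge in $\hat{\mathtt{r}}\cup\hat{\mathtt{c}}_1\cup\hat{\mathtt{b}}\cup\hat{\mathtt{c}}_2\cup\hat{\mathtt{c}}_3$ is rank-increasing: $\hat{\mathtt{c}}_2$ and $\hat{\mathtt{c}}_3$ strictly increase the round (round $0$ to round $0$ but later variable, or round $0$ to round $1$); $\hat{\mathtt{c}}_1$ goes from round $i$ to round $i+1$; and $\hat{\mathtt{r}}$, $\hat{\mathtt{b}}$ move forward inside a round according to our enumeration. Hence every edge in $\co=(\hat{\mathtt{r}}\cup\hat{\mathtt{c}}_1\cup\hat{\mathtt{b}}\cup\hat{\mathtt{c}}_2\cup\hat{\mathtt{c}}_3)^+$ is also rank-increasing, so $\co\subseteq{<}$, giving $\co={<}$ as strict total orders on $T$.

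The main obstacle I anticipate is the bookkeeping for $\hat{\mathtt{c}}_1$: I need the source $t_i^{j_i+2\bmod 3}$ of each jump to be precisely the transaction I place last in the enumeration of round $i$, and the target $t_{i+1}^{j_{i+1}}$ to be the one I place first in round $i+1$. This is exactly enforced by the definition of $\hat{\mathtt{r}}$ (the intra-round chain $t_i^{j_i} \to t_i^{j_i+1\bmod 3} \to t_i^{j_i+2\bmod 3}$) together with the choice of $j_i$ as the index satisfied by $\alpha$, so the compatibility holds. A minor subtlety is the mixed-indexing in $\hat{\mathtt{c}}_1$ involving both $j_i$ and $j_{i+1}$; I would read $\alpha(v_i^{j_i})=1=\alpha(v_{i+1}^{j_{i+1}})$ as the intended condition, since each clause has exactly one satisfied literal under $\alpha$.
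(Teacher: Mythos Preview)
Your argument is correct and is in fact cleaner than the paper's. The paper proves the lemma by a somewhat involved double induction: an outer induction on rounds showing that if $(t,t')\in\co$ with $\roundTransaction{t'}\leq i$ then $\roundTransaction{t}\leq i$ and $(t',t)\notin\co$, together with an inner induction over the variable index to handle round~$0$. Your approach instead exhibits an explicit rank function, checks that consecutive ranks are always linked by a generator edge (giving ${<}\subseteq\co$), and checks that every generator edge is rank-increasing (giving $\co\subseteq{<}$). This sandwiching argument is more transparent and avoids the nested induction; it also makes totality immediate from the bijectivity of $\mathsf{rk}$, whereas the paper's inductive statement yields antisymmetry directly but leaves totality to be read off from the fact that the generators already link all consecutive elements. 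Your reading of the typo in $\hat{\mathtt{c}}_1$ (the condition should be $\alpha(v_i^{j_i})=1=\alpha(v_{i+1}^{j_{i+1}})$) is the intended one.
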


\begin{proof}
For proving that $\co$ is a total order, we show by induction that if $(t, t') \in \co$ and $\roundTransaction{t'} \leq i$, then $\roundTransaction{t} \leq i$ and $(t', t) \not\in \co$.

\begin{itemize}
    \item \underline{Base case:} We observe that by construction of $\co$, $t' \neq \init$. We prove the base case by a second induction that if there exists $i', 1 \leq i' \leq m$ s.t. $t'\in \{0_{i'}, 1_{i'}\}$ and $(t, t') \in \co$ then either $t = \init$ or there exists $i \leq i'$ s.t. $t \in \{0_i, 1_i\}$ and $(t', t) \not\in \co$. 
    \begin{itemize}
        \item \underline{Base case:} Let us suppose that $\alpha(x_0) = 1$ as the other case is symmetric. If $t = \init$, $(t', t) \not\in \co$ as $\init$ is minimal in $\co$. If not, then $t' = 1_1$ and $t = 0_1$. We conclude once more that $(t, t') \not\in \co$ as $0_1$ only have $\init$ as a $\co$-predecessor; which is $\co$-minimal.
        
        \item \underline{Induction hypothesis:} Let us suppose that the induction hypothesis holds for every $1 \leq i \leq k \leq m$ and let us prove it also for $k+1 \leq m$. If $i' < k$, we conclude the result by induction hypothesis; so we can assume that $i' = k$. Moreover, as $\init$ is $\co$-minimal, we can assume without loss of generality that $t \neq \init$. Thus, by construction of $\co$, there must exists $i, 1 \leq i \leq m$ s.t. $t \in \{0_i, 1_i\}$. In particular, $i \leq i'$. Thus, if $i < i'$ and $(t', t)$ would be in $\co$, by induction hypothesis on $t$ we would deduce that $i' \leq i < i'$; which is impossible. Hence, we can assume that $i' = i$. Let us assume that $\alpha(x_i) = 1$ as the other case is symmetric. Thus, we deduce that $t =0_i$ and $t' = 1_i$. We observe that $(t', t) \not\in \hat{\mathtt{r}} \cup\hat{\mathtt{c}}_1 \cup \hat{\mathtt{b}} \cup \hat{\mathtt{c}}_2 \cup \hat{\mathtt{c}}_3$. As $T$ is finite, if $(t', t) \in \co$, there would exist a transaction $t'' \neq t'$ s.t $(t'', t)\in \hat{\mathtt{r}} \cup\hat{\mathtt{c}}_1 \cup \hat{\mathtt{b}} \cup \hat{\mathtt{c}}_2 \cup \hat{\mathtt{c}}_3$ and $(t', t'') \in \co$. But in such case, either $t'' = \init$ or there would exists an integer $i'', 1 \leq i'' < i \leq m$ s.t. $t'' \in \{0_{i''}, 1_{i''}\}$; which is impossible by induction hypothesis. In conclusion, $(t', t) \not\in \co$.
        
    \end{itemize}

    \item \underline{Inductive case:} Let us suppose that the induction hypothesis holds for every $1 \leq i \leq k \leq n$ and let us prove it also for $k+1 \leq n$. Let thus $t,t'$ transactions s.t. $\roundTransaction{t}' \leq k+1$ and $(t, t') \in \co$. If $\roundTransaction{t'} < k +1$, $\roundTransaction{t'} \leq k$ and the result holds by induction hypothesis; so we can assume without loss of generality that $\roundTransaction{t'} = k+1$. By construction of $\co$, $\roundTransaction{t} \leq k+1$. If $\roundTransaction{t} \leq k$ and $(t', t) \in \co$, by induction hypothesis on $t$ we obtain that $\roundTransaction{t'} \leq k < k+1 = \roundTransaction{t'}$; which is impossible. Thus, we can also assume without loss of generality that $\roundTransaction{t} = k+1$. In such case, we observe that $\hat{\mathtt{c}}_1$, $\hat{\mathtt{b}}$ and $\hat{\mathtt{c}}_1$ do not order transactions belonging to the same round. Hence if $(t, t') \in \co$ and $(t', t) \in \co$, we deduce that $(t, t') \in \hat{\mathtt{r}}$ and $(t', t) \in \hat{\mathtt{r}}$. However, by construction of $\hat{\mathtt{r}}$, this is impossible, so we conclude once more that $(t', t) \not\in \co$.
\end{itemize}
\end{proof}

Next, we construct a full history $\overline{h}$ using $\co$ that extends $h_\varphi$. For every key $x$ and $\iread$ event $r$, we define $w_x^r$ as follows:

\begin{equation}
w_x^r = \max_{\co} \{t \in T \ | \ \writeVar{t}{x} \land (t, r) \in \co\}
\end{equation}

Observe that $w_x^r$ is well-defined as $\co$ is a total order and $\init$ write every key. For each key $x \in \var{\varphi}$, we define the relation $\overline{\wro}_x = \{(w_x^r, r) \ | \ r \in \readOp{h}\}$. Then, we define the relation $\overline{\wro} = \bigcup_{x \in \var{\varphi}} \overline{\wro}_x \cup \wro$ as well as the history $\overline{h} = \tup{T, \so, \overline{\wro}}$. Lemma~\ref{lemma:k-expressive:sat-cons:full-history} proves that $\overline{h}$ is indeed a full history while Lemma~\ref{lemma:k-expressive:sat-cons:witness} shows that $\overline{h}$ is a witness of $h_\varphi$.

\begin{lemma}
\label{lemma:k-expressive:sat-cons:full-history}
$\overline{h}$ is a full history.
\end{lemma}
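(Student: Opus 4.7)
The plan is to verify the four conditions that define a full history (Definition~\ref{def:full-history} plus the conditions in Definition~\ref{def:history}): $\overline{\wro}_x^{-1}$ is a partial function for every $x$; every $(w,r)\in\overline{\wro}_x$ satisfies $\valuewr[\overline{\wro}]{w}{x}\neq\bot$; $\so\cup\overline{\wro}$ is acyclic; and for every key $x$ and read $r$ (not reading $x$ locally), $\overline{\wro}_x^{-1}(r)$ is defined.

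The first two conditions are almost immediate from the construction. For each $r\in\readOp{h}$ and each $x\in\var{\varphi}$, the set $\{t\in T : \writeVar{t}{x}\wedge (t,r)\in\co\}$ is nonempty because $\init$ writes every key and $\co$ is a strict total order (Lemma~\ref{lemma:k-expressive:sat-cons:co-total-order}), so $w_x^r$ is uniquely defined; thus $\overline{\wro}_x^{-1}$ stays a partial function. Since $\writeVar{w_x^r}{x}$ holds, $\valuewr{w_x^r}{x}\neq\bot$ follows directly. Fullness is established by splitting on the key: for $x\in\var{\varphi}$, a pair is explicitly added for every $r$; for the transaction-associated keys ($1_i,0_i,t_i^j$), the original $\wro$ already assigns every read $r$ either to the transaction $t_x$ owning the key (when $\roundTransaction{t_x}<\roundTransaction{\trans{r}}$) or to $\init$, so $\overline{\wro}_x^{-1}(r)$ is defined. (No local reads arise since inside each $t_i^j$ the $\eselect$ precedes the $\einsert$ in $\po$.)

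The main obstacle, and the step I would devote most care to, is acyclicity of $\so\cup\overline{\wro}$. I would prove this by showing $\so\cup\overline{\wro}\subseteq\co$ at the transaction level, after which acyclicity is immediate from the totality of $\co$. Containment of $\overline{\wro}$ is direct: for $x\in\var{\varphi}$, $(w_x^r,\trans{r})\in\co$ by choice of $w_x^r$, and in particular $w_x^r\neq\trans{r}$ since $\co$ is strict; for transaction-associated keys, any $\wro$-dependency lands either at $\init$ (which is $\co$-minimal thanks to $\hat{\mathtt{c}}_3$) or at a transaction $t_x$ in a strictly earlier round. For $\so\subseteq\co$, I would perform a short case analysis following the construction of $\so$: pairs $(0_i,0_j),(1_i,1_j)$ with $i<j$ are captured by $\hat{\mathtt{c}}_2$; the edges $(1_m,t_1^0),(0_m,t_1^1)$ are in $\hat{\mathtt{c}}_3$ composed with $\hat{\mathtt{r}}$; and each cross-round pair $(t_i^j,t_{i+1}^{j'})$ reduces by transitivity through $\hat{\mathtt{r}}$ (inside round $i$, to reach the last transaction $t_i^{j_i+2\bmod 3}$), then $\hat{\mathtt{c}}_1$ (to the first transaction of round $i+1$), then $\hat{\mathtt{r}}$ again.

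The critical sub-lemma driving both the $\wro$ containment and the cross-round $\so$ containment is that $\co$ respects rounds, i.e. whenever $\roundTransaction{t}<\roundTransaction{t'}$ we have $(t,t')\in\co$. This is proved by induction on rounds using precisely the glue relations $\hat{\mathtt{c}}_1,\hat{\mathtt{c}}_2,\hat{\mathtt{c}}_3$ composed with the intra-round orders $\hat{\mathtt{r}}$ and $\hat{\mathtt{b}}$; once it is established, the remaining verifications are bookkeeping, and Lemma~\ref{lemma:k-expressive:sat-cons:full-history} follows.
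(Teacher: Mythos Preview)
Your proposal is correct and follows essentially the same approach as the paper: both reduce the claim to showing $\so\cup\overline{\wro}\subseteq\co$ (with acyclicity then immediate from Lemma~\ref{lemma:k-expressive:sat-cons:co-total-order}), and both establish this inclusion by a case analysis that traces paths through the glue relations $\hat{\mathtt{r}},\hat{\mathtt{c}}_1,\hat{\mathtt{c}}_2,\hat{\mathtt{c}}_3$ along the round structure. Your explicit verification of the other full-history conditions (partial function, value condition, totality of $\overline{\wro}_x^{-1}$) and your packaging of the ``$\co$ respects rounds'' property as a separate sub-lemma are minor presentational differences; the paper simply leaves the former implicit and proves the latter inline within the cases.
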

\begin{proof}

    For showing that $\overline{h}$ is a full history it suffices to show that $\so \cup \overline{\wro}$ is acyclic. As $\co$ is a total order and $\overline{\wro} \setminus \wro \subseteq \co$, proving that $\so \cup \wro \subseteq \co$ concludes the result. 
    First, we prove that $\so \subseteq \co$. Let $t, t'$ be transactions s.t. $(t, t') \in \so$. In such case, $\roundTransaction{t} \leq \roundTransaction{t'}$; and they only coincide if $\roundTransaction{t}  = \roundTransaction{t'} = 0$. Three cases arise:

    \begin{itemize}
        \item \underline{$\roundTransaction{t} = \roundTransaction{t'} = 0$:} As $(t, t') \in \hat{\mathtt{c}}_2$, we conclude that $(t, t') \in \co$.
        
        \sloppy \item \underline{$\roundTransaction{t}, \roundTransaction{t'} > 0$:} As $\roundTransaction{t}, \roundTransaction{t'} > 0$ and $\roundTransaction{t} \leq \roundTransaction{t'}$, by construction of $\so$ we deduce that $\roundTransaction{t} < \roundTransaction{t'}$. As $\co$ is transitive, we can assume without loss of generality that $\roundTransaction{t'} = \roundTransaction{t} + 1$. Therefore, there exists $i, j, 1 \leq i < n, 0 \leq j \leq 2$ s.t. $t = t_i^j$ and $t' = t_{i + 1}^j$. Let $j_i, j_{i+1}, 0 \leq j_i, j_{i+1} \leq 2$ be the integers s.t. $\alpha(v_i^{j_i}) = 1 = \alpha(v_{i+1}^{j_{i+1}})$. In such case, we know that $(t_i^j, t_i^{j_i+2 \bmod 3}) \in \hat{\mathtt{r}}^*$, $(t_i^{j_i+2 \bmod 3}, t_{i+1}^{j_i}) \in \hat{\mathtt{c}}_1$ and $(t_{i+1}^{j_{i+1}}, t_{i+1}^{j+1}) \in \hat{\mathtt{r}}^*$. Hence, as $\co$ is transitive, $(t, t') \in \co$.

        \item \underline{$\roundTransaction{t} = 0$, $\roundTransaction{t'} > 0$:} In this case, as $\roundTransaction{t} = 0$, there exists $i, 1 \leq i \leq m$ s.t. $x_i \in \var{\varphi}, t \in \{0_i, 1_i\}$. We assume without loss of generality that $t = 1_i$ as the other case is symmetric. In addition, as $\roundTransaction{t'} > 0$ and $(t, t') \in \so$, there exists $i, 1 \leq i \leq n$ s.t. $t' = t_i^0$. We rely on the two previous proven cases to deduce the result: as $(0_i, 0_m) \in \so \subseteq \co$, $(0_m, t_0^{j_0}) \in \hat{\mathtt{c}}_3$, $(t_0^{j_0}, t_0^0) \in \hat{\mathtt{r}}^*$ and $(t_0^0, t_i^0) \in \so \subseteq \co$, we conclude that $(t, t') \in \co$.
    \end{itemize}

    Next, we prove that $\wro \subseteq \co$. Let $r$ be a read event and $w$ be a write event s.t. $(w, r) \in \wro$. Then, there exists $i, i', 1 \leq i < i' \leq n$ and $j, j', 0 \leq j, j' \leq 2$ s.t. $w = t_i^j$ and $\trans{r} = t_{i'}^{j'}$. Let $j_{i'-1},j_{i'}, 0 \leq j_{i'-1},j_{i'} \leq 2$ be the integers s.t. $\alpha(v_{i'-1}^{j_{i'-1}}) = 1 = \alpha(v_{i'}^{j'})$. In such case, we know that $(t_i^j, t_{i'-1}^j) \in \so^*$, $(t_{i'-1}^j, t_{i'-1}^{j_{i'}+2 \bmod 3}) \in \hat{\mathtt{r}}^*$, $(t_{i'-1}^{j_{i'}+2 \bmod 3}, t_{i'}^{j_i}) \in \hat{\mathtt{c}}_1$ and $(t_{i'}^{j_{i'}}, t_{i'}^{j'}) \in \hat{\mathtt{r}}^*$. As $\so \subseteq \co$ and $\co$ is transitive, we conclude that $(w, r) \in \co$.
\end{proof}

We show that $\overline{h}$ is indeed a full history, that is a witness of $h_\varphi$ and that also witness $h_\varphi$'s consistency.

\begin{lemma}
\label{lemma:k-expressive:sat-cons:witness}
The history $\overline{h}$ is a witness of $h_\varphi$.
\end{lemma}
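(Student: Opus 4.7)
The plan is to verify the two conditions of \Cref{def:client-history}: that $h_\varphi \subseteq \overline{h}$ and that for every added pair $(w,r) \in \overline{\wro}_\key \setminus \wro_\key$ the predicate $\where{r}$ returns $0$ on $\valuewr[\overline{\wro}]{w}{\key}$. The inclusion is immediate from $\overline{\wro} = \wro \cup \bigcup_{x \in \var{\varphi}} \overline{\wro}_x$. For the second condition, I would first narrow the scope: by construction $\overline{\wro}_\key$ coincides with $\wro_\key$ on every special key of the form $0_i$, $1_i$, or $t_i^j$, so the added pairs only concern keys $x \in \var{\varphi}$. Moreover every read $r$ in $h_\varphi$ sits in some $t_i^j$ with predicate $\mathtt{eq}(\cdot, v_i^j, v_i^{j+1 \bmod 3}, v_i^{j+2 \bmod 3})$, and for any key $x \in \var{\varphi}$ not appearing among these three slots the ``otherwise'' branch of Equation~\ref{eq:k-expressive:eq-def} returns $\bfalse$ regardless of the written value, so the condition is trivially satisfied there.

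The substantive cases therefore involve only the three variables occurring in $r$'s predicate. I would establish a round-level invariant by induction on $i$: just before any transaction of round $i$ commits in $\co$, the $\co$-maximal write to every variable $x_\ell$ carries value $\alpha(x_\ell)$. The base case uses that $\hat{\mathtt{b}}$ places the ``opposite-value'' round-$0$ transaction before the ``matching'' one, so the last round-$0$ write to $x_\ell$ agrees with $\alpha$. The inductive step uses the order $t_i^{j_i} \prec t_i^{j_i+1 \bmod 3} \prec t_i^{j_i+2 \bmod 3}$ induced by $\hat{\mathtt{r}}$: this cyclic chain of inserts first flips and then restores the values of $v_i^{j_i}$, $v_i^{j_i+1 \bmod 3}$, and $v_i^{j_i+2 \bmod 3}$, and since $\alpha(v_i^{j_i})=1$ while the other two are $0$, the round-boundary invariant is preserved, and no other variable is touched.

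Equipped with this invariant, the final step is a case split on $j \in \{j_i, j_i+1 \bmod 3, j_i+2 \bmod 3\}$. For each case I would compute $\valuewr[\overline{\wro}]{w_x^r}{x}$ with $x$ ranging over the three predicate slots, combining the state at the start of round $i$ with the partial effects of those round-$i$ transactions that precede $t_i^j$ in $\co$, and check that the slot-$a$ variable is read as $1$ while the slot-$b$ and slot-$c$ variables are read as $0$; then $\mathtt{eq}$ evaluates to $\bfalse$ and the witness condition holds. The main obstacle is the bookkeeping for the intermediate cases $j \in \{j_i+1 \bmod 3, j_i+2 \bmod 3\}$, where the snapshot observed by $r$ mixes round-boundary values with partial round-$i$ updates; handling this uniformly amounts to giving an explicit description of the database state as a function of the $\hat{\mathtt{r}}$-prefix that precedes $t_i^j$ in $\co$.
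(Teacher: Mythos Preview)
Your proposal is correct and follows essentially the same approach as the paper's proof: both first reduce to the three clause variables via the ``otherwise'' branch of $\mathtt{eq}$, then establish by induction on rounds that the $\co$-maximal write to each variable key after round $i$ carries value $\alpha(x)$ (the paper phrases this via $t_x^i = \max_{\co}\{t \mid \writeVar{t}{x} \land \roundTransaction{t} \leq i\}$), and finally do a three-way case split on the position of $t_i^j$ within the $\hat{\mathtt{r}}$-chain of round $i$. Your identification of the intermediate-snapshot bookkeeping for $j \in \{j_i+1 \bmod 3,\, j_i+2 \bmod 3\}$ as the main clerical work matches exactly what the paper carries out explicitly.
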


\begin{proof}
By Lemma~\ref{lemma:k-expressive:sat-cons:full-history} $\overline{h}$ is a full history. Hence, for proving that $\overline{h}$ is a witness of $h_\varphi$, we need to show that for every key $x \in \Vars$ and every read $r$, if $\wro_x^{-1}(r) \uparrow$, $\where{r}(\valuewr[\overline{\wro}]{w_x^r}{x}) = 0$. Note that by construction of $h_\varphi$, such cases coincide with $x \in \var{\varphi}$. In addition, we observe that if $r$ is a read event, there exists indices $1 \leq i \leq n, 0 \leq j \leq 2$ s.t. $r \in t_i^j$. Thus, by Equation~\ref{eq:k-expressive:eq-def}, we only need to prove that $\where{r}(\valuewr[\overline{\wro}]{w_x^r}{x}) = 0$ whenever $x$ is $v_i^0, v_i^1$ or $v_i^2$. 

We prove as an intermediate step that in each round, every key has the same value in $\overline{h}$. For every round $i$ and key $x \in \var{\varphi}$, we consider the transaction $t_x^i = \max_{\co}\{t \ | \ \writeVar{t}{x} \land \roundTransaction{t} \leq i\}$. We prove by induction on the number of the round that $\valuewr[\overline{\wro}]{t_x^i}{x} = \valuewr[\overline{\wro}]{t_x^0}{x} = (x, \alpha(x))$. 
\begin{itemize}
    \item \underline{Base case}: The base case, $i = 0$, is immediately trivial. Note that in this case $\valuewr[\overline{\wro}]{t_x^0}{x} =  \alpha(x)$.
    
    \item \underline{Inductive case}: Let us assume that $\valuewr[\overline{\wro}]{t_x^{i-1}}{x} = \valuewr[\overline{\wro}]{t_x^0}{x}$ and let us prove that $\valuewr[\overline{\wro}]{t_x^i}{x} = (x, \alpha(x))$. Note that in round $i$ only keys $v_i^0, v_i^1$ and $v_i^2$ are written; so for every other key $x$, $t_x^i = t_x^{i-1}$ and by induction hypothesis, $\valuewr[\overline{\wro}]{t_x^i}{x} = (x, \alpha(x))$. Let thus $j, 0 \leq j \leq 2$ s.t. $\alpha(v_i^j) = 1$. In this case, $t_{v_i^j}^i = t_{v_i^{j+2 \bmod 3}}^i = t_i^{j+2\bmod 3}$ and $t_{v_i^{j+1 \bmod 3}}^i = t_i^{j+1\bmod 3}$. Hence, we can conclude the inductive step as:
    \begin{align*}
        \valuewr[\overline{\wro}]{t_i^{j + 2 \bmod 3}}{v_i^j} =  (v_i^j, 1) = (v_i^j,\alpha(v_i^j)) \\
        \valuewr[\overline{\wro}]{t_i^{j + 1 \bmod 3}}{v_i^{j+ 1 \bmod 3}} =  (v_i^{j+ 1 \bmod 3}, 0) = (v_i^{j+ 1 \bmod 3},\alpha(v_i^{j+1 \bmod 3})) \\
        \valuewr[\overline{\wro}]{t_i^{j + 2 \bmod 3}}{v_i^{j+ 2 \bmod 3}} = (v_i^{j+2 \bmod 3}, 0) = (v_i^{j+2 \bmod 3}, \alpha(v_i^{j+2 \bmod 3}))
    \end{align*}
\end{itemize}

We can thus conclude that $\overline{h}$ is a witness of $h_\varphi$. Let $i, j, 1 \leq i \leq n, 0 \leq j \leq 2$ be indices s.t. $\alpha(v_i^j) = 1$. For simplifying notation, we denote by $t_0, t_1, t_2$ to the transactions $t_i^j, t_i^{j+1 \bmod 3}$ and $t_i^{j+2 \bmod 3}$ respectively. We also denote by $r_0, r_1, r_2$ to the read events that belong to $t_0, t_1$ and $t_2$ respectively as well by $v_0, v_1, v_2$ to the keys associated to $t_0, t_1$ and $t_2$ respectively. For every key $x \neq v_0, v_1, v_2$ and for every transaction $t$ that writes $x$, $\where{r_j}(\valuewr{t}{x}) = 0, 0 \leq j \leq 2$; so we can focus only on keys $v_0, v_1$ and $v_2$. Three cases arise:
\begin{itemize}
    \item \underline{Transaction $t_0$:} Let thus $x$ be a key in $\{v_0, v_1, v_2\}$. By construction of $h_\varphi$ and $\co$, $t_0$ reads $x$ from $t_x^{i-1}$. As proved before, $\valuewr[\overline{\wro}]{t_x^{i-1}}{x} = (x, \alpha(x))$ and $\alpha(x) = (x, 1)$ if and only if $x = v_0$. Hence, as $\where{r_0}(\valuewr[\overline{\wro}]{t_x^{i-1}}{x}) = 0$ we conclude that $\where{r_0}(\valuewr[\overline{\wro}]{w_x^{r_0}}{x}) = 0$.
    
    \item \underline{Transaction $t_1$:} In this case, $t_1$ reads $v_2$ from $t_{v_2}^{i-1}$ and it reads $v_0$ and $v_1$ from $t_0$. On one hand, $\valuewr[\overline{\wro}]{t_{v_2}^{i-1}}{v_2} = (v_2, \alpha(v_2)) = (v_2, 0)$. Thus, as $\where{r_1}(t_{v_2}^{i-1}) = 0$, we conclude that $\where{r_1}(\valuewr[\overline{\wro}]{w_{v_2}^{r_1}}{v_2}) = 0$. On the other hand, by construction of $h_\varphi$, $\where{r_1}(\valuewr[\overline{\wro}]{t_0}{v_0}) = \where{r_1}(\valuewr[\overline{\wro}]{t_0}{v_1}) = 0$. Thus, the result hold.
    
    \item \underline{Transaction $t_2$:} In this case, $t_2$ read $v_0$ from $t_0$ and $v_1$ and $v_2$ from $t_1$. By construction of $h_\varphi$ both $\where{r_2}(\valuewr[\overline{\wro}]{t_0}{v_0})$, $\where{r_2}(\valuewr[\overline{\wro}]{t_1}{v_1})$ and $\where{r_2}(\valuewr[\overline{\wro}]{t_1}{v_2})$ are equal to $0$; so we conclude the result.
\end{itemize}

\end{proof}

We conclude the proof showing that the execution $\exec = \tup{\overline{h}, \co}$ is a consistent execution of $h_\varphi$. We observe that by construction of $\overline{\wro}$ and $\co$, $\overline{h}$ satisfies $\SER$ using $\co$. \Cref{corollary:ser-strongest} proves that $\isolation{h_\varphi}$ is weaker than $\SER$; which allow us to conclude that so $\overline{h}$ satisfies $\isolation{h_\varphi}$. In other words, that $\overline{h}$ is consistent.

\begin{proposition}
\label{prop:constraint-co}
Let $h = \tup{T, \so, \wro}$ be a full history, $\exec = \tup{h, \co}$ be an execution of $h$, $r$ be a $\iread$ event, $t_2$ be a transaction distinct from $\trans{r}$, $x$ be a key and $\mathsf{v} \in \visibilitySet{\isolation{h}(\trans{e})}$. If $\mathsf{v}(t_2, r, x)$ holds in $\exec$, then $(t_2, \trans{r}) \in \co$.
\end{proposition}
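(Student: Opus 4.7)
The plan is to show by structural induction on the grammar defining $\mathsf{Rel}_i$ in Equation~\ref{eq:axioms-constraints-rel} that every intermediate relation $\mathsf{Rel}$ satisfies the invariant: for all $\tau, \tau'$ (whether events or transactions), if $(\tau, \tau') \in \mathsf{Rel}$ then either $\trans{\tau} = \trans{\tau'}$ or $(\trans{\tau}, \trans{\tau'}) \in \co$. Applying this invariant to the whole path $\tau_0 = t_2 \to \tau_1 \to \cdots \to \tau_{k+1} = r$ that witnesses $\mathsf{v}(t_2, r, x)$ yields either $t_2 = \trans{r}$, which is excluded by hypothesis, or $(t_2, \trans{r}) \in \co$, which is the desired conclusion.

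For the base cases of the induction, I would argue as follows. For $\po$: since $\po = \bigcup_{t \in T} \po_t$ relates only events within the same transaction, $(\tau, \tau') \in \po$ implies $\trans{\tau} = \trans{\tau'}$. For $\so$: by the extension of $\so$ to events, $(\tau, \tau') \in \so$ means $(\trans{\tau}, \trans{\tau'}) \in \so$, and since $\exec$ is an execution, $\so \subseteq \co$ (Definition~\ref{def:execution}), so $(\trans{\tau}, \trans{\tau'}) \in \co$. For $\wro$: the same argument via the lifting of $\wro$ to transactions and the inclusion $\wro \subseteq \co$ gives $(\trans{\tau}, \trans{\tau'}) \in \co$. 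For $\co$ itself: trivial by its extension to events.

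For the inductive cases, union $\mathsf{Rel}_1 \cup \mathsf{Rel}_2$ is immediate. For composition $\mathsf{Rel}_1; \mathsf{Rel}_2$, given $(\tau, \tau'') \in \mathsf{Rel}_1$ and $(\tau'', \tau') \in \mathsf{Rel}_2$, the induction hypothesis yields two facts, each being either same-transaction or $\co$-ordered. A case analysis combined with transitivity of $\co$ gives the invariant for the composition (if both facts are same-transaction, $\trans{\tau} = \trans{\tau'}$; otherwise, at least one $\co$-step appears and the other either collapses to equality or is another $\co$-step, so transitivity yields $(\trans{\tau}, \trans{\tau'}) \in \co$). Transitive closure $\mathsf{Rel}^+$ (and hence $\mathsf{Rel}^*$) follows by a straightforward induction on the length of the chain, again relying only on the transitivity of $\co$ and closure under equality.

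No part of this argument is particularly delicate; the only mild subtlety is ensuring the invariant is stated in a form that survives mixing events and transactions along the path (since $\po$ stays inside a transaction while $\so$, $\wro$, $\co$ cross transaction boundaries), which is handled uniformly by projecting through $\trans{\cdot}$. The predicate $\confWr_a$ in Equation~\ref{eq:axioms-constraints} is irrelevant here since we only use the path structure, not the conflict predicate.
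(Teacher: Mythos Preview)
Your proposal is correct and follows the same approach as the paper: structural induction on the grammar for $\mathsf{Rel}$ in Equation~\ref{eq:axioms-constraints-rel}, with base cases given by $\po,\so,\wro,\co$ and inductive cases by the monotonic operators. Your version is in fact more careful than the paper's one-line sketch, since you explicitly formulate the invariant as ``same transaction or $\co$-ordered'' to accommodate $\po$ (which stays inside a transaction rather than being literally contained in the strict order $\co$); the paper's proof elides this point.
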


\begin{proof}
The proposition is result of an immediate induction on the definition of $\mathsf{v}$. The base case is $\po, \so, \wro \subseteq \co$, which holds by definition. The inductive case follows from the operators employed in Equation~\ref{eq:axioms-constraints-rel}: union, composition and transitive closure of relations; which are monotonic.
\end{proof}

As a consequence of \Cref{prop:constraint-co} and $\axser$ axiom definition, we obtain the following result.
\begin{corollary}
\label{corollary:ser-strongest}
Any isolation level is weaker than $\SER$.
\end{corollary}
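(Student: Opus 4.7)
The plan is to show that any execution $\exec = \tup{h, \co}$ satisfying $\SER$ must satisfy every other isolation level $\iota$. Concretely, fix an arbitrary axiom $a$ of $\iota$ with visibility relation $\mathsf{v}_a$, an arbitrary read event $r$ in some transaction, and assume the premise of $a(r)$ (in the form of \Cref{eq:axiom}): some $x$, $t_1 \neq t_2$ with $(t_1, r) \in \wro_x$, $\writeVar{t_2}{x}$, and $\mathsf{v}_a(t_2, r, x)$ holds in $\exec$. The goal is to derive $(t_2, t_1) \in \co$.

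The key observation is that the $\SER$ axiom (\Cref{ser_def}) uses $\co$ itself as its visibility relation, which is the weakest possible visibility among those definable by the grammar in \Cref{eq:axioms-constraints-rel}, given that $\po, \so, \wro \subseteq \co$. By \Cref{prop:constraint-co} applied to $\mathsf{v}_a$, the hypothesis $\mathsf{v}_a(t_2, r, x)$ in $\exec$ yields $(t_2, \trans{r}) \in \co$, which is exactly the visibility condition required by $\SER$. Since $\exec$ satisfies $\SER$, together with the other hypotheses $(t_1, r) \in \wro_x$ and $\writeVar{t_2}{x}$ we conclude $(t_2, t_1) \in \co$, which is precisely the conclusion demanded by axiom $a$. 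Thus $a(r)$ holds in $\exec$, and since $a$ and $r$ were arbitrary, $\exec$ satisfies $\iota$.

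There is no real obstacle here: the statement is essentially a one-line consequence of \Cref{prop:constraint-co} combined with the fact that the $\SER$ axiom uses $\co$ directly as its visibility relation. The only subtlety is to remember that Definition~\ref{def:consistency-full} quantifies over all axioms in $\isolation{h}(t)$ and all reads in $t$, so the argument must be carried out uniformly over every such pair $(a, r)$, but this is immediate from the argument above.
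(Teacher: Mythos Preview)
Your proposal is correct and takes essentially the same approach as the paper: the paper states the corollary as an immediate consequence of \Cref{prop:constraint-co} together with the fact that the $\axser$ visibility relation is $\co$ itself, and you have simply unpacked that one-line justification in full detail.
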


\begin{lemma}
\label{lemma:k-expressive:consistent-sat}
If $h_\varphi$ is consistent then $\varphi$ is $1$-in-$3$ satisfiable.
\end{lemma}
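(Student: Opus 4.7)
The plan is to extract a $1$-in-$3$ satisfying assignment $\alpha$ for $\varphi$ from the commit order $\co$ of any consistent witness. Given a consistent witness $\overline{h}=\tup{T,\so,\overline{\wro}}$ of $h_\varphi$ with commit order $\co$, I define $\alpha(x_k)=1$ iff $0_k <_{\co} 1_k$, i.e., $1_k$ is the $\co$-latest writer to $x_k$ among round-$0$ transactions. For each clause $C_i$, let $j_i\in\{0,1,2\}$ denote the unique index such that $t_i^{j_i}$ is the $\co$-minimum of $\{t_i^0,t_i^1,t_i^2\}$. The goal is to show that $\alpha(v_i^{j_i})=1$ and $\alpha(v_i^{j_i+1\bmod 3})=\alpha(v_i^{j_i+2\bmod 3})=0$, which immediately yields the $1$-in-$3$ property for $C_i$ and hence that $\varphi$ is $1$-in-$3$ satisfiable.

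The argument proceeds by induction on $i$ with the invariant that, for every variable $x_k$, the $\co$-latest transaction writing $x_k$ among rounds $0,1,\ldots,i-1$ produces the value $\alpha(x_k)$. The base case $i=1$ follows directly from the definition of $\alpha$ since only round-$0$ transactions write variable keys by then. For the inductive step I first observe that the construction of the client history puts $(1_k,t_i^{j_i})$, $(0_k,t_i^{j_i})$, and $(t_{i'}^{j'},t_i^{j_i})$ into $\wro$ for every $i'<i$ via the transaction-identifier keys $1_k,0_k,t_{i'}^{j'}$, so by the $\RC$ axiom every round-$0$ and round-$<i$ transaction is a visible writer for $t_i^{j_i}$'s select. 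Combined with the fact that $\wro_{v_i^0}=\wro_{v_i^1}=\wro_{v_i^2}=\emptyset$ in $h_\varphi$ together with the definition of $\mathtt{eq}$, the witness condition forces $t_i^{j_i}$ to read value $1$ on $v_i^{j_i}$ and value $0$ on $v_i^{j_i+1\bmod 3},v_i^{j_i+2\bmod 3}$. By the induction hypothesis these are exactly $\alpha(v_i^{j_i}),\alpha(v_i^{j_i+1\bmod 3}),\alpha(v_i^{j_i+2\bmod 3})$, proving the $1$-in-$3$ property for $C_i$.

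To re-establish the invariant for round $i+1$ I must show that the state of $v_i^0,v_i^1,v_i^2$ after all three round-$i$ transactions still matches $\alpha$ (other variables are untouched by round $i$). I do a case analysis on the $\co$-order of $t_i^{j_i+1}$ and $t_i^{j_i+2}$: the $\co$-second cannot be $t_i^{j_i+2}$, because $t_i^{j_i}$ does not write $v_i^{j_i+2}$, so under $\RC$ the $\co$-latest visible writer to $v_i^{j_i+2}$ lies in rounds $\leq i-1$ and (by the invariant) yields $\alpha(v_i^{j_i+2})=0$, whereas $t_i^{j_i+2}$'s select requires value $1$. Hence the order is forced to be $t_i^{j_i},t_i^{j_i+1},t_i^{j_i+2}$, and a direct computation of the writes ($v_i^{j_i}{:=}0,v_i^{j_i+1}{:=}1$; then $v_i^{j_i+1}{:=}0,v_i^{j_i+2}{:=}1$; then $v_i^{j_i+2}{:=}0,v_i^{j_i}{:=}1$) shows the final state is $v_i^{j_i}=1,v_i^{j_i+1\bmod 3}=0,v_i^{j_i+2\bmod 3}=0$, matching $\alpha$. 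The main obstacle is this $\RC$-visibility analysis for the second and third round-$i$ transactions: they are not directly $\so\cup\wro$-related to $t_i^{j_i}$ in the client history, so their $\RC$ visibility relies on $\overline{\wro}$-edges that must be reconstructed from the values their selects are forced to read, and one has to argue that the witness condition of \Cref{def:client-history} is compatible with these edges using the $\mathtt{eq}$ predicate.
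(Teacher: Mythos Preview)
Your proposal is correct and follows essentially the same strategy as the paper's proof: define $\alpha(x_k)=1$ iff $(0_k,1_k)\in\co$, then prove by induction on rounds the invariant that the $\co$-latest writer of each variable key up to round $i-1$ produces the value $\alpha(x_k)$ (the paper packages this as properties (1)--(3), but the content is the same). Both proofs analyze the $\co$-minimum transaction of each round to extract the $1$-in-$3$ valuation from the witness condition, then show the remaining intra-round order is forced so as to propagate the invariant; the obstacle you flag---that the second and third round-$i$ transactions have no client-history $\so\cup\wro$ link to the first and that one must bootstrap $\overline{\wro}$-edges from the $\mathtt{eq}$ predicate and the witness condition---is precisely the work the paper carries out in its case analysis of $t_1$ and $t_2$.
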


\begin{proof}
If $h_\varphi$ is consistent, there exists a consistent witness of $h_\varphi$ $\overline{h} = \tup{T, \so, \overline{\wro}}$. As $\overline{h}$ is consistent and $\isolation{h}$ is stronger than $\RC$, there exists a consistent execution of $\overline{h}$, $\exec = \tup{\overline{h}, \co}$. Let $\alpha_h : \var{\varphi} \to \{0, 1\}$ s.t. for every variable $v_j, 1 \leq j \leq m$, $\alpha_h(v_j) = 1$ if and only if $(0_j, 1_j) \in \co$. We show that $\varphi$ is $1$-in-$3$ satisfiable using $\alpha$.

As an intermediate step, we prove that $\alpha_h$ describes the value read by any transaction in $\overline{h}$. For every $i, 0 \leq i \leq n$ and key $x \in \var{\varphi}$, let $t_x^i = \max_{\co} \{t \ | \ \writeVar{t}{x} \land \roundTransaction{t} \leq i\}$. We prove by induction that for every $i, 0 \leq i \leq n$ (1) $\valuewr[\overline{\wro}]{t_x^i}{x} = (x, \alpha(x))$, (2) for any $\iread$ event $r$ from a transaction $t$ s.t. $\roundTransaction{t} \leq i$, if $(w, r) \in \overline{\wro}_x$, then $w$ coincides with $\max_{\co}\{t \in T \ | \ \writeVar{t}{x} \land (t, \trans{r}) \in \co\}$ and (3) if $i > 0$, $\alpha(v_i^j) = 1$ if and only if $(t_i^j, t_i^{j+1 \bmod 3}) \in \co$ and $(t_i^{j+1 \bmod 3}, t_i^{j+2 \bmod 3}) \in \co$.
\begin{itemize}
    \item \underline{Base case}: Let $j, 1 \leq j \leq m$ be the integer s.t. $x = v_j$. In such case, (1) holds as $t_x^0 = 1_j$ if and only if $\alpha(v_j) = 1$; and in such case, $\valuewr[\overline{\wro}]{t_x^0}{v_j} = (v_j, \alpha(v_j))$. Also (2) trivially holds as there is no $\iread$ event in a transaction belonging to round $0$. Finally, (3) also trivially holds as $i = 0$.
    
    \item \underline{Inductive case}: We assume that (1), (2) and (3) hold for round $i-1$ and let us prove it for round $i$. Let $j$ the index of the $\co$-minimal transaction among $t_i^1, t_i^2, t_i^3$. We denote by $t_0, t_1, t_2$ to $t_i^j, t_i^{j+1 \bmod 3}$ and $t_i^{j+2 \bmod 3}$ respectively, by $r_0, r_1, r_2$ to the unique $\iread$ event in $t_0, t_1$ and $t_2$ respectively and by $v_0, v_1$ and $v_2$ to the keys associated to $t_0, t_1$ and $t_2$ respectively. 
    
    Let thus $x \in \var{\varphi}$ be a key, $t$ be a transaction among $t_0, t_1, t_2$ and let $w_x^t$ be a transaction s.t. $(w_x^t, t) \in \overline{\wro}_x$. As $\roundTransaction{t_x^{i-1}} < \roundTransaction{t}$, $(t_x^{i-1}, t) \in \wro_{t_x^{i-1}}$. Hence, as $\overline{h}$ satisfies $\RC$, either $w_x^t = t_x^{i-1}$ or $\roundTransaction{w_x^t} = i$. 
    
    First we prove (3) analyzing $t_0$. As $(t_0, t_1) \in \co$ and $(t_0, t_2) \in \co$ and $\overline{\wro} \subseteq \co$ we deduce that $w_x^t = t_x^{i-1}$. In such case, as (1) holds by induction hypothesis and $\where{r_0}(\valuewr[\overline{\wro}]{t_x^{i-1}}{x}) = 0$, we conclude that $\alpha(x) = 1$ if $x = v_0$ and $\alpha(x) = 0$ if $x = v_1, v_2$.
    
    For proving (2) we analyze three cases depending on $t$:
    \begin{itemize}
        \item \underline{$t = t_0$:} As proved before, if $t = t_0$, $w_x^t = t_x^{i-1}$. By definition of $t_x^{i-1}$, (2) holds.

        \item \underline{$t = t_1$:} As $t_0$ only writes $v_0$ and $v_1$ and $(t_1, t_2) \in \co$ we deduce that for every key $x \neq v_0, v_1$, $w_x^{t_1} = t_x^{i-1}$; which immediately implies (2). As (3) holds for round $i$, we know that $\alpha(v_0) = 1$ and $\alpha(v_1) = 0$. Thus, if $x = v_0, v_1$, $\where{r_2}(\valuewr[\overline{\wro}]{t_{x}^{i-1}}{x}) = 1$; so $(t_x^{i-1}, t_1) \not\in \overline{\wro}_x$. In conclusion, $w_x^{t_1} = t_0$; which implies (2) by definition of $t_0$.
        
        \item \underline{$t = t_2$:} As $t_0, t_1$ only write $v_0, v_1$ and $v_2$ we deduce that for every other key, $w_x^{t_2} = t_x^{i-1}$; which implies (2). Otherwise, we analyze the three sub-cases arising:
        \begin{itemize}
            \item \underline{$x = v_2$:} In this case, $t_0$ does not write $v_2$; so there is only two options left, $t_x^{i-1}$ and $t_1$. As (3) holds for round $i$, $\alpha(v_2) = 0$. Thus, as by induction hypothesis (2) holds for round $i-1$, $\valuewr[\overline{\wro}]{t_{v_2}^{i-1}}{v_2} = (v_2,0)$ and hence, $\where{r_2}(\valuewr[\overline{\wro}]{t_{v_2}^{i-1}}{v_2}) = 1$. Therefore, $w_{v_2}^{t_2}$ must be $t_1$; which implies (2).

            \item \underline{$x = v_0$:} Once again, there is only two possible options as $t_1$ does not write $v_0$. As (3) holds for round $i$, $\alpha(v_0) = 1$. Thus, as by induction hypothesis (2) holds for round $i-1$, $\valuewr[\overline{\wro}]{t_{v_0}^{i-1}}{v_0} = (v_0,1)$ and hence, $\where{r_2}(\valuewr[\overline{\wro}]{t_{v_0}^{i-1}}{v_0}) = 1$. Therefore, $w_{v_0}^{t_2}$ must be $t_0$; which implies (2).

            \sloppy \item \underline{$x = v_1$:} We observe in this case that $\valuewr[\overline{\wro}]{t_0}{v_1} = (x, 1)$; so $\where{r_2}(\valuewr[\overline{\wro}]{t_0}{v_1}) = 1$. Therefore, there is only two possible options, $t_1$ and $t_x^{i-1}$. As $\overline{h}$ satisfies $\RC$ and $(t_1, t_2) \in \overline{\wro}_{v_2}$, if $(t_x^{i-1}, t_2) \in \overline{\wro}_{v_1}$, we deduce that $(t_1, t_x^{i-1}) \in \co$. However, as $\roundTransaction{t_x^{i-1}} < \roundTransaction{t_1}$, $(t_x^{i-1}, t_1) \in \wro_{t_x^{i-1}}$; which is impossible as $\overline{\wro} \subseteq \co$. Thus, we conclude that $w_{v_2}^{t_2} = t_1$; which implies (2).
        \end{itemize}
    \end{itemize}

    For proving (1) we observe that for every key $x \neq v_0, v_1, v_2$, $t_x^{i} = t_x^{i-1}$ and by induction hypothesis we conclude that $\valuewr[\overline{\wro}]{t_x^i}{x} = (x, \alpha(x))$. Moreover, as $(t_0, t_1) \in \co$ and $(t_1, t_2) \in \co$, $t_{v_0}^i = t_{v_2}^i = t_2$ and $t_{v_1}^i = t_1$. In addition, as (3) holds, $\alpha(v_0) = 1$ and $\alpha(v_1) = \alpha(v_2) = 0$. This allow us to conclude (1) also for the keys $v_0, v_1$ and $v_2$; so the inductive step is proven.
\end{itemize}

After proving (1), (2) and (3) we can conclude that $\varphi$ is $1$-in-$3$ satisfiable. For every round $i$, we observe that by (1) $\valuewr[\overline{\wro}]{t_x^i}{x} = (x, \alpha(x))$. Moreover, as (2) holds, $(t_x^i, t_0^i) \in \overline{\wro}_x$; where $t_0^i$ is the first transaction in $\co$ among the transactions in round $i$. As $\overline{h}$ is a witness of $h_\varphi$, $\where{r_0^i}(\valuewr[\overline{\wro}]{t_x^i}{x}) = 0$; where $r_0^i$ is the read event of $t_0^i$. Hence, exactly one variable among $v_i^0, v_i^1$ and $v_i^2$ has $1$ as image by $\alpha$. Therefore, $\varphi$ is $1$-in-$3$ satisfiable.

\end{proof}

\newpage

\subsection{Proof of \Cref{th:extend-np-complete}}
\label{app:proofs-observable-np-complete}

\extendNpComplete*

The structure of the proof is divided in two parts: proving that the problem is NP and proving that it is NP-hard. The fist part, corresponding to \Cref{lemma:np-complete:problem-in-np}, is straightforward as, for any client history, we simply guess a suitable witness and a total order on its transactions for deducing its consistency applying \Cref{def:consistency-full}. The second part, corresponding to Lemmas~\ref{lemma:np-complete:sat-cons} and~\ref{lemma:np-complete:cons-sat} is more complicated. We use a novel reduction from $3$-SAT. We encode a boolean formula $\varphi$ in a history $h_\varphi$, s.t. $h_\varphi$ is consistent iff $\varphi$ is satisfiable. We first prove that the problem is indeed in NP (\Cref{lemma:np-complete:problem-in-np}).

\begin{lemma}
\label{lemma:np-complete:problem-in-np}
The problem of checking consistency for a client history with an isolation configuration stronger than $\RC$ is in NP.
\end{lemma}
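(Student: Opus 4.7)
The plan mirrors the proof of \Cref{lemma:k-expressive:problem-np}: I would exhibit a polynomial-size certificate and a polynomial-time verifier. Given a client history $h = \tup{T, \so, \wro}$, the certificate consists of (i) an extended write-read relation $\overline{\wro} \supseteq \wro$ yielding a candidate witness full history $\overline{h} = \tup{T, \so, \overline{\wro}}$, and (ii) a total order $\co$ on $T$. Since $\overline{\wro} \subseteq \writeOp{h} \times \readOp{h} \times \Vars$ and $\co \subseteq T \times T$, the certificate has size $\mathcal{O}(|h|^3)$, which is polynomial.

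Verification proceeds in two steps. First, one checks that $\overline{h}$ is indeed a full history and a witness of $h$ per \Cref{def:full-history} and \Cref{def:client-history}: confirm that $\overline{\wro}_x^{-1}(r)$ is defined for every non-local read $r$ and every key $x$, that $\so \cup \overline{\wro}$ is acyclic (so that $\valuewr[\overline{\wro}]{\cdot}{\cdot}$ is well-defined and can be computed by a bottom-up traversal), and that for every $(w,r) \in \overline{\wro}_x \setminus \wro_x$ the predicate $\where{r}(\valuewr[\overline{\wro}]{w}{x}) = 0$ holds. Under the assumption that $\whereName$ predicates evaluate in constant time, each of these checks runs in polynomial time. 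Second, one checks that $\exec = \tup{\overline{h}, \co}$ is consistent: verify that $\co$ extends $\so \cup \overline{\wro}$, then invoke $\computeCO{\overline{h}}{\co}$ from \Cref{algorithm:necessary-co} and test that its output is acyclic. By \Cref{lemma:necessary-co:correctness} this characterises exactly the consistent executions, and by \Cref{lemma:necessary-co:polynomial-time} the computation is polynomial since the isolation configuration is bounded.

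I do not expect a genuine obstacle here: all the needed ingredients, in particular the saturation procedure and its polynomial-time guarantee, have already been established for full histories in \Cref{ssec:consistency-full-histories}, and the only novelty with respect to \Cref{lemma:k-expressive:problem-np} is the additional verification that the guessed extension is a witness, which is a direct application of \Cref{def:client-history}.
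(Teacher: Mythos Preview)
Your proposal is correct and follows essentially the same approach as the paper: guess a witness $\overline{h}$ and a commit order $\co$ of polynomial size, then verify consistency via $\computeCO{\overline{h}}{\co}$ and acyclicity, invoking \Cref{lemma:necessary-co:correctness} and \Cref{lemma:necessary-co:polynomial-time}. The only difference is that you spell out the witness-checking step (the conditions of \Cref{def:full-history} and \Cref{def:client-history}) explicitly, whereas the paper leaves this implicit in the phrase ``guessing a witness of $h$''.
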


\begin{proof}
Let $h = \tup{T, \so, \wro}$ a client history whose isolation configuration is stronger than $\RC$. Guessing a witness of $h$, $\overline{h}$, and an execution of $\overline{h}$, $\exec = \tup{\overline{h}, \co}$, can be done in $\mathcal{O}(|\Vars| \cdot |\events{h}|^2 + |T|^2) \subseteq \mathcal{O}(|h|^3)$. By \Cref{lemma:necessary-co:correctness}, checking if $\exec$ is consistent is equivalent as checking if $\computeCO{\overline{h}}{\co}$ is an acyclic relation. As by \Cref{lemma:necessary-co:polynomial-time}, \Cref{algorithm:necessary-co} requires polynomial time, we conclude the result.

\end{proof}

For showing NP-hardness, we reduce 3-SAT to the problem of checking consistency of a partial observation history. Note that the problem is NP-hard in the case where the isolation configuration is not saturable, as discussed in \Cref{ssec:consistency-full-histories}, using the results in \cite{DBLP:journals/pacmpl/BiswasE19}. Therefore, we only prove it for the case where the isolation configuration is saturable.

Let $\varphi = \bigwedge_{i = 1}^n C_i$ a CNF expression with at most 3 literals per clause (i.e. $C_i = l_i^1 \lor l_i^2 \lor l_i^3$). Without loss of generality we can assume that each clause contains exactly three literals and each literal in a clause refers to a different variable. We denote $\var{l_i^j}$ to the variable that the literal $l_i^j$ employs and $\VarsFomula{\varphi}$ the set of all variables of $\varphi$. %

We design a history $h_{\varphi}$ with an arbitrary saturable isolation configuration encoding $\varphi$. Thus, checking $\varphi$-satisfiability would reduce to checking $h_{\varphi}$'s consistency. Note that as $\isolationExecution{h_\varphi}$ is saturable, $h_\varphi$'s consistency is equivalent to checking $\pco$'s acyclicity; where $\pco = \computeCO{h_\varphi}{(\so \cup \wro)^+}$. We use the latter characterization of consistency for encoding the formula $\varphi$ in $h_\varphi$.

\begin{figure}[t]
	\centering
	\begin{subfigure}[b]{.32\textwidth}
		\resizebox{\textwidth}{!}{
			\begin{tikzpicture}[->,>=stealth',shorten >=1pt,auto,node distance=3cm,
				semithick, transform shape]
				\node[draw, rounded corners=2mm,outer sep=0] (t) at (0, 0) {\begin{tabular}{l}
                    $\iinsert{\{\var{l_{i}}_{(i,1)}^+ : 1\}}$ \\
					\ldots \\
					$\iinsert{\{\var{l_{i}}_{(i, i-1)}^+ : 1\}}$ \\
					$\iinsert{\{\var{l_{i}}_{(i, i+1)}^+ : 1\}}$ \\
					\ldots \\ 
					$\iinsert{\{\var{l_{i}}_{(i, n)}^+ : 1\}}$ \\
                    $\iinsert{\{\var{l_{i}}_{(1,i)}^- : 1\}}$ \\
					\ldots \\
					$\iinsert{\{\var{l_{i}}_{(i-1,i)}^- : 1\}}$ \\
					$\iinsert{\{\var{l_{i}}_{(i+1,i)}^- : 1\}}$ \\
					\ldots \\
					$\iinsert{\{\var{l_{i}}_{(n,i)}^- : 1\}}$ \\
					$\iinsert{\{\var{c_i^j} : 1\}}$ \\ 
					$\idelete{\lambda r: \keyof{r} = \var{l_i^j}_i} $
				\end{tabular}};		
			\end{tikzpicture}  
			
		}
		\caption{Transaction $t_i^j$}
		\label{fig:np-code:1}
	\end{subfigure}
	\centering
	\begin{subfigure}[b]{.32\textwidth}
		\resizebox{\textwidth}{!}{
			\begin{tikzpicture}[->,>=stealth',shorten >=1pt,auto,node distance=3cm,
				semithick, transform shape]
				\node[draw, rounded corners=2mm,outer sep=0] (t) at (0, 0) {\begin{tabular}{l}
                    $\iinsert{\{\var{l_{i}}_{(i, 1)}^- : 1\}}$ \\
					\ldots \\
					$\iinsert{\{\var{l_{i}}_{(i, i-1)}^- : 1\}}$ \\
					$\iinsert{\{\var{l_{i}}_{(i, i+1)}^- : 1\}}$ \\
					\ldots \\
					$\iinsert{\{\var{l_{i}}_{(i, n)}^- : 1\}}$ \\
					$\iinsert{\{\var{l_{i}}_{(1,i)}^+ : 1\}}$ \\
					\ldots \\
					$\iinsert{\{\var{l_{i}}_{(i-1,i)}^+ : 1\}}$ \\
                    $\iinsert{\{\var{l_{i}}_{(i+1, i)}^+ : 1\}}$ \\
					\ldots \\
                    $\iinsert{\{\var{l_{i}}_{(n, i)}^+ : 1\}}$ \\
					$\iinsert{\{\var{c_i^j} : 1\}}$ \\ 
                    $\iinsert{\{\var{c_i^{j-1 \bmod 3}} : 1\}}$ \\ 
					$\idelete{\lambda r: \keyof{r} = \var{l_i^j}_i} $
				\end{tabular}};		
			\end{tikzpicture}  
			
		}
		\caption{Transaction $\lnot t_i^j$}
		\label{fig:np-code:2}
	\end{subfigure}
	\centering
	\begin{subfigure}[b]{.24\textwidth}
		\resizebox{\textwidth}{!}{
			\begin{tikzpicture}[->,>=stealth',shorten >=1pt,auto,node distance=3cm,
				semithick, transform shape]
				\node[draw, rounded corners=2mm,outer sep=0] (t) at (0, 0) {\begin{tabular}{l} 
					$\iinsert{\{ x \in \mathtt{A}_i^j : 0\}} $ \\
					$\eselect(\lambda r: \btrue)$ 
				\end{tabular}};		
			\end{tikzpicture}  
			
		}
		\caption{Transaction $S_i^j$}
		\label{fig:np-code:3}
	\end{subfigure}
	\centering
\vspace{-2mm}
	\caption{Description in full detail of the transactions $t_i^j$, $\lnot t_i^j$ and $S_i^j$ described in the proof of theorem \ref{th:extend-np-complete} assuming $\signTransaction{t_i^j} = +$; where $\mathtt{A}_i^j$ is the set of auxiliary variables for $S_i^j$. The case where $\signTransaction{t_i^j} = -$ is analogous replacing in the first two instructions of both $t_i^j$ and $\lnot t_i^j$ $+$ by $-$ and vice versa. %
	}
	\label{fig:np-code}
\end{figure}

First of all, we consider every literal in $\varphi$ independently. This means that even if two literals $l_i^j$ and $l_{i'}^{j'}$ share its variable ($\var{l_i^j} = \var{l_{i'}^{j'}}$) we will reason independently about them. For that, we employ keys $\var{l_i^j}_i$ and $\var{l_{i'}^{j'}}_{i'}$. We later enforce additional constraints for ensuring that $\var{l_i^j}_i$ and $\var{l_{i'}^{j'}}_{i'}$ coordinate so assignments on $\var{l_i^j}_i$ coincide with assignments in $\var{l_{i'}^{j'}}_{i'}$. For simplicity in the explanation, whenever we talk about a literal $l$ that is negated (for example $l \coloneqq \lnot x$), we denote by $\lnot l$ to the literal $x$. Also, we use indistinguishably $x$ when referring to a variable in $\varphi$ or to a homonymous key in $h_\varphi$. In addition, with the aim of simplifying the explanation, we assume hereinafter that any occurrence of indices $i, i', j, j'$ satisfy that $1 \leq i, i' \leq n$ and $1 \leq j, j' \leq 3$.

For every clause $C_i = l_i^1 \lor l_i^2 \lor l_i^3$, we create nine transactions denoted by $t_i^j$, $\lnot t_i^j$ and $S_i^j$. Figure~\ref{fig:np-code} shows in detail their definition, which we explain and justify during the following lines. The transaction $t_i^j$ represents the state where $l_i^j$ is satisfied while $\lnot t_i^j$ represents the state where $l_i^j$ is unsatisfied. Transaction $S_i^j$ is in charge of selecting one of the two states. With this goal on mind, transactions $t_i^j$ and $\lnot t_i^j$ contain a $\edelete$ event that deletes the key $\var{l_i^j}_i$ while $S_i^j$ contains a $\eselect$ event that does \emph{not} read $\var{l_i^j}_i$ in $h_\varphi$.
By \Cref{def:partial-observation-history}, any witnesses $h'$ of $h_\varphi$ must read $\var{l_i^j}_i$ from a transaction that deletes it. As $h_\varphi$ contain only two transactions that deletes such key ($t_i^j$ and $\lnot t_i^j$), we can interpret that if $S_i^j$ reads $\var{l_i^j}_i$ from $t_i^j$ in $h'$, then $l_i^j$ is satisfied in $\varphi$ while otherwise it is not.

For simplifying notation, as transactions $t_i^j, \lnot t_i^j, S_i^j$ only have one read event, we define write-read dependencies directly from transactions instead of their read events. We also denote by $\var{t_i^j}$ and $\var{\lnot t_i^j}$ to the variable $\var{l_i^j}$, associating each transaction with the variable of its associated literal.

We divide the construction of the history $h_\varphi$ in two main parts. In the first part, we relate transactions $t_i^j, \lnot t_i^j$ and $S_i^j$ with the clause $C_i$, ensuring a satisfying valuation of clause $C_i$ corresponds to a consistent history when restricted to its associated transactions. In the second part, we link transactions associated to different clauses (for example $t_i^j$ with $t_{i'}^{j'}$, $i \neq i'$), for ensuring that valuations are consistent between clauses (i.e. a variable is not assigned $1$ in clause $C_i$ and $0$ in clause $C_{i'}$).

\begin{figure}[t]
    \resizebox{\textwidth}{!}{
        \footnotesize
        \begin{tabular}{|c|c|c|c|}
        \hline & & &\\ [-2mm]
        \begin{subfigure}[b]{.24\textwidth}
            \centering
            \resizebox{\textwidth}{!}{
    
            \begin{tikzpicture}[->,>=stealth,shorten >=1pt,auto,node distance=4cm,
                semithick, transform shape]
                \def \R {2}
                \def \r {1.2}
                \node[transaction state, text=black] (sx) at (0,0) {$S_i^{x}$};
                \node[transaction state, text=black] (nox) at ([shift=({30:\R})]sx) {$\lnot x_i$};
                \node[transaction state, text=black] (sz) at ([shift=({-30:\R})]nox) {$S_i^z$};
                \node[transaction state, text=black] (z) at ([shift=({-90:\R})]sz) {$z_i$};
                \node[transaction state, text=black] (sy) at ([shift=({210:\R})]z) {$S_i^{y}$};
                \node[transaction state, text=black] (noy) at ([shift=({-90:\R})]sx) {$\lnot y_i$};
                \node[transaction state, text=black] (x) at ([shift=({-90:\r})]nox) {$x_i$};
                \node[transaction state, text=black] (y) at ([shift=({30:\r})]noy) {$y_i$};
                \node[transaction state, text=black] (noz) at ([shift=({150:\r})]z) {$\lnot z_i$};
                \path (noy) edge[wrColor, left] node {$\wro_{c_i^1}$} (sx);
                \path (z) edge[wrColor, below] node [shift={(0.2,0.0)}] {$\wro_{c_i^2}$} (sy);
                \path (nox) edge[wrColor, above] node {$\wro_{c_i^3}$} (sz);
    
                \path (x) edge[wrColor, above, dashed] node {$\wro_{x_i}$} (sx);
                \path (y) edge[wrColor, left, dashed] node {$\wro_{y_i}$} (sy);
                \path (noz) edge[wrColor, left, pos=0.6, dashed] node {$\wro_{z_i}$} (sz);
    
                \path (x) edge[double equal sign distance, coColor, left] node {$\co$} (noy);
                \path (noz) edge[double equal sign distance, coColor, right] node {$\co$} (nox);
    
                \path (y) edge[double equal sign distance, coColor, below] node {$\co$} (z);

            \end{tikzpicture}
            }
            
            \caption{No $\co$-cycle when all literals are satisfied}
            \label{fig:cycles-clause:three-top}
        \end{subfigure}
    
        &  \begin{subfigure}[b]{.24\textwidth}
            \centering
            \resizebox{\textwidth}{!}{
            \begin{tikzpicture}[->,>=stealth,shorten >=1pt,auto,node distance=4cm,
                semithick, transform shape]
                \def \R {2}
                \def \r {1.2}
                \node[transaction state, text=black] (sx) at (0,0) {$S_i^{x}$};
                \node[transaction state, text=black] (nox) at ([shift=({30:\R})]sx) {$\lnot x_i$};
                \node[transaction state, text=black] (sz) at ([shift=({-30:\R})]nox) {$S_i^z$};
                \node[transaction state, text=black] (z) at ([shift=({-90:\R})]sz) {$z_i$};
                \node[transaction state, text=black] (sy) at ([shift=({210:\R})]z) {$S_i^{y}$};
                \node[transaction state, text=black] (noy) at ([shift=({-90:\R})]sx) {$\lnot y_i$};
                \node[transaction state, text=black] (x) at ([shift=({-90:\r})]nox) {$x_i$};
                \node[transaction state, text=black] (y) at ([shift=({30:\r})]noy) {$y_i$};
                \node[transaction state, text=black] (noz) at ([shift=({150:\r})]z) {$\lnot z_i$};
                \path (noy) edge[wrColor, left] node {$\wro_{c_i^1}$} (sx);
                \path (z) edge[wrColor, below] node [shift={(0.2,0.0)}] {$\wro_{c_i^2}$} (sy);
                \path (nox) edge[wrColor, above] node {$\wro_{c_i^3}$} (sz);
    
                \path (nox) edge[wrColor, left, pos=0.2, dashed] node [shift={(-0.2,0.0)}] {$\wro_{x_i}$} (sx);
                \path (y) edge[wrColor, left, dashed] node {$\wro_{y_i}$} (sy);
    
                \path (noz) edge[wrColor, left, pos=0.6, dashed] node {$\wro_{z_i}$} (sz);

                \path (nox) edge[double equal sign distance, coColor, right] node {$\co$} (noy);

                \path (noz) edge[double equal sign distance, coColor, right] node {$\co$} (nox);
                \path (y) edge[double equal sign distance, coColor, below] node {$\co$} (z);

            \end{tikzpicture}
            }
            
            \caption{No $\co$-cycle when two literals are satisfied}
            \label{fig:cycles-clause:two-top}
        \end{subfigure}
    
        &
        \begin{subfigure}[b]{.24\textwidth}
            \centering
            \resizebox{\textwidth}{!}{
    
            \begin{tikzpicture}[->,>=stealth,shorten >=1pt,auto,node distance=4cm,
                semithick, transform shape]
                \def \R {2}
                \def \r {1.2}
                \node[transaction state, text=black] (sx) at (0,0) {$S_i^{x}$};
                \node[transaction state, text=black] (nox) at ([shift=({30:\R})]sx) {$\lnot x_i$};
                \node[transaction state, text=black] (sz) at ([shift=({-30:\R})]nox) {$S_i^z$};
                \node[transaction state, text=black] (z) at ([shift=({-90:\R})]sz) {$z_i$};
                \node[transaction state, text=black] (sy) at ([shift=({210:\R})]z) {$S_i^{y}$};
                \node[transaction state, text=black] (noy) at ([shift=({-90:\R})]sx) {$\lnot y_i$};
                \node[transaction state, text=black] (x) at ([shift=({-90:\r})]nox) {$x_i$};
                \node[transaction state, text=black] (y) at ([shift=({30:\r})]noy) {$y_i$};
                \node[transaction state, text=black] (noz) at ([shift=({150:\r})]z) {$\lnot z_i$};
                \path (noy) edge[wrColor, left] node {$\wro_{c_i^1}$} (sx);
                \path (z) edge[wrColor, below] node [shift={(0.2,0.0)}] {$\wro_{c_i^2}$} (sy);
                \path (nox) edge[wrColor, above] node {$\wro_{c_i^3}$} (sz);
    
                \path (nox) edge[wrColor, left, pos=0.2, dashed] node [shift={(-0.2,0.0)}] {$\wro_{x_i}$} (sx);
                \path (noy) edge[wrColor, below, pos=0.5, dashed] node [shift={(-0.1,0.0)}] {$\wro_{y_i}$} (sy);
                \path (noz) edge[wrColor, left, pos=0.6, dashed] node {$\wro_{z_i}$} (sz);
    
                \path (nox) edge[double equal sign distance, coColor, right] node {$\co$} (noy);
                \path (noy) edge[double equal sign distance, coColor, above] node {$\co$} (z);
                \path (noz) edge[double equal sign distance, coColor, right] node {$\co$} (nox);

            \end{tikzpicture}
            }
            
            \caption{No $\co$-cycle when only one literal is satisfied}
            \label{fig:cycles-clause:one-top}
        \end{subfigure}
    
        &  \begin{subfigure}[b]{.24\textwidth}
            \centering
            \resizebox{\textwidth}{!}{
            \begin{tikzpicture}[->,>=stealth,shorten >=1pt,auto,node distance=4cm,
                semithick, transform shape]
                \def \R {2}
                \def \r {1.2}
                \node[transaction state, text=black] (sx) at (0,0) {$S_i^{x}$};
                \node[transaction state, text=black] (nox) at ([shift=({30:\R})]sx) {$\lnot x_i$};
                \node[transaction state, text=black] (sz) at ([shift=({-30:\R})]nox) {$S_i^z$};
                \node[transaction state, text=black] (z) at ([shift=({-90:\R})]sz) {$z_i$};
                \node[transaction state, text=black] (sy) at ([shift=({210:\R})]z) {$S_i^{y}$};
                \node[transaction state, text=black] (noy) at ([shift=({-90:\R})]sx) {$\lnot y_i$};
                \node[transaction state, text=black] (x) at ([shift=({-90:\r})]nox) {$x_i$};
                \node[transaction state, text=black] (y) at ([shift=({30:\r})]noy) {$y_i$};
                \node[transaction state, text=black] (noz) at ([shift=({150:\r})]z) {$\lnot z_i$};
                \path (noy) edge[wrColor, left] node {$\wro_{c_i^1}$} (sx);
                \path (z) edge[wrColor, below] node [shift={(0.2,0.0)}] {$\wro_{c_i^2}$} (sy);
                \path (nox) edge[wrColor, above] node {$\wro_{c_i^3}$} (sz);
    
                \path (nox) edge[wrColor, left, pos=0.2, dashed] node [shift={(-0.2,0.0)}] {$\wro_{x_i}$} (sx);
                \path (noy) edge[wrColor, below, pos=0.5, dashed] node [shift={(-0.1,0.0)}] {$\wro_{y_i}$} (sy);
                \path (z) edge[wrColor, pos=0.8, left, dashed] node {$\wro_{z_i}$} (sz);
    
                \path (z) edge[double equal sign distance, coColor, left] node {$\co$} (nox);
                \path (nox) edge[double equal sign distance, coColor, right] node {$\co$} (noy);
                \path (noy) edge[double equal sign distance, coColor, above] node {$\co$} (z);

            \end{tikzpicture}
            }
            
            \caption{One $\co$-cycle appears when no literal is satisfied}
            \label{fig:cycles-clause:all-bot}
        \end{subfigure}

        \\ \hline			  
        \end{tabular}
    }
    \caption{Transformation of the clause $C_i = x_i \lor y_i \lor \lnot z_i$ into part of the history. Solid $\wro$-edges in $h_\varphi$ represent the constraints of the clause while dashed $\wro$-edges, belonging only to the witnesses of $h_\varphi$, reflect the literals satisfied.
    }
    \label{fig:cycles-clause}
\end{figure}

For the first part of $h_\varphi$'s construction, we observe that ``at least one literal among $l_i^1$, $l_i^2$ or $l_i^3$ must be satisfied'' is equivalent to ``$\lnot l_i^1$, $\lnot l_i^2$ and $\lnot l_i^3$ cannot be satisfied at the same time''. 
Thus, we add write-read dependencies to the history in such a way that if the three values that do not satisfy the clause are read by a witness $h'$ of $h_\varphi$, axiom $\axrc$ forces $h'$ to be inconsistent. We use an auxiliary key $c_i^j$ written by transactions $t_i^j$, $\lnot t_i^j$ and $\lnot t_i^{(j+1) \bmod 3}$ and read by transaction $S_i^j$; enforcing $(\lnot t_i^{(j+1) \bmod 3}, S_i^j) \in \wro_{c_i^j}$. Thanks to variable $c_i^j$, if $(\lnot t_i^j, S_i^j) \in \wro_{\var{l_i^j}_i}$ in such witness $h'$, for any consistent execution of $h'$ with commit order $\co$, $(\lnot t_i^j, \lnot t_i^{j+1 \bmod 3}) \in \co$. Hence, if $h'$ is consistent, for every $i$ there must exist a $j$ s.t.$(t_i^j, S_i^j) \in \wro_{\var{l_i^j}_i}$. Otherwise, every commit order witnessing $h'$'s consistency would be cyclic; which is a contradiction.

In \Cref{fig:cycles-clause} we see how such $\co$-cycle arise on any commit order witnessing $h_\varphi$'s consistency; where $\varphi$ contains the clause $C_i = x_i \lor y_i \lor \lnot z_i$.

\begin{equation}
\label{eq:np-observable:sign}
\signTransaction{t} = \left\{\begin{array}{rl}
    + & \text{if } t = t_i^j \land l_i^j = \var{l_i^j} \\
    - & \text{if } t = t_i^j \land l_i^j = \lnot \var{l_i^j} \\
    - & \text{if } t = \lnot t_i^j \land l_i^j = \var{l_i^j} \\
    + & \text{if } t = \lnot t_i^j \land l_i^j = \lnot \var{l_i^j} \\
    \bot & \text{otherwise}
\end{array} \right. 
\opSignTransaction{t} = \left\{\begin{array}{rl}
    + & \text{if } \signTransaction{t} = - \\
    - & \text{if } \signTransaction{t} = + \\
    \bot & \text{otherwise}
\end{array} \right.
\end{equation}

For the second part of $h_\varphi$'s construction, we utilize the functions $\mathtt{sign}$ and $\mathtt{opsign}$ described in Equation~\ref{eq:np-observable:sign}. The function $\mathtt{sign}$ describes when a literal $l_i^j$ is \emph{positive} (i.e. $l_i^j = \var{l_i^j}$) or \emph{negative} (i.e. $l_i^j = \lnot \var{l_i^j}$). If $l_i^j$ is positive, it assigns to transaction $t_i^j$ the symbol $+$ and to $\lnot t_i^j$ the symbol $-$; while if $l_i^j$ is negative, the opposite. Such symbol is denoted the \emph{sign} of a transaction.
Hence, for each transaction $t_i$ s.t. $\signTransaction{t_i} \neq \bot$ (i.e. $t_i$ is either $t_i^j$ or $\lnot t_i^j$), we introduce $n-1$ $\einsert$ events, one per key $\var{l_i^j}_{(i, i')}^{\signTransaction{t_i}}$, $i' \neq i$, that write on that exact key.
Those $\einsert$ events are read by transaction $S_{i'}^{j'}$ (i.e. $(t_i, S_{i'}^{j'}) \in \wro_{x}$, where $x = \var{l_i^j}_{(i, i')}^{\signTransaction{t_i}}$). In addition, we introduce in $t_i$ another $n-1$ $\einsert$ events that writes the key $\var{t_i}_{(i', i)}^{\opSignTransaction{t_i}}$. Figure~\ref{fig:np-code} describe the full description of transactions $t_i^j$ and $\lnot t_i^j$. 

\begin{figure}[t]
    \resizebox{\textwidth}{!}{
        \footnotesize
        \begin{tabular}{|c|}
        \hline \\ [-2mm]
        \begin{subfigure}[b]{\textwidth}
            \centering
            \resizebox{\textwidth}{!}{
    
            \begin{tikzpicture}[->,>=stealth,shorten >=1pt,auto,node distance=4cm,
                semithick, transform shape]
                \def \R {2}
                \def \r {1.2}
                \node[transaction state, text=black] (nox1) at (-5.5,2.25) {$\lnot t_1$};
                \node[transaction state, text=black] (x1) at (-4.5,2) {$t_1$};
                \node[transaction state, text=black] (dots) at (-3.5,1.75) {$\ldots$};
                \node[transaction state, text=black] (noxim1) at (-2.5,1.5) {$\lnot t_{i-1}$};
                \node[transaction state, text=black] (xim1) at (-1.5,1.25) {$t_{i-1}$};
                \node[transaction state, text=black] (noxi) at (-0.5,1) {$\lnot t_i$};
                \node[transaction state, text=black] (xi) at (0.5,1) {$t_i$};
                \node[transaction state, text=black] (noxi1) at (1.5,1.25) {$\lnot t_{i+1}$};
                \node[transaction state, text=black] (xi1) at (2.5,1.5) {$t_{i+1}$};
                \node[transaction state, text=black] (dotsn) at (3.5,1.75) {$\ldots$};
                \node[transaction state, text=black] (noxn) at (4.5,2) {$\lnot t_n$};
                \node[transaction state, text=black] (xn) at (5.5,2.25) {$t_n$};
                \node[transaction state, text=black] (sx) at (0,-0.5) {$S_i^j$};
    
                \path (x1) edge[wrColor, bend right=31, below] node [shift={(-0.35,0.05)}]{$\wro_{x_{(1,i)}^+}$} (sx);
                \path (xim1) edge[wrColor,  bend left=10, below] node [shift={(-0.55,0.185)}] {$\wro_{x_{(i-1,i)}^+}$} (sx);
                \path (xi1) edge[wrColor, bend left=25, below] node [shift={(0.72,0.3)}]{$\wro_{x_{(i+1,i)}^+}$} (sx);
                \path (xn) edge[wrColor, bend left=44, below] node [shift={(0.25,0.0)}]{$\wro_{x_{(n,i)}^+}$} (sx);

                \path (nox1) edge[wrColor, bend right=44, below] node [shift={(-0.25,0.0)}] {$\wro_{x_{(1,i)}^-}$} (sx);
                \path (noxim1) edge[wrColor, bend right=25, below] node [shift={(-0.72,0.3)}] {$\wro_{x_{(i-1,i)}^-}$} (sx);
                \path (noxi1) edge[wrColor, bend right=10, below] node [shift={(0.55,0.185)}] {$\wro_{x_{(i+1,i)}^-}$} (sx);
                \path (noxn) edge[wrColor, bend left=31, below] node[shift={(0.35,0.05)}] {$\wro_{x_{(n,i)}^-}$} (sx);

                \path (xi) edge[wrColor, dashed, left] node [shift={(0.1,0.25)}] {$\wro_{x_i}$} (sx);
                \path (xi) edge[coColor, double equal sign distance, below] node {$\co$} (noxi1);
                \path (xi) edge[coColor, double equal sign distance, bend right=25, below] node {$\co$} (noxim1);
                \path (xi) edge[coColor, double equal sign distance, bend right=25, below] node {$\co$} (nox1);
                \path (xi) edge[coColor, double equal sign distance, bend left=25, below] node {$\co$} (noxn);

            \end{tikzpicture}
            }
            
        \end{subfigure}

        \\ \hline			  
        \end{tabular}
    }
\caption{Commit edges between transactions of different sign associated to variable $x = \var{l_i^j}$. Superindices $j$ are omitted for legibility. For simplicity on the Figure, we assume that $\signTransaction{t_k} = +$ and $\signTransaction{\lnot t_k} = -$; the situation generalizes for any other setting. If $S_i^j$ would read $x_i$ from $t_i$ in a witness $h'$ of $h_\varphi$ (respectively $\lnot t_i$), for every $i' \neq i$ $(t_i, \lnot t_{i'}) \in \co$, (resp. $(\lnot t_{i}, t_{i'}) \in \co$).}
\label{fig:gadget-consistency}
\end{figure}

The auxiliary keys $\var{l_i^j}_{(i, i')}^{\signTransaction{t_i}}$ and $\var{l_i^j}_{(i, i')}^{\opSignTransaction{t_i}}$ are key to ensure consistency between clauses. Intuitively, if $S_i^j$ reads $\var{l_i^j}_i$ from a positive transaction $t$ in a consistent execution of $h_\varphi$, $\exec = \tup{\overline{h}, \co}$, and $t'$ is a negative transaction s.t. $\var{t} = \var{t'}$, then $(t, t') \in \co$; where $\overline{h}$ is a witness of $h_\varphi$. Hence, any other transaction $S_{i'}^{j'}$ must read $\var{l_{i'}^{j'}}_{i'}$ also from a positive transaction in $\overline{h}$; otherwise $\co$ would be cyclic, which is impossible as $\co$ must be a total order. This phenomenon, that is depicted in Figure~\ref{fig:gadget-consistency}, ensures that $\var{l_i^j}$ is always read from transactions with the same sign. In conclusion, we can establish consistent valuation of the variables of $\varphi$ based on the write-read dependencies of the witnesses of $h_\varphi$.

We introduce a succint final part on the construction of $h_\varphi$ for technical reasons. Indeed, any witness of $h_\varphi$ ensures that $\wro^{-1}_\key$ is a total function for any $\key \in \Vars$. We impose a few additional constraints on $h_\varphi$ so we can better characterize the witnesses of $h_\varphi$. First, we assume that there exists an initial transaction that inserts, for every key $x$, a dummy value different from $\del_{x}$ (for example $0$). Then, we impose that $t_i^j$ and $\lnot t_i^j$ read every key $x$ from the initial transaction. Finally, for the case of transactions $S_i^j$, we define the set of \emph{auxiliary keys} $\mathtt{A}_i^j$ that contain every key different from $c_i^j, \var{l_i^j}_i, \var{l_i^j}_{(i', i)}^+$ and $\var{l_i^j}_{(i', i)}^-$. We introduce on $S_i^j$ an $\einsert$ event that writes every key in $\mathtt{A}_i^j$ with an abritrary value (for example, $0$). Hence, $S_i^j$ reads every key in $A_i^j$ from its own insert and no extra write-read dependency is required.

With this technical addendum, we define $h_\varphi = \tup{T, \so, \wro}$ as the conjunction of all transactions and relations described above. In such case, the only information missing in $h_\varphi$ to be a full history is $\wro_{\var{l_i^j}_i}^{-1}(S_i^j)$. We assume that no more variables than the ones aforementioned belong to $\Vars$.

The proof of NP-hardness goes as follows: first, we prove in \Cref{lemma:np-complete:polynomial-history} that $h_\varphi$ is indeed a polynomial transformation of $\varphi$. Then, as $\isolation{h_\varphi}$ is saturable, by Theorem~\ref{th:poly-consistency-saturable} we observe that it suffices to prove that $\varphi$ is satisfiable if and only there is a witness $\overline{h}$ of $h_\varphi$ s.t. the relation $\pco_{\overline{h}} = \computeCO{\overline{h}}{(\so \cup \overline{\wro})^+}$ is acyclic. For simplifying the reasoning when $\isolation{h_\varphi}$ has an arbitrary isolation configuration, we rely on \Cref{lemma:np-complete:co-cc-co-rc} for reducing the proof at the case when $\isolation{h_\varphi} = \RC$. 

Hence, we prove on \Cref{lemma:np-complete:sat-cons} that on one hand, whenever $\varphi$ is satisfiable we can construct a witness $\overline{h}$ of $h_\varphi$ based on such assignment for which $\pco_{\overline{h}}$ is acyclic. For that, we require \Cref{lemma:np-complete:co-cycle-clause,lemma:np-complete:co-cycle-necessary-wr,lemma:np-complete:co-edge-impossible}.
On the other hand, whenever there is a consistent witness $\overline{h}$ of $h_\varphi$, we prove on \Cref{lemma:np-complete:cons-sat} that we can construct a satisfying assignment of $\varphi$ based on the write-read dependencies in $\overline{h}$. In this case, we require once more \Cref{lemma:np-complete:co-cycle-clause}.

\begin{lemma}
\label{lemma:np-complete:polynomial-history}
The history $h_\varphi$ has polynomial size on the length of $\varphi$.
\end{lemma}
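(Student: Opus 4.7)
The plan is to carry out a direct counting argument on the construction of $h_\varphi$, showing that each ingredient (transactions, events, keys, and dependency relations) is bounded by a polynomial in $n$, the number of clauses of $\varphi$. Since $\varphi$ itself has size $\Theta(n)$ (at most three literals per clause), this will imply polynomial size in $|\varphi|$.

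First I would count transactions. For each clause $C_i$ the construction introduces the nine transactions $t_i^j$, $\lnot t_i^j$, $S_i^j$ with $j\in\{1,2,3\}$, plus the single initial transaction. Hence $|T|=9n+1\in\mathcal{O}(n)$. Next, I would bound the number of keys. Besides the $\mathcal{O}(n)$ clause-level keys $c_i^j$ and literal keys $\var{l_i^j}_i$, the auxiliary keys $\var{l_i^j}_{(i,i')}^{\pm}$ are indexed by a pair $(i,i')$ of clause indices and a sign, giving $\mathcal{O}(n^2)$ keys overall. Together with the sets $\mathtt{A}_i^j$, which are contained in this universe of keys, we obtain $|\Vars|\in\mathcal{O}(n^2)$.

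Then I would bound the number of events per transaction. Inspecting \Cref{fig:np-code}, each $t_i^j$ or $\lnot t_i^j$ contains $2(n-1)$ inserts for the keys $\var{l_i^j}_{(i,i')}^{\pm}$ with $i'\neq i$, at most two inserts for keys $c_i^{\cdot}$, and one delete, so $\mathcal{O}(n)$ events each. Each $S_i^j$ contains one insert writing every key in $\mathtt{A}_i^j\subseteq\Vars$, plus one select, so $\mathcal{O}(n^2)$ events. Summing over all transactions yields $|\events{h_\varphi}|\in\mathcal{O}(n^3)$.

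Finally I would bound the dependency relations. The session order $\so$ is trivially contained in $T\times T$, hence of size $\mathcal{O}(n^2)$. The specified write-read relation $\wro$ consists of: the $\wro_{c_i^j}$ edges from $\lnot t_i^{(j+1)\bmod 3}$ to $S_i^j$ ($\mathcal{O}(n)$ edges), the edges from every $t_i^j$ and $\lnot t_i^j$ to every $S_{i'}^{j'}$ via the keys $\var{l_i^j}_{(i,i')}^{\pm}$ ($\mathcal{O}(n^2)$ edges), the read-from-init edges for $t_i^j, \lnot t_i^j$ on every key ($\mathcal{O}(n\cdot n^2)=\mathcal{O}(n^3)$), and the local-read edges inside each $S_i^j$ for keys in $\mathtt{A}_i^j$ ($\mathcal{O}(n\cdot n^2)=\mathcal{O}(n^3)$). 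Hence $|\wro|\in\mathcal{O}(n^3)$. Combining, $|h_\varphi|\in\mathcal{O}(n^3)$, which is polynomial in $|\varphi|$. There is no real obstacle here; the only mild subtlety is remembering to include the local read-from-init edges that make $h_\varphi$ a proper partial observation history, and to account for the $\mathcal{O}(n^2)$ auxiliary keys when computing the sizes of $S_i^j$ and $\wro$.
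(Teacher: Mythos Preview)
Your proposal is correct and follows essentially the same direct-counting approach as the paper. One minor point of overcounting: in the construction each $S_i^j$ has exactly two events (a single $\einsert$ over the whole set $\mathtt{A}_i^j$ and one $\eselect$), and the keys in $\mathtt{A}_i^j$ are read \emph{locally} by $S_i^j$ so they contribute no $\wro$ edges; the paper therefore obtains the tighter bound $|\events{h_\varphi}|\in\mathcal{O}(n^2)$, but your looser $\mathcal{O}(n^3)$ estimates are of course still polynomial and the argument goes through unchanged.
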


\begin{proof}
Let $\varphi$ a CNF with $n$ clauses and $3$ literals per clause. Then, as $\varphi$ has $3n$ literals, $h_\varphi$ employs $9n$ transactions plus one additional one ($\init$). The number of keys, $|\Vars|$, is quadratic on $n$ as transactions $t_i^j$ and $\lnot t_i^j$ insert $\mathcal{O}(n)$ keys while $S_i^j$ only insert keys also inserted by other transactions. Moreover, the number of events in $h_\varphi$, $\events{h_\varphi}$ is in $\mathcal{O}(|\Vars|) = \mathcal{O}(n^2)$ as transactions $t_i^j, \lnot t_i^j$ have one $\einsert$ event per keys inserted (and they insert $\mathcal{O}(n)$ keys) and one $\edelete$ event and transactions $S_i^j$ only have two events. In addition, $\so \subseteq T \times T$ and $\wro \subseteq \Vars \times \events{h_\varphi} \times  \events{h_\varphi}$; so their size is also polynomial on $n$. Thus, $h_\varphi$ is a polynomial transformation of $\varphi$. 
\end{proof}

One caveat of $h_\varphi$ is that its isolation configuration is unknown. \Cref{lemma:np-complete:co-cc-co-rc} express that, in the particular case of $h_\varphi$, all saturable isolation levels stronger than $\RC$ are equivalent (they impose the same constraints). Hence, thereinafter we can assume without loss of generality that $\isolation{h_\varphi} = \RC$.

\begin{lemma}
\label{lemma:np-complete:co-cc-co-rc}
Under history $h_\varphi$, $\isolation{h_\varphi}$ is equivalent to $\RC$ (i.e. $\isolation{h_\varphi}$ is both weaker and stronger than $\RC$.)
\end{lemma}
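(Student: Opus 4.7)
The plan is to prove the two directions separately. The fact that $\isolation{h_\varphi}$ is stronger than $\RC$ is a direct consequence of the theorem's hypothesis, so the bulk of the work lies in the converse: I will show that $\isolation{h_\varphi}$ is weaker than $\RC$ when evaluated on $h_\varphi$ and on any of its witnesses. Concretely, for every saturable axiom $a$ appearing in $\isolation{h_\varphi}$ and every read event $r$, transaction $t_2$, and key $x$, I will establish the inclusion $v_a(t_2, r, x) \Rightarrow v_{\RC}(t_2, r, x)$, where $v_{\RC}(t_2, r, x)$ abbreviates $(t_2, r) \in (\so \cup \wro);\po^*$. Since the remaining conjuncts in the axiom template (\Cref{eq:axiom}) do not depend on the isolation level, this inclusion suffices to conclude that the partial commit order computed by Algorithm~\ref{algorithm:necessary-co} under $\isolation{h_\varphi}$ coincides with the one computed under $\RC$ on every extension of $h_\varphi$, and therefore that the two consistency checks yield the same verdict.

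The argument rests on two structural properties of $h_\varphi$, which I would establish first by direct inspection of its construction. \textbf{(P1)} Every transaction in $h_\varphi$ contains at most one $\iread$ event, placed as the $\po$-maximal event of the transaction: this is the terminal $\edelete$ in $t_i^j$ and $\lnot t_i^j$, and the terminal $\eselect$ in $S_i^j$; the transaction $\init$ contains no read. \textbf{(P2)} At the transaction level, $(\so \cup \wro)^+ = \so \cup \wro$ in $h_\varphi$ and in every witness of it. This follows from the three-layer shape of the $\wro$-graph: $\init$ is $\so$-before every other transaction and $\wro$-related only to $\{t_i^j, \lnot t_i^j\}$; those transactions are $\wro$-related only to transactions of the form $S_{i'}^{j'}$; and each $S_i^j$ is a $\wro$-sink since its only writes target the auxiliary keys in $\mathtt{A}_i^j$, which it reads locally. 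The only non-trivial two-step composition is $\init \to (t_i^j\text{ or }\lnot t_i^j) \to S_{i'}^{j'}$, whose target pair $(\init, S_{i'}^{j'})$ already belongs to $\so$.

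Assuming (P1)--(P2), the main inclusion follows by regular-expression manipulation. A saturable visibility $v_a$ is a regular expression over $\po, \so, \wro$; any path witnessing $v_a(t_2, r, x)$ projects, after collapsing its $\po$-steps (which stay inside a single transaction), to a $(\so \cup \wro)^+$-path at the transaction level from $t_2$ to $\trans{r}$. By (P2) this projection collapses to a single $\so$ or $\wro$ step $(t_2, \trans{r})$. Lifting back to events: if the step is $\so$, then $(t_2, e) \in \so$ for the $\po$-minimal event $e$ of $\trans{r}$ and $(e, r) \in \po^*$; if it is $\wro$, then by (P1) the only read in $\trans{r}$ is $r$ itself, so $(t_2, r) \in \wro$ directly. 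In both cases $(t_2, r) \in (\so \cup \wro);\po^* = v_{\RC}(t_2, r, x)$, which closes the argument.

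The main technical obstacle will be the rigorous verification of (P2). It requires enumerating the writers and readers of every category of key used in the construction (the $\var{l_i^j}_i$, the $c_i^j$, the various $\var{l_i^j}_{(i,i')}^\pm$, and the auxiliary keys in $\mathtt{A}_i^j$), and then showing that the extra $\wro$-edges introduced by an arbitrary witness for $\wro_{\var{l_i^j}_i}^{-1}(S_i^j)$, which by \Cref{def:partial-observation-history} must originate from a deleter of $\var{l_i^j}_i$ and hence from $t_i^j$ or $\lnot t_i^j$, stay inside the middle-to-bottom layer of the graph. It remains to check that these new edges therefore do not create any transaction-level reachability beyond the direct $\so$-edges $\init \to S_i^j$ that are already present; once this enumeration is carried out, (P2) and the rest of the argument follow immediately.
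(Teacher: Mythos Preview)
Your proposal is correct and follows essentially the same approach as the paper's proof. Both arguments reduce the non-trivial direction to the structural fact that in $h_\varphi$ (and in any witness) the relation $(\so\cup\wro)^+$ collapses to $\so\cup\wro$ at the transaction level, owing to the three-layer shape $\init \to \{t_i^j,\lnot t_i^j\} \to \{S_{i'}^{j'}\}$; from there, a single $\so\cup\wro$ step to $\trans{r}$ is lifted to the $\RC$ visibility $(\so\cup\wro);\po^*$ at the event level. Your explicit isolation of properties (P1) and (P2) makes the lifting step cleaner than in the paper, which carries out the same collapse more tersely; the one place where your phrasing is slightly off is the justification that $S_i^j$ is a $\wro$-sink (the reason is that no \emph{other} transaction reads from $S_i^j$'s writes, not that $S_i^j$ reads its own writes locally), but this is exactly what the key-by-key enumeration you plan in the final paragraph would establish.
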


\begin{proof}
Let $\overline{h} = \tup{T, \so, \overline{\wro}}$ be any witness of $h_\varphi$ and let $h^{\RC}$ be the history that only differ with $\overline{h}$ on its isolation configuration ($\isolation{h^{\RC}} = \RC$ instead of $\isolation{h}$). We prove that $\overline{h}$ and $h^\RC$ are simultaneously consistent or inconsistent.

As both $\isolation{h_\varphi}$ and $\RC$ are saturable, by Theorem~\ref{th:poly-consistency-saturable}, the proof is equivalent to prove that $\pco_{\overline{h}}$ and $\pco_\RC$ are simultaneously cyclic or acyclic; where $\pco_{\overline{h}} = \computeCO{\overline{h}}{(\so \cup \overline{\wro})^+}$ and $\pco_\RC = \computeCO{h^\RC}{(\so \cup \overline{\wro})^+}$. We prove that the two relations coincide, which allow us to conclude the result. 

We observe that as $\isolation{\overline{h}}$ is weaker than $\RC$, $\pco_\RC \subseteq \pco_{\overline{h}}$. Thus, it suffices to prove that $\pco_{\overline{h}} \subseteq \pco_\RC$. Let $t, t'$ be two transactions s.t. $(t, t') \in \pco_{\overline{h}}$ and let us prove that $(t, t') \in \pco_\RC$. As $(\so \cup \overline{\wro})^+ \subseteq \pco_\RC$; we can assume without loss of generality that $(t, t') \in \pco_{\overline{h}} \setminus (\so \cup \overline{\wro})^+$. In such case, there exists $r \in \readOp{\overline{h}}, x \in \Vars$ and $v \in \visibilitySet{\isolation{\overline{h}}(\trans{r})}$ s.t. $t' = {\overline{\wro}_x}^{-1}(r)$ and $\visibilityRelationInstanceApplied[\pco_{\overline{h}}]{t}{r}{x}$ holds in $\overline{h}$. As $\isolation{\overline{h}}$ is saturable, $(t, r) \in (\so \cup \overline{\wro})^+$.

First, we note that $\trans{r} \neq \init$ as it does not contain any read event. %
As $t'$ is a $(\so \cup \overline{\wro})^+$-predecessor of $\trans{r}$, and transactions $S_i^j$ are $(\so \cup \overline{\wro})$-maximal, $t'$ is not a $S_i^j$ transaction; so it must be a $t_i^j$ transaction. However, note that by construction of transactions $t_i^j$, their only $(\so \cup \overline{\wro})$-predecessor is $\init$. Thus, their only $(\so \cup \overline{\wro})$-succesors can be transactions $S_{i'}^{j'}$; transactions that do not have $(\so \cup \overline{\wro})$-successors. In conclusion, if $(t', \trans{r}) \in (\so \cup \overline{\wro})^+$, $(t', \trans{r}) \in \so \cup \overline{\wro}$, and therefore, $(t', r) \in (\so \cup \overline{\wro});\po^*$. 
\end{proof}

\Cref{lemma:np-complete:co-cycle-clause} states a characterization of all commit order cycles imposed by the axiom $\RC$ that only relate the nine transactions associated to a clause in $\varphi$.

\begin{lemma}
\label{lemma:np-complete:co-cycle-clause}
Let $\overline{h} = \tup{T, \so, \overline{\wro}}$ a witness of $h_\varphi$. For a fixed $i$, there is a $\pco_{\overline{h}}$-cycle relating $\init$, $t_i^j, \lnot t_i^j$ and $S_i^j$, $1 \leq j \leq 3$ in $h$ if and only if for all $1 \leq j \leq 3$, $(\lnot t_i^j, S_i^j) \in \overline{\wro}_{\var{l_i^j}}$ in $\overline{h}$.
\end{lemma}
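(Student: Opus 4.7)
My plan is to invoke \Cref{lemma:np-complete:co-cc-co-rc} to restrict attention to the $\axrc$ axiom, enumerate the $\pco_{\overline{h}}$-edges among the ten transactions, and determine when they can close into a cycle.

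For the $(\Leftarrow)$ direction I would fix $j$ and consider the unique $\eselect$ event $r$ in $S_i^j$. By construction, $(\lnot t_i^{j+1 \bmod 3}, S_i^j) \in \wro_{c_i^j}$. Under the hypothesis, $(\lnot t_i^j, r) \in \overline{\wro}_{\var{l_i^j}_i} \subseteq (\so \cup \overline{\wro}); \po^*$. Since $\lnot t_i^j$ also writes $c_i^j$ (see \Cref{fig:np-code:2}) and is distinct from $\lnot t_i^{j+1 \bmod 3}$, the visibility required by $\axrc$ holds for $\lnot t_i^j$ with respect to $r$ on key $c_i^j$, so \textsc{saturate} inserts $(\lnot t_i^j, \lnot t_i^{j+1 \bmod 3})$ into $\pco_{\overline{h}}$. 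Taking this for $j = 1, 2, 3$ yields the cycle $\lnot t_i^1 \to \lnot t_i^2 \to \lnot t_i^3 \to \lnot t_i^1$.

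For the $(\Rightarrow)$ direction I would first argue that neither $\init$ nor any $S_i^j$ can lie on a cycle among the ten transactions. The former is $\so$-minimal and is never the target of a write-read edge; the latter has no outgoing $\pco$-edges into the other nine, because the writes of $S_i^j$ on keys in $\mathtt{A}_i^j$ are local and $S_i^j$ lacks any $\so \cup \overline{\wro}$-path to the read events of the other nine transactions (each $t_i^{j'}, \lnot t_i^{j'}$ reads only from $\init$, and the remaining reads of $S_i^{j'}$ for $j'\neq j$ are sourced from transactions unrelated to $S_i^j$). Hence any cycle must lie within $\{t_i^j, \lnot t_i^j : 1 \leq j \leq 3\}$. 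A case analysis of all reads performed by $S_i^j$ then shows that the only $\pco$-edges among these six nodes are the three saturation edges $(\tau_j, \lnot t_i^{j+1 \bmod 3})$ triggered by $r$'s read of $c_i^j$, where $\tau_j = \overline{\wro}_{\var{l_i^j}_i}^{-1}(S_i^j) \in \{t_i^j, \lnot t_i^j\}$; the partial-observation property forces $\tau_j$ to be a deleter of $\var{l_i^j}_i$.

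Since every such saturation edge targets some $\lnot t_i^k$, no $t_i^k$ admits an incoming edge in this subgraph, so no cycle can pass through $t_i^k$. A cycle must therefore consist of three edges whose sources are all of the form $\lnot t_i^j$, i.e., $\tau_j = \lnot t_i^j$ for every $j$, which is exactly the claimed condition. The main obstacle will be the bookkeeping in the $(\Rightarrow)$ direction ruling out stray saturation edges through the other reads of $S_i^{j'}$ and through any $\so$-paths among the ten transactions; this hinges on each $t_i^j, \lnot t_i^j$ having $\init$ as its only $(\so \cup \overline{\wro})$-predecessor and on $S_i^j$ never serving as a writer source for the other nine transactions.
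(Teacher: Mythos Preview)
Your proposal is correct and mirrors the paper's proof: both directions hinge on the $\axrc$ saturation edges $(\tau_j,\lnot t_i^{j+1\bmod 3})$ arising from $S_i^j$'s read of $c_i^j$, with the paper tracing a putative cycle backward edge-by-edge while you enumerate the subgraph edges upfront and argue that no $t_i^k$ acquires in-degree. One minor wording slip: the writes of $S_i^j$ on $\mathtt{A}_i^j$ are not ``local''---the actual reason $S_i^j$ has no outgoing $\pco$-edge is the one you state next, namely that it is $(\so\cup\overline{\wro})$-maximal and hence never plays the role of $t_2$ in any $\axrc$ instance.
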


\begin{proof}

A graphical description of the different cases of this proof can be seen in Figure~\ref{fig:cycles-clause}.

\underline{$\impliedby$}

Let us suppose that for every $j$, $1 \leq j \leq 3$, $(\lnot t_i^j, S_i^j) \in \overline{\wro}_{\var{l_i^j}_i}$. As $\writeVar{\lnot t_i^j}{c_i^j}$ and $(\lnot t_i^{(j+1) \bmod 3}, S_i^j) \in \overline{\wro}_{c_i^j}$, by axiom $\RC$ we deduce that, $(\lnot t_i^j, \lnot t_i^{(j + 1) \bmod 3}) \in \pco_{\overline{h}}$. Therefore, there is a $\pco_{\overline{h}}$-cycle between transactions $\lnot t_i^1, \lnot t_i^2$ and $\lnot t_i^3$.

\underline{$\implies$}

First, note that $\so \cup \overline{\wro}$ is acyclic, so any $\pco_{\overline{h}}$-cycle has to include at least one $\pco_{\overline{h}} \setminus (\so \cup \overline{\wro})^+$-dependency. Hence, let $t, t'$ be distinct transactions such that $(t', t) \in \pco_{\overline{h}} \setminus (\so \cup \overline{\wro})^+$ is an edge belonging to such cycle. By axiom $\axrc$, this implies that there exists a read event $r$ and a key $x$ s.t. $(t, r) \in \overline{\wro}_x$ and $(t, r) \in (\so \cup \overline{\wro}); \po^*$. Note that in particular this means that $t'$ and $t$ are two distinct $(\so \cup \overline{\wro})$-succesors of $\trans{r}$.

We observe that $\trans{r} \neq \init$ as $\init$ does not contain any read event. Moreover, $\trans{r} \neq t_i^j, \lnot t_i^j$ as those transactions have only one $(\so \cup \overline{\wro})$-predecessor, $\init$. Hence, there exists $j$ s.t. $\trans{r} = S_i^j$. In this case, every ke written by $t_i^j$ or $\lnot t_i^j$ besides $c_i^j$ and $\var{l_i^j}_i$ is read by $S_i^j$ from the $\einsert$ event in its own transaction. We distinguish between two cases:

\begin{itemize}
    \item \underline{$x = \var{l_i^j}_i$}: The only transactions that write $\var{l_i^j}_i$ are $t_i^j$, $\lnot t_i^j$, $\init$ and transactions $S_{i'}^{j'}$. However, transactions $S_{i'}^{j'}$ have only one $(\so \cup \overline{\wro})$-succesor, $\init$ in $h_\varphi$. As $\forall x \neq \var{l_i^j}_i, {\overline{\wro}_x}^{-1}(S_i^j) = {\wro_x}^{-1}(S_i^j)$, one of them, $\init$ must be either $t$ or $t'$. However, $t \neq \init$ as $\init$ does not delete $\var{l_i^j}_i$; so $t = \init$. But in such case, $(t', t) \in \so$; which contradicts that $(t', t) \in \pco_{\overline{h}} \setminus (\so \cup \overline{\wro})^+$. This proves that this case is impossible.
    
    \item \underline{$x = c_i^j$}: In such case, as $(\lnot t_i^{j+1 \bmod 3}, S_i^j) \in \wro_{c_i^j}$, $t = \lnot t_i^{j+1 \bmod 3}$. The only transactions that writes $c_i^j$ and are $(\so \cup \overline{\wro})$-predecessors of $S_i^j$ are $\init, t_i^j$ and $\lnot t_i^j$. As $(\init, t) \in \so$; $t \neq \init$. Thus, any of the other two transactions are candidates to be $t'$. Note that $(t',t) \in \pco_{\overline{h}}$ is part of a cycle; so let $t''$ be a transaction s.t. $(t'', t') \in \pco_{\overline{h}}$. 
    
    If $(t'', t') \in (\so \cup \overline{\wro})$ would hold, as for every key $x$, ${\overline{\wro}_x}^{-1}(t) = {\wro_x}^{-1}(t)$, $t'' = \init$. As $(t',t)$ is part of a $\pco_{\overline{h}}$ cycle and $t \neq \init$, there must exist a transaction $t''' \neq t'' = \init$ s.t. $(t''', t'') \in \pco_{\overline{h}}$ is part of such cycle. Note that $(t''', \init) \in \pco_{\overline{h}} \setminus (\so \cup \overline{\wro})$ by construction of $h_\varphi$. Hence, there exists a key $y$ and a read event $r'$ s.t. $(\init, r') \in \overline{\wro}_y$ and $(t''', r') \in (\so \cup \overline{\wro}); \po^*$. By construction of $h_\varphi$, if $(\init, r') \in \overline{\wro}_y$ then $\trans{r'}$ must be $t_i^{j'}$ for some $j'$. But as we mentioned earlier, such transactions only have one $(\so \cup \overline{\wro})$-predecessor, $\init$; so it is impossible that $(t'', t') \in (\so \cup \overline{\wro})$.

    Hence, $(t'', t') \in \pco_{\overline{h}} \setminus (\so \cup \overline{\wro})$. Replicating the same argument as before we can deduce that there exists a $j'$ s.t. $(t'', S_i^{j'}) \in \overline{\wro}_{c_i^{j'}}$, $t' = \lnot t_i^{j'+1 \bmod 3}$ and $t''$ is either $t_i^{j'}$ or $\lnot t_i^{j'}$. However, as discussed before, $t'$ could only be $\lnot t_i^j$ or $t_i^j$. Therefore, $t' = \lnot t_i^j$ and $j = j'+1 \bmod 3$. 
    
    Finally, as $t'' \neq t$, there must exist a transaction $t'''$ s.t. $(t''', t'') \in \pco_{\overline{h}}$. By the same argument once more, there exists an index $j''$ s.t. $t'' = \lnot t_i^{j''+1 \bmod 3}$, $(t''', S_i^{j''}) \in \overline{\wro}_{c_i^{j''}}$ and $t'''$ is either $t_i^{j''}$ or $\lnot t_i^{j''}$. Once more, as $t''$ could only be $\lnot t_i^{j'}$ or $t_i^{j'}$, we deduce that $j' = j'' + 1 \bmod 3$ and $t'' = \lnot t_i^{j'}$. Note that in this case $j = j'' + 2 \bmod 3$. Thus, $t = \lnot t_i^{j + 1 \bmod 3} = \lnot t_i^{j''} = t'''$. In conclusion, if such cycle exists it contain exactly the transactions $\lnot t_i^1, \lnot t_i^2$ and $\lnot t_i^3$ and for each of them, $(\lnot t_i^j, S_i^j) \in \overline{\wro}_{\var{l_i^j}_i}$.

\end{itemize}

\end{proof}

\Cref{lemma:np-complete:co-cycle-necessary-wr} states that any $\pco_{\overline{h}}$-dependencies imposed by the axiom $\RC$ on transactions $t, t'$ associated to diferent clauses in $\varphi$ are related to valuation choices of literals in $\varphi$.

\begin{lemma}
\label{lemma:np-complete:co-cycle-necessary-wr}
Let $\overline{h} = \tup{T, \so, \overline{\wro}}$ a witness of the history $h_\varphi$.
For every pair of transactions $t,t'$ and indices $i,j$, if $\var{t} = \var{t'}$, $\deleteVar{t'}{\var{l_i^j}_i}$, $t \neq \lnot t_i^{j+1 \bmod 3}$ and $(t', t) \in \pco_{\overline{h}} \setminus (\so \cup \overline{\wro})^+$ in $h$, then $(t', S_i^j) \in \overline{\wro}_{\var{l_i^j}_i}$.
\end{lemma}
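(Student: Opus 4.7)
The plan is to first apply Lemma~\ref{lemma:np-complete:co-cc-co-rc} so that every edge in $\pco_{\overline{h}} \setminus (\so \cup \overline{\wro})^+$ arises from the $\axrc$ axiom. The hypothesis $\deleteVar{t'}{\var{l_i^j}_i}$ and the construction of $h_\varphi$ force $t' \in \{t_i^j, \lnot t_i^j\}$, since these are the only transactions that may delete $\var{l_i^j}_i$. Applying $\axrc$ to $(t', t) \in \pco_{\overline{h}} \setminus (\so \cup \overline{\wro})^+$ then yields a read event $r$ and a key $x$ satisfying $(t, r) \in \overline{\wro}_x$, $\writeVar{t'}{x}$, and $(t', r) \in (\so \cup \overline{\wro}); \po^*$.

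I would then show that $\trans{r}$ must be an $S$-transaction and $r$ its unique $\eselect$ event. Each non-$\init$ transaction is in its own singleton session, so the visibility cannot come through $\so$; and since each $t_{i''}^{j''}$ and $\lnot t_{i''}^{j''}$ explicitly reads every key from $\init$ in $h_\varphi$, any $(t', r) \in \overline{\wro}$ with $t' \neq \init$ forces $\trans{r}$ to be some $S_{i'}^{j'}$. Consequently there exists a key $z$ with $(t', r) \in \overline{\wro}_z$, i.e., $r$ reads $z$ from $t'$.

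The core step is a case analysis on $z$, limited to keys $t'$ writes that are read non-locally by some $S$-transaction: $\var{l_i^j}_i$, $c_i^j$, $c_i^{j-1 \bmod 3}$ (only if $t' = \lnot t_i^j$), $\var{l_i^j}_{(i,k)}^{\signTransaction{t'}}$ and $\var{l_i^j}_{(k,i)}^{\opSignTransaction{t'}}$ for $k \neq i$. The case $z = c_i^j$ forces $t' = \lnot t_i^{j+1 \bmod 3}$, contradicting $t' \in \{t_i^j, \lnot t_i^j\}$. The case $z = \var{l_i^j}_{(k,i)}^{\opSignTransaction{t'}}$ is ruled out because the $\wro$-predecessor of $S_i^j$ for this key is explicitly the clause-$k$ transaction, not $t'$. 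For $z = c_i^{j-1 \bmod 3}$ we get $\trans{r} = S_i^{j-1 \bmod 3}$, and using that different literals in a clause employ different variables (so $\var{l_i^{j-1 \bmod 3}} \neq \var{l_i^j}$), the only overlap between $t'$'s writes and the keys read non-locally by $S_i^{j-1 \bmod 3}$ is $x = z$ itself, forcing the impossible $t = t'$. The case $z = \var{l_i^j}_{(i,k)}^{\signTransaction{t'}}$ is handled analogously, with $\trans{r} = S_k^{j''}$ (for $\var{l_k^{j''}} = \var{l_i^j}$) and again $x = z$ as the only overlap.

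The only remaining possibility is $z = \var{l_i^j}_i$ with $\trans{r} = S_i^j$, which is exactly the conclusion $(t', S_i^j) \in \overline{\wro}_{\var{l_i^j}_i}$. To close this case I would still verify that a valid $x$ distinct from $z$ exists so that the axiom application is non-vacuous: among the keys $t'$ writes and $S_i^j$ reads non-locally, $c_i^j$ is excluded by the hypothesis $t \neq \lnot t_i^{j+1 \bmod 3}$, while $x = \var{l_i^j}_{(k,i)}^{\opSignTransaction{t'}}$ works, with $t$ the clause-$k$ transaction of variable $\var{l_i^j}$ and sign $\opSignTransaction{t'}$, consistent with $\var{t} = \var{t'}$. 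The hard part will be the bookkeeping in this case analysis, since $h_\varphi$ involves several overlapping key families and one must carefully track which transaction is the designated $\wro$-predecessor of each $S$-transaction for each key.
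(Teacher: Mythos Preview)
Your argument is correct, and the case analysis on the visibility key $z$ (the key via which $(t',r)\in\overline{\wro}_z$) is sound; each case is disposed of for the right reason. The paper proceeds differently: instead of analysing $z$, it focuses on the axiom key $x$ (the one with $(t,r)\in\overline{\wro}_x$ and $\writeVar{t'}{x}$). Since $x$ is written by both $t$ and $t'$ and is read non-locally by $S_{i'}^{j'}$ (i.e., $x\notin\mathtt{A}_{i'}^{j'}$), the paper argues directly that one of $t,t'$ must be associated with clause $C_{i'}$; ruling out $t$ (which would force $x=\var{l_{i'}^{j'}}_{i'}$, a key $t'$ does not write) yields $i'=i$. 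Only then does the paper observe that $(t',S_i^j)\in(\so\cup\overline{\wro})\setminus(\so\cup\wro)$ can hold on no key except $\var{l_i^j}_i$. Your route is longer but perhaps more transparent, since it makes explicit which key families can carry the visibility edge and why each fails.

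One remark: your final paragraph, where you verify that in the surviving case $z=\var{l_i^j}_i$ a valid $x\neq z$ actually exists, is unnecessary. The hypothesis $(t',t)\in\pco_{\overline{h}}\setminus(\so\cup\overline{\wro})^+$ already guarantees that the $\axrc$ axiom fired for \emph{some} $(x,r)$ with $t\neq t'$; you are not obliged to exhibit it. Once the other cases for $z$ lead to contradictions, the conclusion follows immediately. (Your observation that the hypothesis $t\neq\lnot t_i^{j+1\bmod 3}$ excludes $x=c_i^j$ and that $x=\var{l_i^j}_{(k,i)}^{\opSignTransaction{t'}}$ is the shape of the remaining witness is accurate, it just is not needed for the implication you are proving.)
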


\begin{proof}
Let $i, j$ be indices and $t, t'$ be distinct transactions such that $t \neq \lnot t_i^{j+1 \bmod 3}$ and $(t', t) \in \pco_{\overline{h}} \setminus (\so \cup \overline{\wro})^+$. Hence, $(t', t) \in \pco_{\overline{h}} \setminus (\so \cup \overline{\wro})^+$, by axiom $\axrc$, there must exist a key $x$ and a read event $r$ s.t. $(t, r) \in \overline{\wro}_x$, $\writeVar{t'}{x}$ and $(t', r) \in (\so \cup \overline{\wro}) ; \po^*$. We characterize the possible candidates of transactions $t, t', \trans{r}$ and key $x$.

First, as $\trans{r}$ has two different $(\so \cup \overline{\wro})$-predecessors, $\trans{r} \neq \init, t_{i'}^{j'}$; for any indices $i', j'$. Hence, there must exist indices $i', j'$ s.t. $\trans{r} = S_{i'}^{j'}$. 

Next we deduce that $t$ and $t'$ belongs to different clauses. As $\deleteVar{t'}{\var{l_i^j}_i}$, we deduce that $t'$ is either $t_i^j$ or $\lnot t_i^j$. Hence, as $t \neq \lnot t_i^{j+1 \bmod 3}$, neither $t_i^j$ nor $\lnot t_i^j$ are $(\so \cup \overline{\wro})$-predecessors of $S_i^j$ but both $t$ and $t'$ are $(\so \cup \overline{\wro})$-predecessors of $S_i^j$, we then deduce that $t$ and $t'$ belong to different clauses.

Finally, we deduce that $i' = i$ and $(t', S_i^j) \in \overline{\wro}_{\var{l_i^j}_i}$. As $x$ is written by $t$ and $t'$ and $x \not\in \mathtt{A}_{i'}^{j'}$; either $t$ or $t'$ are associated to the same clause as $S_{i'}^{j'}$. If $t$ would be associated to clause $C_{i'}$, then $t$ should be either $t_{i'}^{j'}$ or $\lnot t_{i'}^{j'}$ and $x = \var{l_{i'}^{j'}}_{i'}$. However, this contradicts that $t'$ writes $\var{l_{i'}^{j'}}_{i'}$ as $t'$ is either $t_i^j$ or $\lnot t_i^j$. Hence, as $t$ is not associated to clause $C_{i'}$, $i' = i$. As $(t', S_i^j) \not\in (\so \cup \wro)$ but $(t', S_i^j) \in (\so \cup \overline{\wro})$ and $\overline{\wro}_y = \wro_y$ for any key $y \neq \var{l_i^j}_i$, we conclude that $(t', S_i^j) \in \overline{\wro}_{\var{l_i^j}_i}$. 
\end{proof}

\Cref{lemma:np-complete:co-edge-impossible} states that $\pco_{\overline{h}}$ does not contain tuples of transactions associated to literals with equal variable and sign.%

\begin{lemma}
\label{lemma:np-complete:co-edge-impossible}
Let $\overline{h} = \tup{T, \so, \overline{\wro}}$ a witness of the history $h_\varphi$. For every pair of transactions $t,t'$ and indices $i,j$, if $\signTransaction{t} = \signTransaction{t'}$, $\var{t} = \var{t'} = \var{l_i^j}$, $(t, S_i^j) \in \overline{\wro}_{\var{l_i^j}_i}$ and $t' \neq \lnot t_i^{j-1 \bmod 3}$ then $(t', t) \not\in \pco_{\overline{h}} \setminus (\so \cup \overline{\wro})^+$.%
\end{lemma}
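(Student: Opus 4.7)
The plan is to argue by contradiction using axiom $\axrc$. By \Cref{lemma:np-complete:co-cc-co-rc}, I may assume $\isolation{h_\varphi} = \RC$, so any pair in $\pco_{\overline{h}} \setminus (\so \cup \overline{\wro})^+$ must come from an application of $\axrc$. Suppose $(t', t) \in \pco_{\overline{h}} \setminus (\so \cup \overline{\wro})^+$; then there must exist a key $x$ and a read event $r$ such that $(t, r) \in \overline{\wro}_x$, $\writeVar{t'}{x}$, $(t', r) \in (\so \cup \overline{\wro});\po^*$, and $t \neq t'$. Since only $t_i^j$ and $\lnot t_i^j$ delete $\var{l_i^j}_i$, the hypothesis $(t, S_i^j) \in \overline{\wro}_{\var{l_i^j}_i}$ forces $t \in \{t_i^j, \lnot t_i^j\}$; set $s = \signTransaction{t}$.

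Next I would perform a case analysis on $(x, \trans{r})$ by enumerating the keys that $t$ writes and, for each, the readers which could pick $t$ as their chosen writer. The enumeration has three natural classes of keys (the deleted key $\var{l_i^j}_i$, the $c$-keys, and the $K$-keys of \Cref{fig:np-code}) and two classes of readers: select events $S_k^l$, whose chosen writer is constrained by the predefined $\wro$ (or, for $\var{l_k^l}_k$, by a witness-choice), and delete events in $t_k^l$ or $\lnot t_k^l$ where the witness freely picks $t$. For every resulting $(x, \trans{r})$ I would identify the full writer-set of $x$ and check which of its members can play the role of $t'$. The sign/variable bookkeeping relies on four structural facts about $h_\varphi$: (i) the writers of $\var{l_i^j}_i$ are $\{t_i^j, \lnot t_i^j, \init\}$ with signs $\{s, -s, \bot\}$; (ii) each $K$-key is written by exactly two sign-carrying transactions carrying \emph{opposite} signs, one of which is $t$; (iii) any $c$-key that $t$ writes is, apart from $t$ and $\init$, written by transactions associated to literals at \emph{different} indices within clause $i$, hence referring to variables different from $\var{l_i^j}$ because a clause uses three distinct variables; (iv) every $S_k^l$ and $\init$ has sign $\bot$. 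Combining (i)--(iv), no writer of $x$ other than $t$ itself satisfies simultaneously $\signTransaction{t'} = s$ and $\var{t'} = \var{l_i^j}$, contradicting the existence of $t'$.

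The main obstacle is the combinatorial freedom in the choice of writer for delete-event reads: a delete in any $t_k^l$ reads every key and the witness can pick $t$ as its writer, which a priori opens many $(x, \trans{r})$ combinations. The analysis stays manageable because facts (i)--(iv) apply uniformly to the entire writer-set of each key class, collapsing each class into a single check. Concerning the extra hypothesis $t' \neq \lnot t_i^{j-1 \bmod 3}$: it flags the subtle sub-case where $t = \lnot t_i^j$ writes the spillover key $c_i^{j-1 \bmod 3}$ and is moreover the predefined writer of $c_i^{j-1 \bmod 3}$ for $S_i^{j-1 \bmod 3}$, so axiom $\axrc$ would fire for any other writer of $c_i^{j-1 \bmod 3}$; fact (iii) together with $\var{l_i^{j-1 \bmod 3}} \neq \var{l_i^j}$ eliminates $\lnot t_i^{j-1 \bmod 3}$ as a candidate for $t'$, closing the last case.
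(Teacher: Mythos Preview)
Your approach is sound and reaches the same contradiction as the paper, but it is organised differently and carries two inaccuracies worth fixing.

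The paper's proof does not enumerate $(x,\trans{r})$ pairs. Instead it first argues that $t'$ must lie in a clause $i'\neq i$ (the only clause-$i$ transactions with variable $\var{l_i^j}$ are $t_i^j$ and $\lnot t_i^j$, and the one different from $t$ has the opposite sign), then shows $\trans{r}=S_{i''}^{j''}$ for some $i'',j''$, and finally observes that the only keys written by two sign-carrying transactions from \emph{different} clauses are the $K$-keys, whose two writers always carry opposite signs. Your exhaustive case analysis on writer-sets achieves the same conclusion without first isolating the ``different clause'' step; the advantage of your route is that facts (i)--(iv) are reader-independent, so you never need the $(\so\cup\overline{\wro});\po^*$ predecessor constraint at all.

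Two corrections. First, your ``main obstacle'' is illusory: in the construction of $h_\varphi$ the delete events in $t_k^l$ and $\lnot t_k^l$ are \emph{not} left to the witness; the paper fixes $\wro_x^{-1}$ of every such read to $\init$. Hence the case $(t,r)\in\overline{\wro}_x$ with $r$ a delete event never occurs, and your analysis of it is unnecessary (though, as you note, facts (i)--(iv) would dispose of it anyway). Second, fact (iii) is wrong as stated: both $t_i^j$ and $\lnot t_i^j$ write $c_i^j$, so the writer-set of any $c$-key that $t$ writes also contains the ``other'' member of $\{t_i^j,\lnot t_i^j\}$, which has the \emph{same} variable as $t$ but opposite sign. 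This omission is harmless for your conclusion (opposite sign still rules it out as $t'$), but the fact needs rephrasing. With these two fixes your argument is complete; your closing remark that the hypothesis $t'\neq\lnot t_i^{j-1\bmod 3}$ is already implied by $\var{t'}=\var{l_i^j}$ together with the distinct-variables-per-clause assumption is correct.
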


\begin{proof}
We reason by contradiction. Let us suppose that $t, t'$ are a pair of transactions such that $\signTransaction{t} = \signTransaction{t'}$, $\var{t} = \var{t'} = \var{l_i^j}$, $(t, S_i^j) \in \overline{\wro}_{\var{t}_i}$, $t' \neq \lnot t^{j-1 \bmod 3}$ and $(t', t) \in \pco_{\overline{h}} \setminus (\so \cup \overline{\wro})^+$, for some indices $i,j$. As $(t, S_i^j) \in \overline{\wro}_{\var{l_i^j}_i}$, $t$ is either $t_i^j$ or $\lnot t_i^j$. Moreover, as $(t', t) \in \pco_{\overline{h}} \setminus (\so \cup \overline{\wro})^+$, by axiom $\axrc$ we deduce that there exists a key $x$ and a read event $r$ s.t. $(t', r) \in (\so \cup \overline{\wro});\po^*$, $(t, r) \in \overline{\wro}_x$ and $\writeVar{t'}{x}$.

We first prove that $t'$ and $t$ are associated to different clauses. As $(\init, t) \in \so$, $t' \neq \init$. Next, as $(t', r) \in (\so \cup \overline{\wro});\po^*$ and transactions $S_{i'}^{j'}$ are $\so \cup \overline{\wro}$-maximal, we deduce that there must exist a pair of indices $i', j'$ s.t. $t' = t_{i'}^{j'}$ or $\lnot t_{i'}^{j'}$. Moreover, as $t' \neq \lnot t_i^{j-1 \bmod 3}$, $t$ is either $t_i^j$ or $\lnot t_i^j$. In addition, as in any witness of $h_\varphi$ both $t_i^j$ and $\lnot t_i^j$ cannot be $(\so \cup \overline{\wro})$-predecessors of $S_i^j$, we deduce that $i' \neq i$. 

Finally we contradict the hypothesis proving that $\signTransaction{t} \neq \signTransaction{t'}$. If $i' \neq i$, $t \neq \init$ but $t$ is a $\overline{\wro}$-predecessor of $\trans{r}$, there must exist indices $i'', j''$ s.t. $\trans{r} = S_{i''}^{j''}$. Hence, as $x \in \mathtt{A}_{i''}^{j''}$ and it is written by $t$ and $t'$, $i''$ must be either $i'$ or $i$. However, $i'' \neq i$ as in that case, $x = \var{l_i^j}_i$ and $t'$ does not write $\var{l_i^j}_i$. Hence, $i'' = i' \neq i$ and $x = \var{l_i^j}_{(i', i)}^{\signTransaction{t'}}$. However, as $\writeVar{t}{x}$, by construction of $h_\varphi$, we must conclude that $\signTransaction{t} \neq \signTransaction{t'}$. Thus, as we reached a contradiction, the lemma holds.
\end{proof}

\begin{lemma}
\label{lemma:np-complete:sat-cons}
For every boolean formula $\varphi$, if $\varphi$ is satisfiable then there is a consistent witness $\overline{h}$ of $h_\varphi$.
\end{lemma}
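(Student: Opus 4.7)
The plan is to construct an explicit consistent witness $\overline{h} = \tup{T, \so, \overline{\wro}}$ of $h_\varphi$ driven by $\alpha$. For each pair of indices $(i,j)$, set $(t_i^j, S_i^j) \in \overline{\wro}_{\var{l_i^j}_i}$ if $\alpha$ satisfies $l_i^j$ and $(\lnot t_i^j, S_i^j) \in \overline{\wro}_{\var{l_i^j}_i}$ otherwise, leaving $\wro$ unchanged on every other key. Since both $t_i^j$ and $\lnot t_i^j$ delete $\var{l_i^j}_i$, Definition~\ref{def:partial-observation-history} certifies $\overline{h}$ as a witness of $h_\varphi$. Let $C$ be the set of transactions selected by this rule for at least one $S_i^j$, and $N$ the remaining transactions of the form $t_i^j$ or $\lnot t_i^j$. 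A short case split on the four literal polarities yields the central invariant: for every variable $x$, all $t \in C$ with $\var{t} = x$ share the same sign, namely $+$ when $\alpha(x) = 1$ and $-$ when $\alpha(x) = 0$.

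By Lemma~\ref{lemma:np-complete:co-cc-co-rc} one may replace $\isolation{h_\varphi}$ with the saturable isolation level $\RC$, so Theorem~\ref{th:poly-consistency-saturable} reduces the consistency of $\overline{h}$ to the acyclicity of $\pco_{\overline{h}} = \computeCO{\overline{h}}{(\so \cup \overline{\wro})^+}$. The argument is by contradiction: assume a $\pco_{\overline{h}}$-cycle exists. Since $(\so \cup \overline{\wro})^+$ is acyclic, the cycle uses at least one edge $(t',t) \in \pco_{\overline{h}} \setminus (\so \cup \overline{\wro})^+$. Because $\init$ is $(\so \cup \overline{\wro})^+$-minimal and every $S_i^j$ is $(\so \cup \overline{\wro})$-maximal (and hence $\pco_{\overline{h}}$-maximal, since $S_i^j$ is never the read-from source of any other read), the cycle must live entirely inside $C \cup N$; within this set the only $(\so \cup \overline{\wro})^+$ edges emanate from $\init$, so the cycle is built solely from non-$(\so \cup \overline{\wro})^+$ edges.

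The main obstacle and core of the proof is to orient these edges by combining the three preceding lemmas with the sign invariant. For a cross-clause edge $(t',t)$, the common key must be variable-specific (the clause-specific keys $c_i^j$ are only written inside clause $i$), so $\var{t} = \var{t'}$; Lemma~\ref{lemma:np-complete:co-cycle-necessary-wr} then forces $t' \in C$, its side condition holding because $t$ and $t'$ belong to different clauses. Lemma~\ref{lemma:np-complete:co-edge-impossible} yields $\signTransaction{t} \neq \signTransaction{t'}$, and by the sign invariant this makes $t \in N$. For an intra-clause edge, the witnessing read is necessarily on $c_i^j$ (the read of $\var{l_i^j}_i$ produces only $(\so \cup \overline{\wro})^+$ constraints from $\init$, since the non-chosen transaction of $\{t_i^j, \lnot t_i^j\}$ is not $(\so \cup \overline{\wro})$-visible to $S_i^j$), so $t = \lnot t_i^{j+1 \bmod 3}$ and $t' \in \{t_i^j, \lnot t_i^j\} \cap C$. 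Consequently every non-$(\so \cup \overline{\wro})^+$ edge originates in $C$ and every cross-clause edge terminates in $N$, so transactions in $N$ have no outgoing non-$(\so \cup \overline{\wro})^+$ edges and no cycle can pass through $N$. Any cycle inside $C$ must therefore be intra-clause, and by Lemma~\ref{lemma:np-complete:co-cycle-clause} it is the 3-cycle $\lnot t_i^1 \to \lnot t_i^2 \to \lnot t_i^3 \to \lnot t_i^1$ for some $i$, which would require all three $\lnot t_i^j$ to lie in $C$. But since $\alpha$ satisfies $C_i$ some $l_i^j$ is true, placing the corresponding $\lnot t_i^j$ in $N$ and breaking the cycle, delivering the desired contradiction.
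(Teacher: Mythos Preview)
Your proof is correct and uses the same ingredients as the paper: the witness is built from $\alpha$ exactly as in the paper, the reduction to $\RC$ and to acyclicity of $\pco_{\overline{h}}$ is the same, and the contradiction is obtained from the same three auxiliary lemmas. The difference lies in how the cycle analysis is organized. The paper picks a cross-clause edge $(t',t)$, shows $t'\in C$, and then \emph{walks backward} along the cycle through at most two intra-clause edges until it reaches another cross-clause edge $(\tilde t,\hat t)$ with $\hat t\in C$; it then applies Lemma~\ref{lemma:np-complete:co-cycle-necessary-wr} to get $\tilde t\in C$, the sign invariant to get equal signs, and Lemma~\ref{lemma:np-complete:co-edge-impossible} for the contradiction. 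Your $C/N$ partition replaces this case-by-case walk with a global edge classification (all edges originate in $C$; cross-clause edges terminate in $N$), which forces the cycle to be intra-clause inside $C$ and lets Lemma~\ref{lemma:np-complete:co-cycle-clause} finish directly. This is cleaner and avoids the paper's slightly ad hoc three-step backtracking.

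One small point of presentation: the sentence ``Lemma~\ref{lemma:np-complete:co-edge-impossible} yields $\signTransaction{t} \neq \signTransaction{t'}$'' is not literally what the lemma gives, since its hypothesis $(t,S_i^j)\in\overline{\wro}_{\var{l_i^j}_i}$ already assumes $t\in C$. The clean way to phrase the step is as a case split: if $t\in N$ you are done; if $t\in C$, the sign invariant gives $\signTransaction{t}=\signTransaction{t'}$, and then Lemma~\ref{lemma:np-complete:co-edge-impossible} (with the side condition $t'\neq\lnot t_i^{j-1\bmod 3}$ holding because $t'$ lies in a different clause) contradicts the existence of the edge. Either way $t\in N$, so your conclusion stands.
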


\begin{proof}
Let $\alpha: \VarsFomula{\varphi} \to \{0, 1\}$ an assignment that satisfies $\varphi$. Let $h_\varphi^\alpha = \tup{T, \so, \overline{\wro}}$ the extension of $h_\varphi$ s.t. for every $i,j$, $(t_i^j, S_i^j) \in \overline{\wro}_{\var{l_i^j}_i}$ if $l_i^j[\alpha(\var{l_i^j})/\var{l_i^j}] = \btrue$ and $(\lnot t_i^j, S_i^j) \in \overline{\wro}_{\var{l_i^j}_i}$ otherwise. Note that for every two transactions $t, t'$ s.t. $\var{t} = \var{t'}$, $\alpha(\var{t}) = \alpha(\var{t'})$. Hence, if $(t, S_i^j) \in \overline{\wro}_{\var{t}_i}$ and $(t', S_{i'}^{j'}) \in \overline{\wro}_{\var{t'}_{i'}}$ then $\signTransaction{t} = \signTransaction{t'}$. In addition, by construction of $h_\varphi$, for every transaction $S_i^j$, the only variable $\key$ such that $\wro_\key^{-1}(S_i^j) \uparrow$ is $\key = \var{l_i^j}$.
Thus, for every $\key \in \Vars$, ${\overline{\wro}_{\key}}^{-1}$ is defined for any read that does not read locally and therefore, $h^\alpha_\varphi$ is a full history that extends $h_\varphi$.

Let us prove that $h^\alpha_\varphi$ is consistent. As mentioned before, thanks to Theorem~\ref{th:poly-consistency-saturable}, we can reduce the problem of checking if $h^\alpha_\varphi$ is consistent to the problem of checking if $\pco_{h^\alpha_\varphi} = \computeCO{h^\alpha_\varphi}{(\so \cup \overline{\wro})^+}$ is acyclic.

We reason by contradiction, assuming there is a $\pco_{h^\alpha_\varphi}$-cycle and reaching a contradiction. Clearly $\so \cup \overline{\wro}$ is acyclic as $\so \cup \wro$ is acyclic, transactions $S_i^j$ are $(\so \cup \overline{\wro})$-maximal and $\overline{\wro} \setminus \wro$ only contains tuples $(t_i^j, S_i^j)$ or $(\lnot t_i^j, S_i^j)$. Thus, any $\pco_{h^\alpha_\varphi}$-cycle in $h^\alpha_\varphi$ contains at least one edge $(t', t) \in \pco_{h^\alpha_\varphi} \setminus (\so \cup \overline{\wro})^+$; so let be $t, t'$ such a pair of distinct transactions s.t. $(t', t) \in \pco_{h^\alpha_\varphi} \setminus (\so \cup \overline{\wro})^+$ and $(t',t)$ is part of the $\pco_{h^\alpha_\varphi}$-cycle.

First, we observe that by construction of $h_\varphi$, transactions $S_i^j$ are $(\so \cup \overline{\wro})$-maximal. Moreover, they also $\pco_{h^\alpha_\varphi}$-maximal: by contradiction, if there was a transaction $u_i^j$ s.t. $(S_i^j, u_i^j) \in \pco_{h^\alpha_\varphi} \setminus (\so \cup \overline{\wro})+$, by axiom $\axrc$, there would be a variable a read event $r$ s.t. $(S_i^j, r) \in (\so \cup \overline{\wro}); \po^*$; which is impossible. We also observe that $\init$ is not only $(\so \cup \overline{\wro})$-minimal but $\pco_{h^\alpha_\varphi}$-minimal. By the same argument, if there would be a transaction $u \neq \init$ s.t. $(u, \init) \in \pco_{h^\alpha_\varphi} \setminus (\so \cup \overline{\wro})^+$, by axiom $\axrc$, there should be a read event $r$ and a key $x$ s.t. $(\init, r) \in \overline{\wro}_x$ and $(u', r) \in (\so \cup \overline{\wro}); \po^*$. However, by construction of $h_\varphi$, the only transactions that read a variable from $\init$ are $t_i^j$; transactions with only one $(\so \cup \overline{\wro})$-predecessor. This shows that such transaction $u$ does not exist. Altogether, the $\pco_{h^\alpha_\varphi}$-cycle can only contain pairs of transactions $t_i^j$ and $\lnot t_i^j$. In particular, as transactions $t_i^j$ have only one $(\so \cup \overline{\wro})$-predecessor, $\init$, such $\pco_{h^\alpha_\varphi}$-cycle is in $\pco_{h^\alpha_\varphi} \setminus (\so \cup \overline{\wro})^+$.

Next, we note that, as every clause $C_i$ is satisfied by $\alpha$, there exists an index $j$ s.t. $(t_i^j, S_i^j) \in \var{l_i^j}$. By \Cref{lemma:np-complete:co-cycle-clause}, we know there is no $\pco_{h^\alpha_\varphi}$-cycle relating the nine transactions associated with clause $C_i$ and $\init$. Therefore, a $\pco_{h^\alpha_\varphi}$-cycle has to involve at least two transactions from different clauses. Hence, we can assume without loss of generality that $t$ and $t'$ belong to the same clause.

As $(t', t) \in \pco_{h^\alpha_\varphi} \setminus (\so \cup \overline{\wro})^+$, there must exist a key $x$ and a read event $r_x$ s.t. $\writeVar{t}{x}$, $(t, r_x) \in \overline{\wro}_x$ and $(t', r) \in (\so \cup \overline{\wro}); \po^*$.
By construction of $h_\varphi$, the only case when two transactions from different clauses write the same variable is when $\var{t} = \var{t'}$. In particular, as $t$ and $t'$ belong to different clauses, there must exist indices $i, i'$ s.t. $x = \var{t}^{\signTransaction{t}}_{(i', i)}$ and $\trans{r_x} = S_i^j$. Hence, there is only one candidate for transaction $t'$: $t_i^j$ if $\signTransaction{t_i^j} = \signTransaction{t'} = \opSignTransaction{t}$ and $\lnot t_i^j$ otherwise. Therefore, as $t', \trans{r_x}$ belong to the same clause and $t'$ is a $(\so \cup \overline{\wro})$-predecessor of $\trans{r_x}$, we conclude that $(t', S_i^j) \in \overline{\wro}_{\var{t'}_i}$.

To reach a contradiction, we find a pair of distinct transactions $\tilde{t}, \hat{t}$ in the $\pco_{h^\alpha_\varphi}$-cycle from different clauses but associated to the same variable.
First, as $(t',t)$ is part of the $\pco_{h^\alpha_\varphi}$-cycle, there exists a $\pco_{h^\alpha_\varphi}$-predecessor of $t'$, $t''$ s.t. $(t'', t') \in \pco_{h^\alpha_\varphi}$ is part of the $\pco_{h^\alpha_\varphi}$-cycle. As we mentioned before, $(t'', t') \in \pco_{h^\alpha_\varphi} \setminus (\so \cup \overline{\wro})^+$. Then, there must exist a key $y$ and a read event $r_y$ s.t. $\writeVar{t''}{y}$, $(t', r_y) \in \overline{\wro}_y$ and $(t'', r) \in (\so \cup \overline{\wro});\po^*$. Two cases arise:
\begin{itemize}
    \item \underline{$t''$ is not associated to clause $i$}: In this case, as both $t', t''$ write variable $y$, by construction of $h_\varphi$ we observe that $\var{t''} = \var{t'}$. Thus, we denote $\tilde{t} = t''$ and $\hat{t} = t'$.
    
    \item \underline{$t''$ is associated to clause $i$}: In this case, $t'' \neq \lnot t'$ as no transaction in $h$ have both $t'$ and $\lnot t'$ as $(\so \cup \overline{\wro})$-predecessors. Hence, as no clause has two literals referring to the same variable, $\var{t'} \neq \var{t''}$. Thus, as $t''$ and $t'$ have one common key, we deduce that $t'' = \lnot t_i^{j-1\bmod 3}$ and $y = c_i^{j-1 \bmod 3}$. Thus, as $(t', r) \in \overline{\wro}_{c_i^{j-1 \bmod 3}}$, we can conclude that $\trans{r_y} = S_i^{j-1\bmod 3}$ and $(t'',S_i^{j-1 \bmod 3}) \in \overline{\wro}_{\var{t''}_i}$. As $t'' \neq t$, there must exist a transaction $t'''$ s.t. $(t''', t'') \in \pco_{h^\alpha_\varphi}$ belongs to the $\pco_{h^\alpha_\varphi}$-cycle. Again, we observe two cases:
    \begin{itemize}
        \item \underline{$t'''$ is not associated to clause $i$}: In this case, by an analogous argument, we observe that $\var{t'''} = \var{t''}$. Thus, we denote $\tilde{t} = t'''$ and $\hat{t} = t''$.
        
        \item \underline{$t'''$ is associated to clause $i$}: By the same reasoning as before, $t''' = \lnot t_i^{j-2 \bmod 3}$ and $(t''', S_i^{j-2 \bmod 3}) \in \overline{\wro}_{\var{t'''}_i}$. Moreover, as $t''' \neq t$, there must exist a transaction $t''''$ s.t. $(t'''', t''') \in \pco_{h^\alpha_\varphi}$ belongs to the $\pco_{h^\alpha_\varphi}$-cycle. Moreover, $t''''$ is not associated to clause $i$, as, once more, we would deduce that $t'''' = \lnot t_i^{j-3 \bmod 3}$ and that $(t'''', S_i^{j-3\bmod 3}) \in \overline{\wro}_{\var{t''''}_i}$; which is impossible as by the construction of $h^\alpha_\varphi$ is satisfied. Hence, $t''''$ and $t'''$ belong to different clauses and $\var{t''''} = \var{t'''}$. We denote in this case $\tilde{t} = t''''$ and $\hat{t} = t'''$.    
    \end{itemize}
    
\end{itemize}

Finally, we reach a contradiction with the help of Lemmas~\ref{lemma:np-complete:co-edge-impossible} and ~\ref{lemma:np-complete:co-cycle-necessary-wr}. 
On one hand, by the choice of transactions $\hat{t}$ and $\tilde{t}$, we know that $\var{\hat{t}} = \var{\tilde{t}}$ and there exist indices $\tilde{i}, \tilde{j}$ s.t. $\deleteVar{\tilde{t}}{\var{l_{\tilde{i}}^{\tilde{j}}}}$. 
Moreover, $\hat{t} \neq \lnot \tilde{t}_{\tilde{i}}^{\tilde{j}+1 \bmod 3}$ as they belong to different clauses. Thus, as $(\tilde{t}, \hat{t}) \in \pco_{h^\alpha_\varphi} \setminus (\so \cup \overline{\wro})^+$, by \Cref{lemma:np-complete:co-cycle-necessary-wr} we deduce that $(\tilde{t}, S_{\tilde{i}}^{\tilde{j}}) \in \overline{\wro}_{\var{\tilde{t}}_i}$. 
On the other hand, we also know that there exist indices $\hat{i}, \hat{j}$ s.t. $\hat{t}$ is associated to the literal $l_{\hat{i}}^{\hat{j}}$ and $(\hat{t}, S_{\hat{i}}^{\hat{j}}) \in \overline{\wro}_{\var{t}_{\hat{i}}}$. 
Hence, by construction of $h^\alpha_\varphi$, as $\var{\hat{t}} = \var{\tilde{t}}$, $(\tilde{t}, S_{\tilde{i}}^{\tilde{j}}) \in \overline{\wro}_{\var{\tilde{t}}_i}$ and $(\hat{t}, _{\hat{i}}^{\hat{j}}) \in \overline{\wro}_{\var{\hat{t}}_{\hat{i}}}$, we deduce that $\signTransaction{\hat{t}} = \signTransaction{\tilde{t}}$. However, by \Cref{lemma:np-complete:co-edge-impossible}, we deduce that $(\tilde{t}, \hat{t}) \not\in \pco_{h^\alpha_\varphi} \setminus (\so \cup \overline{\wro})^+$. This contradicts that $(\tilde{t}, \hat{t}){h^\alpha_\varphi}$ is part of the $\pco_{h^\alpha_\varphi}$-cycle. Thus, the initial hypothesis, that $\pco_{h^\alpha_\varphi}$ is cyclic, is false. In conclusion, $\pco_{h^\alpha_\varphi}$ is acyclic, so $h_\varphi$ is consistent as $h^\alpha_\varphi$ is a consistent witness of $h_\varphi$.

\end{proof}

\begin{lemma}
\label{lemma:np-complete:cons-sat}
For every boolean formula $\varphi$, if there is a consistent witness of $h$, then $\varphi$ is satisfiable.
\end{lemma}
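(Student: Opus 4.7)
The plan is to construct a satisfying assignment $\alpha: \VarsFomula{\varphi} \to \{0,1\}$ from a consistent witness $\overline{h} = \tup{T, \so, \overline{\wro}}$ of $h_\varphi$, reading off the assignment from which of $t_i^j$ or $\lnot t_i^j$ supplies each deletion on $\var{l_i^j}_i$. By Lemma~\ref{lemma:np-complete:co-cc-co-rc} we may assume $\isolation{h_\varphi} = \RC$, so that the consistency of $\overline{h}$ is equivalent to acyclicity of $\pco_{\overline{h}} = \computeCO{\overline{h}}{(\so \cup \overline{\wro})^+}$ by Theorem~\ref{th:poly-consistency-saturable}.

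The first step is to apply Lemma~\ref{lemma:np-complete:co-cycle-clause} contrapositively: because $\pco_{\overline{h}}$ is acyclic, for each clause index $i$ there exists $j_i$ with $(t_i^{j_i}, S_i^{j_i}) \in \overline{\wro}_{\var{l_i^{j_i}}_i}$ rather than $(\lnot t_i^{j_i}, S_i^{j_i})$, yielding a candidate satisfied literal per clause. More generally, since the only transactions deleting $\var{l_i^j}_i$ are $t_i^j$ and $\lnot t_i^j$, for every pair $(i,j)$ there is a uniquely determined transaction $\tau_{i,j} \in \{t_i^j, \lnot t_i^j\}$ with $(\tau_{i,j}, S_i^j) \in \overline{\wro}_{\var{l_i^j}_i}$.

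The main obstacle is to prove sign consistency across clauses: for any $(i_1,j_1), (i_2,j_2)$ with $\var{l_{i_1}^{j_1}} = \var{l_{i_2}^{j_2}} = x$ we must have $\signTransaction{\tau_{i_1,j_1}} = \signTransaction{\tau_{i_2,j_2}}$. I would argue by contradiction, assuming without loss of generality $\signTransaction{\tau_{i_1,j_1}} = +$ and $\signTransaction{\tau_{i_2,j_2}} = -$, and deriving a two-cycle in $\pco_{\overline{h}}$. Inspecting the construction of $t_i^j$ and $\lnot t_i^j$ in Figure~\ref{fig:np-code}, the auxiliary key $x_{(i_1,i_2)}^+$ is written both by $\tau_{i_1,j_1}$ via its first insert block (sign-matched) and by $\tau_{i_2,j_2}$ via its second block (whose keys carry the opposite sign). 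Since by construction $(\tau_{i_1,j_1}, S_{i_2}^{j_2}) \in \wro_{x_{(i_1,i_2)}^+}$ and by assumption $(\tau_{i_2,j_2}, S_{i_2}^{j_2}) \in \overline{\wro}_{x_{i_2}}$, applying axiom $\axrc$ to the read of $x_{(i_1,i_2)}^+$ at $S_{i_2}^{j_2}$ gives $(\tau_{i_2,j_2}, \tau_{i_1,j_1}) \in \pco_{\overline{h}}$. A symmetric argument on the key $x_{(i_2,i_1)}^-$, which is written by $\tau_{i_2,j_2}$ (first block) and by $\tau_{i_1,j_1}$ (second block) with $(\tau_{i_2,j_2}, S_{i_1}^{j_1}) \in \wro_{x_{(i_2,i_1)}^-}$, yields $(\tau_{i_1,j_1}, \tau_{i_2,j_2}) \in \pco_{\overline{h}}$, producing the required cycle and completing the contradiction.

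With sign consistency established, the final step is to set $\alpha(x) = 1$ whenever the common sign of the $\tau_{i,j}$'s for $x$ is $+$ and $\alpha(x) = 0$ otherwise (taking $\alpha(x)$ arbitrary for variables not occurring in $\varphi$). By the definition of $\signTransaction$, for each clause $C_i$ the designated literal $l_i^{j_i}$ is then satisfied by $\alpha$: $\tau_{i,j_i} = t_i^{j_i}$ has sign $+$ precisely when $l_i^{j_i}$ is a positive literal on $\var{l_i^{j_i}}$, so $\alpha(\var{l_i^{j_i}}) = 1$ makes $l_i^{j_i}$ true, and the negative-literal case is symmetric. Hence every clause of $\varphi$ is satisfied by $\alpha$, so $\varphi$ is satisfiable.
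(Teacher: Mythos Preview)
Your proposal is correct and follows essentially the same approach as the paper: you extract the transaction $\tau_{i,j}$ (the paper's $u_i^j$) supplying the deleted key, establish sign consistency across clauses by deriving a two-cycle in $\pco_{\overline{h}}$ from the auxiliary keys $x_{(i_1,i_2)}^+$ and $x_{(i_2,i_1)}^-$ via $\axrc$, define $\alpha$ from these signs, and use Lemma~\ref{lemma:np-complete:co-cycle-clause} to exhibit a satisfied literal in every clause. The only cosmetic difference is the order in which you invoke the clause-cycle lemma and the sign-consistency argument; the substance is identical.
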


\begin{proof}
Let $\overline{h} = \tup{T, \so, \overline{\wro}}$ be a consistent witness of $h_\varphi$. Hence, by Theorem~\ref{th:poly-consistency-saturable}, the relation $\pco_{\overline{h}} = \computeCO{\overline{h}}{(\so \cup \overline{\wro})^+}$ is acyclic. We use this fact to construct a satisfying assignment of $\varphi$. Let us call $u_i^j$ to the transaction s.t. $(u_i^j, S_i^j) \in \overline{\wro}_{\var{l_i^j}_i}$. Note that by construction of $h_\varphi$, $\deleteVar{u_i^j}{\var{l_i^j}_i}$, so $u_i^j$ is either $t_i^j$ or $\lnot t_i^j$.

We first prove that for every pair of pairs of indices $i, i', j, j'$, if $\var{u_i^j} = \var{u_{i'}^{j'}}$ then $\signTransaction{u_i^j} = \signTransaction{u_{i'}^{j'}}$. By contradiction, let $u_i^j, u_{i'}^{j'}$ be a pair of transactions s.t. $\var{u_i^j} = \var{u_{i'}^{j'}}$ and $\signTransaction{u_i^j} \neq \signTransaction{u_{i'}^{j'}}$. In such case, $\opSignTransaction{u_i^j} = \signTransaction{u_{i'}^{j'}}$. Thus, both transactions write $\var{u_i^j}_{(i', i)}^{\opSignTransaction{u_i^j}}$ and $\var{u_i^j}_{(i, i')}^{\signTransaction{u_i^j}}$. By axiom $\axrc$, as $(u_{i'}^{j'}, S_i^j) \in \overline{\wro}_{\var{u_i^j}_{(i', i)}^{\opSignTransaction{u_i^j}}}$ and $(u_i^j, S_i^j) \in \overline{\wro}$, we conclude that $(u_i^j, u_{i'}^{j'}) \in \pco_{\overline{h}}$. By a symmetric argument using $\var{u_i^j}_{(i, i')}^{\signTransaction{u_i^j}}$ we deduce that $(u_{i'}^{j'}, u_i^j) \in \pco_{\overline{h}}$. However, this is impossible as $\pco_{\overline{h}}$ is acylclic; so we conclude that indeed $\signTransaction{u_i^j} = \signTransaction{u_{i'}^{j'}}$.

Next, we construct a map that assign at each variable in $\varphi$ a value $0$ or $1$. Let $\alpha_h: \VarsFomula{\varphi} \to \{0,1\}$ be the map that assigns for each variable $\var{l_i^j}$ the value $1$ if $\signTransaction{u_i^j} = +$ and $0$ if $\signTransaction{u_i^j} = -$. Note that this map is well defined as, by the previous paragraph, if two literals $l_i^j, l_{i'}^{j'}$ share variable, then their respective transactions $u_i^j, u_{i'}^{j'}$ have the same sign.

Finally, we prove that $\varphi$ is satisfied with this assignment. By construction of $\alpha_h$, for every pair of indices $i,j$, $l_i^j[\alpha_h(\var{l_i^j})/\var{l_i^j}]$ is $\btrue$ if and only if $(t_i^j, S_i^j) \in \overline{\wro}_{\var{l_i^j}}$. Moreover, as $\pco_{\overline{h}}$ is acyclic, by \Cref{lemma:np-complete:co-cycle-clause}, we know that for each $i$ there exists a $j$ s.t. $u_i^j \neq \lnot t_i^j$. Hence, for this $j$, $u_i^j$ must be $t_i^j$ as $u_i^j$ is either $t_i^j$ or $\lnot t_i^j$. Therefore, every clause is satisfied using $\alpha_h$ as assignment; so $\varphi$ is satisfiable.

\end{proof}

\newpage

\subsection{Proof of \Cref{th:csob}.}
\label{ssec:proof-algorithm}

\csobTheorem*
The proof of \Cref{th:csob} is a consequence of \Cref{lemma:csob:consistent-true,lemma:csob:true-consistent}.

\begin{lemma}
\label{lemma:csob:seen-false}
Let $h = \tup{T, \so, \wro}$ be a client history, $P = \tup{T_P, M_P}$ be a consistent prefix of $h$ and $t \in T \setminus T_P$. If $(P \cup \{t\}) \in \mathtt{seen}$ then \csobAlgorithm$(h, P \cup \{t\})$ returns $\bfalse$.
\end{lemma}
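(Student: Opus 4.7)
The plan is to prove the lemma by induction on the depth $d = |T| - |P \cup \{t\}|$ remaining after the prefix $P \cup \{t\}$. The key observation is that prefixes are inserted into $\mathtt{seen}$ only on line~\ref{algorithm:csob:add-prefix-seen} of Algorithm~\ref{algorithm:csob}, immediately after the recursive invocation on line~\ref{algorithm:csob:no-conflict-valid} has returned $\bfalse$. Thus, whenever $(P \cup \{t\}) \in \mathtt{seen}$, an earlier invocation of $\csobAlgorithm(h, P \cup \{t\})$ must already have returned $\bfalse$ with a prior state of the shared set $\mathtt{seen}$, which is necessarily contained in the current one since $\mathtt{seen}$ grows monotonically throughout the execution.

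To finish we will establish an auxiliary monotonicity fact: if some invocation of $\csobAlgorithm(h, P')$ returns $\bfalse$ while the global $\mathtt{seen}$ has value $\mathtt{seen}_1$, then any later invocation of $\csobAlgorithm(h, P')$ whose global state satisfies $\mathtt{seen}_2 \supseteq \mathtt{seen}_1$ also returns $\bfalse$. We prove this by induction on the depth $|T| - |P'|$. The base case is vacuous because at depth $0$ line~\ref{algorithm:csob:base-case} unconditionally returns $\btrue$. In the inductive step, the for loop on line~\ref{algorithm:csob:for-condition} enumerates the same transactions $t'$ in both invocations; each $t'$ actually explored in the second invocation (so no equivalent of $P' \cup \{t'\}$ lies in $\mathtt{seen}_2$) was a fortiori explored in the first one and its recursive call returned $\bfalse$ there. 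The inductive hypothesis, applied at strictly smaller depth, yields that the recursive call still returns $\bfalse$ in the second invocation, so the enclosing loop terminates at line~\ref{algorithm:csob:return-false}.

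Combining the two ingredients gives the lemma: the earlier call that inserted $(P \cup \{t\})$ into $\mathtt{seen}$ returned $\bfalse$ with a state of $\mathtt{seen}$ contained in the current one, so by the monotonicity fact the current invocation $\csobAlgorithm(h, P \cup \{t\})$ also returns $\bfalse$.

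The main obstacle we anticipate is the bookkeeping around the shared mutable variable $\mathtt{seen}$, which is modified not only by the recursive calls themselves but also across successive iterations of the for loop. Making the monotonicity argument precise requires a nested induction on the loop iterations showing that the state of $\mathtt{seen}$ visible to the $i$-th iteration of the second invocation always contains its counterpart in the first; this in turn relies on the iteration order of line~\ref{algorithm:csob:for-condition} being deterministic and identical in both invocations.
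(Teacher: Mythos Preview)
The paper's own proof is a single observation: a prefix enters $\mathtt{seen}$ only at line~\ref{algorithm:csob:add-prefix-seen}, which is reached exactly when the recursive call on line~\ref{algorithm:csob:no-conflict-valid} has just returned $\bfalse$. Your opening paragraph captures this, and at the paper's level of detail that is already the entire argument; the paper does not separately justify that the conclusion persists as $\mathtt{seen}$ grows.

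Your monotonicity lemma is a sensible way to fill that gap, but the nested-induction invariant you propose for it---that the state of $\mathtt{seen}$ at the $i$-th iteration of the second invocation contains its counterpart in the first---is false. Take $S_1=\emptyset$ and $S_2=\{P'\cup\{t_1\}\}$. In the first invocation, processing $t_1$ triggers a recursive call that may insert several deeper prefixes $R_1,R_2,\ldots$ into $\mathtt{seen}$ before $P'\cup\{t_1\}$ itself is added; in the second invocation $t_1$ is skipped outright. Entering the iteration for $t_2$, the first invocation's $\mathtt{seen}$ is \emph{strictly larger} than the second's, so your step ``each $t'$ explored in the second invocation was a fortiori explored in the first'' can fail (some $R_j$ may be equivalent to $P'\cup\{t_2\}$). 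Monotonicity is still true, but it needs a different invariant: prove, by induction on $|T|-|Q|$, the conjunction of (i) every prefix ever inserted into $\mathtt{seen}$ admits no chain of consistent extensions to a full prefix, and (ii) whenever $\mathtt{seen}$ satisfies (i), $\csobAlgorithm(h,Q)$ returns $\btrue$ iff $Q$ admits such a chain. Parts (i) and (ii) feed each other (a prefix is inserted only after its call returned $\bfalse$, hence is bad by (ii); a good $Q$ has a good child with no equivalent in $\mathtt{seen}$ by (i), so the recursive call is made and returns $\btrue$ by induction), and together they give the lemma without any per-iteration comparison of the two $\mathtt{seen}$ states.
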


\begin{proof}
If $(P \cup \{t\}) \in \mathtt{seen}$, then $P \cup \{t\}$ has been to $\mathtt{seen}$ added at line~\ref{algorithm:csob:add-prefix-seen} of \Cref{algorithm:csob}. To execute such instruction, the condition at line~\ref{algorithm:csob:no-conflict-invalid}, \csobAlgorithm$(\hist, P \cup \{t\})$ returns $\btrue$, does not hold; which let us conclude the result.
\end{proof}

\begin{lemma}
\label{lemma:csob:consistent-true}
Let $h = \tup{T, \so, \wro}$ be a client history whose isolation configuration is stronger than $\RC$. If $h$ is consistent, Algorithm~\ref{algorithm:checksobound} returns $\btrue$.
\end{lemma}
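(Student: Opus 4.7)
\begin{sketchproof}
The plan is to exhibit, from any consistent witness $\overline{h} = \tup{T, \so, \overline{\wro}}$ of $h$ and any consistent execution $\exec = \tup{\overline{h}, \co}$, a chain of consistent prefixes that \Cref{algorithm:csob} follows to a total prefix, thereby returning $\btrue$.

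First I would show that the relation $\pco$ computed at line~\ref{algorithm:checksobound:co} satisfies $\pco \subseteq \co$. The base case holds because $(\so \cup \wro)^+ \subseteq (\so \cup \overline{\wro})^+ \subseteq \co$ as $\co$ is a commit order of $\overline{h}$. For the inductive step, any pair $(t_2, t_1)$ added by \textsc{saturate} is justified by some visibility relation $v$ for a read $r$ and key $x$ with $t_1 = \trans{\wro_x^{-1}(r)}$; monotonicity of $v$ in its base relations together with $\wro \subseteq \overline{\wro}$ and the induction hypothesis transfer $v(t_2, r, x)$ to $\overline{h}$ under $\co$, and the corresponding axiom of $\isolation{h}(\trans{r})$ applied to the consistent execution $\exec$ yields $(t_2, t_1) \in \co$. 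In particular $\pco$ is acyclic, so line~\ref{algorithm:checksobound:co-cyclic} does not return $\bfalse$. For every conflict $(r,x) \in E_h$, let $w^\star = \overline{\wro}_x^{-1}(r)$; since $(w^\star, r) \in \overline{\wro}_x \setminus \wro_x$, \Cref{def:client-history} gives $\where{r}(\valuewr[\overline{\wro}]{w^\star}{x}) = 0$, and $(\trans{w^\star}, \trans{r}) \in \overline{\wro} \subseteq \co$ rules out $(\trans{r}, \trans{w^\star}) \in \pco$. Hence $\trans{w^\star} \in \mathtt{0}_x^r(\pco)$, so line~\ref{algorithm:checksobound:inconsistent-case} is not triggered either.

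Define $f^\star \in X_h$ by $f^\star(r,x) = \trans{w^\star}$. The extension $h' = h \bigoplus_{(r,x) \in E_h} \wro_x(f^\star(r,x), r)$ is still a client history witnessed by $\overline{h}$ and executed consistently by $\co$, so it suffices to prove that \csobAlgorithm$(h', \emptyset)$ returns $\btrue$, because $f^\star$ is explicitly enumerated at line~\ref{algorithm:checksobound:h-prime-definition} (and when $E_h = \emptyset$, the call is made directly at line~\ref{algorithm:checksobound:call-csob-h}). Enumerate $T$ as $t_1, \ldots, t_n$ in $\co$-order and set $P_0 = \emptyset$, $P_i = P_{i-1} \cup \{t_i\}$. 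I would prove by induction on $i$ that $P_{i-1} \triangleright_{t_i} P_i$ holds in $h'$: condition~\ref{def:consistent-extension:2} of \Cref{def:consistent-extension} follows from $\pco \subseteq \co$, because any $\pco$-predecessor of $t_i$ is some $t_j$ with $j < i$, hence in $T_{P_{i-1}}$. Condition~\ref{def:consistent-extension:3} translates via \Cref{table:visibility-prefix} into the axioms of $\isolation{h'}(\trans{r})$ applied to $\exec$: the auxiliary relation $\pco_{t_i}^{P_{i-1}}$ is contained in $\co$ by construction, so every visibility conclusion drawn from it also holds under $\co$, and the consistency of $\exec$ directly forbids the overwrite situations that the predicate excludes.

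The main obstacle is the $\equiv$-pruning at line~\ref{algorithm:csob:no-conflict-invalid}: the algorithm might skip our successor $P_i$ because an equivalent prefix is already in $\mathtt{seen}$. I would handle this via an equivalence-preservation lemma showing that if $P \equiv_{\isolation{h'}} P'$ then $P$ extends to a total consistent prefix iff $P'$ does. For isolation configurations that contain $\SI$, equivalence coincides with equality and the lemma is vacuous; otherwise the predicates of \Cref{table:visibility-prefix} depend only on $T_P$, not on $M_P$, which yields the lemma. Combined with \Cref{lemma:csob:seen-false}, the presence of an equivalent prefix in $\mathtt{seen}$ would imply that no equivalent prefix can be extended to a total one, contradicting the existence of the chain $P_0 \triangleright_{t_1} \cdots \triangleright_{t_n} P_n$. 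A strong induction on $|T| - |P|$ therefore shows that \csobAlgorithm$(h', P)$ returns $\btrue$ for every prefix $P$ of our chain, and in particular \csobAlgorithm$(h', \emptyset)$ returns $\btrue$.
\end{sketchproof}
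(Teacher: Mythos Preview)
Your proposal is correct and follows essentially the same strategy as the paper: extract a conflict-free extension $\hat h$ from a consistent witness using the commit order $\co$, enumerate transactions in $\co$-order to obtain the chain $P_0\triangleright_{t_1}\cdots\triangleright_{t_n}P_n$, and argue by backward induction that \csobAlgorithm\ returns $\btrue$ on each $P_i$.

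Two points where your treatment is actually more careful than the paper's. First, you make the inclusion $\pco\subseteq\co$ explicit via monotonicity of the visibility relations and of \textsc{saturate}; the paper only states ``By construction of $\co$, $\pco\subseteq\co$''. Second, and more importantly, you correctly identify the $\equiv_{\isolation{h'}}$-pruning at line~\ref{algorithm:csob:no-conflict-invalid} as the delicate step. The paper's proof invokes \Cref{lemma:csob:seen-false} to conclude $P_i\notin\mathtt{seen}$, but the guard actually tests for an \emph{equivalent} prefix in $\mathtt{seen}$, not equality. Your proposed equivalence-preservation lemma is the right fix: when some transaction uses $\SI$, equivalence is equality and there is nothing to prove; otherwise the predicates in \Cref{table:visibility-prefix} and condition~\ref{def:consistent-extension:2} of \Cref{def:consistent-extension} depend only on $T_P$, so the entire recursion of \csobAlgorithm\ is invariant under $\equiv_{\isolation{h'}}$. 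One refinement: phrase the lemma directly as ``if $P\equiv P'$ then \csobAlgorithm$(h',P)$ and \csobAlgorithm$(h',P')$ return the same value'' rather than ``$P$ extends to a total consistent prefix iff $P'$ does''; the latter formulation would make the contradiction step with \Cref{lemma:csob:seen-false} circular, whereas the former combines cleanly with the strong induction you outline.
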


\begin{proof}
Let $h$ be a consistent history that satisfies the hypothesis of the Lemma. As $h$ is consistent, let $\overline{h} = \tup{T, \so, \overline{\wro}}$ be a witness of $h$ and let $\exec = \tup{\overline{h}, \co}$ be a consistent execution of $\overline{h}$. We first reduce the problem to prove that \Cref{algorithm:csob} returns true on a particular witness of $h$, a history $\hat{h}$ s.t. $h \subseteq \hat{h} \subseteq \overline{h}$. 

First, let $\pco, E_h$ and $X_h$ be defined as in \Cref{algorithm:checksobound} at lines~\ref{algorithm:checksobound:co}-\ref{algorithm:checksobound:xh}. As $\overline{h}$ is consistent, for every $\iread$ event $r$ and a variable $x$ s.t. $\wro_x^{-1}(r) \uparrow$, $\overline{\wro}_x^{-1}(r) \downarrow$ and $\where{r}(\valuewr{t_x^r}{x}) = 0$; where $t_x^r = \overline{\wro}_x^{-1}(r)$.

On one hand, if $E_h$ is empty, $X_h$ is empty as well. In such case, we denote $\hat{h} = h$. On the other hand, if $E_h \neq \emptyset$, for every $(r,x ) \in E_h$ we know that the transaction $t_x^r$ belongs to $\mathtt{0}_x^r$. Therefore, $X_h \neq \emptyset$. Thus, let $f$ be the map that assigns for every pair $(r, x) \in E_h$ the transaction $t_x^r$; and let $\hat{h} = \tup{T, \so, \hat{\wro}}$ be the history s.t. $\hat{h} = h \bigoplus_{(r,x) \in E_h} \wro_x(f(r,x), r)$. We observe that the fact that $\hat{h}$ is a history and $\overline{h}$ witnesses $\hat{h}$'s consistency using $\co$ is immediate as $\wro \subseteq \hat{\wro} \subseteq \overline{\wro}$.
Note that in both cases, the condition at line~\ref{algorithm:checksobound:inconsistent-case} does not hold. Therefore, to prove that \Cref{algorithm:checksobound} returns $\btrue$ it suffices to prove that $\csobAlgorithm(\hat{h}, \emptyset)$ returns $\btrue$.

We define an inductive sequence of prefixes based on $\co$ and show that they represent recursive calls to \Cref{algorithm:csob}. As a base case, let $P_0$ be the prefix with only $\init$ as transaction. Assuming that for every $j, 0 \leq j \leq i$, $P_i$ is defined, let $P_{i+1} = P_i \cup \{t_i\}$; where $t_i$ is the $i$-th transaction of $T$ according to $\co$. By construction of $\co$, $\pco \subseteq \co$. Hence, Property~\ref{def:consistent-extension:2} immediately holds. Moreover, as $\co$ witnesses $\overline{h}$'s consistency, Property~\ref{def:consistent-extension:3} also holds; so $P_i \triangleright_{t_{i+1}} P_{i+1}$.

We conclude showing by induction on the number of transactions that are not in the prefix that for every $i, 0 \leq i \leq |T|$, \csobAlgorithm$(\hat{h}, P_i)$ returns $\btrue$.

\begin{itemize}
    \item \underline{Base case:} The base case is $i = |T|$. In such case, $P_{|T|}$ contains all transactions in $T$. Therefore, the condition at line~\ref{algorithm:csob:base-case} in \Cref{algorithm:csob} holds and the algorithm returns $\btrue$.
    
    \item \underline{Inductive case:} The inductive hypothesis guarantees that for every $k, i \leq k \leq |T|$, \csobAlgorithm$(\hat{h}, P_{i})$ returns $\btrue$ and we show that \csobAlgorithm$(\hat{h}, P_{i-1})$ also returns $\btrue$. By definition of $P_{i}$, $T_{P_i} = T_{P_{i-1}} \cup \{t_i\}$. In particular, $|P_i| \neq |T|$ and $P_{i-1} \triangleright_{t_{i+1}} P_i$. In addition, by induction hypothesis, we know that \csobAlgorithm$(\hat{h}, P_{i})$ returns $\btrue$. Hence, by \Cref{lemma:csob:seen-false}, $P_{i} \not\in \mathtt{seen}$.
    Altogether, we deduce that \csobAlgorithm$(\hat{h}, P_{i -1 })$ returns $\btrue$.
\end{itemize}

\end{proof}

\begin{lemma}
\label{lemma:csob:chain-prefixes}
Let $\hat{h} = \tup{T, \so, \hat{\wro}}$ be a client history and $P = \tup{T_P, M_P}$ be a consistent prefix. If \csobAlgorithm$(\hat{h},P)$ returns $\btrue$, there exist distinct transactions $t_i \in T \ T_P$ and a collection of consistent prefixes $P_i = \tup{T_P, M_P}$ s.t. $P_{i} = P_{i-1} \cup \{t_{i}\}$, $P_{i-1} \triangleright_{t_i} P_i$ and \csobAlgorithm$(\hat{h},P_i)$ returns $\btrue$; where $|T_P| < i \leq |T|$ and $P_{|T_P|} = P$.
\end{lemma}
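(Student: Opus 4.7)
The plan is to prove the lemma by induction on $|T| - |T_P|$, following the recursive structure of \csobAlgorithm.

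For the base case, when $|T_P| = |T|$, the collection of prefixes is empty (the indexing condition $|T_P| < i \leq |T|$ becomes vacuous) and $P_{|T_P|} = P_{|T|} = P$ trivially satisfies the statement. For the inductive step, assume the claim holds whenever the complement has size $k$, and consider a consistent prefix $P$ with $|T| - |T_P| = k+1$ such that $\csobAlgorithm(\hat{h}, P)$ returns $\btrue$. Since $|P| \neq |T|$, the base-case check at line~\ref{algorithm:csob:base-case} does not trigger, so the $\btrue$ value must come from line~\ref{algorithm:csob:no-conflict-valid}. Hence there exists a transaction $t \in T \setminus T_P$ such that $P \triangleright_t (P \cup \{t\})$, the equivalence check at line~\ref{algorithm:csob:no-conflict-invalid} failed (so the recursive call was actually made), and $\csobAlgorithm(\hat{h}, P \cup \{t\})$ returned $\btrue$.

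Set $t_{|T_P|+1} \coloneqq t$ and $P_{|T_P|+1} \coloneqq P \cup \{t\}$. Since $|T| - |T_{P_{|T_P|+1}}| = k$, the induction hypothesis applies to $P_{|T_P|+1}$, yielding distinct transactions $t_{|T_P|+2}, \ldots, t_{|T|}$ and consistent prefixes $P_{|T_P|+2}, \ldots, P_{|T|}$ satisfying all the required conditions. Concatenating $t_{|T_P|+1}, P_{|T_P|+1}$ to the front of these sequences gives the desired chain starting from $P$. The $t_i$ remain pairwise distinct because each $t_i$ belongs to $T \setminus T_{P_{i-1}}$ and $T_{P_{i-1}} \supseteq T_P \cup \{t_{|T_P|+1}, \ldots, t_{i-1}\}$ by the construction $P_i = P_{i-1} \cup \{t_i\}$.

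No genuine obstacle arises: the statement is essentially a readout of the recursion tree of \csobAlgorithm along one successful branch, and each of the three facts needed per step ($t_i \in T \setminus T_{P_{i-1}}$, the consistency-extension relation $P_{i-1} \triangleright_{t_i} P_i$, and the returned $\btrue$) is exactly what the \textbf{for} loop at line~\ref{algorithm:csob:for-condition} together with the conditional at line~\ref{algorithm:csob:no-conflict-valid} requires in order to propagate the $\btrue$ value upwards. Note that the equivalence relation $\equiv_{\isolation{h}}$ and the $\mathtt{seen}$ set play no role in the argument beyond guaranteeing that the successful recursive call indeed took place (and so its $\btrue$ answer is available to pass to the induction hypothesis).
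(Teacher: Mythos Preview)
Your proof is correct and follows essentially the same approach as the paper's: induction on $|T| - |T_P|$, unwinding one step of the recursion at line~\ref{algorithm:csob:no-conflict-valid} and applying the induction hypothesis to $P \cup \{t\}$. Your version is slightly more explicit (you spell out the distinctness argument and the role of the $\mathtt{seen}$ check), but the structure is identical.
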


\begin{proof}
Let $\hat{h}$ be a client history and $P = \tup{T_P, M_P}$ be a consistent prefix s.t. \csobAlgorithm$(\hat{h},P)$ returns $\btrue$. We prove the result by induction on the number of transactions not present in $T_P$.
The base case, when $|T_P| = |T|$, immediately holds as $T \setminus T_P = \emptyset$. Let us assume that the inductive hypothesis holds for any prefix containing $k$ transactions and let us show that it also holds for every consistent prefix with $k-1$ transactions. Let us thus assume that $|T_P| = k - 1$. As \csobAlgorithm$(\hat{h}, P)$ returns $\btrue$, it must reach line~\ref{algorithm:csob:no-conflict-valid} in \Cref{algorithm:csob}. Hence, there must exist a transaction $t_{k} \in T \setminus T_P$ s.t. $P \triangleright_{t_k} (P \cup \{t_k\})$ and \csobAlgorithm$(\hat{h}, P \cup \{t_k\})$ returns $\btrue$. By induction hypothesis on $P \cup \{t_k\} = \tup{T_k, M_k}$, there exist a distinct transactions $t_i \in T \setminus T_k$ and a collection consistent prefixes $P_i$ s.t. $P_{i} = P_{i-1} \cup \{t_i\}$, $P_{i-1} \triangleright_{t_i} P_i$ and \csobAlgorithm$(\hat{h},P_i)$ returns $\btrue$; where $k < i \leq |T|$ and $P_k = P \cup \{t_k\}$. Thus, the inductive step holds thanks to prefix $P_k$. 
\end{proof}

\begin{lemma}
\label{lemma:csob:checksobound-returns-true}
Let $h = \tup{T, \so, \wro}$ be a client history and let $\pco$ be the relation defined as at line~\ref{algorithm:checksobound:co} in \Cref{algorithm:csob}. If \checksobound$(h)$ returns $\btrue$, there exists an extension $\hat{h} = \tup{T, \so, \hat{\wro}}$ of $h$ s.t. for every $\iread$ event $r$, variable $x$ and transaction $t$, (1) if $(t,r) \in \hat{\wro}_x \setminus \wro_x$ then $t \in \zeroWhere$, (2) if $\hat{\wro}_x^{-1}(r) \uparrow$ then $\oneWhere = \emptyset$, and (3) \csobAlgorithm$(\hat{h}, \emptyset)$ returns $\btrue$.
\end{lemma}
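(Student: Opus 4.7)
The plan is to follow the control flow of \Cref{algorithm:checksobound} and exhibit the extension $\hat{h}$ explicitly. Assume \checksobound$(h)$ returns $\btrue$. Then by the tests at lines~\ref{algorithm:checksobound:co-cyclic} and \ref{algorithm:checksobound:inconsistent-case}, $\pco$ is acyclic and every $(r,x) \in E_h$ satisfies $\mathtt{0}_x^r(\pco) \neq \emptyset$. The argument then splits into two cases according to whether $E_h = \emptyset$ or not.

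First I would handle the easier case, $E_h = \emptyset$. In this case the algorithm reaches line~\ref{algorithm:checksobound:call-csob-h} and returns the value of \csobAlgorithm$(h, \emptyset)$, so this value must be $\btrue$. I would then set $\hat{h} = h$. Property (1) holds vacuously since $\hat{\wro}_x \setminus \wro_x = \emptyset$; property (2) holds since by the definition of $E_h$ at line~\ref{algorithm:checksobound:eh}, whenever $\wro_x^{-1}(r) \uparrow$ and $(r,x) \notin E_h$ we must have $\oneWhere = \emptyset$; and property (3) holds by the choice of $\hat{h}$.

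Next I would address the case $E_h \neq \emptyset$. Since \checksobound returns $\btrue$, the loop at line~\ref{algorithm:checksobound:else-case} encounters some mapping $f \in X_h$ for which the recursive call \csobAlgorithm$(h', \emptyset)$ at line~\ref{algorithm:checksobound:call-csob-h-prime} returns $\btrue$, where $h' = h \bigoplus_{(r,x) \in E_h} \wro_x(f(r,x), r)$. I would set $\hat{h} = h'$. Property (1) follows because every edge in $\hat{\wro}_x \setminus \wro_x$ is of the form $(f(r,x), r)$ with $(r,x) \in E_h$, and $f(r,x) \in \mathtt{0}_x^r(\pco)$ by the definition of $X_h$ at line~\ref{algorithm:checksobound:xh}. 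Property (2) follows from the observation that the construction of $h'$ resolves every conflict in $E_h$, so if $\hat{\wro}_x^{-1}(r) \uparrow$ then $\wro_x^{-1}(r) \uparrow$ as well and $(r,x) \notin E_h$, which by the definition of $E_h$ forces $\oneWhere = \emptyset$. Property (3) holds by the choice of $f$.

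I do not foresee a serious obstacle: the statement is essentially a bookkeeping claim that unfolds the branches of \Cref{algorithm:checksobound} and reads off $\hat{h}$ from the branch taken. The only subtlety worth double-checking is that the $\pco$ referenced in (1) and (2) is the one computed from $h$ (line~\ref{algorithm:checksobound:co}), so it is unaffected by the extension $\hat{h}$; this is exactly what the definitions of $E_h$ and $X_h$ inside \checksobound use, making the correspondence immediate.
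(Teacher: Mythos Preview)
Your proposal is correct and follows essentially the same approach as the paper: a case split on whether $E_h=\emptyset$, taking $\hat h=h$ in the first case and $\hat h=h'$ for the successful $f\in X_h$ in the second, with properties (1)--(3) read off directly from the definitions of $E_h$ and $X_h$. If anything, your argument is more explicit than the paper's, which simply asserts that the three conditions ``trivially hold'' in each branch.
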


\begin{proof}
Let $h = \tup{T, \so, \wro}$ be a client history s.t. \checksobound$(h)$ returns $\btrue$ and let $\pco, E_h, X_h$ be the objects described in lines~\ref{algorithm:checksobound:co}-\ref{algorithm:checksobound:xh} in \Cref{algorithm:checksobound}. If there exists a pair $(r,x) \in E_h$ for which $\zeroWhere = \emptyset$, \checksobound$(h)$ returns $\bfalse$. Hence, $E_h$ is empty if and only if $X_h$ is empty. If $E_h = \emptyset$, \Cref{algorithm:checksobound} executes line~\ref{algorithm:checksobound:call-csob-h}. Thus, taking $\hat{h} = h$, conditions (1), (2) and (3) trivially hold. Otherwise, \Cref{algorithm:checksobound} executes line~\ref{algorithm:checksobound:else-case}. Once again, as \csobAlgorithm$(h, \emptyset)$ returns $\btrue$, there must exists $f \in X_h$ s.t. \csobAlgorithm$(\hat{h}, \emptyset)$ returns $\btrue$; where $\hat{h} = \bigoplus_{(r,x) \in E_h}\wro_x(f(r, x), r)$. Thanks to the definition of $f$ and $\hat{h}$ conditions (1), (2) and (3) are satisfied.
\end{proof}

\begin{lemma}
\label{lemma:csob:true-consistent}
Let $h = \tup{T, \so, \wro}$ be a client history whose isolation configuration is composed of $\{\SER, \SI, \PRE, \RC\}$ isolation levels. If \Cref{algorithm:csob} returns $\btrue$, $h$ is consistent.
\end{lemma}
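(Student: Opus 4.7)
The plan is to combine Lemmas~\ref{lemma:csob:checksobound-returns-true} and~\ref{lemma:csob:chain-prefixes} to extract from the positive answer of \checksobound an extension $\hat{h}$ of $h$ together with a total commit order on its transactions, and then to upgrade this pair into a consistent execution witnessing $h$'s consistency. Assuming the lemma actually refers to Algorithm~\ref{algorithm:checksobound} (since Algorithm~\ref{algorithm:csob} is only invoked by Algorithm~\ref{algorithm:checksobound}), first apply Lemma~\ref{lemma:csob:checksobound-returns-true} to obtain an extension $\hat{h} = \tup{T, \so, \hat{\wro}}$ of $h$ such that: (i) every dependency $(t,r) \in \hat{\wro}_x \setminus \wro_x$ satisfies $t \in \mathtt{0}_x^r(\pco)$, (ii) whenever $\hat{\wro}_x^{-1}(r)\uparrow$ one has $\mathtt{1}_x^r(\pco) = \emptyset$, and (iii) $\csobAlgorithm(\hat{h}, \emptyset)$ returns $\btrue$. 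Applying Lemma~\ref{lemma:csob:chain-prefixes} starting from the empty prefix then yields a chain $\emptyset \triangleright_{t_1} P_1 \triangleright_{t_2} \cdots \triangleright_{t_n} P_n$ of consistent extensions whose union covers $T$. Let $\co$ be the total order induced by $t_1, \ldots, t_n$. Because every step of the chain is $\pco$-closed (Definition~\ref{def:consistent-extension}, item~\ref{def:consistent-extension:2}), we get $\pco \subseteq \co$.

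Next I would build a full witness $\overline{h} = \tup{T, \so, \overline{\wro}}$ of $h$ by completing $\hat{\wro}$: for every non-local $\iread$ event $r$ and variable $x$ with $\hat{\wro}_x^{-1}(r)\uparrow$, set $\overline{\wro}_x^{-1}(r)$ to be the $\co$-maximal transaction $t <_\co \trans{r}$ that writes $x$ (which exists since $\init$ writes every key). It is immediate that $\overline{h}$ is a full history. To verify that $\overline{h}$ witnesses $h$, I must show that every dependency $(t,r) \in \overline{\wro}_x \setminus \wro_x$ satisfies $\where{r}(\valuewr{t}{x}) = 0$. If the dependency already belonged to $\hat{\wro}\setminus\wro$, this follows from (i). If it is freshly added, property (ii) gives $\mathtt{1}_x^r(\pco) = \emptyset$; since $t <_\co \trans{r}$ and $\pco \subseteq \co$ imply $(\trans{r}, t) \notin \pco$, the chosen $t$ must belong to $\mathtt{0}_x^r(\pco)$, so $\where{r}(\valuewr{t}{x}) = 0$.

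The final and most delicate step is to verify that $\exec = \tup{\overline{h}, \co}$ is consistent. Fix an axiom $a \in \isolation{h}(\trans{r})$, variable $x$, and transactions $t_1,t_2$ such that $(t_1,r) \in \overline{\wro}_x$, $\writeVar{t_2}{x}$, and $\mathsf{vis}_a(t_2, r, x)$ holds in $\exec$; I must show $(t_2, t_1)\in \co$. I would argue by contradiction: assuming $(t_1, t_2) \in \co$, let $t_2 = t_k$ in the chain, so $t_1 \in T_{P_{k-1}}$ (and, for the \axconf case, also $M_{P_{k-1}}(x) \neq t_1$). Since the step $P_{k-1} \triangleright_{t_k} P_k$ succeeded, the predicate of Table~\ref{table:visibility-prefix} holds, so $\mathsf{vis}_a(\pco^{P_{k-1}}_{t_k})(t_2, r, x)$ must \emph{fail}. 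The contradiction is obtained by lifting the visibility path from $\co$ into $\pco^{P_{k-1}}_{t_2}$: this relation agrees with $\co$ on every pair involving $t_2$ or any transaction outside $T_{P_{k-1}} \cup \{t_2\}$, contains $\pco$, and hence captures any visibility path arising in $\mathsf{vis}_a$ for $a \in \{\SER, \SI, \PRE, \RA, \RC\}$. A case analysis on the shape of $\mathsf{vis}_a$ (a short concatenation of $\so, \wro, \co$ or $\co^*$ edges) is needed to check that each intermediate transaction either lies in $T_{P_{k-1}}$ (so its $\pco$-outgoing edges suffice) or coincides with $t_2$ itself; the asymmetric clause for $\axconf$ in Table~\ref{table:visibility-prefix}, which quantifies over all $t' \in T_{P_{k-1}} \cup \{t_k\}$, exactly covers the case where the offending writer sits inside the prefix rather than being $t_k$.

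The hard part will be exactly this translation from $\co$-visibility to $\pco^{P_{k-1}}_{t_k}$-visibility: one must show that the per-axiom visibility paths cannot sneak information through $\co$-edges internal to $T_{P_{k-1}}$ that are absent from $\pco$. I expect this to boil down to an axiom-by-axiom inspection of Figure~\ref{fig:consistency_defs}, using that all relevant paths end at $r$ via $(\so \cup \wro);\po^*$ or $\co$, and that the only uses of $\co$ are either a single step into $t_1$ or a $\co^*$ prefix which is irrelevant once one shows that the path can be chosen to have its last non-$\pco$ step incident to $t_k$. Once this is established, the failure of the local check contradicts the existence of a $\co$-visibility path, and consistency of $\exec$ follows.
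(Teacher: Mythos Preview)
Your overall strategy matches the paper's: extract the conflict-free extension $\hat{h}$ via Lemma~\ref{lemma:csob:checksobound-returns-true}, build the commit order $\co$ from the chain of Lemma~\ref{lemma:csob:chain-prefixes}, complete $\hat{\wro}$ to a full $\overline{\wro}$, and verify consistency axiom by axiom. The gap is in your completion step. You set $\overline{\wro}_x^{-1}(r)$ to the $\co$-maximal writer of $x$ before $\trans{r}$, i.e., you use the $\SER$ semantics regardless of $\isolation{h}(\trans{r})$. This is too aggressive: the newly added $\overline{\wro}$-edges themselves appear in the visibility relations of $\RC$, $\RA$, $\PRE$ and $\SI$ (all of which mention $\wro$), so your completion can create visibilities in $\overline{h}$ that were absent in $\hat{h}$ and that the consistent-extension check of Definition~\ref{def:consistent-extension} never screened.

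Concretely, take three sessions with transactions $A,C,B$; $A$ and $C$ each insert values for keys $x$ and $y$; $B$ (isolation $\RC$) contains a single $\eselect$ $r$ with predicate $\lambda\rho.\,\keyof{\rho}=x$, and in the client history $(A,r)\in\wro_x$ while $\wro_y^{-1}(r)\uparrow$. Here $E_h=\emptyset$, so $\hat{h}=h$, and Algorithm~\ref{algorithm:csob} happily produces the chain $\init,A,C,B$: when $C$ is added after $A$, the $\axrc$ predicate of Table~\ref{table:visibility-prefix} passes because $(C,r)\notin(\so\cup\hat{\wro});\po^*$. With your completion you then set $\overline{\wro}_y^{-1}(r)=C$, and now $(C,r)\in\overline{\wro}_y$ makes $C$ $\axrc$-visible to $r$ in $\overline{h}$; since $C$ writes $x$ and $A=\overline{\wro}_x^{-1}(r)$ is $\co$-before $C$, the $\axrc$ axiom fails. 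The paper avoids this by defining $t_x^r$ as the $\co$-maximal writer that is \emph{already} visible to $r$ under $\isolation{h}(\trans{r})$ in $\hat{h}$ (Equation~\eqref{eq:transactions-extension-algorithm}); with that definition, in the example $r$ reads $y$ from $A$ (or $\init$), $C$ stays invisible, and the execution is consistent. Your lifting discussion focuses on replacing $\co$-edges by $\pco_t^P$-edges, but the decisive obstacle is replacing $\overline{\wro}$ by $\hat{\wro}$ inside the visibility formula; the paper's isolation-aware completion is precisely what makes that replacement possible in the per-axiom case analysis.
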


\begin{proof} 
Let $h = \tup{T, \so, \wro}$ be a client history s.t. \checksobound$(h)$ returns $\btrue$ and let $\pco, E_h$ and $X_h$ be defined as at lines~\ref{algorithm:checksobound:co}-\ref{algorithm:checksobound:xh} in \Cref{algorithm:checksobound}. By Lemma~\ref{lemma:csob:checksobound-returns-true}, there exists an extension of $h$ $\hat{h} = \tup{T, \so, \hat{\wro}}$ s.t. for every $\iread$ event $r$, variable $x$ and transaction $t$, (1) if $\hat{\wro}_x^{-1}(r) \uparrow$ then $\oneWhere = \emptyset$, (2) if $(t,r) \in \hat{\wro}_x \setminus \wro_x$ then $t \in \zeroWhere$ and (3) \csobAlgorithm$(\hat{h}, \emptyset)$ returns $\btrue$. By \Cref{lemma:csob:chain-prefixes} applied on $\hat{h}$ and $\emptyset$, there exist distinct transactions $t_i \in T$ and a collection of prefixes of $h$, $P_i = \tup{T_i, M_i}$, s.t. $P_i = P_{i-1} \cup \{t_i\}$, $P_{i-1} \triangleright_{t_i} P_i$ and $\csobAlgorithm(\hat{h}, P_i)$ returns $\btrue$; where $P_0 = \emptyset$ and $0 < i \leq |T|$. Let $\co$ be the total order based on the aforementioned transactions $t_i$, i.e. $\co= \{(t_i, t_j) \ | \ i < j\}$. We construct a full history that extends $\hat{h}$ employing $\co$ and taking into account the isolation level of each transaction. 

For every $\iread$ event $r$, key $x$ and visibility relation $\mathsf{v} \in \visibilitySet{\isolation{h}(\trans{r})}$, let $t_\mathsf{v}, t_x^r$ be the transactions defined as follows:
\begin{align}
\label{eq:transactions-extension-algorithm}
t_\mathsf{v}^x = \max_{\co}\{ t' \in T \ | \ \writeVar{t'}{x} \ \land \ \visibilityRelationInstanceApplied[\co]{t'}{r}{x}\} \nonumber\\
t_x^r = \max_{\co}\{t_\mathsf{v}^x \ | \ \mathsf{v} \in \visibilitySet{\isolation{h}(\trans{r})}\}
\end{align}
Note that if $\mathsf{v}$ is a visibility relation associated to an axiom from $\SER, \SI, \PRE,  \RA$ and $\RC$ isolation levels, transactions $t_{\mathsf{v}}^x$ and $t_x^r$ are well-defined as $\mathsf{v}(\init, r, x)$ holds.
Thus, let $\overline{\wro}_x = \hat{\wro}_x \cup \{(t_x^r, r) \ | \ \hat{\wro}_x^{-1}(r) \uparrow\}$ and $\overline{\wro} = \bigcup_{x \in \Vars} \overline{\wro}_x$. As $\overline{\wro}_x^{-1}$ is a total function and $\writeVar{\overline{\wro}_x^{-1}(r)}{x}$ we can conclude that $\overline{h} = \tup{T, \so, \overline{\wro}}$ is a full history. 

We prove that $\overline{h}$ is also a witness of $h$. For that, we show that for every read event $r$, every key $x$ and every transaction $t$, if $(t, r) \in \overline{\wro}_x \setminus \wro_x$, $t \in \zeroWhere$. Two cases arise: $(t, r) \in \hat{\wro}_x \setminus \wro_x$ and $(t, r) \in \overline{\wro}_x \setminus \hat{\wro}_x$. The first case is quite straightforward, as if $(t, r) \in \hat{\wro}_x \setminus \wro_x$, by Property (1) of \Cref{lemma:csob:checksobound-returns-true}, $t \in \zeroWhere$. The second case, $(t, r) \in \overline{\wro}_x \setminus \hat{\wro}_x$, is slightly more subtle. First, for every isolation level considered, if $(t, r) \in \overline{\wro}_x$ then $(t, \trans{r}) \in \pco$. Next, as \checksobound$(h)$ returns $\btrue$, the condition at line~\ref{algorithm:checksobound:co-cyclic} does not hold. Hence, as $\pco$ is acyclic, we deduce that $(\trans{r}, t) \not\in \pco$. In addition, as $(t, r) \in \overline{\wro}_x \setminus \hat{\wro}_x$, $\hat{\wro}_x^{-1}(r) \uparrow$. By Property (2) of \Cref{lemma:csob:checksobound-returns-true} employed during $\hat{h}$'s construction, we deduce that $\oneWhere = \emptyset$. In conclusion, as $(\trans{r}, t) \not\in \pco$ and $\oneWhere = \emptyset$, we conclude that $t \in \zeroWhere$.

Finally, we prove that $\co$ witnesses that $\overline{h}$ is consistent. Let $r$ be a $\iread$ event, $x$ be a key and $t_1, t_2$ be transactions s.t. $(t_1, r) \in \overline{\wro}_x$ and $\writeVar{t_2}{x}$. We prove that if there exists $\mathsf{v} \in \visibilitySet{\isolation{\overline{h}}(\trans{r})}$ s.t. $\visibilityRelationInstanceApplied[\co]{t_2}{r}{x}$ holds in $\overline{h}$ then $(t_2, t_1) \in \co$; which by \Cref{def:consistency-full}, we know it implies that $\overline{h}$ is consistent. Note that if $(t_1, r) \in \overline{\wro} \setminus \hat{\wro}$, by definition of $t_x^r$ the statement immediately holds; so we can assume without loss of generality that $(t_1, r) \in \hat{\wro}_x$. 

First, we note that proving that whenever $\visibilityRelationInstanceApplied[\co]{t_2}{r}{x}$ holds in $\overline{h}$, then $(t_2, t_1) \in \co$ is equivalent to prove that whenever $\visibilityRelationInstanceApplied[\co]{t_2}{r}{x}$ holds in $\overline{h}$, then $t_1 \not\in T_{i-1}$; where $i$ is the index of the transaction in $T$ s.t. $t_2 = t_i$.  

For every $i, 1 \leq i \leq |T|$ $P_{i-1} \triangleright_{t_i} P_i$, $\so \cup \hat{\wro} \subseteq \co$. Thus, by \Cref{def:consistency-full}, it suffices to show that for every read event $r$, $C_{\isolation{h}(\trans{r})}(\pco)(r)$ holds. For that, let $\hat{\pco} = \mathsf{FIX}(\lambda \mathit{R}: \computeCO{\hat{h}}{ \mathit{R}})(\so \cup \hat{\wro})^+$ be the partial commit order implied by $\hat{h}$.

As $\isolation{h}$ is composed of $\{\SER, \SI, \PRE, \RA, \RC\}$ isolation levels and $P_{i-1} \triangleright_{t_2} P_i$, by Property~\ref{def:consistent-extension:3} of \Cref{def:consistent-extension}, it suffices to prove that whenever $\visibilityRelationInstanceApplied[\co]{t_2}{r}{x}$, if $v \neq \axconf$ then $v(\hat{\pco}_{t_2}^{P_i})(t, r, x)$ holds in $\hat{h}$, while if $v = \axconf$, that there exists $t' \in T_{i-1}$ s.t. $v(\hat{\pco}_{t_2}^{P_i})(t', r, x)$ holds in $\hat{h}$; where $\hat{\pco}_{t_2}^{P_i}$ is obtained by applying \Cref{table:visibility-prefix} on $\hat{\pco}$. We analyze five different cases:

\begin{itemize}
    \item \underline{$\isolation{\overline{h}}(\trans{r}) = \SER$}: In this case, $\mathsf{\axser}(\co)(t_2, r, x)$ holds in $\overline{h}$ if and only if $(t_2, \trans{r}) \in \co$. As $\hat{\pco}_{t_2}^{P_i}$ totally orders $t_2$ and every other transaction in $T$ and $\hat{\pco}_{t_2}^{P_i} \subseteq \co$, we deduce that $(t_2, \trans{r}) \in \hat{\pco}_{t_2}^{P_i}$. Hence, $\mathsf{\axser}(\hat{\pco}_{t_2}^{P_i})(t_2, r, x)$ holds in $\hat{h}$.

    \item \underline{$\isolation{\overline{h}}(\trans{r}) = \SI$}: Two disjoint sub-cases arise: 
    \begin{itemize}
        \item \underline{$\mathsf{\axconf}(\co)(t_2, r, x)$ holds in $\overline{h}$}: This happens if and only if there exists a transaction $t_3$ and a key $y \in \Vars$ s.t. $\writeVar{t_3}{y}$, $\writeVar{\trans{r}}{y}$, $(t_2, t_3) \in \co^*$ and $(t_3, \trans{r}) \in \co$. Let $j$ be the index s.t. $t_3 = t_j$. Then, as $\hat{\pco}_{t_3}^{P_j}$ totally orders $t_3$ and every other transaction and $\hat{\pco}_{t_3}^{P_j} \subseteq \co$, $(t_2, t_3) \in (\hat{\pco}_{t_3}^{P_j})^*$ and $(t_3, \trans{r}) \in \hat{\pco}_{t_3}^{P_j}$. Thus, $\axconf(\hat{\pco}_{t_3}^{P_j})(t_2, r, x)$ holds in $\hat{h}$.
        
        \item \underline{$\mathsf{\axpre}(\co)(t_2, r, x)$ holds in $\overline{h}$ but $\mathsf{\axconf}(\co)(t_2, r, x)$ does not}: 
        We observe that $\mathsf{\axpre}(\co)(t_2, r, x)$ holds in $\overline{h}$ if there exists a transaction $t_3$ s.t. $(t_2, t_3) \in \co^*$ and $(t_3, \trans{r}) \in \so \cup \overline{\wro}$. If $(t_3, \trans{r}) \in \overline{\wro} \setminus (\so \cup \hat{\wro})$, by \Cref{eq:transactions-extension-algorithm} there exist $y \in \Vars$ and $v \in \visibilitySet{\SI}$ s.t. $v(\co)(t_3, r, y)$ holds in $\hat{h}$. Note that $v \neq \axconf$ as otherwise $\mathsf{\axconf}(\co)(t_2, r, x)$ would hold in $\overline{h}$. Hence, $v = \axpre$ and by transitivity of $\co$, we conclude that $\mathsf{\axpre}(\co)(t_2, r, x)$ holds in $\hat{h}$. As $\hat{\pco}_{t_2}^{P_i}$ totally orders $t_2$ with respect every other transaction in $t_2$ and $\hat{\pco}_{t_2}^{P_i} \subseteq \co$, we conclude that $\axpre(\hat{\pco}_{t_2}^{P_i})(t_2, r, x)$ holds in $\hat{h}$.
    \end{itemize}

    \item \underline{$\isolation{\overline{h}}(\trans{r}) = \PRE$}: In this case, $\mathsf{\axpre}(\co)(t_2, r, x)$ holds in $\overline{h}$ if and only if there exists a transaction $t_3$ s.t. $(t_2, t_3) \in \co^*$ and $(t_3, \trans{r}) \in \so \cup \overline{\wro}$. If $(t_3, \trans{r}) \in \overline{\wro} \setminus (\so \cup \hat{\wro})$, by \Cref{eq:transactions-extension-algorithm} there exist $y \in \Vars$ and $v \in \visibilitySet{\SI}$ s.t. $v(\co)(t_3, r, y)$ holds in $\hat{h}$. Hence, by transitivity of $\co$, we conclude that $\mathsf{\axpre}(\co)(t_2, r, x)$ holds in $\hat{h}$. As $\hat{\pco}_{t_2}^{P_i}$ totally orders $t_2$ with respect every other transaction in $t_2$ and $\hat{\pco}_{t_2}^{P_i} \subseteq \co$, we conclude that $\axpre(\hat{\pco}_{t_2}^{P_i})(t_2, r, x)$ holds in $\hat{h}$.
    
    \sloppy \item \underline{$\isolation{\overline{h}}(\trans{r}) = \RA$}: In this case,
    $\mathsf{\axra}(\co)(t_2, r, x)$ holds in $\overline{h}$ if and only if $(t_2, \trans{r}) \in \so \cup \overline{\wro}$. We observe that by \Cref{eq:transactions-extension-algorithm}, if $(t_2, \trans{r}) \in \overline{\wro} \setminus (\so \cup \hat{\wro})$, then $t_2 = t_x^r$ and $\mathsf{\axra}(\hat{\pco}_{t_2}^{P_i})(t_2, r, x)$ holds in $\hat{h}$. Hence, $(t_2, \trans{r}) \in \so \cup \hat{\wro}$; which is a contradiction. Thus, as $(t_2, \trans{r}) \in \so \cup \hat{\wro}$, $\mathsf{\axra}(\hat{\pco}_{t_2}^{P_i})(t_2, r, x)$ holds in $\hat{h}$.
    
    \item \underline{$\isolation{\overline{h}}(\trans{r}) = \RC$}: Similarly to the previous case, we observe that the formula $\mathsf{\axrc}(\co)(t_2, r, x)$ holds in $\overline{h}$ iff $(t_2, \trans{r}) \in (\so \cup \overline{\wro}); \po^*$. We observe that by \Cref{eq:transactions-extension-algorithm}, if $(t_2, \trans{r}) \in \overline{\wro} \setminus (\so \cup \hat{\wro}) ; \po^*$, then $t_2 = t_x^r$ and $\mathsf{\axrc}(\hat{\pco}_{t_2}^{P_i})(t_2, r, x)$ holds in $\hat{h}$. Therefore, $(t_2,r) \in \so \cup \hat{\wro} ; \po^*$; which is a contradiction. Thus, as $(t_2, \trans{r}) \in \so \cup \hat{\wro} ; \po^*$, $\mathsf{\axrc}(\hat{\pco}_{t_2}^{P_i})(t_2, r, x)$ holds in $\hat{h}$.

\end{itemize}

\end{proof}

\newpage

\subsection{Proof of \Cref{th:csob-poly}}
\label{ssec:algorithm-complexity}

\csobPolyTheorem*

The proof of Theorem~\ref{th:csob-poly} is split in two Lemmas: Lemma~\ref{lemma:csob:poly-csob-width} analyzes the complexity of Algorithm~\ref{algorithm:csob} while Lemma~\ref{lemma:csob:poly-csob-all} relies on the previous result to conclude the complexity of Algorithm~\ref{algorithm:checksobound}.

\begin{algorithm}[t]
\caption{Checking if $P \triangleright_t (P \cup \{t\})$ holds in $h$}
\begin{algorithmic}[1]
\small
\Function {\textsc{isConsistentExtension}}{$\hist = \tup{T, \so, \wro}$, $P = \tup{T_P, M_P}$, $t$}
\Statex
\Comment{We assume $t \not\in T_P$.}

\Let $\pco = \mathsf{FIX}(\lambda \mathit{R}: \computeCO{h}{\mathit{R}})(\so \cup \wro)^+$
\label{algorithm:consistent-extension:co}%

\If{$\exists t' \in T \setminus T_P$ s.t. $(t', t) \in \pco$}
\label{algorithm:consistent-extension:condition-1}
\Comment{Condition \ref{def:consistent-extension:2}}
    \State \Return $\bfalse$
\EndIf

\ForAll{$r \in \readOp{h}$ s.t. $\trans{r} \not\in T_P \cup \{t\}, v \in \visibilitySet{\isolation{h}}(\trans{r})$}
\label{algorithm:consistent-extension:condition-2}

    \If{$\visConsPrefix{v}{P}{t}{r}$ does not hold in $h$}
    \label{algorithm:consistent-extension:vis-check}
    \Comment{Condition \ref{def:consistent-extension:3}}
    \State \Return $\bfalse$
    \EndIf

\EndFor

\State \Return $\btrue$
\label{algorithm:consistent-extension:return-true}

\EndFunction
\label{algorithm:consistent-extension:end}

\end{algorithmic}
\label{algorithm:consistent-extension}
\end{algorithm}

\begin{lemma}
\label{lemma:csob:consistent-extension-correctness}
Let $h = \tup{T, \so, \wro}$ be a history, $P = \tup{T_P, M_P}$ be a consistent prefix of $h$ and $t \in T \setminus T_P$ be a transaction. \Cref{algorithm:consistent-extension} returns $\btrue$ if and only if $P \triangleright_t (P \cup \{t\})$.
\end{lemma}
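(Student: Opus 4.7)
The proof is a direct verification that the algorithm's two explicit tests mirror the two conditions of \Cref{def:consistent-extension}. I would begin by fixing notation: let $\pco$ denote the partial commit order computed on line~\ref{algorithm:consistent-extension:co}, i.e., the least fixpoint of $\lambda R: \computeCO{h}{R}$ starting from $(\so \cup \wro)^+$. Monotonicity of $\computeCO{h}{\cdot}$ (already noted in \Cref{sec:csob-algorithm}) guarantees termination and uniqueness of this fixpoint, and by construction it coincides with the $\pco$ referenced in \Cref{def:consistent-extension} and in the definition of $\pco_t^P$ used in \Cref{table:visibility-prefix}. Hence the evaluation of ``$\visConsPrefix{v}{P}{t}{r}$ holds in $h$'' performed by the algorithm agrees with the notion used in \Cref{def:consistent-extension}.

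Next I would unfold the return behavior of \Cref{algorithm:consistent-extension}: the function returns $\btrue$ (line~\ref{algorithm:consistent-extension:return-true}) if and only if (i) the test on line~\ref{algorithm:consistent-extension:condition-1} never triggers, i.e., no $t' \in T \setminus T_P$ satisfies $(t', t) \in \pco$, and (ii) the test on line~\ref{algorithm:consistent-extension:vis-check} never triggers, i.e., for every $\iread$ event $r$ with $\trans{r} \not\in T_P \cup \{t\}$ and every $v \in \visibilitySet{\isolation{h}}(\trans{r})$, the predicate $\visConsPrefix{v}{P}{t}{r}$ holds in $h$. These two conditions are independent, so the correctness argument splits cleanly.

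Item (i) is equivalent to Condition~\ref{def:consistent-extension:2} of \Cref{def:consistent-extension} by contraposition: ``no $t' \in T \setminus T_P$ has $(t', t) \in \pco$'' is the same as ``every $t' \in T$ with $(t', t) \in \pco$ lies in $T_P$'' (the edge case $t' = t$ causes both formulations to fail together, since $t \in T \setminus T_P$). Item (ii) is literally Condition~\ref{def:consistent-extension:3}. Combining both, the algorithm returns $\btrue$ precisely when both conditions of \Cref{def:consistent-extension} hold, which is exactly the definition of $P \triangleright_t (P \cup \{t\})$. No step presents a genuine obstacle; the lemma is essentially a restatement of the algorithm's specification, and the only point that deserves attention is the alignment of the two fixpoint notations and the handling of the $t' = t$ case discussed above.
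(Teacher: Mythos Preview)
Your proposal is correct and follows essentially the same approach as the paper's own proof: both observe that the algorithm returns $\btrue$ precisely when the checks at lines~\ref{algorithm:consistent-extension:condition-1} and~\ref{algorithm:consistent-extension:condition-2} pass, and that these checks coincide with Conditions~\ref{def:consistent-extension:2} and~\ref{def:consistent-extension:3} of \Cref{def:consistent-extension}. Your version is in fact more thorough than the paper's, which is quite terse; your explicit alignment of the two $\pco$ computations and your handling of the $t'=t$ edge case are welcome additions that the paper omits.
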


\begin{proof}
Clearly, $P \cup \{t\}$ is an extension of $P$.$\isConsistentExtension{h}{P}{t}$ returns $\btrue$ if and only if conditions at lines~\ref{algorithm:consistent-extension:condition-1} and lines~\ref{algorithm:consistent-extension:condition-2} in \Cref{algorithm:consistent-extension} hold. This is equivalent to respectively satisfy Properties~\ref{def:consistent-extension:2} and \ref{def:consistent-extension:3} of \Cref{def:consistent-extension}. By \Cref{def:consistent-extension}, this is equivalent to $P \triangleright_t (P \cup \{t\})$.
\end{proof}

\begin{lemma}
\label{lemma:csob:consistent-extension-poly}
Let $h = \tup{T, \so, \wro}$ be a history and $k\in \mathbb{N}$ be a bound in $\isolation{h}$. For any consistent prefix $P = \tup{T_P, M_P}$ of $h$ and any transaction $t \in T \setminus T_P$, Algorithm~\ref{algorithm:consistent-extension} runs in $\mathcal{O}(|h|^{k+3})$.
\end{lemma}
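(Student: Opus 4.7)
The plan is to bound the cost of each of the three stages of \Cref{algorithm:consistent-extension} separately and show that the fixpoint computation of $\pco$ dominates with $\mathcal{O}(|h|^{k+3})$, while the remaining checks are strictly cheaper.

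First, I would analyze line~\ref{algorithm:consistent-extension:co}, which computes the least fixpoint $\pco = \mathsf{FIX}(\lambda R: \textsc{saturate}(h, R))(\so \cup \wro)^+$. Starting from $(\so \cup \wro)^+$, each iteration of \textsc{saturate} is monotone, so either the relation strictly grows or the fixpoint has been reached. Since $\pco \subseteq T \times T$, the number of iterations before stabilization is at most $|T|^2 \in \mathcal{O}(|h|^2)$. Each single call to \textsc{saturate} costs $\mathcal{O}(|h|^{k+1})$ by \Cref{lemma:necessary-co:polynomial-time}. Multiplying the two bounds yields $\mathcal{O}(|h|^{k+3})$ for the entire fixpoint computation. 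Computing the initial transitive closure $(\so \cup \wro)^+$ takes only $\mathcal{O}(|h|^3)$ and is absorbed by the above bound.

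Next, I would bound line~\ref{algorithm:consistent-extension:condition-1}: checking whether there exists $t' \in T\setminus T_P$ with $(t', t) \in \pco$ is a linear scan over at most $|T|$ transactions once $\pco$ has been materialized, hence $\mathcal{O}(|h|)$. For the loop at lines~\ref{algorithm:consistent-extension:condition-2}--\ref{algorithm:consistent-extension:vis-check}, the outer iteration ranges over read events $r$ and visibility relations $v \in \visibilitySet{\isolation{h}(\trans{r})}$. Since $|\readOp{h}| \in \mathcal{O}(|h|)$ and $|\visibilitySet{\isolation{h}(\trans{r})}| \leq k$ by the boundedness of $\isolation{h}$, this contributes a factor of $\mathcal{O}(|h|)$ (with $k$ absorbed as a constant). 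Each evaluation of $\visConsPrefix{v}{P}{t}{r}$ in \Cref{table:visibility-prefix} is a quantified formula with at most $k$ variables where the quantifier-free body queries $\valuewrName$, $\where{\cdot}$ and $\wro_x^{-1}$; a straightforward adaptation of the argument of \Cref{lemma:evaluate-v:polynomial-time} shows that each such check runs in $\mathcal{O}(|h|^{k-2})$. Hence the second stage totals $\mathcal{O}(|h|^{k-1})$, which is dominated by the fixpoint cost.

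Summing the three contributions gives $\mathcal{O}(|h|^{k+3}) + \mathcal{O}(|h|) + \mathcal{O}(|h|^{k-1}) = \mathcal{O}(|h|^{k+3})$. The only subtlety I anticipate is to justify carefully that the iteration bound on the fixpoint is $|T|^2$ rather than $|T|^3$ or larger; this follows because each iteration that is not the last strictly adds at least one pair to $\pco_{\result}$ (the output of \textsc{saturate} is the input $\pco$ plus possibly new pairs, by inspection of \Cref{algorithm:necessary-co}), and the total number of pairs is capped by $|T \times T|$. With this observation the bound falls out, and the lemma follows.
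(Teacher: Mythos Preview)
Your proof is correct and follows essentially the same decomposition as the paper: bound the fixpoint computation of $\pco$ by $\mathcal{O}(|h|^2)$ iterations of \textsc{saturate} (each costing $\mathcal{O}(|h|^{k+1})$ by \Cref{lemma:necessary-co:polynomial-time}), then observe that the two subsequent checks are cheaper and absorbed by $\mathcal{O}(|h|^{k+3})$. Your argument is in fact slightly more explicit than the paper's, since you justify the $|T|^2$ iteration bound via monotonicity and strict growth, whereas the paper simply states ``as $\pco \subseteq T \times T$''.
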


\begin{proof}
We analyze the cost of \Cref{algorithm:consistent-extension}. First, as $\pco \subseteq T \times T$, by \Cref{lemma:necessary-co:polynomial-time}, line~\ref{algorithm:consistent-extension:co} runs in $\mathcal{O}(|h|^2 \cdot |h|^{k+1})$. Next, the condition at line~\ref{algorithm:consistent-extension:condition-1} can be checked in $\mathcal{O}(|T|)$. Finally, the condition at line~\ref{algorithm:consistent-extension:condition-2} can be checked in $\mathcal{O}(|T| \cdot k \cdot U)$; where $U$ is an upper-bound on the complexity of checking $\visConsPrefix{v}{P}{t}{r}$. With the aid of \Cref{lemma:evaluate-v:polynomial-time}, we deduce that $U \in \mathcal{O}(|h|^{k-2})$. Altogether, we conclude that \Cref{algorithm:consistent-extension} runs in $\mathcal{O}(|h|^{k+3})$.
\end{proof}

\begin{lemma}
\label{lemma:csob:bound-isolation}
Let $h = \tup{T, \so, \wro}$ be a client history. If $\isolationExecution{h}$ is composed of $\{\SER, \SI,  \PRE, \RA, \RC\}$ isolation levels, then $5$ is a bound of $\isolation{h}$. Moreover, if no transaction has $\SI$ as isolation, $4$ is a bound on $\isolation{h}$.
\end{lemma}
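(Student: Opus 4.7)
The approach I would take is a direct structural inspection of the five axioms in Figure~\ref{fig:consistency_defs}, applying the definition of a bounded isolation configuration given at the end of Section~\ref{ssec:assumptions-isolations}: we need a single $k$ such that for every transaction $t$, the isolation level $\isolation{h}(t)$ is a conjunction of at most $k$ axioms, each in turn having at most $k$ quantifiers (counted according to the schema in Equations~\ref{eq:axiom} and~\ref{eq:axioms-constraints}).

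First I would bound the number of axioms. By definition, $\SER$, $\PRE$, $\RA$, and $\RC$ are each defined by exactly one axiom, and $\SI$ is defined as the conjunction of the two axioms $\axpre$ and $\axconf$. Hence the maximum number of axioms used by any level in $\{\SER, \SI, \PRE, \RA, \RC\}$ is $2$, and this maximum drops to $1$ if we exclude $\SI$.

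Next I would bound the number of quantifiers per axiom. Every axiom of the form~\ref{eq:axiom} universally quantifies $x, t_1, t_2$ (together with the axiom's parameter $r$), and its visibility predicate~\ref{eq:axioms-constraints} existentially quantifies any intermediate transactions $\tau_1, \ldots, \tau_k$ along the path, plus any additional variables used by $\confWr$. Reading off Figure~\ref{fig:consistency_defs}: $\axrc$, $\axra$, and $\axser$ use direct paths with no intermediate transaction; $\axpre$ uses one intermediate transaction $t_4$; and $\axconf$ uses $t_4$ together with the extra key $y$ (and the implicit witness transaction writing $y$ in $\trans{r}$). A straightforward count then gives at most $4$ quantifiers for each of $\axrc$, $\axra$, $\axser$, $\axpre$, and exactly $5$ for $\axconf$.

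Combining the two counts: if any transaction in $h$ uses $\SI$, we need $k \geq \max(2, 5) = 5$, so $k = 5$ is a bound. If no transaction uses $\SI$, then every isolation configuration contains a single axiom with at most $4$ quantifiers, so $k = \max(1, 4) = 4$ is a bound. There is no real obstacle in this proof: it is a finite case analysis, and the only care required is to make sure that implicit quantifiers from the visibility path and from the $\confWr$ clause in $\axconf$ are included in the count.
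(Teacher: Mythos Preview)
Your proposal is correct and follows essentially the same approach as the paper: a direct case inspection of the five axioms, counting the three universal quantifiers $x, t_1, t_2$ together with the existentials introduced by the visibility relation, and observing that only $\axconf$ requires two existentials (hence the bound drops from $5$ to $4$ when $\SI$ is excluded). The paper's argument is slightly terser but structurally identical.
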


\begin{proof}
Let $h$ be a history as described in the hypothesis. First, all isolation levels in the set $\{\SER, \SI,  \PRE, \RA, \RC\}$ employ at most two axioms. Moreover, every axiom described employs at most $5$ quantifiers: three universal quantifiers and at most two existential quantifiers. Hence, $5$ is a bound on $\isolation{h}$. Note that $\axconf$ is the only axiom employing two existential quantifiers; so if no transaction employs $\SI$, $4$ bounds $\isolation{h}$.
\end{proof}

\begin{lemma}
\label{lemma:csob:poly-csob-width}
Let $h = \tup{T, \so, \wro}$ be a client history whose isolation configuration is composed of $\{\SER, \SI, \PRE, \RA, \RC\}$ isolation levels. Algorithm~\ref{algorithm:csob} runs in $\mathcal{O}(|h|^{\widthHistory{h}+9} \cdot \widthHistory{h}^{|\Keys|})$. Moreover, if no transaction has $\SI$ as isolation level, Algorithm~\ref{algorithm:csob} runs in $\mathcal{O}(|h|^{\widthHistory{h}+8})$.
\end{lemma}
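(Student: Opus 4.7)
The plan is to bound the total running time of $\csobAlgorithm$ by multiplying three quantities: (i) the number of distinct consistent prefixes visited across the recursion, (ii) the branching factor at each recursive call, and (iii) the cost of processing one prefix (consistency-extension check plus bookkeeping). I would first argue that, thanks to the $\mathtt{seen}$ memoization at line~\ref{algorithm:csob:no-conflict-invalid} and to the equivalence $\equiv_{\isolation{h}}$, each equivalence class of consistent prefixes triggers at most one ``deep'' recursive call that is not short-circuited. Hence the recursion tree can have at most one node per equivalence class.

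Next, I would count equivalence classes of consistent prefixes. A consistent prefix $P=\tup{T_P,M_P}$ is fully determined by (a) the set $T_P$, which is $\so$-closed and can therefore be identified with its $\so$-maximal antichain of size at most $\widthHistory{h}$; and (b) the mapping $M_P:\Vars \to T_P$. The number of antichains of size at most $\widthHistory{h}$ is bounded by $|h|^{\widthHistory{h}}$, and, because every key must be mapped to a transaction that belongs to one of at most $\widthHistory{h}$ sessions cut by the antichain, the number of maps $M_P$ is at most $\widthHistory{h}^{|\Vars|}$. Thus the total number of prefix-classes is at most $|h|^{\widthHistory{h}}\cdot \widthHistory{h}^{|\Vars|}$ when $\SI$ is used. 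When no transaction employs $\SI$, the definition of $\equiv_{\isolation{h}}$ ignores $M_P$ entirely, so only the antichain matters and the number of classes drops to $|h|^{\widthHistory{h}}$.

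At each visited prefix, $\csobAlgorithm$ iterates over the at most $|h|$ candidate transactions $t\in T\setminus T_P$ (a factor $|h|$), and for each one invokes $\isConsistentExtension{h}{P,t}$; by \Cref{lemma:csob:consistent-extension-correctness} this is exactly the check $P\triangleright_t(P\cup\{t\})$. By \Cref{lemma:csob:consistent-extension-poly}, this check costs $\mathcal{O}(|h|^{k+3})$ where $k$ bounds $\isolation{h}$. Invoking \Cref{lemma:csob:bound-isolation} gives $k=5$ in the presence of $\SI$ and $k=4$ otherwise, so each extension test costs $\mathcal{O}(|h|^{8})$, respectively $\mathcal{O}(|h|^{7})$. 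The membership test in $\mathtt{seen}$ and the construction of $P\cup\{t\}$ are absorbed in the same polynomial. Multiplying the three bounds yields $\mathcal{O}\bigl(|h|^{\widthHistory{h}}\cdot \widthHistory{h}^{|\Vars|}\cdot |h|\cdot |h|^{8}\bigr)=\mathcal{O}\bigl(|h|^{\widthHistory{h}+9}\cdot\widthHistory{h}^{|\Vars|}\bigr)$ in the general case, and $\mathcal{O}\bigl(|h|^{\widthHistory{h}}\cdot |h|\cdot |h|^{7}\bigr)=\mathcal{O}\bigl(|h|^{\widthHistory{h}+8}\bigr)$ when $\SI$ is absent.

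The most delicate step is justifying the prefix count: I must verify that $\equiv_{\isolation{h}}$ really collapses prefixes into at most the claimed number of classes, and that the $\mathtt{seen}$ mechanism charges each class at most once, even though the same class can be reached through many orderings of its underlying transactions. The argument is that once a prefix class $[P]$ has been explored without success, any future call on a member of $[P]$ is intercepted at line~\ref{algorithm:csob:no-conflict-invalid}; and once a member of $[P]$ succeeds, the outermost call returns $\btrue$ and no further recursion happens. The remaining pieces (antichain counting via $\widthHistory{h}$, dominance of $M_P$ by $\widthHistory{h}^{|\Vars|}$, and absorbing polylogarithmic data-structure overheads) are standard.
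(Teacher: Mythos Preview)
Your proposal is correct and follows essentially the same approach as the paper: decompose the running time as (number of equivalence classes of computable prefixes) $\times$ (branching factor $|h|$) $\times$ (cost of one consistent-extension test via \Cref{lemma:csob:consistent-extension-correctness,lemma:csob:consistent-extension-poly}), then plug in the bound $k=5$ (resp.\ $k=4$) from \Cref{lemma:csob:bound-isolation}. The paper additionally spells out, by induction on the length of the prefix, that for computable prefixes $M_P(x)$ is always an $\so$-maximal transaction of $T_P$, which is what underlies the $\widthHistory{h}^{|\Vars|}$ bound you state; you allude to this but leave it as ``standard''.
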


\begin{proof}
For proving the result, we focus only on prefixes that are computable by Algorithm~\ref{algorithm:csob}. Let $h = \tup{T, \so, \wro}$ be a history. A prefix $P$ of $h$ is \emph{computable} if either $P= \emptyset$ or there exist a transaction $t$ and a prefix $P'$ s.t. $P = P'\cup \{t\}$ and $P'$ is computable. 

Intuitively, computable prefixes represent recursive calls of Algorithm~\ref{algorithm:csob} when employed by Algorithm~\ref{algorithm:checksobound}. Indeed, Algorithm~\ref{algorithm:checksobound} only employs Algorithm~\ref{algorithm:csob} at lines~\ref{algorithm:checksobound:call-csob-h} and \ref{algorithm:checksobound:call-csob-h-prime}. In both cases, $P' = \emptyset$ is the initial call to Algorithm~\ref{algorithm:csob}. Moreover, the condition at line~\ref{algorithm:csob:for-condition} justifies the recursive definition.

On one hand, we observe that any call to Algorithm~\ref{algorithm:csob} is associated to a computable prefix and on the other hand, Algorithm~\ref{algorithm:csob} does not explore two equivalent computable prefix thanks to the global variable $\mathtt{seen}$ (line~\ref{algorithm:csob:no-conflict-invalid}). Therefore, Algorithm~\ref{algorithm:csob} runs in $\mathcal{O}(N \cdot U)$; where $N$ is the number of distinct equivalence class of prefixes of $h$ and $U$ is an upper-bound on the running time of Algorithm~\ref{algorithm:csob} on a fixed prefix without doing any recursive call. 

We first compute an upper-bound of $N$. For any computable prefix $P$, we can deduce by induction on the length of $P$ that there exists transactions $t_i \in T_P$ and a collection of computable prefixes of $h$, $P_i = \tup{T_i, M_i}$ and transactions $t_i$ s.t. $P_{|T_P|} = P$, $P_i = P_{i-1} \cup \{t_i\}$ and $P_{i-1} \triangleright_{t_i} P_i$; where $P_0 = \emptyset$ and $ 0 < i \leq |T_P|$. The base case is immediate as $|T_P| = 0$ implies that $T' = \emptyset$ while the inductive step can be simply obtained by applying the recursive definition of computable prefix.

Let $P = \tup{T_P, M_P}$ be in what follows a computable prefix of $h$. We observe that both $T_P$ and $M_P$ are determined by its $\so$-maximal transactions. Let $t, t' \in T$ be a pair of transactions s.t. $(t, t') \in \so$ and $t' \in T_P$. As $t' \in T_P$ there must exist an index $i, 1 \leq i \leq |T_P|$ s.t. $P_i = P_{i-1} \cup \{t'\}$. Therefore, as $P_{i-1} \triangleright_{t'} P_{i-1}$, $t \in T_{i-1} \subseteq T_P$. In particular, if $t'$ is a $\so$-maximal transaction in $T_P$, all its $\so$-predecessors are also contained in $T_P$; and hence, $T_P$ can be characterized by its $\so$-maximal transactions. Moreover, by induction on the length of $P$ we can prove that for every key $x$, $M_P(x)$ is a $\so$-maximal transaction: the base case, $|T_P| = 0$ is immediate while the inductive step is obtained by the definition of $P \cup \{t''\}, t'' \not\in T_P$. Hence, the number of computable prefixes of a history is in $\mathcal{O}(|T|^{\widthHistory{h}} \cdot {\widthHistory{h}}^{|\Keys|})$. Thus, $N \in \mathcal{O}(|h|^{\widthHistory{h}} \cdot \widthHistory{h}^{|\Keys|})$. Moreover, if no transaction employs $\SI$ as isolation level, prefixes with identical transaction set coincide. Hence, in such case, $N \in \mathcal{O}(|h|^{\widthHistory{h}} )$.

We conclude the proof bounding $U$. If $|T_P| = |T|$, Algorithm~\ref{algorithm:csob} runs in $\mathcal{O}(1)$; so we can assume without loss of generality that $|T_P| \neq |T|$. In such case, $U$ represent the cost of executing lines~\ref{algorithm:csob:for-condition}-\ref{algorithm:csob:return-false} in \Cref{algorithm:csob}. Thus, $U \in \mathcal{O}((|T|-|T_P|)\cdot V)$; where $V$ is the cost of checking $P \triangleright_t (P \cup \{t\})$ for a transaction $t \in T \setminus T_P$. By \Cref{lemma:csob:consistent-extension-correctness}, Algorithm~\ref{algorithm:consistent-extension} can check if $P \triangleright_t (P \cup \{t\})$ and thanks to \Cref{lemma:csob:consistent-extension-poly}, Algorithm~\ref{algorithm:consistent-extension} runs in $\mathcal{O}(|h|^{k+3})$; where $k$ is a bound on $\isolationExecution{h}$. Thus, $U \in \mathcal{O}(|h|^{k+4})$.

Thanks to Lemma~\ref{lemma:csob:bound-isolation}, we conclude that \Cref{algorithm:csob} runs in $\mathcal{O}(|h|^{\widthHistory{h}+9} \cdot \widthHistory{h}^{|\Keys|})$ and, if no transaction employs $\SI$ as isolation level, then it runs in $\mathcal{O}(|h|^{\widthHistory{h}+8})$.
\end{proof}

\begin{lemma}
\label{lemma:csob:poly-csob-all}
\sloppy
Let $h = \tup{T, \so, \wro}$ be a client history whose isolation configuration is composed of $\{\SER, \SI,  \PRE, \RA, \RC\}$ isolation levels. Algorithm~\ref{algorithm:checksobound} runs in $\mathcal{O}(|h|^{\conflictsHistory{h} +\widthHistory{h}+9} \cdot \widthHistory{h}^{|\Keys|})$. Moreover, if no transaction has $\SI$ as isolation level, Algorithm~\ref{algorithm:checksobound} runs in $\mathcal{O}(|h|^{\conflictsHistory{h} + \widthHistory{h}+8})$.
\end{lemma}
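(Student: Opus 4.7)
The plan is to decompose the running time of \Cref{algorithm:checksobound} into three additive parts: (i) the fixpoint computation of $\pco$ at line~\ref{algorithm:checksobound:co}, (ii) the construction of the conflict set $E_h$ and the collection $X_h$ of candidate completions at lines~\ref{algorithm:checksobound:eh}--\ref{algorithm:checksobound:xh} together with the guards at lines~\ref{algorithm:checksobound:co-cyclic}--\ref{algorithm:checksobound:inconsistent-case}, and (iii) the enumeration loop that invokes \csobAlgorithm{} once per mapping in $X_h$. All three parts are then multiplied by the cost of a single \csobAlgorithm{} call, for which we can reuse \Cref{lemma:csob:poly-csob-width} directly.

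First, I would observe that the fixpoint of $\lambda R.\,\textsc{saturate}(h,R)$ starting from $(\so\cup\wro)^+$ stabilizes after at most $|T|^2$ iterations (since the relation being built is contained in $T\times T$ and strictly grows). Each iteration calls \textsc{saturate}, whose cost by \Cref{lemma:necessary-co:polynomial-time} is $\mathcal{O}(|h|^{k+1})$ for $k$ a bound of $\isolation{h}$. Together with \Cref{lemma:csob:bound-isolation}, this gives a polynomial bound on step~(i) that is strictly dominated by the final bound announced in the statement. The construction of $E_h$ and the evaluation of the $\mathtt{0}_x^r$ and $\mathtt{1}_x^r$ sets over all pairs $(r,x)$ in step~(ii) is likewise polynomial in $|h|$, and the acyclicity test for $\pco$ is $\mathcal{O}(|T|^2)$; again, these are absorbed by the final bound.

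The core of the argument is step~(iii). Here I would bound $|X_h|$ from above by $|T|^{|E_h|}$, since each mapping in $X_h$ assigns, independently to each conflict $(r,x)\in E_h$, some element of $\mathtt{0}_x^r(\pco)\subseteq T$. Because $|E_h|=\conflictsHistory{h}$ and $|T|\leq |h|$, this yields $|X_h|\in \mathcal{O}(|h|^{\conflictsHistory{h}})$. For each $f\in X_h$, the algorithm builds $h'$ (which is done in time polynomial in $|h|$) and calls \csobAlgorithm$(h',\emptyset)$. By \Cref{lemma:csob:poly-csob-width} applied to $h'$ (whose width and isolation configuration coincide with those of $h$, since extending $\wro$ does not create new $\so$-incomparable transactions and does not change the isolation of any transaction), each such call runs in $\mathcal{O}(|h|^{\widthHistory{h}+9}\cdot \widthHistory{h}^{|\Keys|})$, and drops to $\mathcal{O}(|h|^{\widthHistory{h}+8})$ when $\SI$ is unused.

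Multiplying the $\mathcal{O}(|h|^{\conflictsHistory{h}})$ factor of step~(iii) by the per-call bound of \csobAlgorithm{}, and absorbing the polynomial overheads of steps~(i) and~(ii), gives the announced $\mathcal{O}(|h|^{\conflictsHistory{h}+\widthHistory{h}+9}\cdot \widthHistory{h}^{|\Keys|})$ bound in general, and $\mathcal{O}(|h|^{\conflictsHistory{h}+\widthHistory{h}+8})$ when no transaction employs $\SI$. The only subtle point, and the one I would take most care with, is to justify that \csobAlgorithm{} is invoked only on histories $h'$ whose width and isolation configuration are those of $h$, so that the parameters $\widthHistory{h'}$ and the bound on $\isolation{h'}$ inherited by \Cref{lemma:csob:poly-csob-width} are exactly $\widthHistory{h}$ and the constants given by \Cref{lemma:csob:bound-isolation}; everything else is bookkeeping.
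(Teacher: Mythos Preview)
Your proposal is correct and follows essentially the same approach as the paper's proof: bound the preprocessing (fixpoint for $\pco$, construction of $E_h$ and $X_h$) by polynomials absorbed in the final bound, bound $|X_h|$ by $|T|^{|E_h|}=\mathcal{O}(|h|^{\conflictsHistory{h}})$, and multiply by the per-call cost of \csobAlgorithm{} from \Cref{lemma:csob:poly-csob-width}. Your explicit remark that each extended history $h'$ has the same width and isolation configuration as $h$ (so that \Cref{lemma:csob:poly-csob-width} applies with the parameters of $h$) is a point the paper uses implicitly but does not spell out; it is a good thing to make explicit.
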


\begin{proof}
Let $h = \tup{T, \so, \wro}$ be a history satisfying the hypothesis of the Lemma. We decompose our analysis in two sections, the first one where we analyze the complexity of executing lines~\ref{algorithm:checksobound:co}-\ref{algorithm:checksobound:xh} and second one where we analyze the complexity of executing lines~\ref{algorithm:checksobound:co-cyclic}-\ref{algorithm:checksobound:end}. We observe that by \Cref{lemma:csob:bound-isolation}, $5$ is a bound on $\isolation{h}$.

In line~\ref{algorithm:checksobound:co}, \Cref{algorithm:checksobound} computes $\pco$. On one hand, computing $(\so \cup \wro)^+$ is in $\mathcal{O}(|T|^3)$. On the other hand, as $\pco \subseteq T \times T$ and by Lemma~\ref{lemma:necessary-co:polynomial-time}, executing $\computeCO{h}{(\so \cup \wro)^+}$ is in $\mathcal{O}(|h|^6)$; we deduce that computing $\pco$ after compute  $(\so \cup \wro)^+$ is in $\mathcal{O}(|h|^8)$.

In line~\ref{algorithm:checksobound:eh}, \Cref{algorithm:checksobound} computes $E_h$. As $\wro$ is acyclic, for a given key $x$ and transaction $t$, $\valuewr{t}{x} \in \mathcal{O}(|T|)$. Therefore, computing $\oneWhere$ is in $\mathcal{O}(|T|)$ as we assume that for every $r \in \Vals$, computing $\where{r}(v) \in \mathcal{O}(1)$. Thus, computing $E_h$ is in $\mathcal{O}(|h|^3)$.

Finally, in line~\ref{algorithm:checksobound:xh}, \Cref{algorithm:checksobound} computes $X_h$. Note that $X_h$ can be seen is a $\bigtimes_{(r, x) \in E_h} \zeroWhere$. Computing each $\zeroWhere$ set is in $\mathcal{O}(|T|)$; so computing all of them is in $\mathcal{O}(|T| \cdot |E_h|)$. As each set $\zeroWhere$ is a subset of $T$, applying the cartesian-product definition of $X_h$ we can compute $X_h$ in $\mathcal{O}(|T|^{|E_h|})$. Therefore, as $|E_h| = \conflictsHistory{h}$, we conclude that computing $X_h$ is in $\mathcal{O}(|h| \cdot \conflictsHistory{h}+ |h|^{\conflictsHistory{h}})$ and that $|X_h| \in \mathcal{O}(|h|^{\conflictsHistory{h}})$. Altogether, as $\conflictsHistory{h} \leq |T|^2$, we deduce that computing lines~\ref{algorithm:checksobound:co}-\ref{algorithm:checksobound:xh} of Algorithm~\ref{algorithm:checksobound} is in $\mathcal{O}(|h|^8+|h|^{\conflictsHistory{h}})$.

Next, we analyze the complexity of executing lines~\ref{algorithm:checksobound:co-cyclic}-\ref{algorithm:checksobound:end}. Four disjoint cases arise, one per boolean condition in \Cref{algorithm:checksobound}. The first one, checking if $\pco$ is cyclic (line~\ref{algorithm:checksobound:co-cyclic}), is in $\mathcal{O}(|h|)$. The second one, checking if $\exists (r,x)\in E_h$ s.t. $\zeroWhere = \emptyset$ (line~\ref{algorithm:checksobound:inconsistent-case}), clearly runs in $\mathcal{O}(\conflictsHistory{h} \cdot |h|)$. The third one, checking if $E_h = \emptyset$ and executing \Cref{algorithm:csob} is in $\mathcal{O}(\conflictsHistory{h}+|h|^{\widthHistory{h}+9} \cdot \widthHistory{h}^{|\Keys|})$ thanks to \Cref{lemma:csob:poly-csob-width}. 

Finally we analyze the last case, computing an extension of $h$ for each mapping in $X_h$ and then executing \Cref{algorithm:checksobound} (lines~\ref{algorithm:checksobound:else-case}-\ref{algorithm:checksobound:end}). On one hand, computing each history is in $\mathcal{O}(|h|^3)$ as we require to define both $\so \subseteq T \times T$ and $\wro \subseteq \Vars \times T \times T$. On the other hand, as the size of each extension of $h$ is in $\mathcal{O}(|h|)$, executing \Cref{algorithm:checksobound} for a given history is in $\mathcal{O}(|X_h| \cdot |h|^{\widthHistory{h}+9} \cdot \widthHistory{h}^{|\Keys|})$ thanks again to \Cref{lemma:csob:poly-csob-width}. Altogether, for each mapping $f \in X_h$, executing lines~\ref{algorithm:checksobound:h-prime-definition}-\ref{algorithm:checksobound:call-csob-h-prime} is in $\mathcal{O}(|h|^{\widthHistory{h}+9} \cdot \widthHistory{h}^{|\Keys|})$. As $|X_h| \in \mathcal{O}(|h|^{\conflictsHistory{h}})$, we conclude that executing this last case is in $\mathcal{O}(|h|^{\conflictsHistory{h}+\widthHistory{h}+9}  \cdot \widthHistory{h}^{|\Keys|})$. 

We then conclude that Algorithm~\ref{algorithm:checksobound} runs in $\mathcal{O}(|h|^{\conflictsHistory{h} + \widthHistory{h} + 9} \cdot \widthHistory{h}^{|\Keys|})$. Moreover, if no transaction employs $\SI$ as isolation level, \Cref{lemma:csob:poly-csob-width} allows us to deduce that in such case \Cref{algorithm:checksobound} runs in $\mathcal{O}(|h|^{\conflictsHistory{h} + \widthHistory{h} + 9})$.
\end{proof}

\newpage

\end{document}